\newtheorem{theorem}{Theorem}[section]
\newtheorem{proposition}[theorem]{Proposition}
\newtheorem{lemma}[theorem]{Lemma}
\newtheorem{definition}[theorem]{Definition}
\newtheorem{remark}[theorem]{Remark}
\newtheorem{proof}{Proof}[section]
\newtheorem{prooft}{Proof of theorem}[section]
\newcommand{\cqfd}{\hfill $\square$}
\title{ Quantum deformed algebras : coherent states and special functions}
\author{J.D. Bukweli-Kyemba and M.N. Hounkonnou}
\begin{document}

\maketitle

\begin{abstract}
The Heisenberg algebra is first deformed with the set of parameters $\{q,\; l,\;\lambda\}$ to generate a new family of generalized coherent states. The matrix elements of relevant operators are exactly computed. A proof on sub-Poissonian character of the statistics of the main deformed states is provided. This property is used to determine a generalized metric.
A unified method of calculating structure functions  from commutation relations of deformed single-mode oscillator algebras is then presented. A natural approach to building coherent states associated to  deformed algebras is deduced.
Known deformed algebras are  given as illustration.

Futhermore, we generalize a class of  two-parameter deformed Heisenberg algebras related to meromorphic functions, 
called {\it ${\cal R}(p,q)$-deformed algebra.} 
Relevant families of coherent states maps  are probed and their corresponding hypergeometric
series are computed. The latter  generalizes known hypergeometric series and gives  a  generalization of the binomial theorem.
 We provide new noncommutative algebras and show that the involved notions of  differentiation and integration  
generalize the usual $q$- and $(p,q)$-differentiation and integration.
A Hopf algebra structure compatible with the ${\cal R}(p,q)$-algebra is deduced.

Besides,  we succeed in giving a new characterization of Rogers- Szeg\"o
polynomials, called {\it ${\cal R}(p,q)$-deformed Rogers- Szeg\"o
polynomials}.
{\it Continuous ${\cal R}(p,q)$-deformed  Hermite polynomials} and their recursion relation are also deduced. Novel algebraic relations are provided and discussed.

The whole formalism is performed in  a unified way, generalizing known relevant results which are straightforwardly derived as particular cases. 
\end{abstract}

\section{Introduction}

Much attention has been paid to the study of quantum algebras (or groups) in their both physical and mathematical aspects, motivated by the passage from classical physical systems to quantum systems.
Recall that the classical and quantum mechanics share two basic concepts of {\it states} and {\it observables}.
Indeed, in classical mechanics, states are generally points of a symplectic manifold $M$, and observables are real-valued functions on it. 
Of course they may be regarded as belonging to  the space $\mathcal{F}(M)$ of differentiable complex-valued functions on $M$, but the outcomes of classical measurements are real. 
In quantum mechanics, states are one-dimensional subspaces of a complex Hilbert space $\mathcal{H}$, and observables are
self-adjoint linear
 operators on $\mathcal{H}$, with real spectrum. 
The connection between classical and quantum mechanics is well made in the language of observables. In both classical and quantum mechanics, the observables form an associative algebra, which is commutative under the pointwise multiplication in the classical case and noncommutative under the composition of linear operator in the quantum case.
 Moreover, the time evolution for the  classical system is expressed by the equation
\begin{eqnarray}
 \frac{df(x(t))}{dt}= \left\{H_c,\;f\right\}(x(t)),\quad f\in\mathcal{F}(M),
\end{eqnarray}
where $H_c$ is the classical Hamiltonian and $x(t)\in M$ is the state of the system at time $t$, while for the quantum system the time evolution of an operator $A$ is given by
\begin{eqnarray}
 \frac{d\;A}{dt}= \big[H_q,\;A\big],
\end{eqnarray}
where $H_q$ is some operator called quantum Hamiltonian.

Hence, {\it the quantization} or the passage from classical to quantum mechanics is somewhat like replacing a Poisson algebra by a  Lie algebra.

The illustrative example is given by a particle moving along the real line. In the classical case, the manifold $M$ is the cotangent bundle $T^*(\mathbb{R})$, and if $q$ is the coordinate on $\mathbb{R}$ (position) and $p$ the coordinate in the fiber direction (momentum), the Poisson bracket is 
\begin{eqnarray}
 \left\{f_1,\;f_2\right\}=\frac{\partial f_1}{\partial q}\frac{\partial f_2}{\partial p}-
\frac{\partial f_2}{\partial q}\frac{\partial f_1}{\partial p}\qquad f_1,\;f_2\in\mathcal{C}^\infty(\mathbb{R}).
\end{eqnarray}
In particular, the Poisson bracket of coordinate functions is
\begin{eqnarray}
 \left\{q,\;p\right\}= 1.
\end{eqnarray}
In quantum case, the Hilbert space $\mathcal{H}$ is the space $L^2(\mathbb{R})$ of square integrable function of $q$, and operators are combinations of the operators $\hat{q}$ and $\hat{p}$ whose actions correspond to the multiplication by $q$ and the derivative $-i\hbar\frac{\partial}{\partial_q}$, respectively. So, 
\begin{eqnarray}\label{Heisenbergal}
 \big[\hat{q},\;\hat{p}\big]=i\hbar\mathbf{1};\qquad [\hat{q},\;\mathbf{1}]=0;\qquad[\hat{p},\;\mathbf{1}]=0
\end{eqnarray}
where $[A, \;B]= AB - BA$,  $\hbar$ is the Planck's constant and $\mathbf{1}$ the identity operator. 
Given physical parameters $\mathbf{m}$ and $\omega$ carrying physical dimensions, it is possible to present, in an equivalent way, the same algebra in terms of the following combinations:
\begin{eqnarray}
 \hat{q}=\sqrt{\hbar/2\mathbf{m}\omega}\left(b+b^\dagger\right),\quad
\hat{p}= -i\sqrt{\left({\mathbf{m}\hbar\omega}/{2}\right)}(b-b^\dagger)
\end{eqnarray}
where $b$ and $b^\dagger$ are called {\it annihilation and creation operators} of the harmonic oscillator obeying
\begin{eqnarray}\label{Harmonic1}
 \big[b,\;b^\dagger\big]=\mathbf{1};\qquad [b,\;\mathbf{1}]=0;\qquad [b^\dagger,\;\mathbf{1}]=0.
\end{eqnarray}
The algebra generated by $\{\hat{q},\;\hat{p},\;\mathbf{1}\}$ or $\{b,\; b^\dagger,\mathbf{1}\}$ satisfying (\ref{Heisenbergal}) or (\ref{Harmonic1}) is  called {\it Weyl-Heisenberg algebra}.

From (\ref{Harmonic1}), one defines the operator $N:=b^\dagger b$, also called {\it number operator}, with the properties:
\begin{eqnarray}\label{Harmonic2}
 [N,\;b]=-b;\quad [N,\;b^\dagger]=b^\dagger;\quad [N,\;\mathbf{1}]=0.
\end{eqnarray}

Let  $\mathcal{F}$  be a Fock space and  $\{|n\rangle|n\in\mathbb{N}\cup\{0\}\}$ be its orthonormal basis. The actions of $b,$ $b^\dagger$ and $N$ on $\mathcal{F}$ are given by
\begin{eqnarray}\label{action}
 b|n\rangle=\sqrt{n}|n-1\rangle,\quad b^\dagger|n\rangle= \sqrt{n+1}|n+1\rangle\quad\mbox{and}\quad N|n\rangle= n|n\rangle
\end{eqnarray}
where $\{|0\rangle\}$ is a normalized vacuum:  $ b|0\rangle=0,\quad \langle0|0\rangle= 1$.  

From (\ref{action}) the states $\{|n\rangle\}$ for $n\ge1$  are built as follows:
\begin{eqnarray}
 |n\rangle= \frac{1}{\sqrt{n!}}(b^\dagger)^n|0\rangle
\end{eqnarray}
satisfying the orthogonality and completeness conditions:
\begin{eqnarray}
\langle n|m\rangle=\delta_{n,m}\quad\mbox{and}\quad \sum_{n=0}^\infty|n\rangle\langle n|=\mathbf{1}.
\end{eqnarray}
Thus, the Hilbert $\mathcal{H}$ space is the Fock space $\mathcal{F}$ and the operators are elements of the algebra generated by  $\{b,\;b^\dagger,\;\mathbf{1}\}$ with the relations (\ref{Harmonic1})-(\ref{Harmonic2}). This algebra, also called the {\it Fock algebra} or {\it quantum oscillator algebra}, successfully describes physical phenomena (
widely spread in nonrelativistic quantum mechanics) but unfortunately fails in solving some important problems (like divergences in field theories, physics under the Planck scale, symmetry breaking, etc.)

Generalization of {\it canonical commutation relations} (\ref{Heisenbergal}) or (\ref{Harmonic1}) was suggested long before the discovery of quantum groups, by Heisenberg to achieve the regularization for (nonrenormalizable) nonlinear spinor field theory. The issue was consi-dered as small additions to the canonical commutations relations \cite{Dancoff,Tamm}. 
Snyder, investigating the infrared catastrophe of soft photons in the Compton scattering, raised this issue   and built a noncommutative Lorentz invariant space-time where the non-commutativity of space operators is proportional to non-linear combinations of phase space operators \cite{Snyder}. 
Further modifications of the oscillator algebra and their possible physical interpretations as spectrum generating algebras of non standard statistics have been made since the earlier work of Snyder.  As matter of citation, let us mention the $q$-oscillator algebras by  Coon and coworkers  \cite{Arik&Coon,Yu&al72}, Kuryshkin \cite{Kuryshkin80}, Jannussis and collaborators
\cite{Jannussis,Jannussis&al81,Jannussis&al83a,Jannussis&al83b}.

With the development  of quantum groups, new aspects of $q$-oscillators have been identified \cite{Biedenharn,Wess&al99, Macfarlane,Sun&Fu,Wess97} as a tool for providing a boson realization of  quantum algebra $su_q(2)$ using a $q$-analogue of the harmonic oscillator and the Jordan-Schwinger mapping, and then generalizations in view of unifying  or extending different existing $q$-deformed algebras were elaborated \cite{Borzov,Burban1,Meljanac,Odzijewicz98}.

Quantum groups made their first appearance  in the mathematical physics literature in connection  with the quantum inverse scattering method, a technique for studying integrable quantum systems \cite{Faddeev&Takhtajan79,Faddeev&Yu,Pasquier,Vega,Witten}.
It was shown in Ref. \cite{Kulish&Reshetikhin83} that the quantum linear problem of the quantum sine-Gordon equation is associated with the deformation of the Lie algebra $sl_2$ unlike the classical problem which is associated with $sl_2$ itself. Later Sklyanin \cite{Sklyanin83} showed that deformations of Lie algebraic structures were not bound to this particular equation but they were a part of a more general theory.

In the second half of 1980's, Drinfel'd realized that the algebraic structure associated to quantum inverse scattering method can be reproduced by suitable algebraic quantization of the Poisson Lie algebras \cite{Drinfeld85, Drinfeld}. Jimbo obtained independently the same result using the representation theory of the corresponding algebra \cite{Jimbo85,Jimbo}.
In fact, they discovered that {\it quantum groups  are dual category of Hopf algebras which are neither commutative nor co-commutative} \cite{Drinfeld, Jimbo}. 
Notice that most of the well studied concrete examples of quantum groups are deformations of the universal enveloping algebra of the semi-simple Lie algebras \cite{Chari&Pressley,Drinfeld,Faddeev&al90,Jimbo85,Kassel,Klimyk&Schmudgen,Kulish&Reshetikhin83,Sahai97,Sklyanin83}.

Despite all useful properties and applications motivated by various one-pa-rameter deformed algebras,  the multi-parameter deformations aroused much interest because of their  flexibility  when  dealing with concrete physical models \cite{Baloitcha,Burban93,Burban2007,Burban&Klimyk,Chakrabarti&Jagan,Floreanini&al93a,Floreanini&al93b,Gelfand&al94,Hounkonnou&Bukweli10,Hounkonnou&Ngompe07a,Hounkonnou&Ngompe07b}.

Coherent states have practically followed the same trend as the quantum algebras. They were invented by Schr\"odinger in 1926 in  the context of the quantum harmonic oscillator. They were defined as minimum-uncertainty states that exhibit classical behavior  \cite{Schrodinger}.

In 1963, coherent states 
were simultaneous rediscovered by Glauber \cite{Glauber1,Glauber2}, Klauder \cite{Klauder63a,Klauder63b} and Sudarshan \cite{Sudarshan} in   quantum optics of coherent light beams emitted by lasers.
The following definition summarizes the concept of Schr\"odinger-Glauber-Klauder-Sudarshan coherent states:
\begin{definition}\label{CSdefintro}
Coherent states (CS) are normalized states $|z\rangle\in\mathcal{H}$ satisfying one of the  following three equivalent conditions:
\begin{itemize}
\item[(i)] they saturate the Heisenberg inequality
\begin{eqnarray}
 (\Delta \hat{q})(\Delta \hat{p})\geq \frac{\hbar}{2},
\end{eqnarray}
where { $(\Delta X)^2:=\langle z|X^2-\langle X\rangle^2|z\rangle$}  with { $\langle X\rangle:= \langle z|X|z\rangle$};
 \item [(ii)] they are eigenvectors of the annihilation operator, with eigenvalue $z\in \mathbb{C}$:
\begin{eqnarray}
 b|z\rangle= z|z\rangle;
\end{eqnarray}
\item[(iii)] they are obtained from the ground state $|0\rangle$ of the harmonic oscillator by a unitary action of the Weyl-Heisenberg group:
\begin{eqnarray}
 |z\rangle=e^{zb^\dagger-\bar z b}|0\rangle.
\end{eqnarray} 
\end{itemize}
\end{definition}
The important feature of these coherent states resides in the partition (resolution) of identity:
\begin{eqnarray}
\int_\mathbb{C}{{[d^2z]}\over\pi} |z\rangle\langle z|= \mathbf{1},
\end{eqnarray}
where we have put $[d^2z]= d(Rez)d(Imz)$ for simplicity.

Since there, CS became very popular objects in mathematics (specially in functional analysis, group theory and representations, geometric quantization, etc.), and in nearly all branches of quantum physics (nuclear,  atomic and solid state physics, statistical mechanics, quantum electrodynamics, path integral, quantum field theory, etc.). For more information
we refer the reader to the references \cite{Ali&al,Klauder&Skagerstam,Perelomov72,Zhang&al90}. 

The vast field covered  by coherent states motivated their generalizations to other families of states deducible from noncanonical operators and satisfying not necessarily all  above mentioned  properties.

The first  class of generalizations, based on the  equivalent conditions given in definition \ref{CSdefintro}, include:\\
a) The approach  by Barut and Girardello \cite{Barut}  considering coherent states as eigenstates of the annihilation operator. This approach was unsuccessful because of its drawbacks from both mathematical and physics point of view  as
detailed in  \cite{Gilmore74a,Perelomov72}.\\
b) The approach based on the  minimum-uncertainty states, i.e.  essentially on the original motivation of Schr\"odinger in his construction of wavepackets which follow the motion of a classical particle while retaining their shapes. This was the basis for building the intelligent coherent states for various dynamical systems \cite{Aragone&al76, Aragone&al74, Nieto&Simmons78,Nieto&Simmons79, Nieto&al81}. Nevertheless, as has been emphasized by Zhang {\it et al} \cite{Zhang&al90}, such a  generalization has several limitations.\\
c) The approach related to the unitary representation of the group generated by the creation and annihilation operators. In two papers by Klauder \cite{Klauder63a,Klauder63b} devoted to a set of continuous states, one finds the basic ideas of  coherent states construction for arbitrary Lie groups, which have been exploited by  Gilmore \cite{Gilmore} and  Perelomov \cite{Perelomov72, Perelomov}  to formulate a general and complete formalism of building coherent states for Heisenberg groups with various properties similar to those of the harmonic oscillator. The key result of  this development was the  intimate connection of  the coherent states  with the dynamical group of  a given  physical system.

 Two other generalizations complete this first class of generalizations:  
$(i)$  the covariant coherent states introduced in Ref. \cite{Ali&al},  considered as a generalization of Gilmore-Perelomov formalism  in the sense that the CS are built from more general groups (homogeneous spaces), and 
$(ii)$  the nonlinear coherent states  related to nonlinear algebras. 
Even though nonlinear coherent states have been used to analyze some quantum mechanical systems as  the motion of a trapped ion \cite{Junker&Roy,deMatos&Vogel}, they are not merely mathematical objects. They were defined as right eigenstates of a generalized annihilation operator \cite{Manko&al,deMatos&Vogel}.

The second class of generalizations is essentially based on the overcompleteness property of coherent states. This property was the {\it raison d'\^etre} of the mathematically oriented construction of generalized coherent states by Ali {\it et al} \cite{Ali&al99,Ali&al} or of the ones with physical orientations \cite{Gazeau09,Gazeau&Klauder,Klauder&al}. 
The construction of generalized CS corresponds to the problem of finding a map from a row set $X$, equiped with a measure $\mu(dx)$, to a (projective) Hilbert space of quantum states ${\cal H}$ (a closed subspace of $L^2(X,\mu)$), $x\mapsto |x\rangle$, defining a family of states $\{|x\rangle\}_{x\in X}$ obeying the following two conditions:
\begin{itemize}
\item[$(1)$] {\it Normalization}: $\langle x|x\rangle =1$;
\item[$(2)$] {\it Resolution of the unit in ${\cal H}$}:
\begin{eqnarray}
\int_X |x\rangle\langle x|\nu(dx)= \mathbf{1}_{\cal H},
\end{eqnarray}
where $\nu(dx)$ is another measure on $X$, usually absolutely continuous with respect to $\mu(dx)$. This means that there exists a positive measurable function $h(x)$ such that $\nu(dx)=h(x)\mu(dx)$ \cite{Gazeau09}.
\end{itemize}
Numerous publications continue to appear  using this property, see for example \cite{Daoud,Hounkonnou&Ngompe07b,Hounkonnou&Sodoga,Popov} and references therein. The overcompleteness property is the most important criteria to be satisfied by CS  as required by  Klauder's criteria \cite{Klauder&al}. 

Let us also   mention  the generalization performed through the so-called {\it coherent state map,} elaborated by Odzijewicz \cite{Odzijewicz98} in 1998.  It is now known that  the coherent state
map may be used as a tool for  the geometric quantization {\it \`a la Kostant-Souriau}  \cite{Maximov&Odzijewicz,Odzijewicz98}. See the works by Kirillov \cite{Kirillov} and Kostant \cite{Kostant} for details on geometric quantization. Further generalization of the latter approach is performed in the framework of this thesis.

More recently, Horzela and Szafraniec \cite{Horzela&Szafraniec12} have introduced the measure-free approach of building CS, requiring two main objects:
\begin{itemize}
\item[$(1')$] a separable Hilbert space ${\cal H}$ with a fixed orthonormal basis $\{e_n\}_{n=0}^d$, $\;d+1= dim{\cal H}$;
\item[$(2')$] a sequence $\{\phi_n\}_{n=0}^d$ of complex valued functions on $X$ satisfying
\begin{eqnarray}
\sum_{n=0}^d|\phi(x)|^2<\infty\quad x\in X
\end{eqnarray}
and
\begin{eqnarray}
\{\alpha_n\}_{n=0}^d\;\;\mbox{and}\;\; \sum_{n=0}^d\alpha_n\phi_n(x)=0\;\;\mbox{for all}\; x \; \mbox{imply that all}\;\; \alpha_n\;\mbox{'s}\;\;\mbox{are}\;0.
\end{eqnarray}
\end{itemize}
The formula
\begin{eqnarray}
K(x,y):= \sum_{n=0}^d\phi_n(y)\overline{\phi_n(x)}
\end{eqnarray}
is regarded as the definition of the reproducing kernel. Therefore, if $K(x,x)\neq0$, $x\in X$, then the prospective CS may be defined  at $x$ by
\begin{eqnarray}
|x\rangle:=\sum_{n=0}^dK(x,x)^{-\frac{1}{2}}\overline{\phi_n(x)}e_n.
\end{eqnarray}
Hence,
\begin{eqnarray}
\langle x|y\rangle= (K(x,x)K(y,y))^{-\frac{1}{2}}K(x,y)
\end{eqnarray}
which means that the states $|x\rangle$ are not normalized. To evoid any further renormalization, the authors of \cite{Horzela&Szafraniec12} assumed that $K(x,x)=1$, $x\in X$ that leads to
the normalized CS
\begin{eqnarray}
C_x:=\sum_{n=0}^d\overline{\phi_n(x)}e_n.
\end{eqnarray}
 
This work is organized as follows. In Section \ref{chap1}, the Heisenberg algebra is deformed with the set of parameters
 $\{q,\; l,\;\lambda\}$  and  the   structure function is deduced. The spectrum of  the associated deformed oscillator is computed. Then, we built the associated deformed coherent states using the Klauder approach and investigate the  quantum statistics and geometry  of the deformed coherent states.

Section \ref{chap2} is devoted to the unification of deformed single-mode oscillator algebras. We give a method of computing the so-called {\it structure function} which is the basis of coherent states construction related to a given algebra. 
We analyse known deformed oscillator algebras and compute corresponding structure functions and give coherent states satisfying the Klauder criteria.

Section \ref{chap3} generalizes a class of  two-parameter
deformed Heisenberg algebras related to meromorphic functions%
Relevant families of coherent states maps are probed and their corresponding hypergeometric
series are computed. Moreover, an ${\cal R}(p, q)$-binomial theorem, generalizing the
$(p,q)$-binomial theorem given in ~\cite{Jagannathan&Rao} is deduced.
This chapter provides the  definitions of the 
${\cal R}(p, q)$-trigonometric, hyperbolic and $(p,q)$-Bessel functions, including
their main relevant properties. The framework developed in this chapter can be considered as a reverse
of the previous one.

In Section \ref{chap4} we build a framework for ${\cal R}(p,q)$-deformed calculus, which provides a method of the computation for a deformed ${\cal R}(p,q)$-derivative,  generalizing known deformed derivatives of analytic functions
defined on a complex disc as particular cases corresponding to  conveniently chosen  meromorphic functions.
We introduce a new  algebra generated by four quantities provided some conditions are satisfied and define the ${\cal R}(p,q)-$differential calculus  yielding the ${\cal R}(p,q)-$integration.   
Also a  construction of Hopf algebra structure is given in Subsection \ref{2section4}, while in
Subsection \ref{2section5} we show that some particular cases can be deduced from the constructed general formalism.

Section \ref{chap5} addresses a new characterization of  ${\cal R}(p,q)$-deformed Rogers-Szeg\"o \,\, polynomials by providing their three-term recursion relations and the associated quantum algebra built with  corresponding 
creation and annihilation operators. The whole construction is performed in  
 a unified way, generalizing all known relevant results which are straightforwardly derived as particular cases. 
Continuous ${\cal R}(p,q)-$deformed  Hermite polynomials and their reccursion
relations are also deduced. Novel relations are provided and discussed.
Finally, there follow the concluding remarks.


\section{The $(q;l,\lambda)$-deformed Heisenberg algebra: coherent states, their statistics and geometry}\label{chap1}

$\;$

The Heisenberg algebra is deformed with the set of parameters
 $\{q, l,\lambda\}$  to generate  a new family of generalized 
coherent states respecting  the Klauder criteria. In this framework, 
 the matrix elements of  relevant operators are exactly computed. Then, 
 a proof on the sub-Poissonian character of the statistics of  the main deformed states 
 is provided. This property   is used to   determine   the induced generalized metric.

%

\subsection{$(q;l,\lambda)$-deformed Heisenberg algebra} \label{Sec1.2}
 Consider now  the following $(q;l,\lambda)$-deformed Heisenberg algebra  \cite{Bukweli&Hounkonnou12c}  generated by operators $N$, $a$, $a^\dagger$
satisfying 
\begin{eqnarray}
[N,\;a]=-a,\qquad [N,\;a^\dagger]= a^\dagger,\label{Kalnins1}
\end{eqnarray}
with the operator products 
\begin{eqnarray}
&& aa^\dagger-a^\dagger a =l^2q^{-N+\lambda-1},\label{Kalnins2}
\end{eqnarray}
where $l$ and $\lambda$ are real numbers with $l\neq0$.\\
One can readily check that the commutator $[.,\;.]$ of operators is antisymmetric and satisfies the Jacobi identity conferring  a Lie algebra structure to the $(q;l,\lambda)$-deformed Heisenberg algebra. This algebra plays an important role in  mathematical sciences in general, and, in  particular, in mathematical physics. In a notable work \cite{Kalnins},
similar associative algebra has been investigated  by Kalnins {\it et al}
 under the form:
\begin{eqnarray}
&&[H,\;E_+]= E_+\qquad [H,\;E_-]=-E_-\cr
&&[E_+,\;E_-]=-q^{-H}\mathcal{E}\qquad [\mathcal{E},\;E_{\pm}]=0=[\mathcal{E},\;H],
\end{eqnarray}
where $q$ is a  real number such that $0<q<1$. These authors showed that
the elements $\mathcal{C}= qq^{-H}\mathcal{E}+(q-1)E_+E_-$ and $\mathcal{E}$
lie in the center of this algebra.  It admits a class of irreducible representations for $\mathcal{C}=l^2I$ and $\mathcal{E}= l^2q^{\lambda-1}I$.

The $(q;l,\lambda)$-deformed Heisenberg algebra (\ref{Kalnins1}) is a generalized algebra in the sense that it can generate a series of existing algebras as particular cases. For instance, even
 the generalization of the Quesne-algebra performed in \cite{Hounkonnou&Ngompe07a,Quesne} can  be deduced from (\ref{Kalnins1}) by setting   $l=1$ and $\lambda=0$.

In the sequel, we consider the Fock space of the Bose oscillator constructed as follows. From the vacuum vector $|0\rangle$ defined by $a|0\rangle=0$, the normalized vectors $|n\rangle$ for $n\ge 1,$ i.e. eigenvectors of the operator $N,$ are obtained as $|n\rangle=C_n(a^\dagger)^n|0\rangle$, where $C_n$ stands for some  normalization constant to be determined. 

\begin{proposition}
The structure function of the $(q; l, \lambda)$-deformed Heisenberg algebra $(\ref{Kalnins1})-(\ref{Kalnins2})$ is given by
\begin{eqnarray}\label{strucure}
\varphi(n)= l^2q^\lambda\frac{1-q^{-n}}{q-1}=l^2q^{\lambda-n}[n]_q,\;\; q>0,
\end{eqnarray}
where $[n]_q= \frac{1-q^n}{1-q},$ with $0<q<1$ or $1<q,$ is the $q_n-$ factors (also known as $q$-deformed numbers in Physics literature  \cite{Gasper}).
\end{proposition}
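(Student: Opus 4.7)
The plan is to define the structure function $\varphi(n)$ as the eigenvalue of $a^\dagger a$ on the number state $|n\rangle$, derive a simple first-order recursion for $\varphi$ from the deformed commutator, and solve it with the boundary condition $\varphi(0)=0$ coming from $a|0\rangle=0$.

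First I would use the Lie-algebra relations $[N,a]=-a$ and $[N,a^\dagger]=a^\dagger$ to establish that $a^\dagger|n\rangle$ is proportional to $|n+1\rangle$ and $a|n\rangle$ is proportional to $|n-1\rangle$; in particular $N$ and $a^\dagger a$ commute, so $a^\dagger a|n\rangle = \varphi(n)|n\rangle$ for some $\varphi(n)\ge 0$, and symmetrically $aa^\dagger|n\rangle = \varphi(n+1)|n\rangle$. This is the standard setup that makes $\varphi$ a well-defined function on $\mathbb{N}$.

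Next, I would apply the deformed commutation relation (\ref{Kalnins2}) to $|n\rangle$. Since $N|n\rangle = n|n\rangle$, the right-hand side acts as multiplication by $l^2 q^{-n+\lambda-1}$, and the left-hand side becomes $\varphi(n+1)-\varphi(n)$. This gives the telescoping recursion
\begin{eqnarray}
\varphi(n+1)-\varphi(n) = l^2 q^{\lambda-1}q^{-n}, \qquad \varphi(0)=0.
\end{eqnarray}
Summing the geometric series $\sum_{k=0}^{n-1} q^{-k} = \frac{1-q^{-n}}{1-q^{-1}}$ yields
$\varphi(n) = l^2 q^{\lambda-1}\cdot\frac{1-q^{-n}}{1-q^{-1}} = l^2 q^{\lambda}\frac{1-q^{-n}}{q-1}$,
which is the first form claimed.

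Finally, I would rewrite the factor $\frac{1-q^{-n}}{q-1} = q^{-n}\frac{q^n-1}{q-1} = q^{-n}[n]_q$ to recover the equivalent expression $\varphi(n)=l^2 q^{\lambda-n}[n]_q$, valid for any $q>0$ with $q\neq 1$ (the singular case $q=1$ being recovered by continuity as $\varphi(n)=l^2 n$). There is no real obstacle here: the only subtlety worth flagging is the consistency of the boundary condition $\varphi(0)=0$ with the vacuum axiom and the positivity $\varphi(n)\ge 0$ needed for the normalization constants $C_n$ in $|n\rangle=C_n(a^\dagger)^n|0\rangle$ to be real, which follows from $q>0$ since $\frac{1-q^{-n}}{q-1}>0$ for all $n\ge 1$ regardless of whether $q<1$ or $q>1$.
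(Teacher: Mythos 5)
Your proposal is correct and follows essentially the same route as the paper: apply the deformed relation (\ref{Kalnins2}) to the number states to obtain the first-order recursion $\varphi(n+1)-\varphi(n)=l^2q^{\lambda-n-1}$, fix $\varphi(0)=0$ from $a|0\rangle=0$, and sum the geometric series to get $\varphi(n)=l^2q^\lambda\frac{1-q^{-n}}{q-1}=l^2q^{\lambda-n}[n]_q$. Your extra remarks on well-definedness of $\varphi$ via the commutation of $N$ with $a^\dagger a$ and on positivity for $q>0$ are consistent with the paper, which establishes positivity by noting that $\varphi$ is strictly increasing with $\varphi(0)=0$.
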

{\bf Proof:} From the definition (\ref{uq02}),  $a^\dagger a=\varphi(N)$ and $aa^\dagger= \varphi(N+1).$ Thus, (\ref{Kalnins2}) is written as
\begin{eqnarray*}
 \varphi(N+\mathbf{1})-\varphi(N)=l^2q^{-N+\lambda-1}.
\end{eqnarray*}
Applying this relation to the vectors $|n\rangle,$  we obtain the recursion relation 
\begin{eqnarray*}
\varphi(n+1)-\varphi(n)=l^2q^{\lambda-n-1},\quad \forall n\in\mathbb{N}
\end{eqnarray*}
from which we deduce
\begin{eqnarray*}
 \varphi(n)= \varphi(0)+l^2q^\lambda\frac{1-q^{-n}}{q-1}.
\end{eqnarray*}
Since, in particular, $\varphi(N)|0\rangle=a^\dagger a|0\rangle=0$ implies $\varphi(0)|0\rangle=0$, we have $\varphi(0)=0.$ Then (\ref{strucure}) follows.
The structure  function  is also a strictly increasing function for $x\in\mathbb{R}$ since  
\begin{eqnarray*}
 \frac{d\varphi(x)}{dx}= l^2q^{\lambda-x}\frac{\ln q}{q-1}>0,\;\mbox{ for } q>0.
\end{eqnarray*}
Since $\varphi(0)=0$, it follows that $\varphi(x)\geq0$ for any real $x>0$ and in particular $\varphi(n)\geq 0$, $\forall n\geq0$. \hfill$\square$

\begin{proposition}\label{corola1}
The orthonormalized basis of the Fock space $\mathcal{F}$ is given by
\begin{eqnarray}\label{fockstate}
 |n\rangle= \frac{q^{n(n+1)/4}}{\sqrt{(l^2q^\lambda)^n[n]_q!}}(a^\dagger)^n|0\rangle,\qquad n=0,\;1,\;2,\; ...
\end{eqnarray}
where $[0]_q!:=1$ and $[n]_q!:=[n]_q[n-1]_q...[1]_q$.\\
Moreover, the action of the operators $a$, $a^\dagger$, $N$, $a^\dagger a$ and $aa^\dagger$ on the  vectors $|n\rangle$ for $n\ge 1$ are given by 
\begin{eqnarray}
&&a|n\rangle=\sqrt{l^2q^{\lambda-n}[n]_q}|n-1\rangle,\\&& 
a^\dagger|n\rangle=\sqrt{l^2q^{\lambda-n-1}[n+1]_q}|n+1\rangle,\\&& 
N|n\rangle=n|n\rangle, \\&&  a^\dagger a|n\rangle=l^2q^{\lambda-n}[n]_q|n\rangle,\\&& 
aa^\dagger|n\rangle=l^2q^{\lambda-n-1}[n+1]_q|n\rangle.
\end{eqnarray}
\end{proposition}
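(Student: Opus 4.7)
The plan is to use the Fock representation defined by $a|0\rangle = 0$, combined with the structure function $\varphi(n) = l^{2}q^{\lambda-n}[n]_{q}$ from the previous proposition, to derive the actions of all five operators on $|n\rangle$ and then to fix the normalization constant $C_{n}$ by requiring $\langle n | n \rangle = 1$.

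First I would recall from the definition of $\varphi$ that $a^{\dagger}a = \varphi(N)$ and, by (\ref{Kalnins2}), $aa^{\dagger} = \varphi(N+\mathbf{1})$. The relations $[N,a]=-a$ and $[N,a^{\dagger}]=a^{\dagger}$, together with $N|0\rangle = 0$ (which follows from $a|0\rangle=0$ and $N = \varphi^{-1}(a^{\dagger}a)$, or more directly from $N(a^{\dagger})^{n}|0\rangle = n (a^{\dagger})^{n}|0\rangle$ proved by induction on $n$), give $N|n\rangle = n|n\rangle$. Applying $a^{\dagger}a$ and $aa^{\dagger}$ to $|n\rangle$ then immediately yields $a^{\dagger}a|n\rangle = l^{2}q^{\lambda-n}[n]_{q}|n\rangle$ and $aa^{\dagger}|n\rangle = l^{2}q^{\lambda-n-1}[n+1]_{q}|n\rangle$.

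Next, the commutation relations imply that $a^{\dagger}|n\rangle$ is an eigenvector of $N$ with eigenvalue $n+1$ and $a|n\rangle$ is an eigenvector with eigenvalue $n-1$; up to phase they must therefore equal scalar multiples of $|n+1\rangle$ and $|n-1\rangle$, respectively. Taking norms via
\begin{eqnarray*}
\|a^{\dagger}|n\rangle\|^{2} = \langle n| aa^{\dagger}|n\rangle = \varphi(n+1), \qquad \|a|n\rangle\|^{2} = \langle n| a^{\dagger}a|n\rangle = \varphi(n),
\end{eqnarray*}
and choosing phases consistently gives the claimed formulas $a^{\dagger}|n\rangle = \sqrt{l^{2}q^{\lambda-n-1}[n+1]_{q}}\,|n+1\rangle$ and $a|n\rangle = \sqrt{l^{2}q^{\lambda-n}[n]_{q}}\,|n-1\rangle$.

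Finally, iterating $a^{\dagger}$ from the vacuum shows $|n\rangle = \frac{1}{\sqrt{\varphi(n)!}}(a^{\dagger})^{n}|0\rangle$ with $\varphi(n)! := \prod_{k=1}^{n}\varphi(k)$; the explicit normalization then follows from the short bookkeeping computation
\begin{eqnarray*}
\prod_{k=1}^{n}\varphi(k) = \prod_{k=1}^{n} l^{2}q^{\lambda-k}[k]_{q} = (l^{2}q^{\lambda})^{n}\,q^{-n(n+1)/2}\,[n]_{q}!,
\end{eqnarray*}
so that $C_{n} = q^{n(n+1)/4}/\sqrt{(l^{2}q^{\lambda})^{n}[n]_{q}!}$, giving (\ref{fockstate}). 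The only nontrivial step is keeping track of the $q$-exponent $\sum_{k=1}^{n}k = n(n+1)/2$ under the square root, which is the source of the factor $q^{n(n+1)/4}$; everything else is mechanical given the structure function.
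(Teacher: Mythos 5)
Your proposal is correct and follows essentially the same route as the paper: fix $C_n$ from $\langle n|n\rangle = |C_n|^2\varphi(n)\varphi(n-1)\cdots\varphi(1)$ and evaluate the product $\prod_{k=1}^n l^2 q^{\lambda-k}[k]_q = (l^2q^\lambda)^n q^{-n(n+1)/2}[n]_q!$, then read off the operator actions from the structure function. The paper states the operator-action part more tersely, but the content is the same.
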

{\bf Proof:} To determine the constant of normalization $C_n,$ we set
\begin{eqnarray*}
 1=:\langle n|n\rangle=|C_n|^2\langle 0|a^n(a^\dagger)^n|0\rangle= |C_n|^2\varphi(n)\varphi(n-1)...\varphi(1)\langle 0|0\rangle
\end{eqnarray*}
leading to $C_n=\frac{q^{n(n+1)/4}}{\sqrt{(l^2q^\lambda)^n[n]_q!}}$. Replacing $C_n$ by its value in the definition of $|n\rangle$ given above yields (\ref{fockstate}). The orthogonality of the vectors $|n\rangle$ is a direct consequence of $a|0\rangle=0$. The rest of the proof is obtained from (\ref{fockstate}) using (\ref{Kalnins1}), (\ref{Kalnins2}) and (\ref{strucure}).\hfill$\square$

\begin{theorem}\label{thm}
 The operators $(a+a^\dagger)$ and $i(a-a^\dagger),$ defined on the Fock space $\mathcal{F},$ are bounded and, consequently, self-adjoint  if $q>1$. If $q<1$, they are not self-adjoint. 
\end{theorem}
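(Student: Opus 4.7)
The plan is to read off the norms $\|a|n\rangle\|^2=\varphi(n)$ and $\|a^\dagger|n\rangle\|^2=\varphi(n+1)$ from Proposition \ref{corola1}, insert the explicit formula (\ref{strucure}), and decide boundedness depending on the sign of $q-1$. For a general $\psi=\sum_n c_n|n\rangle\in\mathcal F$, orthonormality of the basis yields $\|a\psi\|^2=\sum_n|c_n|^2\varphi(n)$ and $\|a^\dagger\psi\|^2=\sum_n|c_{n}|^{2}\varphi(n+1)$, so the boundedness question reduces to whether $(\varphi(n))_{n\ge 0}$ is a bounded sequence.

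For $q>1$, the map $n\mapsto \varphi(n)=l^2q^\lambda(1-q^{-n})/(q-1)$ is strictly increasing and bounded above by $M:=l^2q^\lambda/(q-1)$, since $1-q^{-n}\in(0,1)$. Hence $\|a\psi\|^2\le M\|\psi\|^2$ and $\|a^\dagger\psi\|^2\le M\|\psi\|^2$, so $a$ and $a^\dagger$ extend uniquely to bounded everywhere-defined operators on $\mathcal F$ with $\|a\|,\|a^\dagger\|\le\sqrt{M}$. A direct matrix-element computation on $\{|n\rangle\}$ then identifies $a^\dagger$ with the Hilbert-space adjoint of $a$, so $a+a^\dagger$ and $i(a-a^\dagger)$ are bounded symmetric operators, hence self-adjoint by the Hellinger--Toeplitz theorem.

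For $q<1$, the same formula gives $\varphi(n)\sim l^2q^\lambda(1-q)^{-1}q^{-n}\to\infty$, so neither $a$ nor $a^\dagger$ is bounded, and the same holds for $a+a^\dagger$ and $i(a-a^\dagger)$. To upgrade unboundedness to non-self-adjointness I would view $A:=a+a^\dagger$ as a Jacobi operator on $\ell^2(\mathbb N)$ with vanishing diagonal and off-diagonal weights $b_n:=\sqrt{\varphi(n+1)}\sim c\,q^{-(n+1)/2}$, and analyze the deficiency equation $A^*\psi=i\psi$, equivalent to the three-term recurrence $b_n\psi_{n+1}+b_{n-1}\psi_{n-1}=i\psi_n$. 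An asymptotic analysis, using that $i\psi_n/b_{n-1}$ is negligible compared with $\psi_{n\pm 1}$, yields $\psi_{n+1}\approx -q^{1/2}\psi_{n-1}$, producing a non-trivial square-summable solution; by von Neumann's criterion the deficiency indices of $A$ are then positive, so its closure is not self-adjoint. The argument for $i(a-a^\dagger)$ is identical. The main obstacle is precisely this last step: certifying the existence of a genuine $\ell^2$ deficiency vector through the asymptotic behavior of a three-term recursion with exponentially growing coefficients, in close analogy with the indeterminate moment problem for $q$-Hermite polynomials when $0<q<1$.
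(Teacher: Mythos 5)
Your treatment of the case $q>1$ is correct and essentially the paper's argument in different clothing: bounding $\varphi(n)\le l^2q^\lambda/(q-1)$ is the same estimate the paper performs on the off-diagonal Jacobi entries $x_n=\sqrt{\varphi(n)}$, and once $a$, $a^\dagger$ are bounded with $a^\dagger=a^*$, self-adjointness of the two symmetric combinations is immediate (you do not even need Hellinger--Toeplitz at that point, since boundedness is already in hand).

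For $q<1$ there is a genuine gap, and you have located it yourself: unboundedness of $a$ and $a^\dagger$ does not by itself preclude essential self-adjointness of $a+a^\dagger$, so everything rests on producing an $\ell^2$ solution of the deficiency equation $A^*\psi=i\psi$. Your asymptotic reduction $\psi_{n+1}\approx-(b_{n-1}/b_n)\psi_{n-1}=-q^{1/2}\psi_{n-1}$ correctly predicts geometric decay, but dropping the term $i\psi_n/b_n$ from a three-term recurrence whose coefficients grow like $q^{-n/2}$ requires a genuine perturbation (Levinson-type) argument to certify that the decaying formal solution really persists and is square-summable; as written, the existence of a nontrivial deficiency vector is asserted rather than proved. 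The paper closes exactly this step by invoking a standard criterion for Jacobi matrices with zero diagonal (Theorem 1.5, Chapter VII of Berezanskii): if $\sum_n 1/x_n<\infty$ and $x_{n-1}x_{n+1}\le x_n^2$, the matrix is in the limit-circle case and hence not essentially self-adjoint. Both hypotheses are verified by elementary computations --- the ratio test gives $x_n/x_{n+1}\to q^{1/2}<1$, and the log-concavity $x_{n-1}x_{n+1}\le x_n^2$ follows from $q+q^{-1}\ge 2$. If you want to keep your route, you must either supply the rigorous asymptotic analysis of the recurrence or, more economically, check these two conditions and cite the criterion; the identical argument then covers $i(a-a^\dagger)$.
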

{\bf Proof:} The matrix elements of  the operator $(a+a^\dagger)$ on the basis $|n\rangle$ are given by
\begin{eqnarray}\label{jacobi01}
\langle m|(a+a^\dagger)|n\rangle=x_n\delta_{m, n-1}+x_{n+1}\delta_{m, n+1}, n,\;m=0,\;1,\;2,\;\cdots 
\end{eqnarray}
while the matrix elements of the operator  $i(a-a^\dagger)$  are given by
\begin{eqnarray}\label{jacobi02}
\langle m|i(a-a^\dagger)|n\rangle=ix_n\delta_{m, n-1}-ix_{n+1}\delta_{m, n+1}, n,\;m=0,\;1,\;2,\;\cdots
\end{eqnarray}
where $x_n= \left(l^2q^{\lambda-n}[n]q\right)^{1/2}$.
Besides, the operators $(a+a^\dagger)$ and $i(a-a^\dagger)$ can be represented by the two following symmetric Jacobi matrices, respectively:
\begin{eqnarray}\label{jacobir}
 \left(\begin{array}{cccccc}0&x_1&0&0&0&\cdots\\x_1&0&x_2&0&0&\cdots\\0&x_2&0&x_3&0&\cdots\\\vdots&\ddots&\ddots&\ddots&\ddots&\ddots
       \end{array}\right)
\end{eqnarray}
and
\begin{eqnarray}\label{jacobic}
 \left(\begin{array}{cccccc}0&-ix_1&0&0&0&\cdots\\ix_1&0&-ix_2&0&0&\cdots\\0&ix_2&0&-ix_3&0&\cdots\\\vdots&\ddots&\ddots&\ddots&\ddots&\ddots
       \end{array}\right)
\end{eqnarray}
Two situations deserve investigation:
\newline
$\bullet$ Suppose    $q>1$. Then,
\begin{eqnarray*}
 \left|x_n\right|=\left(\frac{l^2q^\lambda}{q-1}\frac{q^n-1}{q^n}\right)^{1/2}<\left(\frac{l^2q^\lambda}{q-1}\right)^{1/2},\; \forall n\geq1.
\end{eqnarray*}
Therefore, the Jacobi matrices  in (\ref{jacobir}) and (\ref{jacobic}) are bounded and self-adjoint (Theorem 1.2., Chapter VII in Ref. \cite{Berezanskii}). Thus, $(a+a^\dagger)$ and $i(a-a^\dagger)$ are bounded and, consequently, self-adjoint.\\
$\bullet$ Contrarily, if $q<1,$ then
\begin{eqnarray}
 \lim_{n\to\infty}x_n=\lim_{n\to\infty}\left(l^2q^\lambda\frac{1-q^{-n}}{q-1}\right)^{1/2}=\infty.
\end{eqnarray}
Considering the series $\sum_{n=1}^\infty 1/x_n$, we obtain 
\begin{eqnarray*}
\overline{\lim_{n\to\infty}}\left(\frac{1/x_{n+1}}{1/x_n}\right)=\overline{\lim_{n\to\infty}}\left(\frac{1-q^{-n}}{1-q^{-n-1}}\right)^{1/2}= q^{1/2}<1.
\end{eqnarray*}
This ratio test  leads to the conclusion  that the series $\sum_{n=1}^\infty 1/x_n$ converges. Moreover, $1-2q+q^2=(1-q)^2\geq0\Longrightarrow q^{-1}+q\geq2$. Hence,
\begin{eqnarray*}
&& 0\leq\left(\frac{l^2q^\lambda}{q-1}\right)^2\left(1-q^{-n}(q+q^{-1})+q^{-2n}\right)\leq\left(1-2q^{-n}+q^{-2n}\right)\left(\frac{l^2q^\lambda}{q-1}\right)^2 \cr
\Leftrightarrow&&0\leq\left(l^2q^\lambda\frac{1-q^{-n+1}}{q-1}\right)\left(l^2q^\lambda\frac{1-q^{-n-1}}{q-1}\right)\leq \left(l^2q^\lambda\frac{1-q^{-n}}{q-1}\right)^2 \cr
\Leftrightarrow&&0\leq\left(l^2q^\lambda\frac{1-q^{-n+1}}{q-1}\right)^{1/2}\left(l^2q^\lambda\frac{1-q^{-n-1}}{q-1}\right)^{1/2}\leq \left(l^2q^\lambda\frac{1-q^{-n}}{q-1}\right) \cr
\Leftrightarrow &&
0\leq x_{n-1}x_{n+1}\leq x_n^2.
\end{eqnarray*}
Therefore, the Jacobi matrices  in (\ref{jacobir}) and (\ref{jacobic}) are not self-adjoint (Theorem 1.5., Chapter VII in Ref. \cite{Berezanskii}).\hfill$\square$

\begin{definition}
The $(q;l,\lambda)$-deformed position, momentum  
and Hamiltonian  operators denoted by $X_{l,\lambda,q}$, $P_{l,\lambda,q}$ and $H_{l,\lambda,q},$ respectively, are defined as follows:
\begin{eqnarray}
 X_{l,\lambda,q}&:=& \left({\hbar}/{2\mathbf{ m}\omega}\right)^{1/2}(a+a^\dagger),\cr
P_{l,\lambda,q}&:=& -i\left({\mathbf{ m}\hbar\omega}/{2}\right)^{1/2}(a-a^\dagger)\cr
 H_{l,\lambda,q}&:=& \frac{1}{2\mathbf{ m}}(P_{l,\lambda,q})^2+\frac{1}{2}\mathbf{ m}\omega^2(X_{l,\lambda,q})^2
\cr&=& \frac{\hbar\omega}{2}(a^\dagger a+aa^\dagger).
\end{eqnarray}
\end{definition}

\begin{proposition} The following system characterization holds:
\begin{itemize}
\item The vectors $|n\rangle$ are eigenvectors of the $(q;l,\lambda)$-deformed Hamiltonian with respect to the eigenvalues
\begin{eqnarray}\label{mecaprop1}
 E_{l,\lambda,q}(n)= \frac{\hbar\omega}{2}l^2q^{\lambda-n-1}\big(q[n]_q+ [n+1]_q\big).
\end{eqnarray}
\item The mean values of $X_{l,\lambda,q}$ and $P_{l,\lambda,q}$ in the states $|n\rangle$ are zero while their variances are given by
\begin{eqnarray}
 (\Delta X_{l,\lambda,q})_n^2 &=& \frac{\mathbf{ m}\hbar\omega}{2}l^2q^{\lambda-n-1}\big(q[n]_q+ [n+1]_q\big),\label{mecaprop2}\\
(\Delta P_{l,\lambda,q})_n^2 &=& \frac{\hbar}{2\mathbf{ m}\omega}l^2q^{\lambda-n-1}\big(q[n]_q+ [n+1]_q\big),\label{mecaprop3}
\end{eqnarray}
where $(\Delta A)_n^2=\langle A^2\rangle_n-\langle A\rangle_n^2$  with $\langle A\rangle_n=\langle n|A|n\rangle$.

\item The position-momentum uncertainty relation is given by
\begin{eqnarray}\label{mecaprop4}
 (\Delta X_{l,\lambda,q})_n(\Delta P_{l,\lambda,q})_n= \frac{h}{2}l^2q^{\lambda-n-1}\big(q[n]_q+ [n+1]_q\big)
\end{eqnarray}
which is reduced, for the vacuum state, to the expression
\begin{eqnarray}\label{mecaprop5}
 (\Delta X_{l,\lambda,q})_0(\Delta P_{l,\lambda,q})_0= \frac{h}{2}l^2q^{\lambda-1}.
\end{eqnarray}
\end{itemize}
\end{proposition}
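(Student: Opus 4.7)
The plan is to reduce every statement to the explicit actions on the Fock basis $\{|n\rangle\}$ already listed in Proposition~\ref{corola1}, so that everything becomes a short linear-algebraic computation in $\mathcal{F}$.

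\textbf{Eigenvalues of $H_{l,\lambda,q}$.} I would start from the identity $H_{l,\lambda,q}=\tfrac{\hbar\omega}{2}(a^\dagger a+aa^\dagger)$ recorded in the definition, then simply apply the two formulas
\begin{eqnarray*}
a^\dagger a|n\rangle=l^{2}q^{\lambda-n}[n]_{q}|n\rangle,\qquad aa^\dagger|n\rangle=l^{2}q^{\lambda-n-1}[n+1]_{q}|n\rangle
\end{eqnarray*}
from Proposition~\ref{corola1}, factor the common $l^{2}q^{\lambda-n-1}$, and recognize $q[n]_q+[n+1]_q$ in the bracket. This immediately delivers~(\ref{mecaprop1}).

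\textbf{Vanishing of the means.} Since $a|n\rangle$ is proportional to $|n-1\rangle$ and $a^\dagger|n\rangle$ is proportional to $|n+1\rangle$, orthonormality of the basis $\{|n\rangle\}$ forces $\langle n|a|n\rangle=\langle n|a^\dagger|n\rangle=0$. By linearity of the definitions of $X_{l,\lambda,q}$ and $P_{l,\lambda,q}$, this gives $\langle X_{l,\lambda,q}\rangle_n=\langle P_{l,\lambda,q}\rangle_n=0$.

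\textbf{Variances.} The next step is to expand
\begin{eqnarray*}
(a+a^\dagger)^{2}=a^{2}+(a^\dagger)^{2}+a^\dagger a+aa^\dagger,\qquad -(a-a^\dagger)^{2}=-a^{2}-(a^\dagger)^{2}+a^\dagger a+aa^\dagger.
\end{eqnarray*}
The terms $a^{2}$ and $(a^\dagger)^{2}$ shift the index by $\pm 2$, so by orthonormality they contribute nothing to $\langle\cdot\rangle_n$. The two remaining pieces are exactly the ones appearing in $H_{l,\lambda,q}$, so
\begin{eqnarray*}
\langle X_{l,\lambda,q}^{2}\rangle_n=\tfrac{\hbar}{2\mathbf{m}\omega}\,l^{2}q^{\lambda-n-1}\bigl(q[n]_q+[n+1]_q\bigr),
\end{eqnarray*}
and similarly for $\langle P_{l,\lambda,q}^{2}\rangle_n$ with prefactor $\tfrac{\mathbf{m}\hbar\omega}{2}$. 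Combined with the vanishing means, this yields~(\ref{mecaprop2})--(\ref{mecaprop3}).

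\textbf{Uncertainty relation.} Finally, multiplying the square roots of the two variances cancels the $\mathbf{m}\omega$ factor and produces (\ref{mecaprop4}); specializing to $n=0$, where $[0]_q=0$ and $[1]_q=1$, reduces the bracket to $1$ and gives~(\ref{mecaprop5}). I do not expect a genuine obstacle here: the only point requiring mild care is bookkeeping of the $q^{\lambda-n-1}$ factor when combining the two operator products into a common expression, and a consistent reading of the $\hbar$ versus $h$ prefactor in the stated uncertainty product.
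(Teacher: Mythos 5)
Your proof is correct and follows essentially the same route as the paper: everything is reduced to the Fock-basis actions of Proposition \ref{corola1}, with orthonormality killing the index-shifting contributions of $a^{2}$ and $(a^\dagger)^{2}$ (the paper phrases this step via the Jacobi matrix elements $x_n$ from the proof of Theorem \ref{thm}, which is the same computation). Note only that your dimensionally correct prefactors, $\frac{\hbar}{2\mathbf{m}\omega}$ for $\langle X_{l,\lambda,q}^{2}\rangle_n$ and $\frac{\mathbf{m}\hbar\omega}{2}$ for $\langle P_{l,\lambda,q}^{2}\rangle_n$, are the transposes of those printed in (\ref{mecaprop2})--(\ref{mecaprop3}); that swap, like the $h$ versus $\hbar$ in (\ref{mecaprop4}), is a misprint in the statement rather than a flaw in your argument.
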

{\bf Proof:} Indeed, using the result of the Proposition \ref{corola1}, we get
\begin{eqnarray*}
 H_{l,\lambda,q}|n\rangle= \frac{\hbar\omega}{2}(a^\dagger a+aa^\dagger)\rangle= \frac{\hbar\omega}{2}l^2q^{\lambda-n-1}\big(q[n]_q+ [n+1]_q\big)|n\rangle.
\end{eqnarray*}
The  first two relations (\ref{jacobi01}) and (\ref{jacobi02}) in the proof of the previous Theorem \ref{thm} yield
$\langle n|(a+a^\dagger)|n\rangle=0=\langle n|i(a-a^\dagger)|n\rangle$ and 
$\langle n|(a+a^\dagger)^2|n\rangle=x_n^2+x_{n+1}^2=\langle n|i^2(a-a^\dagger)^2|n\rangle$. Therefore, $\langle n|X_{l,\lambda}|n\rangle=0=\langle n|P_{l,\lambda}|n\rangle$,
$\langle n|X^2_{l,\lambda}|n\rangle=\frac{\mathbf{ m}\hbar\omega}{2}(x_n^2+x_{n+1}^2)$ and $\langle n|P^2_{l,\lambda}|n\rangle=\frac{\hbar}{2\mathbf{ m}\omega}(x_n^2+x_{n+1}^2)$. The rest of the proof is obtained replacing $x_n$ and  $x_{n+1}$ by their expressions.\hfill$\square$

\subsection{Coherent states $|z\rangle_{l,\lambda}$}\label{Sec1.3}

\begin{definition}
The coherent states associated with the algebra (\ref{Kalnins1})-(\ref{Kalnins2}) are defined as
\begin{eqnarray}\label{KalninCS}
 |z\rangle_{l,\lambda}:= \mathcal{N}_{l,\lambda}^{-1/2}(|z|^2)\sum_{n=0}^\infty\frac{q^{n(n+1)/4}z^n}{\sqrt{(l^2q^\lambda)^n[n]_q!}}|n\rangle,\; z\in\mathbf{D}_{l,\lambda},
\end{eqnarray}
where
\begin{eqnarray}
 \mathcal{N}_{l,\lambda}(x)&=&\sum_{n=0}^\infty\frac{q^{n(n+1)/2}x^n}{(l^2q^\lambda)^n[n]_q!}
=\sum_{n=0}^\infty\frac{q^{n(n-1)/2}}{(q;q)_n}\left(\frac{(1-q)qx}{ l^2q^\lambda}\right)^n
\end{eqnarray}
and
\begin{eqnarray}
 \mathbf{D}_{l,\lambda}=\left\{z\in\mathbb{C}:\; |z|^2<R_{l,\lambda}\right\},\;
\mbox{with }\; R_{l,\lambda}=\left\{\begin{array}{lcl} \infty&\mbox{if }& 0<q<1\\\frac{l^2q^\lambda}{q-1}&\mbox{if }& q>1.\end{array}\right.
\end{eqnarray}
\end{definition}
 $R_{l,\lambda}$ is the   convergence radius of the series $\mathcal{N}_{l,\lambda}(x)$.

Remark  that the $q$-deformed coherent states introduced in \cite{Quesne} are recovered as a particular case  corresponding to $l=1$ and $\lambda=0$.

We now aim  at showing that  the coherent states (\ref{KalninCS}) satisfy the Klauder's criteria \cite{Klauder&Skagerstam,Klauder&al}. To this end let us first prove the following lemma:

\begin{lemma}\label{Buklemma} If $q>1,$ then
\begin{eqnarray}
&& \frac{\mathcal{N}_{l,\lambda}(x)}{\mathcal{N}_{l,\lambda}(q^{-1}x)}=\frac{1}{1-(q-1)x/(l^2q^\lambda)}\;,\label{Kaprop1}\\
&&\mathcal{N}_{l,\lambda}(x)= \frac{1}{\left((q-1)x/(l^2q^\lambda);q^{-1}\right)_\infty},\label{Kaprop2}\\
&&\int_0^{R_{l,\lambda}}x^n\left(\mathcal{N}_{l,\lambda}(q^{-1}x)\right)^{-1}d_q^{l,\lambda}x= (l^2q^\lambda)^nq^{-n(n+1)/2}[n]_q!.\label{Kaprop3}
\end{eqnarray}
\end{lemma}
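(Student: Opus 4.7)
The plan is to handle the three identities sequentially, with (i) feeding (ii) feeding (iii).

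For (i), I would work directly with the series $\mathcal{N}_{l,\lambda}(x) = \sum_{n\ge 0} \frac{q^{n(n+1)/2}x^n}{(l^2q^\lambda)^n [n]_q!}$. Form the difference $\mathcal{N}_{l,\lambda}(x) - \mathcal{N}_{l,\lambda}(q^{-1}x)$ and factor out $q^n - 1 = (q-1)[n]_q$ from each term. The factor $[n]_q$ cancels the last factor of $[n]_q!$, a power of $q$ adjusts the Gaussian exponent $n(n+1)/2$ to $(n-1)n/2$, and the index shift $n \mapsto n+1$ recovers the series for $\mathcal{N}_{l,\lambda}(x)$ itself. This yields the functional equation
\begin{eqnarray*}
\mathcal{N}_{l,\lambda}(x) - \mathcal{N}_{l,\lambda}(q^{-1}x) = \frac{(q-1)x}{l^2q^\lambda}\,\mathcal{N}_{l,\lambda}(x),
\end{eqnarray*}
which rearranges to (\ref{Kaprop1}).

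For (ii), iterate the relation obtained in (i). Writing $a := (q-1)/(l^2q^\lambda)$, (i) says $\mathcal{N}_{l,\lambda}(q^{-1}x) = (1-ax)\mathcal{N}_{l,\lambda}(x)$, hence applying it $k$ times gives $\mathcal{N}_{l,\lambda}(q^{-k}x) = \prod_{j=0}^{k-1}(1-aq^{-j}x)\,\mathcal{N}_{l,\lambda}(x)$. Since we are in the regime $q>1$, we have $q^{-k}x \to 0$ as $k\to\infty$, and $\mathcal{N}_{l,\lambda}(0)=1$ from the series. Passing to the limit produces
\begin{eqnarray*}
\mathcal{N}_{l,\lambda}(x)^{-1} = \prod_{j=0}^{\infty}\Bigl(1 - aq^{-j}x\Bigr) = \bigl(ax;q^{-1}\bigr)_\infty,
\end{eqnarray*}
which is exactly (\ref{Kaprop2}). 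One should also observe that the product converges absolutely on $|x|<R_{l,\lambda} = l^2q^\lambda/(q-1)$, which is precisely the domain on which the series for $\mathcal{N}_{l,\lambda}$ converges.

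For (iii), substitute (ii) into the integrand to obtain $\bigl(\mathcal{N}_{l,\lambda}(q^{-1}x)\bigr)^{-1} = (q^{-1}x/R_{l,\lambda};q^{-1})_\infty$. Using the Jackson-type definition of $d_q^{l,\lambda}x$ over the nodes $x_k = R_{l,\lambda}q^{-k}$, the integral reduces to a sum $\sum_{k\ge 0} q^{-k}\,x_k^{n}\,(q^{-k-1};q^{-1})_\infty$ (up to the overall prefactor fixed by the definition). Factor $(q^{-k-1};q^{-1})_\infty = (q^{-1};q^{-1})_\infty/(q^{-1};q^{-1})_k$ and apply Euler's identity $\sum_{k\ge 0} z^k/(p;p)_k = 1/(z;p)_\infty$ with $p = q^{-1}$ and $z = q^{-(n+1)}$. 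The resulting telescoping $(q^{-1};q^{-1})_\infty/(q^{-n-1};q^{-1})_\infty = (q^{-1};q^{-1})_n$ collapses the sum to $R_{l,\lambda}^{n+1}\cdot(q^{-1};q^{-1})_n$ (times constants). Finally, the identity $(q^{-1};q^{-1})_n = q^{-n(n+1)/2}(q-1)^n[n]_q!$ together with $R_{l,\lambda} = l^2q^\lambda/(q-1)$ reorganizes this into $(l^2q^\lambda)^n q^{-n(n+1)/2}[n]_q!$, as claimed.

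The only nontrivial obstacle is part (iii): one must correctly handle the passage between $(q;q)$- and $(q^{-1};q^{-1})$-Pochhammer conventions (since $q>1$ forces the $q^{-1}$-shifted factorials here) and apply Euler's formula on the correct grid. Parts (i) and (ii) are essentially formal series manipulations once the right index shift in (i) is spotted.
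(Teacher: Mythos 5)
Your proposal is correct and follows essentially the same route as the paper: the functional equation $\mathcal{N}_{l,\lambda}(x)-\mathcal{N}_{l,\lambda}(q^{-1}x)=\frac{(q-1)x}{l^2q^\lambda}\mathcal{N}_{l,\lambda}(x)$ for (\ref{Kaprop1}) (the paper packages this as the eigenfunction identity $\partial_q^{l,\lambda}\mathcal{N}_{l,\lambda}=\mathcal{N}_{l,\lambda}$, which is the same series computation you carry out), iteration plus $\mathcal{N}_{l,\lambda}(0)=1$ for (\ref{Kaprop2}), and the Jackson-sum evaluation with $(q^{-k-1};q^{-1})_\infty=(q^{-1};q^{-1})_\infty/(q^{-1};q^{-1})_k$ followed by Euler's $q$-exponential summation for (\ref{Kaprop3}). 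All the bookkeeping you indicate, including $(q^{-1};q^{-1})_n=q^{-n(n+1)/2}(q-1)^n[n]_q!$, matches the paper's computation.
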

{\bf Proof:} 
We use  the $(q;l,\lambda)$-derivative defined by
\begin{eqnarray}\label{Kaderiva}
 \partial_q^{l,\lambda}f(x)= l^2q^\lambda\frac{f(x)-f(q^{-1}x)}{(q-1)x}
\end{eqnarray}
to obtain
\begin{eqnarray*}
\mathcal{N}_{l,\lambda}(x)=\partial_q^{l,\lambda}\mathcal{N}_{l,\lambda}(x)=
l^2q^\lambda\frac{\mathcal{N}_{l,\lambda}(x)-\mathcal{N}_{l,\lambda}(q^{-1}x)}{(q-1)x}
\end{eqnarray*}
which leads to (\ref{Kaprop1}) and 
\begin{eqnarray}
\mathcal{N}_{l,\lambda}(x)=\frac{\mathcal{N}_{l,\lambda}(q^{-n}x)}{\prod_{k=0}^{n-1}\big(1-(q-1)q^{-k}x/(l^2q^\lambda)\big)}, \; n=1,\;2,\;...
\end{eqnarray}
Letting $n$ to $+\infty$ and taking into account the fact that $\mathcal{N}_{l,\lambda}(0)=1$ lead to (\ref{Kaprop2}).

Next, we use  the $(q;l,\lambda)$-integration given by
\begin{eqnarray}\label{Kaintegra}
\int_0^af(x)d_q^{l,\lambda}x=\frac{q-1}{l^2q^\lambda}a\sum_{k=0}^\infty q^{-k}f(aq^{-k})
\end{eqnarray}
to get
\begin{eqnarray*}
\int_0^{R_{l,\lambda}}x^n\left(\mathcal{N}_{l,\lambda}(q^{-1}x)\right)^{-1}d_q^{l,\lambda}x &=& \sum_{k=0}^\infty q^{-(n+1)k}\frac{(l^2q^\lambda)^n}{(q-1)^n}\big(q^{-(k+1)};q^{-1}\big)_\infty
\cr&=&\frac{(l^2q^\lambda)^n}{(q-1)^n}\big(q^{-1};q^{-1}\big)_\infty
\sum_{k=0}^\infty \frac{q^{-(n+1)k}}{\big(q^{-1};q^{-1}\big)_k}
\cr&=&\frac{(l^2q^\lambda)^n}{(q-1)^n}\frac{\big(q^{-1};q^{-1}\big)_\infty}{\big(q^{-(n+1)};q^{-1}\big)_\infty}
\cr&=&\frac{(l^2q^\lambda)^n}{(q-1)^n}\big(q^{-1};q^{-1}\big)_n
=(l^2q^\lambda)^nq^{-n(n+1)/2}[n]_q!.
\end{eqnarray*}
\hfill$\Box$

\begin{proposition}\label{prop123}
The coherent states defined in (\ref{KalninCS})
\begin{itemize}
 \item [(i)] are normalized eigenvectors of the operator $a$ with eigenvalue $z$, i.e.
\begin{eqnarray}
a|z\rangle_{l,\lambda}=z|z\rangle_{l,\lambda},\qquad {}_{l,\lambda}\langle z|z\rangle_{l,\lambda}=1;
\end{eqnarray}
\item[(ii)] are not orthogonal to each other, i.e.
\begin{eqnarray}
 {}_{l,\lambda}\langle z_1|z_2\rangle_{l,\lambda} \neq0,\;\;\mbox{when}\;\;z_1\neq z_2;
\end{eqnarray}
\item[(iii)]are continuous in their labels $z$;
\item[(iv)] resolve the identity, i.e.
\begin{eqnarray}\label{unitf}
 \mathbf{1}=\int_{\mathbf{D}_{l,\lambda}} d\mu_{l,\lambda}(\bar z,z)|z\rangle_{l,\lambda} {}_{l,\lambda}\langle z|,
\end{eqnarray}
where
\begin{eqnarray}
 d\mu_{l,\lambda}(\bar z,z)=\frac{1-q}{l^2q^\lambda\ln{q^{-1}}}\frac{\mathcal{N}_{l,\lambda}(\bar zz)}{\mathcal{N}_{l,\lambda}(\bar zz/q)}\frac{d^2z}{\pi},\;\mbox{ if }\; 0<q<1,
\end{eqnarray}
and
\begin{eqnarray}
 d\mu(\bar z,z)=\frac{1}{2\pi}\frac{d_q^{l,\lambda}x\;d\theta}{1-(q-1)x/(l^2q^\lambda)}, \quad x=|z|^2,\;\;\theta=\arg(z),
\end{eqnarray}
with $0<x<\frac{l^2q^\lambda}{q-1}$ and $0\leq\theta\leq2\pi$ for $q>1$.
\end{itemize}
\end{proposition}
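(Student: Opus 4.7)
\medskip
\noindent\textbf{Proof proposal.}

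The strategy is to handle the four items in the order listed, since (i)--(iii) are essentially algebraic manipulations built on the action formulas of Proposition \ref{corola1}, while (iv) is the genuine analytic content. For (i), I would apply $a$ termwise to (\ref{KalninCS}), use $a|n\rangle=\sqrt{l^{2}q^{\lambda-n}[n]_q}\,|n-1\rangle$, shift the summation index $n\mapsto n+1$, and check that the prefactors $q^{n(n+1)/4}/\sqrt{(l^{2}q^{\lambda})^{n}[n]_q!}$ and $q^{(n+1)(n+2)/4}\sqrt{l^{2}q^{\lambda-n-1}[n+1]_q}/\sqrt{(l^{2}q^{\lambda})^{n+1}[n+1]_q!}$ differ exactly by a factor of $z$; this is a small bookkeeping exercise in the exponents of $q$ and $l$. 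Normalization ${}_{l,\lambda}\langle z|z\rangle_{l,\lambda}=1$ is then immediate from the orthonormality of $\{|n\rangle\}$ and the very definition of $\mathcal{N}_{l,\lambda}(|z|^{2})$.

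For (ii) I would simply compute
\begin{equation*}
{}_{l,\lambda}\langle z_{1}|z_{2}\rangle_{l,\lambda}
=\frac{\mathcal{N}_{l,\lambda}(\bar z_{1}z_{2})}{\sqrt{\mathcal{N}_{l,\lambda}(|z_{1}|^{2})\,\mathcal{N}_{l,\lambda}(|z_{2}|^{2})}},
\end{equation*}
which is nonzero for generic $z_{1}\neq z_{2}$ because $\mathcal{N}_{l,\lambda}$ is an entire (or holomorphic on $\mathbf{D}_{l,\lambda}$) power series with positive coefficients. Part (iii) follows from (ii): $\||z\rangle-|z'\rangle\|^{2}=2-2\,\mathrm{Re}\,{}_{l,\lambda}\langle z|z'\rangle_{l,\lambda}$ tends to $0$ as $z'\to z$, because $\mathcal{N}_{l,\lambda}$ is continuous (holomorphic) in its argument.

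The substantive step is (iv). Writing $z=\sqrt{x}\,e^{i\theta}$ and inserting (\ref{KalninCS}) into the double sum for $|z\rangle\langle z|$, the $\theta$-integration on $[0,2\pi]$ collapses the double sum to the diagonal $m=n$ via $\int_{0}^{2\pi}e^{i(m-n)\theta}d\theta=2\pi\delta_{m,n}$. The problem reduces to finding a radial measure $d\rho(x)$ on $(0,R_{l,\lambda})$ such that
\begin{equation*}
\int_{0}^{R_{l,\lambda}}\frac{x^{n}}{\mathcal{N}_{l,\lambda}(x)}\,d\rho(x)=\frac{(l^{2}q^{\lambda})^{n}\,[n]_q!}{q^{n(n+1)/2}}\qquad (n\ge 0),
\end{equation*}
so that the coefficient of $|n\rangle\langle n|$ becomes $1$. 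In the case $q>1$, I would choose $d\rho(x)=\frac{d_{q}^{l,\lambda}x}{1-(q-1)x/(l^{2}q^{\lambda})}$; then (\ref{Kaprop1}) identifies $\mathcal{N}_{l,\lambda}(x)^{-1}\cdot\bigl(1-(q-1)x/(l^{2}q^{\lambda})\bigr)^{-1}$ with $\mathcal{N}_{l,\lambda}(q^{-1}x)^{-1}$, after which (\ref{Kaprop3}) of Lemma \ref{Buklemma} produces exactly the required moment. In the case $0<q<1$, the support is all of $\mathbb{C}$ and the measure is ordinary Lebesgue with weight $h(x)=\tfrac{1-q}{l^{2}q^{\lambda}\ln q^{-1}}\,\mathcal{N}_{l,\lambda}(x)/\mathcal{N}_{l,\lambda}(x/q)$; here I would rewrite $\mathcal{N}_{l,\lambda}$ as an infinite product via the $q$-exponential identity (valid for $0<q<1$) and reduce the moment equation to a classical $q$-Stieltjes integral of Ramanujan/Askey type. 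The main obstacle is precisely this $0<q<1$ moment verification: unlike the $q>1$ case, it does not follow directly from Lemma \ref{Buklemma}, and one must either invoke a known $q$-beta/$q$-gamma integral or construct the measure from a change-of-variables argument making $h(x)$ the density on $(0,\infty)$ with the prescribed log-moments.
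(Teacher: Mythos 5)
Your proposal follows the paper's proof essentially step for step: (i)--(iii) are handled by the same termwise action, overlap formula and norm identity, and (iv) is reduced to the same radial moment problem, with the $q>1$ case settled via Lemma \ref{Buklemma} exactly as in the text. The one step you leave open --- the $0<q<1$ moment identity --- is resolved in the paper by invoking the Atakishiyev--Atakishiyeva $q$-analogue of the Euler gamma integral, $\int_0^{\infty} y^{n-1}\,dy/E_q((1-q)y)=\frac{\ln q^{-1}}{1-q}\,q^{-n(n-1)/2}[n-1]!_q$, which after the rescaling $y=x/(l^2q^{\lambda})$ yields precisely the density $\frac{1-q}{l^2q^{\lambda}\ln q^{-1}}\,\mathcal{N}_{l,\lambda}(x)/\mathcal{N}_{l,\lambda}(x/q)$ you anticipated.
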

{\bf Proof:}
\\
{$\bullet$\it Non orthogonality and normalizability}
\begin{eqnarray}
 {}_{l,\lambda}\langle z_1|z_2\rangle_{l,\lambda}= \frac{\mathcal{N}_{l,\lambda}(\bar z_1z_2)}{\left(\mathcal{N}_{l,\lambda}(|z_1|^2)\mathcal{N}_{l,\lambda}(|z_2|^2)\right)^{1/2}}\neq 0
\end{eqnarray}
imply that the coherent states are not orthogonal.
\\
{$\bullet$\it Normalizability}
\\
From the above relation taking $z_1=z_2=z$ we obtain ${}_{l,\lambda}\langle z|z\rangle_{l,\lambda}=1$. Also,
\begin{eqnarray*}
a|z\rangle_{l,\lambda}&=&\mathcal{N}_{l,\lambda}^{-1/2}(|z|^2)\sum_{n=0}^\infty\frac{q^{n(n+1)/4}z^n}{\sqrt{(l^2q^\lambda)^n[n]_q!}}a|n\rangle\cr
&=&\mathcal{N}_{l,\lambda}^{-1/2}(|z|^2)\sum_{n=1}^\infty\frac{q^{n(n-1)/4}z^n}{\sqrt{(l^2q^\lambda)^{n-1}[n-1]_q!}}|n-1\rangle\cr
&=&z\mathcal{N}_{l,\lambda}^{-1/2}(|z|^2)\sum_{n=0}^\infty\frac{q^{n(n+1)/4}z^{n}}{\sqrt{(l^2q^\lambda)^n[n]_q!}}|n\rangle.
\end{eqnarray*}
\\
{$\bullet$\it Continuity in  the labels $z$}
\begin{eqnarray*}
 |||z_1\rangle_{l,\lambda}-|z_2\rangle_{l,\lambda}||^2= 2\left(1-\mathcal{R}e{}_{l,\lambda}\langle z_1|z_2\rangle_{l,\lambda}\right).
\end{eqnarray*}
So, $|||z_1\rangle_{l,\lambda}-|z_2\rangle_{l,\lambda}||^2\to 0$ as $|z_1-z_2|\to 0$, since ${}_{l,\lambda}\langle z_1|z_2\rangle_{l,\lambda}\to 1$ as $|z_1-z_2|\to 0$.
\\
{$\bullet$\it Resolution of the identity}

The computation of the RHS of (\ref{unitf}) gives
\begin{eqnarray*}
 \int_{\mathbf{D}_{l,\lambda}} d\mu_{l,\lambda}(\bar z,z)|z\rangle_{l,\lambda}{}_{l,\lambda}\langle z|= \sum_{n,m}|n\rangle\langle m|\frac{q^{[n(n+1)+m(m+1)]/4}}{\sqrt{(l^2q^\lambda)^{n+m}[n]_q![m]_q!}}
\int_{\mathbf{D}_{l,\lambda}} \bar z^nz^m\frac{d\mu_{l,\lambda}(\bar z,z)}{\mathcal{N}_{l,\lambda}(|z|^2)}.
\end{eqnarray*}
So, in order to satisfy (\ref{unitf}) it is required 
\begin{eqnarray}
\int_{\mathbf{D}_{l,\lambda}} \bar z^nz^m\frac{d\mu_{l,\lambda}(\bar z,z)}{\mathcal{N}_{l,\lambda}(|z|^2)}=\delta_{m n}(l^2q^\lambda)^n q^{-n(n+1)/2}[n]_q!,\quad n,\;m= 0,\; 1,\; 2,\; ...
\end{eqnarray}
Upon passing to polar coordinates,  $z=\sqrt x\;e^{i\theta}$, $d\mu_{l,\lambda}(\bar z,z)= d\omega_{l,\lambda}(x)d\theta$
 where $0\leq\theta\leq2\pi$, $0<x< R_{l,\lambda}$  and $\omega_{l,\lambda}$ is a positive valued function, this is equivalent to 
the classical Stieltjes power moment problem when $0<q<1$ or the Hausdorff power moment problem when $q>1$ \cite{Akhiezer, Tarmakin}:
\begin{eqnarray}\label{KaMoment}
 \int_0^{R_{l,\lambda}}x^n\;\frac{2\pi\;d\omega_{l,\lambda}(x)}{\mathcal{N}_{l,\lambda}(x)}=(l^2q^\lambda)^n q^{-n(n+1)/2}[n]_q!,\quad n= 0,\; 1,\; 2,\; ...
\end{eqnarray}

If $0<q<1,$ then we have the following Stieltjes power moment problem:
\begin{eqnarray}
 \int_0^{+\infty}x^n\frac{2\pi\;d\omega_{l,\lambda}(x)}{\mathcal{N}_{l,\lambda}(x)}= (l^2q^\lambda)^n q^{-n(n+1)/2}[n]_q!,
\end{eqnarray}
or, equivalently, 
\begin{eqnarray}
 \int_0^{+\infty}y^n\frac{2\pi\;d\omega_{l,\lambda}(l^2q^\lambda y)}{E_q\big((1-q)qy\big)}
=q^{-n(n+1)/2}[n]_q!, 
\end{eqnarray}
where the change of variable $y=\frac{x}{l^2q^\lambda}$ has been made.
Atakishiyev and Atakishiyeva \cite{Atakishiyeva} have proved that
\begin{eqnarray}
 g_q(n)= \int_0^{+\infty}\frac{y^{n-1}dy}{E_q((1-q)y)}= \frac{\ln{q^{-1}}}{1-q}q^{-n(n-1)/2}[n-1]!_q.
\end{eqnarray}
Therefore we deduce
\begin{eqnarray*}
 d\omega_{l,\lambda}(l^2q^\lambda y)=\frac{1}{2\pi}\frac{1-q}{\ln{q^{-1}}}\frac{E_q((1-q)qy)dy}{E_q((1-q)y)}
\end{eqnarray*}
or
\begin{eqnarray}
 d\omega_{l,\lambda}(x)&=&\frac{1}{2\pi}\frac{1-q}{l^2q^\lambda\ln{q^{-1}}}\frac{E_q((1-q)qx/(l^2q^\lambda))dx}{E_q((1-q)x/(l^2q^\lambda ))}
\cr&=&\frac{1}{2\pi}\frac{1-q}{l^2q^\lambda\ln{q^{-1}}}\frac{\mathcal{N}_{l,\lambda}(x)dx}{\mathcal{N}_{l,\lambda}(x/q)}.
\end{eqnarray}
Hence
\begin{eqnarray}
 d\mu_{l,\lambda}(\bar z,z)=\frac{1-q}{l^2q^\lambda\ln{q^{-1}}}\frac{\mathcal{N}_{l,\lambda}(\bar zz)}{\mathcal{N}_{l,\lambda}(\bar zz/q)}\frac{d^2z}{\pi}.
\end{eqnarray}

On the other hand, if $q>1,$ then combining (\ref{KaMoment}), (\ref{Kaprop1}) and (\ref{Kaprop2}) of the Lemma \ref{Buklemma} we get
\begin{eqnarray}
 d\mu(\bar z,z)=\frac{1}{2\pi}\frac{d_q^{l,\lambda}x\;d\theta}{1-(q-1)x/(l^2q^\lambda)}, \quad x=|z|^2,\;\;\theta=\arg(z),
\end{eqnarray}
where $0<x<\frac{l^2q^\lambda}{q-1}$ and $0\leq\theta\leq2\pi$.\hfill$\square$

\subsection{Statistics and geometry of coherent states $|z\rangle_{l,\lambda}$}\label{Sec1.4}

The conventional boson operators $b$ and $b^\dagger$ may be expressed in terms of the deformed operators $a$ and $a^\dagger$ as
\begin{eqnarray}
 b= a\;\sqrt{\frac{N}{\varphi(N)}}\quad\mbox{and}\quad b^\dagger=\sqrt{\frac{N}{\varphi(N)}}\;a^\dagger,\quad {\varphi(N)}\neq {\varphi(0)}
\end{eqnarray}
and their actions  on the states $|n\rangle$ are given by
\begin{eqnarray}
 b|n\rangle=\sqrt{n}|n-1\rangle,\quad \mbox{and} \quad 
b^\dagger|n\rangle=\sqrt{n+1}|n+1\rangle.
\end{eqnarray}
Besides,
\begin{eqnarray}
 b^r|n\rangle=\sqrt{\frac{n!}{(n-r)!}}|n-r\rangle, \qquad 0\leq r\leq n
\end{eqnarray}
and
\begin{eqnarray}
 (b^\dagger)^s|n\rangle=\sqrt{\frac{(n+s)!}{n!}}|n+s\rangle.
\end{eqnarray}
\subsubsection{\it  Quantum statistics of the coherent states $|z\rangle_{l,\lambda}$}
\begin{proposition}
 The expectation value of monomials  of boson creation and annihilation operators $b^\dagger$, $b$ in the coherent states $|z\rangle_{l,\lambda}$ are given by
\begin{eqnarray}
 \langle(b^\dagger)^sb^r\rangle
=\frac{\bar z^sz^r}{\mathcal{N}_{l,\lambda}(|z|^2)}\sum_{n=0}^\infty \sqrt{\frac{q^{[(n+s)(n+s+1)+(n+r)(n+r+1)]/2}(n+r)!(n+s)!}{(l^2q^\lambda)^{(n+s)+(n+r)}[n+s]_q![n+r]_q!}}\frac{|z|^{2n}}{n!},
\end{eqnarray}
where $s=0,\;1,\;2,\cdots$ and $r=0,\;1,\;2,\cdots$.\\
In particular,
\begin{eqnarray}
 \langle(b^\dagger)^rb^r\rangle= \frac{x^{r}}{\mathcal{N}_{l,\lambda}(x)}\left(\frac{d}{dx}\right)^r\mathcal{N}_{l,\lambda}(x),\quad x=|z|^2,\quad r=0,\;1,\;2,\cdots,
\end{eqnarray}
and
\begin{eqnarray}
 \langle N\rangle= x\frac{\mathcal{N}_{l,\lambda}'(x)}{\mathcal{N}_{l,\lambda}(x)}\;,
\end{eqnarray}
where $\mathcal{N}_{l,\lambda}'(x)$ denotes the derivative with respect to $x$.
\end{proposition}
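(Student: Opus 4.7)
My plan is a direct bookkeeping computation from the defining series (\ref{KalninCS}) of $|z\rangle_{l,\lambda}$, followed by a recognition of the resulting sum as a derivative of $\mathcal{N}_{l,\lambda}$.

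First, I would apply the powers of $b$ and $b^\dagger$ term-by-term to the coherent state. Using $b^r|n\rangle = \sqrt{n!/(n-r)!}\,|n-r\rangle$ and shifting the summation index $n\mapsto n+r$, I obtain
\begin{eqnarray*}
b^r|z\rangle_{l,\lambda}=\mathcal{N}_{l,\lambda}^{-1/2}(|z|^2)\sum_{n=0}^\infty\frac{q^{(n+r)(n+r+1)/4}z^{n+r}}{\sqrt{(l^2q^\lambda)^{n+r}[n+r]_q!}}\sqrt{\frac{(n+r)!}{n!}}\,|n\rangle,
\end{eqnarray*}
and analogously for $b^s|z\rangle_{l,\lambda}$. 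Taking the adjoint of the latter gives ${}_{l,\lambda}\langle z|(b^\dagger)^s$ as a series in $\bar z^{n+s}\langle n|$.

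Next I would pair the two series and use the orthonormality $\langle n|m\rangle=\delta_{n,m}$ to collapse the double sum into a single one in $n$. The factor $\bar z^{n+s}z^{n+r}$ factors as $\bar z^s z^r |z|^{2n}$, the exponents of $q$ combine as $[(n+s)(n+s+1)+(n+r)(n+r+1)]/4$ inside each square root, and combining the two square roots under a single radical produces the exponent $/2$ and the $[n+s]_q![n+r]_q!$ in the denominator as stated. The factors $\sqrt{(n+r)!(n+s)!}/n!$ appear exactly as written, yielding the first displayed identity.

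For the specialization $s=r$, the square root becomes rational and the sum simplifies to
\begin{eqnarray*}
\langle(b^\dagger)^r b^r\rangle=\frac{|z|^{2r}}{\mathcal{N}_{l,\lambda}(|z|^2)}\sum_{n=0}^\infty\frac{q^{(n+r)(n+r+1)/2}(n+r)!}{(l^2q^\lambda)^{n+r}[n+r]_q!}\frac{|z|^{2n}}{n!}.
\end{eqnarray*}
I would then differentiate $\mathcal{N}_{l,\lambda}(x)=\sum_{m\ge 0} q^{m(m+1)/2}x^m/((l^2q^\lambda)^m[m]_q!)$ term-by-term $r$ times, producing the coefficient $m!/(m-r)!$ and, after the shift $m=n+r$, exactly the series above with $x=|z|^2$. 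This gives the second displayed identity. The $\langle N\rangle$ statement is then the instance $r=1$, since $N=b^\dagger b$ acts on the Fock basis as the conventional number operator.

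The only real obstacle is careful bookkeeping of the $q$-exponents and the factorial/$q$-factorial ratios; no convergence issue arises inside $\mathbf{D}_{l,\lambda}$ because the series defining $\mathcal{N}_{l,\lambda}$ and its derivatives converge there by definition of $R_{l,\lambda}$.
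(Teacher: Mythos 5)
Your proposal is correct and follows essentially the same route as the paper: both collapse the double sum via orthonormality to a single sum (you by pairing $b^s|z\rangle_{l,\lambda}$ with $b^r|z\rangle_{l,\lambda}$, the paper by acting with the full monomial $(b^\dagger)^sb^r$ on the ket and using $\langle m|n+s-r\rangle$), and both obtain the $s=r$ case by recognizing the re-indexed series as the $r$-fold derivative of $\mathcal{N}_{l,\lambda}$, with $\langle N\rangle=\langle b^\dagger b\rangle$ as the instance $r=1$. The bookkeeping of the $q$-exponents, factorials and $q$-factorials in your intermediate displays matches the paper's exactly.
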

{\bf Proof:} Indeed, for $s=0,\;1,\;2,\cdots$ and $r=0,\;1,\;2,\cdots$, we have 
\begin{eqnarray*}
&& \langle(b^\dagger)^sb^r\rangle:={}_{l,\lambda}\langle z|(b^\dagger)^sb^r|z\rangle_{l,\lambda}
\cr&&\quad=\frac{1}{\mathcal{N}_{l,\lambda}(|z|^2)}\sum_{m=0}^\infty\sum_{n=r}^\infty \sqrt{\frac{q^{[m(m+1)+n(n+1)]/2}n!(n-r+s)!}{(l^2q^\lambda)^{m+n}[m]_q![n]_q!(n-r)!(n-r)!}}\bar z^mz^n\langle m|n+s-r\rangle
\cr&&\quad=\frac{1}{\mathcal{N}_{l,\lambda}(|z|^2)}\sum_{n=r}^\infty \sqrt{\frac{q^{[(n+s-r)(n+s-r+1)+n(n+1)]/2}n!(n-r+s)!}{(l^2q^\lambda)^{n+s-r+n}[n+s-r]_q![n]_q!(n-r)!(n-r)!}}\bar z^{n+s-r}z^n
\cr&&\quad=\frac{\bar z^sz^r}{\mathcal{N}_{l,\lambda}(|z|^2)}\sum_{n=0}^\infty \sqrt{\frac{q^{[(n+s)(n+s+1)+(n+r)(n+r+1)]/2}(n+r)!(n+s)!}{(l^2q^\lambda)^{(n+s)+(n+r)}[n+s]_q![n+r]_q!}}\frac{|z|^{2n}}{n!},
\end{eqnarray*}

In the special case $s=r$, we have 
\begin{eqnarray*}
 \langle(b^\dagger)^rb^r\rangle&=& \frac{x^{r}}{\mathcal{N}_{l,\lambda}(x)}\sum_{n=0}^\infty \frac{q^{(n+r)(n+r+1)/2}(n+r)!}{(l^2q^\lambda)^{(n+r)}[n+r]_q!}\frac{x^{n}}{n!}
\cr&=& \frac{x^{r}}{\mathcal{N}_{l,\lambda}(x)}\sum_{n=r}^\infty \frac{q^{n(n+1)/2}(n)!}{(l^2q^\lambda)^{(n)}[n]_q!}\frac{x^{n-r}}{(n-r)!}
\cr&=& \frac{x^{r}}{\mathcal{N}_{l,\lambda}(x)}\left(\frac{d}{dx}\right)^r\mathcal{N}_{l,\lambda}(x),\quad x=|z|^2.
\end{eqnarray*}
In particular
\begin{eqnarray*}
 \langle N\rangle\equiv\langle b^\dagger b\rangle= x\frac{\mathcal{N}_{l,\lambda}'(x)}{\mathcal{N}_{l,\lambda}(x)}.
\end{eqnarray*}
\hfill$\square$

The probability of finding $n$ quanta in the deformed state $|z\rangle_{l,\lambda}$ is given by
\begin{eqnarray}
 \mathcal{P}_{l,\lambda}(n):=|\langle n|z\rangle_{l,\lambda}|^2= \frac{q^{n(n+1)/2}x^n}{(l^2q^\lambda)^n[n]_q!\mathcal{N}_{l,\lambda}(x)}.
\end{eqnarray}

The Mendel parameter  measuring the deviation from the Poisson statistics is defined by the quantity
\begin{eqnarray}
 Q_{l,\lambda}:=\frac{\langle N^2\rangle-\langle N\rangle^2-\langle N\rangle}{\langle N\rangle}.
\end{eqnarray}
Let us evaluate it explicitly.
From the expectation value of the operator  $N^2=(b^\dagger)^2b^2+ N$ provided by
\begin{eqnarray}
 \langle N^2\rangle
= x^2\frac{\mathcal{N}_{l,\lambda}''(x)}{\mathcal{N}_{l,\lambda}(x)}+
x\frac{\mathcal{N}_{l,\lambda}'(x)}{\mathcal{N}_{l,\lambda}(x)},
\end{eqnarray} 
we readily deduce
\begin{eqnarray}
 Q_{l,\lambda}=x\left(\frac{\mathcal{N}_{l,\lambda}''(x)}{\mathcal{N}_{l,\lambda}'(x)} -\frac{\mathcal{N}_{l,\lambda}'(x)}{\mathcal{N}_{l,\lambda}(x)}\right).
\end{eqnarray}
It is then worth noticing that for $x<<1$, 
\begin{eqnarray}
 Q_{l,\lambda}=-\frac{q(1-q)}{l^2q^\lambda(1+q)}x+ o(x^2)
\end{eqnarray}
 meaning that the $\mathcal{P}_{l,\lambda}(n)$ is a sub-Poissonian distribution \cite{Klauder&al}.

\subsubsection{\it  Geometry of the states $|z\rangle_{l,\lambda}$}
The geometry of a quantum state space can be described by the corresponding metric tensor. This real and positive definite metric is defined on the underlying manifold that the quantum states form, or belong to, by calculating the distance function (line element) between
two quantum states. So, it is also known as a Fubini-Study metric of the ray space. The knowledge of the quantum metric enables one to calculate quantum mechanical transition probability and uncertainties 

In the case $q<1$, the map from $z$ to $|z\rangle_{l,\lambda}$ defines a map from the space $\mathbb{C}$ of complex numbers onto
a continuous subset of unit vectors in Hilbert space and generates in the latter a two-dimensional surface with the following Fubini-Study metric:
\begin{eqnarray}
 d\sigma^2:= ||d|z\rangle_{l,\lambda}||^2-|_{l,\lambda}\langle z|d|z\rangle_{l,\lambda}|^2
\end{eqnarray}
\begin{proposition}
The above Fubini-Study metric  is reduced to
\begin{eqnarray}
 d\sigma^2= W_{l,\lambda}(x)d\bar z dz,
\end{eqnarray}
where $x=|z|^2$ and 
\begin{eqnarray}
 W_{l,\lambda}(x)=\left(x\frac{\mathcal{N}_{l,\lambda}'(x)}{\mathcal{N}_{l,\lambda}(x)}\right)'= \frac{d}{dx}\langle N\rangle.
\end{eqnarray}
In polar coordinates, $z= re^{i\theta}$,
\begin{eqnarray}
d\sigma^2= W_{l,\lambda}(r^2)(dr^2+r^2d\theta^2).
\end{eqnarray}
\end{proposition}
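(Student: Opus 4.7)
The plan is to exploit the fact that $|z\rangle_{l,\lambda}$ factors as $\mathcal{N}^{-1/2}_{l,\lambda}(x)|\psi(z)\rangle$, where $x = |z|^2$ and $|\psi(z)\rangle := \sum_{n\ge 0}\frac{q^{n(n+1)/4}}{\sqrt{(l^2q^\lambda)^n[n]_q!}}\,z^n|n\rangle$ depends \emph{holomorphically} on $z$. A short computation will then identify the Fubini-Study form with $\partial_z\partial_{\bar z}\log\mathcal{N}_{l,\lambda}(x)\,dz\,d\bar z$, which is exactly $W_{l,\lambda}(x)\,dz\,d\bar z$.

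First I would record the three scalar ingredients that follow termwise from the series definition of $\mathcal{N}_{l,\lambda}$: $\langle\psi|\psi\rangle = \mathcal{N}_{l,\lambda}(x)$, $\langle\psi|\partial_z\psi\rangle = \bar z\,\mathcal{N}_{l,\lambda}'(x)$, and $\|\partial_z|\psi\rangle\|^2 = \mathcal{N}_{l,\lambda}'(x) + x\mathcal{N}_{l,\lambda}''(x) = \frac{d}{dx}[x\mathcal{N}_{l,\lambda}'(x)]$. Writing $d|z\rangle_{l,\lambda} = \partial_z|z\rangle_{l,\lambda}\,dz + \partial_{\bar z}|z\rangle_{l,\lambda}\,d\bar z$ with $\partial_z|z\rangle_{l,\lambda} = (\partial_z\mathcal{N}^{-1/2}_{l,\lambda})|\psi\rangle + \mathcal{N}^{-1/2}_{l,\lambda}\partial_z|\psi\rangle$ and $\partial_{\bar z}|z\rangle_{l,\lambda} = (\partial_{\bar z}\mathcal{N}^{-1/2}_{l,\lambda})|\psi\rangle$ (since $|\psi\rangle$ is holomorphic in $z$), I would expand both $\||d|z\rangle_{l,\lambda}\|^2$ and $|{}_{l,\lambda}\langle z|d|z\rangle_{l,\lambda}|^2$ into coefficients of $(dz)^2$, $(d\bar z)^2$ and $dz\,d\bar z$, and subtract.

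The key observation (standard for holomorphic coherent-state maps) is that the $(dz)^2$ and $(d\bar z)^2$ contributions of the two terms cancel exactly, and the remaining $dz\,d\bar z$ coefficient collapses to $\partial_z\partial_{\bar z}\log\mathcal{N}_{l,\lambda}(x)$. Since $x = z\bar z$, one has $\partial_z\log\mathcal{N}_{l,\lambda}(x) = \bar z\,\mathcal{N}_{l,\lambda}'(x)/\mathcal{N}_{l,\lambda}(x)$, and therefore $\partial_{\bar z}\partial_z\log\mathcal{N}_{l,\lambda}(x) = \frac{d}{dx}\!\bigl[x\,\mathcal{N}_{l,\lambda}'(x)/\mathcal{N}_{l,\lambda}(x)\bigr] = W_{l,\lambda}(x)$; the secondary equality with $\frac{d}{dx}\langle N\rangle$ is immediate from the formula $\langle N\rangle = x\mathcal{N}_{l,\lambda}'(x)/\mathcal{N}_{l,\lambda}(x)$ established in the preceding proposition.

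Finally, passing to polar coordinates $z = re^{i\theta}$ gives $dz = e^{i\theta}(dr + ir\,d\theta)$, $d\bar z = e^{-i\theta}(dr - ir\,d\theta)$ and hence $dz\,d\bar z = dr^2 + r^2\,d\theta^2$; together with $x = r^2$ this yields $d\sigma^2 = W_{l,\lambda}(r^2)(dr^2 + r^2\,d\theta^2)$, as claimed. The only mildly delicate point is the bookkeeping of holomorphic versus anti-holomorphic derivatives when expanding $\||d|z\rangle_{l,\lambda}\|^2$, but since every $\bar z$ in the computation comes either from the real normalization factor $\mathcal{N}_{l,\lambda}^{-1/2}(z\bar z)$ or from Hermitian conjugation, the algebra stays linear and the cancellation of the non-mixed terms is transparent.
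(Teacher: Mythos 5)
Your proof is correct and follows essentially the same route as the paper: both reduce to the three series identities $\langle\psi|\psi\rangle=\mathcal{N}_{l,\lambda}(x)$, $\langle\psi|\partial_z\psi\rangle=\bar z\,\mathcal{N}_{l,\lambda}'(x)$ and $\|\partial_z\psi\|^2=(x\mathcal{N}_{l,\lambda}'(x))'$, and then to $W_{l,\lambda}=\bigl(x\mathcal{N}_{l,\lambda}'/\mathcal{N}_{l,\lambda}\bigr)'$. The only cosmetic difference is that the paper invokes the projective invariance up front to discard the variation of the normalization factor (so only the holomorphic $dz$ part survives), whereas you keep the full differential and verify the cancellation of the $(dz)^2$ and $(d\bar z)^2$ terms explicitly, packaging the result as $\partial_z\partial_{\bar z}\log\mathcal{N}_{l,\lambda}$.
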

{\bf Proof:}
Computing $d|z\rangle_{l,\lambda}$ by taking into account the fact that any change of the form
$d|z\rangle_{l,\lambda}=\alpha|z\rangle_{l,\lambda}$, $\alpha\in\mathbb{C}$, has zero distance, we get
\begin{eqnarray*}
 d|z\rangle_{l,\lambda}= \mathcal{N}_{l,\lambda}(|z|^2)^{-1/2}\sum_{n=0}^\infty\frac{q^{n(n+1)/4}nz^{n-1}}{\sqrt{(l^2q^\lambda)^n[n]_q!}}|n\rangle\;dz.
\end{eqnarray*}
Then,
\begin{eqnarray*}
 ||d|z\rangle_{l,\lambda}||^2&=&\mathcal{N}_{l,\lambda}(|z|^2)^{-1}\sum_{n=0}^\infty\frac{q^{n(n+1)/2}n^2|z|^{2(n-1)}}{(l^2q^\lambda)^n[n]_q!}d\bar z dz
 \cr&=&\mathcal{N}_{l,\lambda}(|z|^2)^{-1}\left(\sum_{n=0}^\infty\frac{q^{n(n+1)/2}n|z|^{2(n-1)}}{(l^2q^\lambda)^n[n]_q!}\right.
\cr&&\quad\left.+|z|^2\sum_{n=0}^\infty\frac{q^{n(n+1)/2}n(n-1)|z|^{2(n-2)}}{(l^2q^\lambda)^n[n]_q!}\right)d\bar z dz
\cr&=&\mathcal{N}_{l,\lambda}(x)^{-1}\left(\mathcal{N}_{l,\lambda}'(x)+x\mathcal{N}_{l,\lambda}''(x)\right)d\bar z dz
\cr&=&\mathcal{N}_{l,\lambda}(x)^{-1}\left(x\mathcal{N}_{l,\lambda}'(x)\right)'d\bar z dz
\end{eqnarray*}
and
\begin{eqnarray*}
 |_{l,\lambda}\langle z|d|z\rangle_{l,\lambda}|^2&=&
\left|\mathcal{N}_{l,\lambda}(|z|^2)^{-1}\sum_{n=0}^\infty\frac{q^{n(n+1)/2}n|z|^{2(n-1)}}{(l^2q^\lambda)^n[n]_q!}\bar z dz\right|^2
\cr&=&x\mathcal{N}_{l,\lambda}(x)^{-2}\left(\mathcal{N}_{l,\lambda}'(x)\right)^2d\bar z dz.
\end{eqnarray*}
Therefore,
\begin{eqnarray*}
 d\sigma^2&=&\left(\mathcal{N}_{l,\lambda}(x)^{-1}\left(\mathcal{N}_{l,\lambda}'(x)+x\mathcal{N}_{l,\lambda}''(x)\right)-
x\mathcal{N}_{l,\lambda}(x)^{-2}\left(\mathcal{N}_{l,\lambda}'(x)\right)^2\right)d\bar{z}dz
\cr&=&\left(x\frac{\mathcal{N}_{l,\lambda}'(x)}{\mathcal{N}_{l,\lambda}(x)}\right)'d\bar{z}dz= \left(\frac{d}{dx}\langle N\rangle\right)d\bar{z}dz.
\end{eqnarray*}
\hfill$\square$

For $x<<1$, we have
\begin{eqnarray}
 W_{l,\lambda}(x)=\frac{q}{l^2q^\lambda}\left[1-\frac{2q(1-q)}{l^2q^\lambda(1+q)}x+o(x^2)\right].
\end{eqnarray}


\section{ On generalized oscillator algebras and their associated coherent states}\label{chap2}
$\;$

A unified method of calculating structure functions  from commutation relations of deformed single-mode oscillator algebras 
is  presented. A natural approach to building coherent states associated to  deformed algebras is then deduced \cite{Bukweli&Hounkonnou12b}.
Known deformed algebras are  given as illustration and such mathematical properties  as the continuity in the label, normalizability and resolution of the identity of the corresponding coherent states are discussed.

\subsection{Unified deformed single-mode oscillator algebras}\label{Sec2.2}

\begin{definition}
We  call deformed Heisenberg  algebra,  an associative algebra generated by the set of operators $\{\mathbf{1},\; a,\; a^\dagger,\; N\}$
satisfying the relations
\begin{eqnarray}\label{dal}
 [N,\;a^\dagger]= a^\dagger,\qquad [N,\;a]= -a,\label{uq01}
\end{eqnarray}
such that there exists a non-negative analytic function $f$, called the structure function,  defining the operator products $ a^\dagger a$ and $aa^\dagger$ in the following way:
\begin{eqnarray}
 a^\dagger a := f(N),\qquad aa^\dagger:= f(N+\mathbf{1}),\label{uq02}
\end{eqnarray}
where $N$ is a self-adjoint operator, $a$ and its Hermitian conjugate $a^\dagger$ denote the deformed annihilation and creation operators, respectively.
\end{definition}

Afore-mentioned deformed Heisenberg algebras have a common property characterized by the existence of
a self-adjoint  number operator $N$, a lowering operator $a$  and its formal adjoint, called raising operator, $a^\dagger$ and  differ by the expression of the  structure function $f$.

The associated Fock  space $\mathcal{F}$ is now spanned by the orthonormalized eigenstates of the number operator $N$ given by:
\begin{eqnarray}
 |n\rangle= \frac{1}{\sqrt{f(n)!}}(a^\dagger)^n|0\rangle,\quad n\in\mathbb{N}\cup\{0\},
\end{eqnarray}
where
\begin{eqnarray}
f(n)!=f(n)f(n-1)...f(1)\qquad \mbox{ with}\quad f(0)=0.
\end{eqnarray}
Moreover,
\begin{eqnarray}
 a|n\rangle= \sqrt{f(n)}|n-1\rangle,\; a^\dagger|n\rangle=\sqrt{f(n+1)}|n+1\rangle.
\end{eqnarray}

We emphasize that the structure function $f$ is a key unifying methods of coherent state construction corresponding to deformed algebras. To this end let us first recall the definition of the canonical coherent states.

\begin{definition}\label{CSdef1}
The canonical coherent states (CS) are normalized states $|z\rangle\in\mathcal{H}$ satisfying one of the  following three equivalent conditions \cite{Glauber1,Glauber2,Klauder63a,Klauder63b,Schrodinger,Sudarshan}:
\begin{itemize}
\item[(i)] they saturate the Heisenberg inequality:
\begin{eqnarray}
 (\Delta \hat{q})(\Delta \hat{p})= \frac{\hbar}{2},
\end{eqnarray}
where { $(\Delta A)^2:=\langle z|A^2-\langle A\rangle^2|z\rangle$}  with { $\langle A\rangle:= \langle z|A|z\rangle$};
 \item [(ii)] they are eigenvectors of the annihilation operator, with eigenvalue $z\in \mathbb{C}$:
\begin{eqnarray}
 b|z\rangle= z|z\rangle;
\end{eqnarray}
\item[(iii)] they are obtained from the ground state $|0\rangle$ of the harmonic oscillator by a unitary action of the Weyl-Heisenberg group:
\begin{eqnarray}\label{bch2}
 |z\rangle=e^{zb^\dagger-\bar z b}|0\rangle.
\end{eqnarray}

\end{itemize}
\end{definition}
From  (\ref{bch2}) and using the famous  Baker-Campbell-Hausdorff formula
\begin{eqnarray}
e^{A+B}=e^{-{{1}\over{2}}[A,B]}e^Ae^B
\end{eqnarray}
whenever $[A,[A,B]]=[B,[A, B]]= 0$, one obtains
\begin{equation}
  |z\rangle=e^{-|z|^2/2}\sum_{n=0}^\infty\frac{z^n}{\sqrt{n!}}|n\rangle= e^{-|z|^2/2}e^{zb^\dagger}|0\rangle,\; z\in\mathbb{C}\label{bch1}.
\end{equation}
The important feature of these coherent states resides in the partition (resolution) of identity:
\begin{eqnarray}
\int_\mathbb{C}{{[d^2z]}\over\pi} |z\rangle\langle z|= \mathbf{1},
\end{eqnarray}
where we have put $[d^2z]= d(Rez)d(Imz)$ for simplicity.
\\
Suppose that $f(0)=0$, $f(n)>0$ for all $n\in\mathbb{N}$ and denote $\mathbf{D}_f=\left\{z\in\mathbb{C}:|z|^2<R_f\right\}$, where $R_f$ is the radius of convergence of the series ( called {\it deformed exponential function}):
\begin{eqnarray}\label{ser1}
\mathcal{N}_f(x):=\sum_{n=0}^\infty \frac{x^n}{[f(n)]!}.
\end{eqnarray}
Then, the following holds:
\begin{proposition}
 The states
\begin{eqnarray}
 |z,f\rangle&:=& (\mathcal{N}_f(|z|^2))^{-1/2}\sum_{n=0}^\infty \frac{z^n}{[f(n)]!}(a^\dagger)^n|0\rangle
 \cr&=&(\mathcal{N}_f(|z|^2))^{-1/2}\sum_{n=0}^\infty \frac{z^n}{\sqrt{[f(n)]!}}|n\rangle,\;\;z\in\mathbf{D}_f,
\end{eqnarray}
are normalized eigenstates of  the raising operator $ a$ with eigenvalue $z$. They are not orthogonal to each other.
Moreover, the map $z\mapsto|z,f\rangle$ from $\mathbf{D}_f\subset\mathbb{C}$ to the Fock space $\mathcal{F}$ is continuous.
\end{proposition}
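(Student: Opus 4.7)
The plan is to handle the four claims in order, since each is a short calculation built on the orthonormality of the Fock basis $\{|n\rangle\}$, the action $a|n\rangle=\sqrt{f(n)}\,|n-1\rangle$, and the definition of the deformed exponential $\mathcal{N}_f$. I would first rewrite the state as
\begin{equation*}
|z,f\rangle = \mathcal{N}_f(|z|^2)^{-1/2}\sum_{n=0}^\infty \frac{z^n}{\sqrt{[f(n)]!}}\,|n\rangle,
\end{equation*}
observe that for $z\in\mathbf{D}_f$ the series converges in norm (by definition of the radius $R_f$), and only then proceed to the four properties.

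Normalization is immediate: by orthonormality,
\begin{equation*}
\langle z,f|z,f\rangle = \mathcal{N}_f(|z|^2)^{-1}\sum_{n=0}^\infty \frac{|z|^{2n}}{[f(n)]!} = 1,
\end{equation*}
using the very definition of $\mathcal{N}_f$ in (\ref{ser1}). For the eigenvalue property I would apply $a$ term-by-term (justified because the series converges and $a$ is closable on its natural Fock domain), use $a|0\rangle=0$ to drop the $n=0$ term, relabel $n\mapsto n+1$, and pull out one factor of $z$; the remaining sum is exactly $|z,f\rangle$, yielding $a|z,f\rangle=z|z,f\rangle$. The cancellation $\sqrt{f(n+1)}/\sqrt{[f(n+1)]!}=1/\sqrt{[f(n)]!}$ is the only bookkeeping step.

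For non-orthogonality I would simply compute
\begin{equation*}
\langle z_1,f|z_2,f\rangle = \frac{\mathcal{N}_f(\bar z_1 z_2)}{\sqrt{\mathcal{N}_f(|z_1|^2)\,\mathcal{N}_f(|z_2|^2)}},
\end{equation*}
which is nonzero on $\mathbf{D}_f\times\mathbf{D}_f$ because $\mathcal{N}_f$ is an entire (or holomorphic on the disc $|w|<R_f$) power series with strictly positive coefficients and constant term $1$, so it is nowhere zero on its small-argument neighborhood of $0$; in fact one only needs $\mathcal{N}_f(\bar z_1 z_2)\neq 0$, and since the coefficients $1/[f(n)]!$ are positive one checks this directly.

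Continuity is the only step that requires a word of justification and is, honestly, where I expect the small obstacle. Using normalization,
\begin{equation*}
\bigl\||z_1,f\rangle-|z_2,f\rangle\bigr\|^2 = 2\bigl(1-\mathrm{Re}\,\langle z_1,f|z_2,f\rangle\bigr),
\end{equation*}
so it suffices to show $\langle z_1,f|z_2,f\rangle\to 1$ as $z_2\to z_1$. By the explicit formula above, this reduces to continuity of the map $(w_1,w_2)\mapsto \mathcal{N}_f(\bar w_1 w_2)/\sqrt{\mathcal{N}_f(|w_1|^2)\mathcal{N}_f(|w_2|^2)}$ on $\mathbf{D}_f\times\mathbf{D}_f$, which in turn follows from the fact that $\mathcal{N}_f$ is a power series with radius of convergence $R_f$, hence continuous on $\{|w|<R_f\}$, and $\mathcal{N}_f(|z_1|^2)>0$ so the denominator is bounded away from zero locally. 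The only delicate point is to make sure $|\bar z_1 z_2|<R_f$ holds in a neighborhood of the diagonal, which is ensured by $|z_1|^2<R_f$ and continuity of the modulus; this is the main (and only) obstacle I foresee.
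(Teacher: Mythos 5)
Your proposal is correct and follows essentially the same route as the paper's own proof: direct computation of the overlap $\langle z_1,f|z_2,f\rangle=\mathcal{N}_f(\bar z_1z_2)/\sqrt{\mathcal{N}_f(|z_1|^2)\mathcal{N}_f(|z_2|^2)}$ for normalization and non-orthogonality, term-by-term action of $a$ with the telescoping $\sqrt{f(n)}/\sqrt{[f(n)]!}=1/\sqrt{[f(n-1)]!}$ for the eigenvalue property, and the identity $\||z_1,f\rangle-|z_2,f\rangle\|^2=2(1-\mathrm{Re}\,\langle z_1,f|z_2,f\rangle)$ for continuity. You actually supply more detail than the paper does (the paper dismisses the eigenvalue claim as "true by definition" and, like you, asserts rather than proves that $\mathcal{N}_f(\bar z_1 z_2)\neq 0$), so there is nothing to correct.
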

{\bf Proof:}
The first assertion is true by definition of states $|z,f\rangle$ and the action of the raising operator $a$.
To prove the non orthogonality, let $z_1,z_2\in\mathbf{D}_f$. Then,
\begin{eqnarray}
 \langle z_1,f|z_2,f\rangle=\frac{\mathcal{N}_f(\bar z_1z_2)}{\left(\mathcal{N}_f(|z_1|^2)\mathcal{N}_f(|z_2|^2)\right)^{1/2}}\neq 0,\quad\mbox{when}\quad z_1\neq z_2.
\end{eqnarray}
Furthermore, 
\begin{eqnarray}
 |||z_1,f\rangle-|z_2,f\rangle||^2= 2\left(1-\mathcal{R}e\langle z_1,f|z_2,f\rangle\right)\to 0\;
\mbox{ as }\; |z_1-z_2|\to 0
\end{eqnarray}
that means the map  $\mathbf{D}_f\ni z\mapsto|z,f\rangle\in\mathcal{F}$ is continuous. \hfill$\square$

The family $\left\{|z,f\rangle: z\in\mathbf{D}_f\right\}$ will be called {\it coherent states}  whether there exists a positive measure $\mu_f$ such that \cite{Klauder&Skagerstam}: 
\begin{eqnarray}\label{unitf2}
 \int_{\mathbf{D}_f} d\mu_f(\bar z,z)|z,f\rangle\langle z,f|= \sum_{n=0}^\infty|n\rangle\langle n|=\mathbf{1},
\end{eqnarray}
thus forming an overcomplete set of states,
or equivalently
\begin{eqnarray}
 \int_{\mathbf{D}_f} \bar z^nz^m\frac{d\mu_f(\bar z,z)}{\mathcal{N}_f(|z|^2)}= \delta_{n m}[f(n)]!,\quad n,\;m= 0,\; 1,\; 2,\; ...
\end{eqnarray}
Passing to polar coordinates,  $z=\sqrt{x} e^{i\theta}$, 
where $0\leq\theta\leq 2\pi$, $ 0<x< R_f$, and $d\mu(\bar z,z)= d\omega_f(x)d\theta$, the latter equation  leads to the following classical Stieltjes (for $R_f=\infty$)  or  Hausdorff ($R_f<\infty$)  power-moment problem \cite{Akhiezer,Tarmakin}:
\begin{eqnarray}
\label{PMoment}
 \int_0^{R_f}x^n\;\frac{2\pi\;d\omega_f(x)}{\mathcal{N}_f(x)}=[f(n)]!,\quad n= 0,\; 1,\; 2,\; ...
\end{eqnarray}
Note immediately that not  all deformed algebras lead to coherent states because the moment problem (\ref{PMoment}) does  not always have solution \cite{Akhiezer,Tarmakin}.
Nevertheless, it is remarkable that the structure function $f$ plays an important role in the construction of coherent states associated to an algebra. So, the question arises  is then how to determine the structure function corresponding to a given algebra.

Many techniques have been proposed in literature \cite{Baloitcha,Borzov,Burban1,Kosinski,Meljanac}. A more general answer to this question can be given starting from the Meljanac {\it et al} \cite{Meljanac} point of view. Indeed, these authors introduced the generalized $q$-deformed single-mode oscillator algebra through the identity operator $\mathbf{1}$,
a self-adjoint number operator $N$,  a lowering operator $a$ and  an  operator $\bar{a}$ which is not necessarily conjugate to $a$
 satisfying
\begin{eqnarray}
 && [N,\;a]= -a, \quad 
[N,\;\bar{a}]= \bar{a}, \label{uq2}\\
&& a\bar{a}-F(N)\bar{a}a=G(N)  \label{uq3}
\end{eqnarray}
where $F$ and $G$ are arbitrary complex analytic functions.

Such an algebra furnishes an appropriate approach for the  unification of  classes of  deformed algebras known in the literature.

For the purpose, let us  start from the relations (\ref{uq2}) to get
\begin{eqnarray}
[N,\;a\bar{a}]= 0=[N,\;\bar{a}a]
\end{eqnarray}
implying the existence of a complex analytic function $\varphi$ such that
\begin{eqnarray}
 \bar{a}a=\varphi(N)\quad\mbox{and}\qquad a\bar{a}=\varphi(N+1).\label{uq4}
\end{eqnarray}
Therefore, Eq.(\ref{uq3}) can be rewritten as follows 
\begin{eqnarray}\label{uq5}
\varphi(N+1)-F(N)\varphi(N)= G(N). 
\end{eqnarray}

Denote now $a^\dagger$ the Hermitian conjugate of the operator $a$. Then,
\begin{eqnarray}\label{uq7}
[N,\;a^\dagger]= a^\dagger,\qquad\mbox{and}\qquad \bar{a}=c(N)a^\dagger,
\end{eqnarray}
where $c(N)$ is a complex function. For convenience take $c(N)= e^{i\arg{\varphi(N)}}$. Therefore, from (\ref{uq4}) and the fact that $a^\dagger a$ and $aa^\dagger$ are Hermitian operators we necessarily have
\begin{eqnarray}\label{uq8}
 a^\dagger a= |\varphi(N)|\qquad\mbox{and}\qquad  aa^\dagger= |\varphi(N+1)|.
\end{eqnarray}
We now assume the existence of a "vacuum state" $|0\rangle$ such that 
\begin{eqnarray}\label{uq9}
 N|0\rangle=0,\quad a|0\rangle=0 \quad \mbox{and}\quad
\langle 0|0\rangle= 1,
\end{eqnarray}
and construct the non normalized eigenvectors $(a^\dagger)^n|0\rangle$ of the operator $N$. It follows that 
\begin{eqnarray}
 \langle0|a^m(a^\dagger)^n|0\rangle=\delta_{m n}\prod_{k=1}^n|\varphi(k)|=:(|\varphi(n)|!)\delta_{m n},\; m,\;n= 0,\;1,\;2,\;\cdots
\end{eqnarray}
and we have the following proposition

\begin{proposition}
Suppose that the initial condition $\varphi(0)=0$ is satisfied.  Then

\begin{eqnarray} \label{stctr}
\varphi(n)= [F(n-1)]!\sum_{k=0}^{n-1}\frac{G(k)}{[F(k)]!},\quad n\geq 1,
\end{eqnarray}
where
\begin{eqnarray}
 [F(k)]!=\left\{\begin{array}{lcr}F(k)F(k-1)\cdots F(1)&\mbox{ if }& k\geq 1\\1&\mbox{ if }& k=0 \end{array}\right..
\end{eqnarray}
\end{proposition}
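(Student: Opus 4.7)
The key observation is that equation (\ref{uq5}), read as a scalar identity on eigenvalues (since everything commutes with $N$), becomes the first-order linear recurrence
\begin{eqnarray*}
\varphi(n+1)=F(n)\,\varphi(n)+G(n),\qquad n\ge 0,
\end{eqnarray*}
subject to $\varphi(0)=0$. So the plan is simply to solve this recurrence explicitly, and then check that the closed form proposed in (\ref{stctr}) matches.

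My first step is to reduce the recurrence to a telescoping one via a summation factor. Dividing by $[F(n)]!$ and using $F(n)\cdot[F(n-1)]!=[F(n)]!$ (valid for $n\ge 1$, with $[F(0)]!=1$ by convention), one obtains
\begin{eqnarray*}
\frac{\varphi(n+1)}{[F(n)]!}-\frac{\varphi(n)}{[F(n-1)]!}=\frac{G(n)}{[F(n)]!}.
\end{eqnarray*}
Summing from $n=0$ to $n=m-1$ telescopes the left-hand side to $\varphi(m)/[F(m-1)]!-\varphi(0)/[F(-1)]!$; the boundary term at $n=0$ should be handled separately so as not to invoke $[F(-1)]!$, either by using $\varphi(0)=0$ directly or by starting the telescoping at $n=1$ after verifying $\varphi(1)=G(0)$ by hand. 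Multiplying through by $[F(m-1)]!$ and using $\varphi(0)=0$ yields (\ref{stctr}) with $n=m$.

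As an equivalent and perhaps cleaner route, I can do a direct induction on $n$. The base case $n=1$ reads $\varphi(1)=F(0)\varphi(0)+G(0)=G(0)$, which equals $[F(0)]!\,G(0)/[F(0)]!=G(0)$. For the inductive step, assume (\ref{stctr}) at level $n$; then
\begin{eqnarray*}
\varphi(n+1)=F(n)\varphi(n)+G(n)
=[F(n)]!\sum_{k=0}^{n-1}\frac{G(k)}{[F(k)]!}+G(n)
=[F(n)]!\sum_{k=0}^{n}\frac{G(k)}{[F(k)]!},
\end{eqnarray*}
which is exactly (\ref{stctr}) at level $n+1$.

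There is no substantive obstacle: the only point requiring a bit of care is the convention $[F(0)]!=1$ and the boundary bookkeeping at $n=0$ in the telescoping argument. Once the recurrence $\varphi(n+1)=F(n)\varphi(n)+G(n)$ is extracted from (\ref{uq5}) and the initial condition $\varphi(0)=0$ is imposed (which the authors justify from $a|0\rangle=0$), the rest is routine.
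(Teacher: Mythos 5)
Your proposal is correct and follows essentially the same route as the paper: extract the scalar recurrence $\varphi(n+1)=F(n)\varphi(n)+G(n)$ from (\ref{uq5}) by evaluating on the eigenvectors of $N$, impose $\varphi(0)=0$, and solve. The paper simply states ``then the result follows'' at that point, and your telescoping/induction arguments supply the routine details correctly, including the careful handling of $[F(0)]!=1$ and the $n=0$ boundary term.
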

{\bf Proof:}
Applying (\ref{uq5}) to the vector $(a^\dagger)^n|0\rangle$, we obtain
\begin{eqnarray}
\varphi(n+1)-F(n)\varphi(n)= G(n), \quad n=0,\;1,\;2\;\cdots
\end{eqnarray}
Then the result follows.\hfill$\square$

Notice that for all $n= 1,\; 2,\; ...$ each ''excited'' state $(a^\dagger)^n|0\rangle$ 
is a eigenstate of the operator $N$ corresponding to the eigenvalue $n$ with norm
\begin{eqnarray}
 ||(a^\dagger)^n|0\rangle||= \sqrt{|\varphi(n)|!}\;.
\end{eqnarray}
Of course, these states are orthogonal, i.e.
\begin{eqnarray}
 \langle0|a^m(a^\dagger)^n|0\rangle=0\quad\mbox{for}\quad m\neq n.
\end{eqnarray}

Now, if  $\varphi(n)\neq0$ for all $n\geq1$, one normalizes the eigenstates $(a^\dagger)^n|0\rangle$ of $N$ and gets the vectors $|n\rangle\in \mathcal{F}$ as 
\begin{eqnarray}
 |n\rangle=\frac{(a^\dagger)^n}{\sqrt{[|\varphi(n)|]!}}|0\rangle.
\end{eqnarray}
In the opposite, if $\varphi(n_0)=0$ for some $n_0$ , then the state $(a^\dagger)^{n_0}|0\rangle$  has zero norm and, consistently,
we can put $|n_0,\varphi\rangle\equiv0$. The corresponding Hilbert space is the finite-dimensional space $\mathbb{C}^{n_0}$.

Besides,  the following relations hold:
\begin{eqnarray}
&&a|n,\varphi\rangle=\sqrt{|\varphi(n)|}|n-1,\varphi\rangle,\quad
a^\dagger|n,\varphi\rangle=\sqrt{|\varphi(n+1)|}|n+1,\varphi\rangle,\\&& 
a^\dagger a|n,\varphi\rangle = |\varphi(n)||n,\varphi\rangle\quad\mbox{and}\quad
aa^\dagger|n,\varphi\rangle= |\varphi(n+1)||n,\varphi\rangle
\end{eqnarray}
showing that the structure function characterizing a given deformation is defined as follows:
$f(n)= \varphi(n)$ if $\varphi(n)\geq0$, and $f(n)= |\varphi(n)|$, in general.

Moreover, the conventional boson operators $b$ and $b^\dagger$ may be expressed in terms of the deformed operators $a$ and $a^\dagger$ as
\begin{eqnarray}
 b= a\;\sqrt{\frac{N}{f(N)}}\quad\mbox{and}\quad b^\dagger=\sqrt{\frac{N}{f(N)}}\;a^\dagger.
\end{eqnarray}
Thus the actions of $b$ and $b^\dagger$ on the states are as usual
\begin{eqnarray}
 b|n,\varphi\rangle=\sqrt{n}|n-1,\varphi\rangle,\quad \mbox{and} \quad 
b^\dagger|n,\varphi\rangle=\sqrt{n+1}|n+1,\varphi\rangle.
\end{eqnarray}
So, for simplicity of notations we set $|n,\varphi\rangle=|n\rangle$. Note also that the Weyl-Heisenberg algebra oscillator corresponds to $F(N)=G(N)=1$.

In the next section, we analyse known algebras in the light of the above developed formalism.

\subsection{Application to known deformed  algebras}\label{Sec2.3}

\subsection{The Tamm-Dancoff deformed  algebra}
This algebra appeared in the frame of Tamm-Dancoff method \cite{Dancoff,Tamm}, in quantum field theory, and was defined by the commutation relations
\begin{eqnarray}
&& [N,\;a^\dagger]= a^\dagger,\qquad [N,\;a]= -a,\\
&& aa^\dagger - qa^\dagger a= q^N,\label{Tammalg}
\end{eqnarray}
where $q$ is an arbitrary complex non-zero number. This corresponds to the case $F(N)= q$ and $G(N)= q^N$, and yields
\begin{eqnarray}
 \varphi(n)= nq^{n-1}\qquad \mbox{and}\qquad f(n)= n|q|^{n-1}.
\end{eqnarray}
In this case, the exponential function  (\ref{ser1}) written as
\begin{eqnarray}
 \mathcal{N}_q(x)= \sum_{n=0}^\infty\frac{x^n}{n!|q|^{n(n-1)/2}}\label{TammExp}
\end{eqnarray}
converges on the whole complex plane $\mathbb{C}$ for $|q|\geq 1$.
\begin{proposition}
 The moment problem (\ref{PMoment}) with $R_f=+\infty$ has the following solution
\begin{eqnarray}
 d\omega_q(x)=\frac{\mathcal{N}_q(x)\; dx}{2\pi}\int_0^\infty \frac{\sqrt{|q|}\;\exp\left\{-\left(\sqrt{|q|}t+\frac{\ln^2(x/t)}{2\ln{|q|}}\right)\right\}}{x\sqrt{2\pi\ln{|q|}}}dt.
\end{eqnarray}
\end{proposition}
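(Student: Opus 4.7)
The plan is to verify directly that the proposed density $\omega_q$ satisfies the moment identity
\begin{eqnarray*}
\int_0^\infty x^n\,\frac{2\pi\,d\omega_q(x)}{\mathcal{N}_q(x)} \;=\; n!\,|q|^{n(n-1)/2} \;=\; [f(n)]!,
\end{eqnarray*}
which, for the Tamm--Dancoff structure function $f(n)=n|q|^{n-1}$, is precisely the required moment problem (\ref{PMoment}) with $R_f=+\infty$. The shape of the candidate density, a log-normal kernel in $\ln(x/t)$ convolved against an exponential in $t$, suggests a two-step reduction to a Gaussian and a Gamma integral.

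First, after the cancellation of $\mathcal{N}_q(x)$ with the denominator, the moment equation becomes
\begin{eqnarray*}
\int_0^\infty \!\! \int_0^\infty x^{n-1}\,\frac{\sqrt{|q|}\,\exp\!\bigl\{-\sqrt{|q|}\,t-\ln^2(x/t)/(2\ln|q|)\bigr\}}{\sqrt{2\pi\ln|q|}}\,dt\,dx \;=\; n!\,|q|^{n(n-1)/2}.
\end{eqnarray*}
By non-negativity of the integrand, Fubini--Tonelli permits interchanging the order of integration so that the $x$-integral can be evaluated first.

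Second, in the inner $x$-integral I would substitute $u=\ln(x/t)$, giving $x=te^u$ and $dx=te^u\,du$. This turns it into a Gaussian moment-generating-function integral,
\begin{eqnarray*}
t^n \int_{-\infty}^{\infty} \frac{e^{nu}}{\sqrt{2\pi\ln|q|}}\,e^{-u^2/(2\ln|q|)}\,du \;=\; t^n e^{n^2\ln|q|/2} \;=\; t^n |q|^{n^2/2}.
\end{eqnarray*}
The remaining outer integral $\int_0^\infty \sqrt{|q|}\,t^n e^{-\sqrt{|q|}t}\,dt$ becomes, via the substitution $s=\sqrt{|q|}\,t$, equal to $|q|^{-n/2}\,\Gamma(n+1)=n!\,|q|^{-n/2}$. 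Multiplying these two factors together yields $n!\,|q|^{n^2/2-n/2}=n!\,|q|^{n(n-1)/2}$, which matches $[f(n)]!$ exactly.

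The main obstacle is conceptual rather than computational: for $|q|>1$ the moment sequence $n!\,|q|^{n(n-1)/2}$ is the classical log-normal-type example of an \emph{indeterminate} Stieltjes moment problem, so the solving measure is far from unique, and the real content of the proposition is the \emph{construction} of one explicit representative. The calculation itself is routine once the two substitutions above have been identified, and the Fubini interchange is justified simply by the non-negativity of the integrand together with the finiteness of the iterated integral obtained in the end.
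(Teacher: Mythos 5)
Your proof is correct, and it verifies the same double integral that the paper constructs, but by a genuinely different (and more self-contained) route. The paper proceeds constructively: it recognizes the moment sequence $|q|^{s(s-1)/2}\Gamma(s+1)$ as a product $\tilde f_1(s)\tilde f_2(s+1)$ with $\tilde f_1(s)=|q|^{s^2/2}$ and $\tilde f_2(s)=|q|^{-s/2}\Gamma(s)$, quotes the inverse Mellin transforms of these two factors from tables (a log-normal density and $e^{-\sqrt{|q|}x}$), and then invokes the Mellin convolution formula $\mathcal{M}\{\int_0^\infty f_1(x/t)f_2(t)\,dt\}=\tilde f_1(s)\tilde f_2(s+1)$ to assemble the weight. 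You instead take the candidate density as given and verify the moment identity directly: Fubini, the substitution $u=\ln(x/t)$ reducing the inner integral to the Gaussian moment generating function $t^n|q|^{n^2/2}$, and then the Gamma integral giving $n!\,|q|^{-n/2}$, whose product is $n!\,|q|^{n(n-1)/2}=[f(n)]!$ as required. Your computation is in effect a hands-on proof of the special case of the Mellin convolution identity that the paper cites, so it buys independence from transform tables at the cost of not explaining where the formula came from; the paper's route is the natural one if one has to \emph{find} the measure rather than check it. Two small points you should make explicit: the Gaussian step requires $\ln|q|>0$, i.e.\ $|q|>1$ (at $|q|=1$ the kernel degenerates, so that case must be treated separately, e.g.\ as in the El Baz--Hassouni remark later in the paper), and your closing observation about indeterminacy of this Stieltjes moment problem is accurate but not needed for the claim as stated, which only asserts that the displayed measure is \emph{a} solution.
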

{\bf Proof:}
Setting  $\displaystyle \tilde W_q(x)dx=2\pi\frac{d\omega_q(x)}{\mathcal{N}_q(x)}$, the moment problem (\ref{PMoment}) is then written as follows:
\begin{eqnarray}
 \int_0^\infty x^n\tilde{W}_q(x)dx=n!|q|^{n(n-1)/2},\quad n=0,\;1,\;2,\;...
\end{eqnarray}
The inverse Mellin transforms of $\tilde f_1(s)= |q|^{s^2/2}$ and $\tilde f_2(s)=|q|^{-s/2}\Gamma(s)$ gives \cite{Prudnikov,Sneddon}
\begin{eqnarray}
 \mathcal{M}^{-1}\{\tilde f_1(s)\}= \frac{e^{-\frac{\ln^2{x}}{2\ln{|q|}}}}{\sqrt{2\pi\ln{|q|}}}=:f_1(x)
\quad\mbox{and}\quad\mathcal{M}^{-1}\{\tilde f_2(s)\}=e^{-\sqrt{|q|}x}=:f_2(x)
\end{eqnarray}
respectively. Thus, the solution of the integral equation
\begin{eqnarray}
 \int_0^\infty x^s\;\tilde{W}_q(x)dx= |q|^{s(s-1)/2}\;\Gamma(s+1),
\end{eqnarray}
can be obtained using the Mellin integral formula \cite{Polyanin}
\begin{eqnarray}
 \mathcal{M}\left\{\int_0^\infty f_1(x/t)f_2(t)dt \right\}=\tilde f_1(s)\tilde f_2(s+1)
\end{eqnarray}
which, in particular, gives 
\begin{eqnarray*}
 \int_0^\infty x^{s-1}\left(\int_0^\infty \frac{e^{-\frac{\ln^2(x/t)}{2\ln{|q|}}}}{\sqrt{2\pi\ln{|q|}}}e^{-\sqrt{|q|}t}dt\right)dx
=|q|^{-1/2}|q|^{s(s-1)/2}\Gamma(s+1)
\end{eqnarray*}
or equivalently
\begin{eqnarray}
 \int_0^\infty x^{s}\left(\int_0^\infty \frac{\sqrt{|q|}\;\exp\left\{-\left(\sqrt{|q|}t+\frac{\ln^2(x/t)}{2\ln{|q|}}\right)\right\}}
{x\sqrt{2\pi\ln{|q|}}}dt\right)dx=|q|^{s(s-1)/2}\Gamma(s+1).
\end{eqnarray}
Therefore,
\begin{eqnarray}
 \tilde W_q(x)=\int_0^\infty \frac{\sqrt{|q|}\;\exp\left\{-\left(\sqrt{|q|}t+\frac{\ln^2(x/t)}{2\ln{|q|}}\right)\right\}}{x\sqrt{2\pi\ln{|q|}}}dt.
\end{eqnarray}
Then follows the result.\hfill$\square$

Hence, the states defined in $\mathbb{C}$ by
\begin{eqnarray}
 |z,q\rangle&=&(\mathcal{N}_q(|z|^2)^{-1/2}\sum_{n=0}^\infty\frac{z^n}{n!|q|^{n(n-1)/2}}(a^\dagger)^n|0\rangle
\cr&=&(\mathcal{N}_q(|z|^2)^{-1/2}\sum_{n=0}^\infty\frac{z^n}{\sqrt{n!|q|^{n(n-1)/2}}}|n\rangle
\end{eqnarray}
constitute a family of coherent states. 

\subsubsection{The Arick-Coon-Kuryskin deformed algebra  (1976)}
Arick and Coon first introduced this algebra \cite{Arik&Coon} whose generators satisfy the following relations:
\begin{eqnarray}
&& [N,\;a^\dagger]= a^\dagger,\qquad [N,\;a]= -a,\\
&& aa^\dagger - qa^\dagger a= 1,\label{Arikalg}.
\end{eqnarray}
The same algebra was examined  independently by other authors like Kuryshkin\cite{Kuryshkin80}, Jannussis \cite{Jannussis}, etc.,
and has gained popularity because of its connection to the developed mathematical $q$-analysis theory.

One can check that $F(N)= q$ and $G(N)=1$ leading to
\begin{eqnarray}
 \varphi(n)= \frac{q^n-1}{q-1}=:[n]_q\; \mbox{and } f(n)=\varphi(n)\;\mbox{for } q\in[-1,\;1[\cup]1,\;\infty).
\end{eqnarray}
wihch is one of the forms of the so called $q$-numbers. This result is also obtained replacing (\ref{Arikalg}) by
\begin{eqnarray}
 [a,\;a^\dagger]= q^N.
\end{eqnarray}
Then follows the series
\begin{eqnarray}
 \mathcal{N}_q(x)=\sum_{n=0}^\infty\frac{x^n}{[n]_q!}.
\end{eqnarray}

Arick and Coon have considered the case $0 < q < 1$ for which $\mathcal{N}_q(x)= e_q((1-q)x)$ where $e_q(x)$ is one of the famous Jackson $q$-exponential  functions,  which converges for $|x|<1$ \cite{Gasper}. In this case the radius of convergence of$\mathcal{N}_q(x)$ is $R_q=\frac{1}{1-q}$.
Note also that,
\begin{eqnarray}\label{qderiva1}
 \partial_q\mathcal{N}_q(x)= \mathcal{N}_q(x),
\end{eqnarray}
where the $q$-derivative $\partial_q$ is defined as follows
\begin{eqnarray}\label{qderiva2}
 \partial_qf(x)=\frac{f(x)-f(qx)}{(1-q)x}.
\end{eqnarray}
The following statements hold:
\begin{lemma}
\begin{eqnarray}
&&\frac{\mathcal{N}_q(x)}{\mathcal{N}_q(qx)}= \frac{1}{1-(1-q)x},\qquad
\mathcal{N}_q(x)= \frac{1}{((1-q)x;q)_\infty}, \label{rem2}\\
&&e_q(x)= \sum_{k=0}^{\infty}\frac{x^n}{(q;q)_n}=\mathcal{N}_q(x/(1-q))= \frac{1}{(x;q)_\infty},\label{rem3}
\end{eqnarray}
where $\displaystyle(a;q)_n=\prod_{k=0}^{n-1}(1-aq^k)$. Moreover,
\begin{eqnarray}
\int_0^{(1-q)^{-1}}x^n\left(\mathcal{N}_q(qx)\right)^{-1}d_qx= [n]_q!\quad n= 0,\; 1,\; 2,\; ...\label{rem4},
\end{eqnarray}
where 
\begin{equation}
 \int_0^af(x)d_qx=a(1-q)\sum_{k=0}^\infty f(aq^n)q^n
\end{equation}
defines the Jackson integral \cite{Gasper} of a function $f$.
\end{lemma}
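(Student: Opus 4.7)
The plan is to prove the four identities in turn, leaning on the $q$-derivative identity $\partial_q\mathcal{N}_q(x)=\mathcal{N}_q(x)$ together with the Jackson integral definition given just before the lemma.

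First I would prove \eqref{rem2}(a). Writing the $q$-derivative relation \eqref{qderiva1} out using \eqref{qderiva2} gives $\mathcal{N}_q(x)-\mathcal{N}_q(qx)=(1-q)x\,\mathcal{N}_q(x)$, which upon rearrangement yields $[1-(1-q)x]\,\mathcal{N}_q(x)=\mathcal{N}_q(qx)$; this is exactly the first identity. Next, for \eqref{rem2}(b), I iterate the first identity to obtain
\begin{equation*}
\mathcal{N}_q(x)=\frac{\mathcal{N}_q(q^{n}x)}{\prod_{k=0}^{n-1}\bigl(1-(1-q)q^{k}x\bigr)},\qquad n\geq 1,
\end{equation*}
and pass to the limit $n\to\infty$: since $0<q<1$ we have $q^{n}x\to 0$ and $\mathcal{N}_q(0)=1$, so the numerator tends to $1$ while the denominator converges to $((1-q)x;q)_\infty$, which is precisely \eqref{rem2}(b).

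For \eqref{rem3} I would substitute $x\mapsto x/(1-q)$ in \eqref{rem2}(b) to recognize $\mathcal{N}_q\!\bigl(x/(1-q)\bigr)=1/(x;q)_\infty$, and separately expand $\mathcal{N}_q\!\bigl(x/(1-q)\bigr)$ using $[n]_q!=(q;q)_n/(1-q)^n$ to match $\sum_n x^n/(q;q)_n=e_q(x)$. The only care needed is the domain of convergence $|x|<1$, which corresponds to $|x|/(1-q)<(1-q)^{-1}=R_q$.

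The main effort lies in \eqref{rem4}. The natural strategy is: first rewrite $(\mathcal{N}_q(qx))^{-1}=((1-q)qx;q)_\infty$ using \eqref{rem2}(b); then insert the Jackson integral definition with $a=(1-q)^{-1}$ to convert the integral into a $q$-series
\begin{equation*}
\int_0^{(1-q)^{-1}}x^n\bigl(\mathcal{N}_q(qx)\bigr)^{-1}d_qx=(1-q)^{-n}\sum_{k=0}^{\infty}q^{k(n+1)}(q^{k+1};q)_\infty.
\end{equation*}
Using the standard collapse $(q^{k+1};q)_\infty=(q;q)_\infty/(q;q)_k$ and then summing with \eqref{rem3} in the form $\sum_k (q^{n+1})^k/(q;q)_k=1/(q^{n+1};q)_\infty$ gives $(1-q)^{-n}(q;q)_\infty/(q^{n+1};q)_\infty=(1-q)^{-n}(q;q)_n$, which equals $[n]_q!$. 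The delicate step is the bookkeeping of the $q$-Pochhammer symbols and ensuring convergence of the Jackson series for every $n\geq 0$; everything else reduces to identities that are essentially algebraic consequences of the first three parts of the lemma.
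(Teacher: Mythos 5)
Your proof is correct and follows essentially the same route as the paper: the first three identities are obtained from the $q$-difference equation $\partial_q\mathcal{N}_q=\mathcal{N}_q$ by rearrangement, iteration and the substitution $x\mapsto x/(1-q)$, and the moment identity \eqref{rem4} is evaluated by expanding the Jackson integral into the series $(1-q)^{-n}\sum_k q^{k(n+1)}(q^{k+1};q)_\infty$ and collapsing it via $(q^{k+1};q)_\infty=(q;q)_\infty/(q;q)_k$ and the $e_q$ summation, exactly as in the paper. You actually supply more detail than the paper does for \eqref{rem2}--\eqref{rem3}, which it dispatches in one line.
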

{\bf Proof:} Equations (\ref{qderiva1}) and (\ref{qderiva2}) lead to (\ref{rem2})-(\ref{rem3}).
Using the Jackson integral we obtain
\begin{eqnarray*}
&& \int_0^{(1-q)^{-1}}x^n\left(\mathcal{N}_q(qx)\right)^{-1}d_qx
=\sum_{l=0}^\infty\frac{q^{(n+1)l}}{(1-q)^n}\left(\mathcal{N}_q(q^{l+1}/(1-q))\right)^{-1}
\cr&&\qquad\qquad=\sum_{l=0}^\infty\frac{q^{(n+1)l}}{(1-q)^n} (q^{l+1};q)_\infty
= \frac{(q;q)_\infty}{(1-q)^n}\sum_{l=0}^\infty\frac{q^{(n+1)l}}{(q;q)_l}
\cr&&\qquad\qquad= \frac{(q;q)_\infty}{(1-q)^n}\frac{1}{(q^{n+1};q)_\infty}=\frac{(q;q)_n}{(1-q)^n}.
\end{eqnarray*}
Then (\ref{rem4}).\hfill$\Box$
\begin{proposition}
 The solution of the moment problem (\ref{PMoment}) with the $R_q=\frac{1}{1-q}$ is given by
\begin{equation}\label{PMA}
d\omega_q(x)=\frac{1}{2\pi}\frac{d_qx}{1-(1-q)x}\mbox{  ,} \quad x=|z|^2
\end{equation}
\end{proposition}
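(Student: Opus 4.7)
The plan is to \emph{verify} the proposed measure rather than derive it from scratch, since the lemma immediately preceding the proposition has packaged all the needed ingredients. Concretely, I would substitute $d\omega_q(x)=\frac{1}{2\pi}\,\frac{d_qx}{1-(1-q)x}$ into the moment condition (\ref{PMoment}) with $R_q=(1-q)^{-1}$ and $\mathcal{N}_f=\mathcal{N}_q$, so the task reduces to checking
\begin{equation*}
\int_0^{1/(1-q)} \frac{x^n}{\mathcal{N}_q(x)\bigl(1-(1-q)x\bigr)}\, d_q x \;=\; [n]_q!, \qquad n=0,1,2,\dots
\end{equation*}

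The key algebraic step is the first identity of (\ref{rem2}), which says $\mathcal{N}_q(qx)=\mathcal{N}_q(x)\bigl(1-(1-q)x\bigr)$. Using this to rewrite the denominator collapses the integrand to $x^n/\mathcal{N}_q(qx)$, and the target identity is then precisely (\ref{rem4}) already proved in the lemma. So the proof is essentially a one-line manipulation followed by a citation of (\ref{rem4}).

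The only thing left to check is that $d\omega_q$ is a \emph{positive} measure on $(0,R_q)$, which is needed for the object to qualify as a solution of the (Hausdorff) power-moment problem. For $0<q<1$ and $x\in(0,(1-q)^{-1})$ the factor $1-(1-q)x$ is strictly positive, and the Jackson measure $d_qx=(1-q)\sum_{k\ge 0} q^k\,\delta_{q^k/(1-q)}\,dx$ (in the sense of the sum defining $\int_0^a f\,d_qx$) assigns positive weight to each node $q^k/(1-q)\in(0,R_q)$, so positivity is automatic.

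I do not anticipate a genuine obstacle here; the hardest part was really the lemma, and in particular the $q$-integration identity (\ref{rem4}) obtained from the $q$-Pochhammer manipulation $(q;q)_\infty/(q^{n+1};q)_\infty=(q;q)_n$. Given the lemma, the proposition is a direct verification, and any plan longer than the paragraph above would just be rewriting the lemma's content.
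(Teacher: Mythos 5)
Your proposal is correct and follows essentially the same route as the paper: the paper likewise reduces the moment condition to the integration identity (\ref{rem4}) by means of the first equality in (\ref{rem2}), i.e. $\mathcal{N}_q(x)\bigl(1-(1-q)x\bigr)=\mathcal{N}_q(qx)$. Your additional remark on positivity of the measure is a harmless (and welcome) supplement that the paper leaves implicit.
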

{\bf Proof}: In this case the moment problem (\ref{PMoment}) becomes
\begin{eqnarray}\label{rem0}
 \int_0^{\frac{1}{1-q}}x^n\;\frac{2\pi\;d\omega_q(x)}{\mathcal{N}_q(x)}=[n]_q!,\quad n= 0,\; 1,\; 2,\; ...
\end{eqnarray}
The comparison of moment problem (\ref{rem0}) and (\ref{rem4}) with the  use of the first equality of previous relations (\ref{rem2}) leads to (\ref{PMA}).\hfill$\square$

Thus, the states 
\begin{eqnarray}
 |z\rangle_q&=& (\mathcal{N}_q(|z|^2))^{-1/2}\sum_{n=0}^\infty\frac{z^n}{[n]_q!}(a^\dagger)^n|0\rangle
\cr&=& (\mathcal{N}_q(|z|^2))^{-1/2}\sum_{n=0}^\infty\frac{z^n}{\sqrt{[n]_q!}}|n\rangle
\end{eqnarray}
define a family of coherent states on $\mathbf{D}_q=\left\{z\in\mathbb{C}:|z|<(1-q)^{-1/2}\right\}$.

\subsubsection{The Feinsilver deformed algebra (1987) }
The generators of the  algebra by Feinsilver algebra \cite{Feinsilver1,Feinsilver2} verify
\begin{eqnarray}
 [a,\; a^\dagger] = q^{-2N},\quad [N,\; a] = -a, \quad [N,\; a^\dagger ] = a^\dagger,
\end{eqnarray}
where $q$ is non-zero real number. It follows that $F(N)=1$ and $G(N)= q^{-2N}$ implying
\begin{eqnarray}
 f(n)=\varphi(n)=\frac{1-q^{-2n}}{1-q^{-2}}=[n]_{q^{-2}}.
\end{eqnarray}
The change of parameters $\tilde{q}=q^{-2}$, with $q>1$,  leads to the previous  Arick-Coon-Kuryshkin algebra.

\subsubsection{The Biedenharn-Macfarlane oscillator algebra (1989)}

The generators of the deformed algebra introduced  by Biedenharn \cite{Biedenharn} and independently by Macfarlane \cite{Macfarlane}, in the context of oscillator realization of the quantum algebra $su_q(2)$, satisfy 
\begin{eqnarray}
&& [N,\;a^\dagger]= a^\dagger,\qquad [N,\;a]= -a,\\
&& aa^\dagger - qa^\dagger a= q^{-N} \;\mbox{ or }\; aa^\dagger - q^{-1}a^\dagger a= q^{N}\label{BM},\quad q^2\neq 1.
\end{eqnarray}
Here $F(N)=q$ and $G(N)=q^{-N}$. Therefore,
\begin{eqnarray}
 \varphi(n)= q^{n-1}\sum_{j=0}^{n-1}\frac{q^{-j}}{q^j}= \frac{q^n-q^{-n}}{q-q^{-1}},\; q^2\neq1
\end{eqnarray}
 Then, the structure function  is given by
\begin{eqnarray}
 f(n)=\frac{q^n-q^{-n}}{q-q^{-1}}=:[n]_q^{^{B}},\qquad q\in\mathbb{R}_+^*\setminus\{1\}.
\end{eqnarray}
Hence, considering the series
\begin{eqnarray}
 \mathcal{N}_{{}_B}(x)=\sum_{n=0}^\infty\frac{x^n}{[n]^{{}^B}_q!},
\end{eqnarray}
one notices that  its radius of convergence $R_{{}_B}=+\infty$.

Using the deformed derivative and integration defined by
\begin{eqnarray}
 \partial_q^Bf(x):=\frac{f(qx)-f(q^{-1}x)}{(q-q^{-1})}
\end{eqnarray}
and
\begin{eqnarray}
 \int_{x_i}^{x^j}f(x)d_q^Bx:= (q-q^{-1})x\sum_{l=i}^{f}q^{2l}f(q^{2l}x),\;\; x_i= xq^i,\; x_f=xq^f
\end{eqnarray}
respectively,
You-quan and Zheng-mao \cite{Li&Sheng} showed that
\begin{eqnarray}
 \int_{0}^{\infty}x^n\mathcal{N}_{{}_B}(-x)d_q^Bx= [n]_q^B!,\quad n=0,\;1,\;2,\;\cdots
\end{eqnarray}
Therefore, the power-moment problem (\ref{PMoment}) has a solution given by
\begin{eqnarray}
d\omega_{{}_B}(x)=\frac{1}{2\pi}\mathcal{N}_{{}_B}(x)\mathcal{N}_{{}_B}(-x)d_q^Bx
\end{eqnarray}
and the states
\begin{eqnarray}
 |z\rangle_{{}_B}= (\mathcal{N}_{{}_B}(|z|^2))^{-1/2}\sum_{n=0}^\infty\frac{z^n}{\sqrt{[n]^{{}^B}_q!}}|n\rangle
\end{eqnarray}
define a family of coherent states with $z\in\mathbb{C}$.

\begin{remark} \end{remark}
\begin{itemize}
\item El Baz and Hassouni \cite{ElBaz} demonstrated, for $|q|= 1$ and   using the Fourrier transforms, that
the power-moment problem (\ref{PMoment}) has the solution
\begin{eqnarray}
d\omega_{{}_B}(x)= \frac{1}{2\pi}\tilde\mathcal{N}_{{}_B}(x) \tilde W_B(x)dx
\end{eqnarray}
where $\tilde W_{{}_B}(x)$ is the Fourier transform of the series
\begin{eqnarray}
\overline{W}_{{}_B}(y)=\sum_{n=0}^\infty \frac{[n]_q^B!}{n!}(iy)^n
\end{eqnarray}
i.e.
\begin{eqnarray}
\tilde W_{{}_B}(x)= \frac{1}{2\pi}\int_{-\infty}^\infty e^{-ixy}\overline{W}_{{}_B}(y)dy.
\end{eqnarray}
Notice that in this case the function $\mathcal{N}_{{}_B}(x)$ is replaced by
$\displaystyle\tilde\mathcal{N}_{{}_B}(x)=\sum_{n=0}^\infty\frac{x^n}{|[n]^{{}^B}_q!|}$.
\item Yan \cite{Yan} has later examined this algebra with  relations
\begin{eqnarray}
a^\dagger a = [N],\quad aa^\dagger = [N + \mathbf{1}],\quad [a,\; a^\dagger ] = [N + \mathbf{1}] - [N].
\end{eqnarray}
\end{itemize}

\subsubsection{The Calogero-Vasiliev oscillator algebra (1991)}
In 1991 Vasiliev \cite{Vasiliev} introduced a deformed algebra whose generators satisfy
\begin{eqnarray}
 &&[N,\;a]=-a,\qquad [N,\;a^\dagger]= a^\dagger,\quad [N,\;K] = 0,\\
&& aK = -Ka,\quad a^\dagger K = -K a^\dagger,\quad K^2= \mathbf{1},\\
&& [a,\;a^\dagger]= \mathbf{1}+ \nu K,\label{calogero}
\end{eqnarray}
where $\nu$ is a real such that $\nu>-\frac{1}{2}$ and $K= (-)^N$ is the Klein operator interpreted as the generator of the symmetric group $S_2$.
From (\ref{calogero}), we have $F(N)=\mathbf{1}$ and $G(N)= \mathbf{1}+\nu(-)^N$. Therefore
\begin{eqnarray}
 \varphi(2n)= 2n\quad\mbox{and}\quad \varphi(2n+1)=2(n+\nu)+1, \; n=0,\;1,\;2,\; ... 
\end{eqnarray}
The exponential function  (\ref{ser1}) written as
\begin{eqnarray}
 \mathcal{N}_\nu(x)=\sum_{n=0}^\infty\frac{x^n}{\varphi(n)!}
\end{eqnarray}
converges everywhere $x$. However, the corresponding moment problem (\ref{PMoment})
\begin{eqnarray}
\int_0^\infty x^{n}\;\frac{2\pi\;d\omega_\nu(x)}{\mathcal{N}_\nu(x)}=\varphi(n)!,\quad n=0,\;1,\;2,\;\cdots
\end{eqnarray}
remains to solve.

\subsubsection{The $(p,q)$-Chakrabarti-Jagannathan oscillator algebra (1991)}
The two-parameter quantum algebra was first introduced by Chakrabarti and Jagannathan \cite{Chakrabarti&Jagan} in order to generalize or/and unify the Arick-Coon-Kurskin oscillator algebra ($p=1$) and Biedenharn-Macfarlane oscillator algebra ($p=q$). The generators satisfy 
\begin{eqnarray}
 &&[N,\;a]=-a,\qquad [N,\;a^\dagger]= a^\dagger,\\
&& aa^\dagger-q a^\dagger a= p^{-N},\;\mbox{ or }\; aa^\dagger-p^{-1} a^\dagger a= q^{N},\label{Jagannalg}
\end{eqnarray}
where $p,q\in\mathbb{R}^*_+$. 

From the first relation of (\ref{Jagannalg}) we deduce $F(N)=q$ and $G(N)= p^{-N}$. Therefore,
\begin{eqnarray}
 \varphi(n)=q^{n-1}\sum_{j=0}^{n-1}\frac{p^{-j}}{q^j}=q^{n-1}\frac{1-((pq)^{-1})^n}{1-(pq)^{-1}}
= \frac{p^{-n}-q^n}{p^{-1}-q}=:[n]_{p^{-1},q}.
\end{eqnarray}
Notice that the second relation of (\ref{Jagannalg}) gives the same result. Hence, it suffices  to consider only one of the two relations (\ref{Jagannalg}).

The Fock space of the Bose oscillator $\mathcal{F}_{CJ}$ associated to this deformation is generated by the orthonormalized states
\begin{eqnarray}
 |n\rangle= \frac{(a^\dagger)^n}{\sqrt{[n]_{p^{-1},q}!}}|0\rangle,\qquad n=0,\;1,\;2,\; ...
\end{eqnarray}
where
\begin{eqnarray}
[n]_{p^{-1},q}!= \left\{\begin{array}{lccl}\;1 & & \mbox{if} & n=0\\
\;[n]_{p^{-1},q}[n-1]_{p^{-1},q}...[1]_{p^{-1},q}& &\mbox{if}& n\geq1\end{array}.\right.
\end{eqnarray}

\subsubsection{The Kalnins-Mukherjee-Miller oscillator algebra (1993)}
This $q$-oscillator algebra generated by four elements $H$, $E_+$, $E_-$ and $\mathcal{E}$ satisfying
\begin{eqnarray}
&&[H,\;E_+]= E_+\qquad [H,\;E_-]=-E_-\\
&&[E_+,\;E_-]=-q^{-H}\mathcal{E}\qquad [\mathcal{E},\;E_{\pm}]=0=[\mathcal{E},\;H],
\end{eqnarray}
where $0<q<1$, was introduced by Kalnins {\it et  al} \cite{Kalnins}.

The elements $\mathcal{C}= qq^{-H}\mathcal{E}+(q-1)E_+E_-$ and $\mathcal{E}$ lie in the  center of this algebra. It admits a class of irreducible representations for $\mathcal{C}=l^2\mathbf{1}$ and $\mathcal{E}= l^2q^{\lambda-1}\mathbf{1}$, where $l$ and $\lambda$ are real numbers with  $l\neq0$. 
Setting $N=H$, $a=E_-$  and $a^\dagger=E_+$, we get
\begin{eqnarray}
&&[N,\;a]=-a,\quad [N,\;a^\dagger]= a^\dagger,\\
&& aa^\dagger-a^\dagger a =l^2q^{-N+\lambda-1}.
\end{eqnarray}
Hence, $F(N)= 1$ and $G(N)=l^2q^{-N+\lambda-1}$ leading to
\begin{eqnarray}
\varphi(n)= \sum_{k=0}^{n-1}l^2q^{-k+\lambda-1}= l^2q^\lambda\frac{1-q^{-n}}{q-1}=l^2q^{\lambda-n}[n]_q
\end{eqnarray}
Suppose $q<1$. Then, the series
\begin{eqnarray}
 \mathcal{N}_{l,\lambda}(x)&=&\sum_{n=0}^\infty\frac{q^{n(n+1)/2}x^n}{(l^2q^\lambda)^n[n]_q!}
=\sum_{n=0}^\infty\frac{q^{n(n-1)/2}}{(q;q)_n}\Big(\frac{(1-q)qx}{ l^2q^\lambda}\Big)^n
\cr&=& E_q\big((1-q)qx/(l^2q^\lambda)\big).
\end{eqnarray}
has a radius of convergence $R_{l,\lambda}=\infty$  \cite{Gasper}.  While in the case  $q>1$, the factors $\varphi(n)$ remain positive for every $n\geq0$, but the series $\mathcal{N}_{l,\lambda}(x)$ converges only for $|x|<\frac{l^2q^\lambda}{q-1}:=R_{l,\lambda}$.

We have the following result:
\begin{proposition}
The power-moment problem (\ref{PMoment}) has a solution given by
\begin{eqnarray}
 d\omega_{l,\lambda}(x)
&=&\frac{1}{2\pi}\frac{1-q}{l^2q^\lambda\ln{q^{-1}}}\frac{\mathcal{N}_{l,\lambda}(x)}{\mathcal{N}_{l,\lambda}(x/q)}dx, \;\;\mbox{for}\;\; 0<q<1,
\end{eqnarray}
and 
\begin{eqnarray}
 d\omega_{l,\lambda}(x)=\frac{1}{2\pi}\frac{d_q^{l,\lambda}x}{1-(q-1)x/(l^2q^\lambda)}, \quad 0<x<\frac{l^2q^\lambda}{q-1} \;\;\mbox{for}\;\;q>1.
\end{eqnarray}
\end{proposition}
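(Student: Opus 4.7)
My plan is to recognize that this is exactly the same resolution-of-identity moment problem solved in Proposition~\ref{prop123} of Section~\ref{Sec1.3}, now restated within the structure-function framework of Section~\ref{chap2}. So I will reduce it to the classical power-moment problem
\begin{eqnarray*}
\int_0^{R_{l,\lambda}}x^n\,\frac{2\pi\,d\omega_{l,\lambda}(x)}{\mathcal{N}_{l,\lambda}(x)}=(l^2q^\lambda)^n q^{-n(n+1)/2}[n]_q!,\qquad n=0,1,2,\ldots,
\end{eqnarray*}
which comes from passing to polar coordinates $z=\sqrt{x}\,e^{i\theta}$ in \eqref{unitf2} and using $[f(n)]!=\varphi(n)!=(l^2q^\lambda)^n q^{-n(n+1)/2}[n]_q!$ (computed in Section~\ref{chap1}). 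Then the two cases $0<q<1$ and $q>1$ are handled separately, corresponding to the Stieltjes and Hausdorff problems respectively.

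For the case $0<q<1$ I would set $y=x/(l^2q^\lambda)$ and rewrite $\mathcal{N}_{l,\lambda}(x)=E_q\bigl((1-q)qx/(l^2q^\lambda)\bigr)$, reducing the moment problem to
\begin{eqnarray*}
\int_0^{+\infty}y^n\,\frac{2\pi\,d\omega_{l,\lambda}(l^2q^\lambda y)}{E_q((1-q)qy)}=q^{-n(n+1)/2}[n]_q!.
\end{eqnarray*}
The Atakishiyev--Atakishiyeva identity $g_q(n)=\int_0^{+\infty}y^{n-1}E_q((1-q)y)^{-1}dy=\frac{\ln q^{-1}}{1-q}q^{-n(n-1)/2}[n-1]_q!$ then forces
\begin{eqnarray*}
2\pi\,d\omega_{l,\lambda}(l^2q^\lambda y)=\frac{1-q}{\ln q^{-1}}\,\frac{E_q((1-q)qy)}{E_q((1-q)y)}\,dy,
\end{eqnarray*}
which, upon reverting to $x$, yields the stated density $(1-q)\mathcal{N}_{l,\lambda}(x)/\bigl(l^2q^\lambda\ln q^{-1}\mathcal{N}_{l,\lambda}(x/q)\bigr)$.

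For the case $q>1$ I would use directly the three identities of Lemma~\ref{Buklemma}. Substituting the ansatz $d\omega_{l,\lambda}(x)=\frac{1}{2\pi}\,\frac{d_q^{l,\lambda}x}{1-(q-1)x/(l^2q^\lambda)}$ into the moment integral and invoking \eqref{Kaprop1} gives
\begin{eqnarray*}
\int_0^{R_{l,\lambda}}x^n\,\frac{2\pi\,d\omega_{l,\lambda}(x)}{\mathcal{N}_{l,\lambda}(x)}=\int_0^{R_{l,\lambda}}\frac{x^n}{\mathcal{N}_{l,\lambda}(q^{-1}x)}\,d_q^{l,\lambda}x,
\end{eqnarray*}
and \eqref{Kaprop3} identifies the right-hand side with $(l^2q^\lambda)^n q^{-n(n+1)/2}[n]_q!$, closing the case.

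The main obstacle is the $0<q<1$ case, since it depends on an explicit inverse to the Stieltjes moment map for the $q$-exponential; I would rely on the Atakishiyev--Atakishiyeva formula rather than attempt an independent inversion. Positivity of both densities on their respective domains follows from $\mathcal{N}_{l,\lambda}>0$ (for $q<1$) and from $0<(q-1)x/(l^2q^\lambda)<1$ together with the positivity of the $(q;l,\lambda)$-integration weights (for $q>1$), so the constructed $\omega_{l,\lambda}$ is a genuine positive measure.
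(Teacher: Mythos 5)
Your proposal is correct and follows essentially the same route as the paper: the paper's proof of this proposition is simply a pointer to Proposition~\ref{prop123}, whose argument you have accurately reconstructed — reduction to the moment problem with moments $(l^2q^\lambda)^nq^{-n(n+1)/2}[n]_q!$, the Atakishiyev--Atakishiyeva integral for $0<q<1$, and the identities (\ref{Kaprop1}), (\ref{Kaprop3}) of Lemma~\ref{Buklemma} for $q>1$. No gaps.
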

{\bf Proof:} See proof of proposition \ref{prop123}

Hence, the states
\begin{eqnarray}
 |z\rangle_{l,\lambda}= \mathcal{N}_{l,\lambda}^{-1/2}(|z|^2)\sum_{n=0}^\infty\frac{q^{n(n+1)/4}z^n}{\sqrt{(l^2q^\lambda)^n[n]_q!}}|n\rangle 
\end{eqnarray}
form a family of coherent states for $z\in\mathbf{D}_{l,\lambda}= \left\{z\in\mathbb{C};\;|z|^2<R_{l,\lambda}\right\}$.

\subsubsection{The Chung-Chung-Nam-Um oscillator algebra (1993)}
The generalized $(q,\alpha,\beta)$-algebra was introduced by Chung {\it et al} \cite{Chung} with the generators satisfying
\begin{eqnarray}
 &&[N,\;a]=-a,\qquad [N,\;a^\dagger]= a^\dagger,\\
&&aa^\dagger-qa^\dagger a= q^{\alpha N+\beta}
\end{eqnarray}
where $q\in\mathbb{R}_+^*$ and $\alpha$, $\beta$ are real parameters. One notices that $F(N)=q$ and $G(N)=q^{\alpha N+\beta}$. Therefore,
\begin{eqnarray}
 \varphi(n)= f(n)=: F_{\alpha,\beta}(n;q) 
&=&\left\{\begin{array}{lcl}nq^{n-1+\beta} &\mbox{if} & \alpha=1\\ q^\beta\frac{q^n-q^{\alpha n}}{q-q^\alpha}& \mbox{if} & \alpha\neq 1.
          \end{array}\right.
\end{eqnarray}
This algebra  generalizes the algebras introduced by:
\begin{itemize}
\item Tamm-Dancoff, with $\alpha= 1,\;\beta=0$;
\item Arick-Coon-Kuryshkin, with $\alpha=\beta=0$; and
\item Biedenharn-Macfarlane, with $\alpha=-1,\;\beta=0$.
\end{itemize}
%

\subsubsection{The Borzov-Damasky-Yegorov oscillator algebra (1993)}
The generalized $\mathcal{W}_{\alpha,\beta}^\gamma(q)$-algebra was introduced by Borzov {\it et al} \cite{Borzov} in order to unify
a large class of konwn $q$-deformed oscillator algebras. The generators satisfy
\begin{eqnarray}
 &&[N,\;a]=-a,\qquad [N,\;a^\dagger]= a^\dagger,\\
&&aa^\dagger-q^\gamma a^\dagger a= q^{\alpha N+\beta}
\end{eqnarray}
where $q\in\mathbb{R}_+^*$ and $\alpha$, $\beta$, $\gamma$ are real parameters. 
Here $F(N)=q^\gamma$ and $G(N)=q^{\alpha N+\beta}$ leading to 
\begin{eqnarray}
 \varphi(n)= f(n)=F_{\alpha,\beta}^\gamma(n;q) 
&=&\left\{\begin{array}{lcl}nq^{\gamma(n-1)+\beta} &\mbox{if} & \alpha=\gamma\\ 
q^\beta\frac{q^{\gamma n}-q^{\alpha n}}{q^\gamma-q^\alpha}& \mbox{if} & \alpha\neq \gamma.\end{array}\right.
\end{eqnarray}
%

\subsubsection{The Brzezi\'nski - Egusquinza - Macfarlane oscillator algebra (1993)}
Brzezi\'nski {\it et al} \cite{Brzezenski} introduced this algebra as $q$-deformation  {\it \`a la}  Biedenharn - Macfarlane of the
Calogero-Vasiliev oscillator algebra. It is governed by the relations:
\begin{eqnarray}
 &&[N,\;a]=-a,\qquad [N,\;a^\dagger]= a^\dagger,\quad [N,\;K] = 0,\\
&& aK = -Ka,\quad a^\dagger K = -K a^\dagger,\quad K^2= \mathbf{1},\\
&& aa^\dagger - qa^\dagger a= q^{-N}(\mathbf{1}+ 2\alpha K),\label{Brzezinskyalg}
\end{eqnarray}
where $\alpha\in\mathbb{R}^*$, $q\in\mathbb{R}_+$ and $K= (-)^N$ is the Klein operator. 
Here $F(N)= q$ and $G(N)= q^{-N}( 1+ 2\alpha K)$ leading to
\begin{eqnarray}
 \varphi(n)=:f_\alpha(n)
&=&\frac{q^n-q^{-n}}{q-q^{-1}}+ 2\alpha\frac{q^n-(-1)^nq^{-n}}{q+q^{-1}}.
\end{eqnarray}
%

\subsubsection{The Quesne oscillator algebra (2002)}
The coherent states introduced by Quesne \cite{Quesne} may be associated with the $q$-deformed algebra  satisfying   the relations \cite{Hounkonnou&Ngompe07a}:
\begin{eqnarray}
 &&[N,\;a]=-a,\qquad [N,\;a^\dagger]= a^\dagger,\\
&&aa^\dagger-a^\dagger a= q^{-N-1}\;\mbox{ or }\; qaa^\dagger-a^\dagger a= \mathbf{1}\label{Quesnalg}
\end{eqnarray}
where $0<q<1$.

The first relation of (\ref{Quesnalg}) suggests $F(N)=1$ and $G(N)=q^{-N-1}$ leading to
\begin{eqnarray}
 \varphi(n)=  \frac{1-q^{-n}}{q-1}=q^{-n}[n]_q=:[n]_q^Q.
\end{eqnarray}
%

It is a particular case of the {\it Kalnins-Miller-Mukherjee} algebra developed above with $l=1$, $\lambda=0$.

\subsubsection{The $(q;\alpha,\beta,\gamma;\nu)$-Burban oscillator algebra (2007)}
In 2007, Burban \cite{Burban1} introduced the $(q;\alpha,\beta,\gamma;\nu)$-oscillator algebra whose generators satisfy the relations
\begin{eqnarray}
 &&[N,\;a]=-a,\qquad [N,\;a^\dagger]= a^\dagger,\quad [N,\;K] = 0,\\
&& aK = -Ka,\quad a^\dagger K = -K a^\dagger,\quad K^2= \mathbf{1},\\
&& aa^\dagger - q^\gamma a^\dagger a= q^{\alpha N+\beta}(\mathbf{1}+ 2\nu K),\label{Burban1alg}
\end{eqnarray}
where $\nu\in\mathbb{R}^*$, $\alpha, \beta\in\mathbb{R}$, $q\in\mathbb{R}_+$ and $K= (-)^N$ is the Klein operator.
Here $F(N)=q^\gamma$ and $G(N)=q^{\alpha N+\beta}(1+ 2\nu K)$ leading to
\begin{eqnarray}
 \varphi(n) 
&=& q^{\gamma(n-1)}\sum_{j=0}^{n-1}\frac{q^{\alpha j+\beta}(1+2\nu(-1)^j)}{q^{\gamma j}}\cr
&=&\left\{\begin{array}{lcl}q^{\gamma(n-1)+\beta}(n+\nu(1-(-1)^n)) &\mbox{if} & \alpha=\gamma\\ 
q^\beta\left(\frac{q^{\gamma n}-q^{\alpha n}}{q^\gamma-q^\alpha}+2\nu
\frac{q^{\gamma n}-(-1)^nq^{\alpha n}}{q^\gamma+q^\alpha}\right)& \mbox{if} & \alpha\neq \gamma.\end{array}\right.
\end{eqnarray}
%
For  $\alpha= \gamma$  and $1+2\nu>0$
\begin{eqnarray}
f(n)= F_{\alpha,\beta;\nu}^\gamma(n;q)= q^{\gamma(n-1)+\beta}(n+\nu(1-(-1)^n)).
\end{eqnarray}
For each of  the following cases
\begin{enumerate}
 \item[1.]  ($q<1$, $\alpha<\gamma$ and $-1<2\nu<-\frac{q^\gamma+q^\alpha}{q^\gamma-q^\alpha}$);
 \item[2.]  ($q>1$, $\alpha>\gamma$ and $-1<2\nu<-\frac{q^\gamma+q^\alpha}{q^\gamma-q^\alpha}$);
 \item[3.]  ($q<1$, $\alpha>\gamma$ and $1+2\nu>0$);
 \item[4.]  $q>1$, $\alpha<\gamma$ and $1+2\nu>0$)
\end{enumerate}
we have
\begin{eqnarray}
 f(n)= F_{\alpha,\beta;\nu}^\gamma(n;q)= q^\beta\left(\frac{q^{\gamma n}-q^{\alpha n}}{q^\gamma-q^\alpha}+2\nu
\frac{q^{\gamma n}-(-1)^nq^{\alpha n}}{q^\gamma+q^\alpha}\right).
\end{eqnarray}
%
%
%

\subsubsection{The $(p,q)$ and $(p,q;\mu,\nu, f)$-oscillator algebras (2007)}
In $2007$, our group \cite{Hounkonnou&Ngompe07a} introduced an   algebra  generalizing  the {\it Quesne}  oscillator algebra:
\begin{eqnarray}
&[N,\;a]=-a,\qquad& [N,\;a^\dagger]= a^\dagger,\\
& p^{-1}aa^\dagger - a^\dagger a = q^{-N-1},\;& qaa^\dagger - a^\dagger a = p^{N+1}\label{Hounkalg}
\end{eqnarray}
From   the first relation (\ref{Hounkalg}),  we get $F(N)=p$ and $G(N)=pq^{-N-1}$ and then
\begin{eqnarray}
 \varphi(n)=p^{n-1}\sum_{j=0}^{n-1}\frac{pq^{-j-1}}{p^j}= \frac{p^n-q^{-n}}{q-p^{-1}}=:[n]_{p,q}^Q.
\end{eqnarray}
The $(p,q;\mu,\nu, f)$-oscillator algebra is defined through the following commutation relations
\begin{eqnarray}
&&[N,\;a]=-a,\qquad [N,\;a^\dagger]= a^\dagger,\\
&&  \frac{p^{\mu-1}}{q^{\nu-1}}aa^\dagger - p^{-1}a^\dagger a = \left(\frac{q^{\nu}}{p^{\mu-1}}\right)^Nf(p,q)\label{Hounkalg2}
\end{eqnarray}
where $p,\;q,\;\mu,\;\nu$ are real numbers such that $0<pq<1$, $p^\mu< q^{\nu-1}$, $p > 1$ and  $f$ {\it a well behaved real and non-negative function} of deformation parameters $p$ and $q$, satisfying $\lim f(p,q)=1$ as $(p,p)\to(1,1)$.
 
Here, $\displaystyle F(N)=\frac{q^{\nu-1}}{p^\mu}$ and $\displaystyle G(N)= f(p,q) \frac{q^{\nu-1}}{p^{\mu-1}} \left(\frac{q^{\nu}}{p^{\mu-1}}\right)^N$, so that
\begin{eqnarray}
 \varphi(n)
&=& f(p,q) \left(\frac{q^{\nu}}{p^{\mu}}\right)^{n}\frac{p^n-q^{-n}}{q-p^{-1}}=: [n]_{p,q,f}^{\mu,\nu}.
\end{eqnarray}
The series
\begin{eqnarray}
\mathcal{N}_{p,q,f}^{\mu,\nu}(x) = \sum_{n=0}^\infty\frac{x^n}{[n]_{p,q,f}^{\mu,\nu}!}
\end{eqnarray}
has a radius of convergence  $R= +\infty$.
It had also shown in \cite{Hounkonnou&Ngompe07a} that the moment problem (\ref{PMoment}) has, for $\mu=1$ and $\nu=0$, the following solution
\begin{eqnarray}
 d\omega_{p,q,f}^{1,0}(x)=\frac{1}{2\pi}\frac{p^{-1}-q}{f(p,q)\ln{(pq)^{-1}}}
\frac{\mathcal{N}_{p,q,f}^{1,0}(x)}{\mathcal{N}_{p,q,f}^{1,0}(x/(pq))}dx.
\end{eqnarray}
Finally, the states
\begin{eqnarray}
 |z\rangle_{p,q,f}^{1,0}= \left(\mathcal{N}_{p,q,f}^{1,0}(|z|^2) \right)^{-1/2}
\sum_{n=0}^\infty\frac{z^n}{\sqrt{[n]_{p,q,f}^{1,0}!}}|n\rangle,\;z\in\mathbb{C},
\end{eqnarray}
form a family of coherent states.

\subsubsection{Unified $(p, q; \alpha, \beta, \nu; \gamma)$-deformed oscillator algebra (2012)}
More recently, Balo\"itcha {\it et al} \cite{Baloitcha} introduced the unified $(p, q; \alpha, \beta, \nu; \gamma)$-deformed 
oscillator algebra whose generators satisfy:
\begin{eqnarray}
 &&[N,\;a]=-a,\qquad [N,\;a^\dagger]= a^\dagger,\quad [N,\;K] = 0,\\
&& aK = -Ka,\quad a^\dagger K = -K a^\dagger,\quad K^2= 1,\\
&& aa^\dagger - p^\nu a^\dagger a= ( 1+ 2\gamma K)q^{\alpha N+\beta},\label{Baloitchalg}
\end{eqnarray}
where, $\alpha, \beta,\gamma, \nu\in\mathbb{R}$, $p, q\in\mathbb{R}_+$ and $K= (-)^N$ is the Klein operator.

Here, $F(N)= p^\nu$ and $G(N)=  ( 1+ 2\gamma (-)^N)q^{\alpha N+\beta}$ and
\begin{eqnarray}
 \varphi(n)
&=&\left\{\begin{array}{lcl}
q^\beta\left(\frac{p^{\nu n}-q^{\alpha n}}{p^\nu-q^\alpha}+2\gamma\frac{p^{\nu n}-(-1)^nq^{\alpha n}}{p^\nu+q^\alpha}\right)
&\mbox{ if }& p^\nu\neq q^\alpha
\\ q^{\beta+\alpha(n-1)}\left(n+2\gamma\frac{1-(-)^n}{2}\right)&\mbox{ if }&p^\nu=q^\alpha.\end{array}\right..
\end{eqnarray}
For  $p^\nu= q^\alpha$ and $1+2\gamma>0$
\begin{eqnarray}
f(n)=: F_{\alpha,\beta;\gamma}^\nu(n;p,q)= q^{\beta+\alpha(n-1)}\left(n+2\gamma\frac{1-(-)^n}{2}\right).
\end{eqnarray}
For each of the following cases:
\begin{enumerate}
 \item[1.] $p^\nu>q^\alpha$ and $1+2\gamma>0$, and 
 \item[2.] $p^\nu<q^\alpha$ and $-1<2\gamma<-\frac{p^\nu+q^\alpha}{p^\nu-q^\alpha}$
\end{enumerate}
we have
\begin{eqnarray}
 f(n)=: F_{\alpha,\beta;\gamma}^\nu(n;p,q)= q^\beta\left(\frac{p^{\nu n}-q^{\alpha n}}{p^\nu-q^\alpha}
+2\gamma\frac{p^{\nu n}-(-1)^nq^{\alpha n}}{p^\nu+q^\alpha}\right).
\end{eqnarray}
 

\section{${\cal R}(p,q)$-deformed quantum algebras: coherent states and
special functions}\label{chap3}
$\;$

 We provide with a generalization of well known $(p,q)$-deformed Heisenberg
algebras, called ${\cal R}(p,q)$-deformed quantum algebras,
and study the corresponding ${\cal R}(p, q)$-series. A general
formulation of the binomial theorem is given. Special functions are
obtained as limit cases. This work well prolongs a previous work by
Odzijewicz \cite{Odzijewicz98}.
 Known results in the literature are recovered. 

%

\subsection{Theoretical framework}\label{Sect32}
The development displayed in this section is essentially based on
the formalism elaborated by Odzijewicz ~\cite{Odzijewicz98} in a nice,
mathematically based work published in 1998, but unfortunately
hushed up in the recent literature on the topic. In the mentioned
work, this author investigated the quantum algebras generated by the
coherent state maps of the disc, leading to a generalized analysis
which includes standard analysis as well as q-analysis. He provided
with the meromorphic continuation of the generalized basic
hypergeometric series and constructed a reproducing measure, when
the series is treated as a reproducing kernel. Indeed, much to our
very great surprise, most  all the remarkable coherent state
generalizations, performed from the generalization of exponential
function by different authors, can be generated
from this more general theory.

Let ${\cal H}$ be an infinite dimensional separable Hilbert space
and $\{\vert n\rangle\}_{n=0}^{\infty}$ its canonical basis. Assume
that there exists a sequence $\{f_n\}_{n=0}^{\infty}$ in  $ {\cal
H}$ such that
\begin{equation}\label{n1}
f_n = c_n C \vert n\rangle
\end{equation}
where $C$ and its inverse $C^{-1}$ are bounded operators on
${\cal H}$, and $c_n$ ($n=0,1, 2, \cdots$) are real positive
numbers satisfying the conditions
\begin{eqnarray}{\label{cond}}
\sup_{n\in\mathbb{N}}{\frac{c_{n-1}}{c_n}}< +\infty\quad
\mbox{and}\quad R^{-1} = \lim_{n\to\infty} \sup{\sqrt[n]{c_n}}.
\end{eqnarray}
\begin{definition}~~\cite{Odzijewicz98}
A coherent states map is a complex analytic map
\begin{eqnarray}\label{comap}
&&K: \mathbb{D}_R \longrightarrow {\cal H}\setminus \{0\}
\nonumber \\
&&\qquad z\quad \hookrightarrow\quad K(z) =  \sum_{n = 0}^{\infty}
{f_n z^n}
\end{eqnarray}
where $\mathbb{D}_R$ =$\{z\in\mathbb{C}$ : $\vert z\vert < R \}$.
The states $K(z)$ are called coherent states and the operator $A$
admitting these states  as eigenstates with eigenvalues
$z\in\mathbb{D}_R$, i.e.
\begin{equation}\label{eigv}
AK(z) = zK(z),
\end{equation}
is said to be the annihilation operator.
\end{definition}
The relations (\ref{n1})-(\ref{eigv}) lead to
\begin{eqnarray}
C^{-1}A C\vert0\rangle =0 \quad \mbox{and} \quad && C^{-1}A C \vert
n\rangle = \frac{c_{n-1}}{c_n}\vert n-1\rangle \quad \forall n \geq
1.
\end{eqnarray}
Therefore, $\|C^{-1}AC\| < + \infty$  meaning that $ A$ is a bounded
operator. Its Hermitian conjugate, called the creation operator and
denoted by $A^{\dag}$, is also bounded. The algebra closure,  spanned
by the operators $\{A, A^\dag\}$, in their norm topology, gives the
so-called $C^*-$algebra ${\cal A}_K$.
\begin{proposition}
Let ${\cal O}(\mathbb{D}_R)$ be a set of holomorphic functions defined
on the disc $\mathbb{D}_R$.\\
The map $I_K : {\cal H}\longrightarrow {\cal O}(\mathbb{D}_R)$ such
as
\begin{equation}
\label{IK} I_K(v):=
< v \; | \; K(\cdot) >, \qquad  v \in {\cal H}
\end{equation}
is an antilinear monomorphism of complex vector spaces. Moreover, considering
the topology of simple convergence, $I_K\left({\cal H}\right)$
is dense on ${\cal O}(\mathbb{D}_R)$.
\end{proposition}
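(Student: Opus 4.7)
The plan is to verify each of the three claims (antilinearity, injectivity, density) separately, using that $\{\lvert n\rangle\}$ is an orthonormal basis and that both $C$ and $C^{-1}$ are bounded, and then to read off density from the fact that Taylor partial sums of a function holomorphic on $\mathbb{D}_R$ converge pointwise to it.

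First I would unfold the definition.  For $v\in\mathcal{H}$ and $z\in\mathbb{D}_R$, continuity of the inner product gives
\begin{equation*}
I_K(v)(z)=\langle v\,|\,K(z)\rangle=\sum_{n=0}^{\infty}\langle v\,|\,f_n\rangle\,z^n,
\end{equation*}
with $\langle v\,|\,f_n\rangle=c_n\langle C^\ast v\,|\,n\rangle$.  Cauchy--Schwarz gives $|\langle v\,|\,f_n\rangle|\le\|v\|\,\|C\|\,c_n$, so by the hypothesis $\limsup\sqrt[n]{c_n}=R^{-1}$ the power series has radius of convergence at least $R$; hence $I_K(v)\in\mathcal{O}(\mathbb{D}_R)$.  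Antilinearity is then immediate from the physics convention $\langle\alpha v+\beta w\,|\,\cdot\rangle=\bar\alpha\langle v\,|\,\cdot\rangle+\bar\beta\langle w\,|\,\cdot\rangle$.

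Next I would handle injectivity.  If $I_K(v)\equiv 0$ on $\mathbb{D}_R$, uniqueness of Taylor coefficients forces $\langle v\,|\,f_n\rangle=0$ for every $n$; since $c_n>0$ this yields $\langle C^\ast v\,|\,n\rangle=0$ for all $n$, hence $C^\ast v=0$ because $\{\lvert n\rangle\}$ is an orthonormal basis.  Since $C^{-1}$ is bounded, so is $(C^\ast)^{-1}=(C^{-1})^\ast$, and therefore $v=0$.  This is the monomorphism claim.

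For density in the topology of simple (i.e.\ pointwise) convergence, I would produce, for any target $g\in\mathcal{O}(\mathbb{D}_R)$ with Taylor series $g(z)=\sum_{n\ge 0}a_n z^n$, the explicit sequence
\begin{equation*}
v_N\;:=\;\sum_{n=0}^{N}\frac{\bar a_n}{c_n}\,(C^\ast)^{-1}\lvert n\rangle\;\in\;\mathcal{H},
\end{equation*}
which is a finite sum and hence well defined.  A direct computation using $\langle (C^\ast)^{-1}m\,|\,C\,n\rangle=\langle m\,|\,n\rangle=\delta_{mn}$ shows $\langle v_N\,|\,f_n\rangle=a_n$ for $n\le N$ and $0$ otherwise, so $I_K(v_N)(z)=\sum_{n=0}^{N}a_n z^n$ is exactly the $N$-th Taylor partial sum of $g$.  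Since $g$ is holomorphic on $\mathbb{D}_R$ these partial sums converge to $g$ uniformly on compacta, a fortiori pointwise, giving $I_K(v_N)\to g$ in the topology of simple convergence.  This proves density of $I_K(\mathcal{H})$ in $\mathcal{O}(\mathbb{D}_R)$.

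The only step requiring care is the density argument: one must resist the temptation to try to solve $\langle v\,|\,f_n\rangle=a_n$ for all $n$ simultaneously (the resulting series $\sum|a_n|^2 c_n^{-2}$ need not converge, as $g$ is only holomorphic, not in the ``weighted Fock'' space attached to $K$), and instead use truncations.  The other two parts are essentially bookkeeping once one records that $\{f_n\}=\{c_n C\lvert n\rangle\}$ is the image of a total sequence under an invertible bounded operator, multiplied by strictly positive weights.
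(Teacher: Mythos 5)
Your proposal is correct, and on the density claim it coincides with the paper's own argument: both boil down to exhibiting $z^n=I_K\bigl(c_n^{-1}(C^{\dag})^{-1}\vert n\rangle\bigr)$, so that every Taylor partial sum of a target $g\in{\cal O}(\mathbb{D}_R)$ lies in $I_K({\cal H})$ and converges to $g$ pointwise. Your remark about resisting the temptation to solve $\langle v\vert f_n\rangle=a_n$ for all $n$ at once is well taken; the truncation is exactly what makes the argument work. Where you diverge is the injectivity step. The paper recovers each $f_N$ as $f_0=K(0)$ and $f_{N+1}=\lim_{z\to 0}z^{-(N+1)}\bigl(K(z)-\sum_{k=0}^{N}f_kz^k\bigr)$, concluding by induction that the coherent states $\{K(z)\}$ form a total subset of ${\cal H}$; injectivity of $I_K$ then follows because $I_K(v)=0$ means $v$ is orthogonal to a total set. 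You instead read off $\langle v\vert f_n\rangle=0$ from uniqueness of Taylor coefficients and then use that $f_n=c_nC\vert n\rangle$ with $c_n>0$ and $C$ boundedly invertible, so $C^{\ast}v=0$ forces $v=0$. Your route is shorter and more self-contained; the paper's route proves the stronger (and independently useful) fact that the coherent states themselves are total in ${\cal H}$, which is invoked later in the formalism. Both are valid, and your preliminary check that $I_K(v)$ actually lands in ${\cal O}(\mathbb{D}_R)$ (via Cauchy--Schwarz and $\limsup\sqrt[n]{c_n}=R^{-1}$) is a point the paper leaves implicit.
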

\begin{proof}
From (\ref{comap})
\begin{equation}
f_{N+1} =  \lim_{z\rightarrow 0}{\frac{1}{z^{N+1}}}
\big(K(z)-\sum_{k=0}^{N}{f_k z^k}\big) \quad  \texttt{and} \quad f_0
= K(0),
\end{equation}
it results, by induction, that $f_N$ belongs to the closure of the
linear space spanned by the elements of the subset $ \{ K(z),\;\;
|z| < \epsilon < R \} \subset \mathcal{H} $. Otherwise, the coherent
states  $K(z)$, $z\in\mathbb{D}_R$, form a linearly subset dense in
${\cal H}$. One can easily check that $I_K$ is antilinear, i.e. $  $,
\begin{eqnarray}
I_K( \alpha u+\beta v)= \bar{\alpha}I_K(u)+\bar{\beta}I_K(v),
\quad \forall \alpha , \beta \in \mathbb{C}  \;\; u, v \in \mathcal{H},
\end{eqnarray}
and
\begin{equation} z^n = I_K\left(c_n^{-1}(C^{\dag})^{-1}\vert
n\rangle\right)
\end{equation}
implying $z^n\in I_K\left({\cal H}\right),\quad \forall
n\geq 0.$ \cqfd
\end{proof}

Therefore, $I_K\left({\cal H}\right)$ inherits the Hilbert space
structure endowed with the scalar product
\begin{equation}
\left <I_K(v_1)\vert I_K(v_2)\right >:= \left <v_1\vert v_2\right >
\quad \forall  v_1, v_2 \in{\cal H}.
\end{equation}
Thus, the Hilbert space ${\cal H}$ can be identified with the dense
subspace $I_K\left({\cal H}\right)$ of ${\cal O}(\mathbb{D}_R)$.
Hence, $I_K\circ{\cal A}_K\circ I_K^{-1}$ is the analytic
representation of the algebra ${\cal A}_K$ as shown in the following
diagram:

\vspace{0.1cm}
\begin{eqnarray*}
\xymatrix{{\cal H}\ar[d]_{I_K} \ar[r]^X & {\cal H}\ar[d]^{I_K}\\
I_K\left({\cal H}\right)\ar[d]_{Id}  & I_K\left({\cal H}\right)\ar[d]^{Id}
& & \beta_X = I_K\circ X\circ I_K^{-1}\\
{\cal O}(\mathbb{D}_R) \ar[r]^{\beta_X} &  {\cal O}(\mathbb{D}_R)}
\end{eqnarray*}
\vspace{0.1cm} So, given an operator $X\in{\cal A}_K$, we find a
unique linear operator $\beta_X$ into ${\cal O}(\mathbb{D}_R)$ such
that $\beta_X = I_K\circ X\circ I_K^{-1}$. Consequently, the action of
the analytic representation $(I_K\circ A^\dag \circ I_K^{-1})$ of
the creation operator $A^\dag$  on $\varphi\in{\cal O}(\mathbb{D}_R)$ yields:
\begin{equation}
\left(I_K\circ A^\dag\circ I_K^{-1}\right)\varphi(z) = z \varphi(z),
\end{equation}
while that  of the analytic representation $\partial:= (I_K\circ
A\circ I_K^{-1})$ of the annihilation operator, the so-called
\textit{$K-$derivative} ~\cite{Odzijewicz98}, depends on the operator $C$ and
parameters $c_n$. On the basis elements $z^n$, it gives
\begin{equation}
\partial z^n =
\sum_{k=0}^{\infty}c_kc_n^{-1}\langle n\vert C^{-1}A^\dag C\vert k\rangle
z^k.
\end{equation}
Without loss of generality and as  a matter of convenience, we
restrict, in the sequel, the analysis to a unity operator  $C$,
(i.e. $C = \mathbb{I}$) leading to
\begin{equation}
{\label{coef1}} \partial z^n =
[n]z^{n-1}
\end{equation}
where
\begin{eqnarray} [n]=\left\{
\begin{array}{l}  0 \;\; {\rm if} \;\;n=0 \\
\left(\frac{c_{n-1}}{c_n}\right)^2 \;\; {\rm if} \;\ n\geq 1.
\end{array}\right.
\end{eqnarray}
Therefore,
\begin{equation}{\label{coef2}}
c_n = \frac{c_0}{\sqrt{[n]!}} \qquad \qquad   \qquad   \qquad
\end{equation}
with $[0]!= 1$ and $[n]! = [n]([n-1]!)$, and
\begin{eqnarray}{\label{coef3}}
 && AA^\dag \vert n\rangle = [n+1]\vert
n\rangle,
\quad A^\dag A\vert n\rangle = [n]\vert
n \rangle , \\
&&\nonumber \\
&& c_n \geq \frac{c_0}{\|A\|^n}\;\;\; \mbox{for}\;\;\; \|A\| \geq R.
\qquad
\end{eqnarray}
The sequence $\{[n]\}_{n \geq 0}$ converges and
\begin{equation}{\label{ray}}
R = \lim_{n
\to \infty}{\sqrt[n]{[n]}}. \qquad \qquad
\end{equation}
Similarly, the $K-$exponential function is defined by
\begin{equation}{\label{exp1}}
\mathrm{Exp}(\bar{v}, z) := \big< K(v)\vert K(z)\big>
\end{equation}
and satisfies the equation
\begin{equation}{\label{equadif1}}
\partial \mathrm{Exp}(\bar{v}, z) =
\bar{v}\mathrm{Exp}(\bar{v}, z). \;\;
\end{equation}
Setting $v = 1$,  one has
\begin{equation}
{\label{exp2}} \mathrm{Exp}(z):= \mathrm{Exp}(1, z)\qquad
\end{equation}
which satisfies the equation
\begin{equation}{\label{equadif2}}
\partial \mathrm{Exp}(z) = \mathrm{Exp}(z).\qquad
\end{equation}
Hence, provided the   $K-$derivative ($\partial$) realization, we
can determine the analytic representation, $I_K\circ {\cal A}_K\circ
I_K^{-1}$, of the $C^\star-$ algebra ${\cal A}_K$. Moreover through the
equalities (\ref{coef1}) and (\ref{coef2}), the coherent states map
$K: \mathbb{D}_R \longrightarrow {\cal H}\setminus \{0\}$ generate
the coherent states
\begin{equation}{\label{coh1}} K(z)
= \sum_{n = 0}^{\infty}c_nz^n\vert n\rangle = \sum_{n=0}^{\infty}
\frac{c_0}{\sqrt{[n]!}}z^n\vert n\rangle ,
\end{equation} and the
exponential functions (\ref{exp1}) and (\ref{exp2}) are reduced to
\begin{equation}{\label{exp3}} \mathrm{Exp}(\bar{v}, z) = \sum_{n=0}^{\infty}
\frac{c_0^2}{[n]!}(\bar{v}z)^n
\end{equation}
and, after normalization (i.e. $c_0 = 1$), to
\begin{equation}{\label{exp4}} \mathrm{Exp}(z) =
\sum_{n=0}^{\infty} \frac{z^2}{[n]!},
\end{equation} respectively.

In the next section, we aim at extending the above formalism to a
general $(p, q)$-analysis, constructing a $(\mathcal{ R}, p, q)-$
derivative, where $\mathcal{ R}$ is a meromorphic function defined
on $\mathbb{C}\times\mathbb{C}$ and can be, in particular cases, a
rational function.

\subsection{${\cal R}(p, q)-$basic hypergeometric series related to meromorphic functions}\label{Sect33}
 The problem we set here consists in  defining
the derivative which leads to a generalization of $(p, q)-$algebras
and $(p,q)-$basic hypergeometric series. Let $p$ and $q$ be two
positive real numbers  such that $0<q<p$. Consider
 a meromorphic function
${\cal R}$, defined on $\mathbb{C}\times\mathbb{C}$ by
\begin{equation}\label{Rxy}
{\cal R}(x,y) = \sum_{k, l = - L}^{\infty} r_{kl}x^ky^l
\end{equation}
with an eventual isolated singularity at the zero, where $r_{kl}$
are complex numbers, $L\in\mathbb{N}\cup\{0\}$, ${\cal R}(p^n, q^n)>
0$ $\forall n\in\mathbb{N}$, and ${\cal R}(1, 1) = 0$, and the
following linear operators  defined on ${\cal O}(\mathbb{D}_R)$
~\cite{Chakrabarti&Jagan,Jagannathan&Rao,Odzijewicz98} by:
\begin{eqnarray}\label{operat}
 Q&:& \varphi \longmapsto Q\varphi(z) = \varphi(qz)
\nonumber \\
 P&:& \varphi \longmapsto P\varphi(z) = \varphi(pz) \nonumber\\
{\label{deriva}} \partial_{p,q}&:&\varphi \longmapsto
\partial_{p,q}\varphi(z) = \frac{\varphi(pz) - \varphi(qz)}{z(p-q)}.
\end{eqnarray}
Then, we define  the analytic representation of the annihilation
operator $A$, called the ${\cal R}(p, q)-$ derivative, by
\begin{equation}{\label{deriva1}} \partial_{{\cal R}(p,q)} :=
\partial_{p,q}\frac{p - q}{P-Q}{\cal R}(P, Q) = \frac{p -
q}{pP-qQ}{\cal R}(pP, qQ)\partial_{p,q}.
\end{equation}
Using relations (\ref{coef1}), (\ref{coef2}), and
(\ref{coh1}), we define the ${\cal R}(p, q)$-factors (also called ${\cal R}(p, q)$-numbers) by ${\cal R}(p^n,q^n)$, $n=0,\; 1,\; 2,\; \cdots$ from which we deduce the ${\cal R}(p, q)$-factorials
\begin{eqnarray}
&&\label{Rfac} {\cal R}!(p^n,q^n):=
\left\{\begin{array}{lr} 1 \quad \mbox{for   } \quad n=0 \quad \\
{\cal R}(p,q)\cdots{\cal R}(p^n,q^n) \quad \mbox{for } \quad n\geq
1, \quad \end{array} \right.
\end{eqnarray} 
and inducing
the coefficients $c_n$ and the coherent states map $K$ in the form
\begin{eqnarray}\label{CfRpq}
&&c_n^2= \frac{c_0^2}{{\cal R}!(p^n,q^n)} \\ && \label{KRpq}
K_{{\cal R}(p,q)}(z) =
\sum_{n=0}^{\infty}\frac{c_0}{\sqrt{{\cal R}!(p^n,q^n)}}z^n\vert
n\rangle.
\end{eqnarray}
Besides, the relations (\ref{exp3}) and (\ref{exp4}) can be readily
generalized to take the form
\begin{eqnarray}
&&{\label{exp5}}
\mathrm{Exp}_{{\cal R}(p,q)}(\bar{v},z) =
\sum_{n=0}^{\infty}\frac{c_0^2}{{\cal R}!(p^n,q^n)}(\bar{v}z)^n \\
&&\mbox{and}\nonumber \\ &&{\label{exp6}} \mathrm{Exp}_{{\cal R}(p,q)}(z) =
\sum_{n=0}^{\infty}\frac{1}{{\cal R}!(p^n,q^n)}z^n,
\end{eqnarray}
respectively, with the virtue that $c_0^2 = \mathrm{Exp}_{{\cal R}(p,q)}(0)= 1$. Unless we say otherwise $R$ will represent the
radius of convergence of the series. The equations (\ref{equadif1})
and (\ref{equadif2}) remain valid and (\ref{exp6}) is a
solution of the ${\cal R}(p, q)-$difference equation
\begin{equation} {\label{diff2}} {\cal R}(P,Q)\mathrm{Exp}_{{\cal R}(p,q)}(z) =
z\mathrm{Exp}_{{\cal R}(p,q)}(z). \end{equation}

The following two statements
are essential to perform a generalization of the $(p,q)$-binomial
theorem.
\begin{lemma}
{\label{lem1}}Let
\begin{eqnarray}
&&F(z) = \frac{z}{z-{\cal R}(1,0)}, \nonumber
\\
&&G(P,Q) = \frac{p(Q-P){\cal R}(pP,qQ)+(pP-qQ){\cal R}(1,0)}{pQ{\cal
R}(pP,qQ)}
\end{eqnarray}
if ${\cal R}(1,0)\neq 0$, and
\begin{eqnarray}
&&F(z) =z, \nonumber
\\&&
G(P,Q) = \frac{qQ-pP}{pQ{\cal R}(pP,qQ)}  \qquad  \qquad  \qquad
\qquad \qquad
\end{eqnarray}
if ${\cal R}(1,0) = 0$.
Then the exponential function in (\ref{exp6}) satisfies
\begin{equation}
\mathrm{Exp}_{{\cal R}(p,q)}(z) = \left[1- F(z)G(P,Q)\right]\mathrm{Exp}_{{\cal
R}(p,q)}\left(\frac{q}{p}z\right).
\end{equation}

\end{lemma}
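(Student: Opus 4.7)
The plan is to expand both sides in the monomial basis $\{z^n\}_{n\ge 0}$ and compare coefficients, exploiting that each $z^n$ is a simultaneous eigenvector of the commuting operators $P$, $Q$, ${\cal R}(pP,qQ)$ with eigenvalues $p^n$, $q^n$, ${\cal R}(p^{n+1},q^{n+1})$, and that ${\cal R}!(p^n,q^n)={\cal R}(p^n,q^n){\cal R}!(p^{n-1},q^{n-1})$. In particular $G(P,Q)$ acts diagonally on
\[
\mathrm{Exp}_{{\cal R}(p,q)}(qz/p)=\sum_{n\ge 0}\frac{(q/p)^n z^n}{{\cal R}!(p^n,q^n)},
\]
and the identity to be proved reduces to a comparison with the explicit series
\[
\mathrm{Exp}_{{\cal R}(p,q)}(qz/p)-\mathrm{Exp}_{{\cal R}(p,q)}(z)=-\sum_{n\ge 1}\frac{(p^n-q^n)z^n}{p^n{\cal R}!(p^n,q^n)}.
\]

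When ${\cal R}(1,0)=0$ (so $F(z)=z$), the calculation is immediate: $G(P,Q)$ has eigenvalue $(q^{n+1}-p^{n+1})/(pq^n{\cal R}(p^{n+1},q^{n+1}))$ on $z^n$, so after collecting factors and using the factorial identity above, the $n$-th coefficient of $G(P,Q)\mathrm{Exp}_{{\cal R}(p,q)}(qz/p)$ reduces to $-(p^{n+1}-q^{n+1})/(p^{n+1}{\cal R}!(p^{n+1},q^{n+1}))$; multiplying by $z$ and relabeling the index then produces exactly $\mathrm{Exp}_{{\cal R}(p,q)}(qz/p)-\mathrm{Exp}_{{\cal R}(p,q)}(z)$, as required.

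When ${\cal R}(1,0)\neq 0$, I would first clear the denominator of $F(z)=z/(z-{\cal R}(1,0))$ and restate the lemma as
\[
(z-{\cal R}(1,0))\bigl[\mathrm{Exp}_{{\cal R}(p,q)}(qz/p)-\mathrm{Exp}_{{\cal R}(p,q)}(z)\bigr]=zG(P,Q)\mathrm{Exp}_{{\cal R}(p,q)}(qz/p).
\]
Matching coefficients of $z^n$ for $n\ge 1$, the eigenvalue of $G(P,Q)$ on $z^{n-1}$ multiplied by $(q/p)^{n-1}/{\cal R}!(p^{n-1},q^{n-1})$ splits naturally into two pieces, $-(p^{n-1}-q^{n-1})/(p^{n-1}{\cal R}!(p^{n-1},q^{n-1}))$ and ${\cal R}(1,0)(p^n-q^n)/(p^n{\cal R}!(p^n,q^n))$, which reproduce exactly the contributions of $z[\mathrm{Exp}(qz/p)-\mathrm{Exp}(z)]$ and $-{\cal R}(1,0)[\mathrm{Exp}(qz/p)-\mathrm{Exp}(z)]$ on the left-hand side. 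The numerator of $G(P,Q)$ is engineered precisely so that this telescoping succeeds. The main bookkeeping obstacle is tracking operator ordering (since multiplication by $z$ does not commute with $P$ or $Q$) and two simultaneous index shifts in the ${\cal R}$-factorial; once the eigenvalues on each monomial are written out explicitly, the whole verification is a short algebraic check.
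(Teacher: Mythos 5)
Your proposal is correct, and both cases check out: on $z^n$ the operator $G(P,Q)$ acts by the scalar you compute, the factorial recursion ${\cal R}!(p^{n+1},q^{n+1})={\cal R}(p^{n+1},q^{n+1})\,{\cal R}!(p^{n},q^{n})$ absorbs the denominator ${\cal R}(pP,qQ)$, and the index shift produced by the outer multiplication by $z$ (respectively by $F(z)$ after clearing $z-{\cal R}(1,0)$) telescopes exactly as you describe; you also respect the operator ordering $F(z)G(P,Q)$ by letting $G$ act first. The route is genuinely different from the paper's. The paper never touches the series coefficients: it starts from the definition of $\partial_{p,q}$ and $\partial_{{\cal R}(p,q)}$ to establish the operator identity $1=\frac{Q}{P}+z\,\frac{pP-qQ}{pQ{\cal R}(pP,qQ)}\frac{Q}{P}\,\partial_{{\cal R}(p,q)}$ on ${\cal O}(\mathbb{D}_R)$ (rearranged differently when ${\cal R}(1,0)\neq0$), and then applies it to $\mathrm{Exp}_{{\cal R}(p,q)}$, using the eigen-equation $\partial_{{\cal R}(p,q)}\mathrm{Exp}_{{\cal R}(p,q)}=\mathrm{Exp}_{{\cal R}(p,q)}$ together with $\frac{Q}{P}\mathrm{Exp}_{{\cal R}(p,q)}(z)=\mathrm{Exp}_{{\cal R}(p,q)}(qz/p)$. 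That argument explains structurally where $F$ and $G$ come from (they are a rewriting of the resolvent-like identity above) and makes the difference equation the essential input, whereas your verification needs only the diagonal action of $P$, $Q$, ${\cal R}(pP,qQ)$ on monomials and the factorial recursion, never invoking the ${\cal R}(p,q)$-derivative at all. Yours is more elementary and self-contained; the paper's is shorter once the operator identity is granted and generalizes more transparently (e.g.\ to the iterated version in Lemma~\ref{lem2}). One small point worth stating explicitly in a write-up: the division by ${\cal R}(pP,qQ)$ is legitimate because ${\cal R}(p^{n+1},q^{n+1})>0$ for all $n\geq0$ by hypothesis, and all manipulations are termwise on a convergent power series in $\mathbb{D}_R$.
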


\begin{proof}

From definitions  (\ref{deriva}) and (\ref{deriva1}), we deduce
\begin{equation}{\label{opera}}
1 = \frac{Q}{P} + z
\frac{pP-qQ}{pQ{\cal R}(pP,qQ)}\frac{Q}{P}\partial_{{\cal R}(p,q)}.
\qquad\qquad
\end{equation}
The action of the operator  (\ref{opera}) on the exponential
function (\ref{exp6})
 gives
\begin{eqnarray*} \mathrm{Exp}_{{\cal R}(p,q)}(z) = \left[1-
z\frac{qQ-pP}{pQ{\cal R}(pP,qQ)}\right]\mathrm{Exp}_{{\cal R}(p,q)}
\left(\frac{q}{p}z\right) 
\end{eqnarray*}
which corresponds to the
case ${\cal R}(1,0) = 0$. Now, if ${\cal R}(1,0)\neq 0$, the
identity in (\ref{opera}) can  be rewritten
\begin{eqnarray}
 && 1-z\frac{1}{{\cal R}(1,0)}\partial_{{\cal R}(p,q)} =
\left[\frac{Q}{P}-z\frac{1}{{\cal R}(1,0)}\frac{Q}{P}\partial_{{\cal R}(p,q)}\right] \cr 
&&\qquad \qquad \qquad +z\left[ \frac{p(Q-P){\cal R}(pP,qQ)
+(pP-qQ){\cal R}(1,0)}{pQ{\cal R}(pP,qQ){\cal R}(1,0)}\frac{Q}{P}
\right]\partial_{{\cal R}(p,q)}\cr 
&&
\end{eqnarray}
which, acting on $\mathrm{Exp}_{{\cal R}(p,q)}(z)$, leads, after  a short
computation, to the result:
\begin{eqnarray*}
\mathrm{Exp}_{{\cal R}(p,q)}(z) &=&  \left[1- \frac{z}{z-{\cal R}(1,0)}
\frac{p(Q-P){\cal R}(pP,qQ) +(pP-qQ){\cal R}(1,0)}{pQ{\cal
R}(pP,qQ){\cal R}(1,0)}\right]\cr && \qquad \times  \mathrm{Exp}_{{\cal R}(p,q)} \left(\frac{q}{p}z\right).
\end{eqnarray*}
\cqfd
\end{proof}

\begin{lemma}{\label{lem2}}
Under assumptions of the Lemma~\ref{lem1},
\begin{equation}{\label{exp7}}
\mathrm{Exp}_{{\cal R}(p,q)}(z) =
\prod_{k=0}^{n-1}\left[1-F\left(\frac{q^k}{p^k}z\right)G(P,Q)\right]
\mathrm{Exp}_{{\cal R}(p,q)}\left(\frac{q^n}{p^n}z\right).
\end{equation}
\end{lemma}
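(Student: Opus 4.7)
The natural approach is induction on $n$, using Lemma~\ref{lem1} as both the base case and the inductive engine. For $n=1$ the identity is precisely the content of Lemma~\ref{lem1}. For the inductive step, supposing the formula holds at stage $n$, I would apply Lemma~\ref{lem1} one more time, but with the variable $z$ replaced by $(q/p)^n z$, in order to rewrite the trailing factor
\[
\mathrm{Exp}_{{\cal R}(p,q)}\!\left(\frac{q^n}{p^n}z\right) = \left[1 - F\!\left(\frac{q^n}{p^n}z\right) G(P,Q)\right]\mathrm{Exp}_{{\cal R}(p,q)}\!\left(\frac{q^{n+1}}{p^{n+1}}z\right),
\]
and then insert this expression into the inductive hypothesis to obtain the product formula with one additional factor on the right.

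The only delicate point — and the main obstacle — is that the operators $P,Q$ in $G(P,Q)$ are dilations acting on functions of the variable $z$, not on $(q/p)^n z$, so the substitution $z\mapsto(q/p)^n z$ needs justification. The clean way to handle this is to introduce the dilation operator $S_\lambda : \varphi(z)\mapsto \varphi(\lambda z)$ and observe two facts: $S_\lambda$ commutes with $G(P,Q)$ (both are diagonal on the monomial basis $\{z^k\}$), and $S_\lambda M_{F(z)} = M_{F(\lambda z)} S_\lambda$, where $M_{F}$ denotes multiplication by $F$. Setting $U_k := 1 - M_{F((q/p)^k z)} G(P,Q)$, these two identities combine to give the intertwining relation $S_{q/p}\, U_k = U_{k+1}\, S_{q/p}$.

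With this in hand, Lemma~\ref{lem1} reads $\mathrm{Exp}_{{\cal R}(p,q)} = U_0\, S_{q/p}\,\mathrm{Exp}_{{\cal R}(p,q)}$, and iterating the intertwining relation produces
\[
\mathrm{Exp}_{{\cal R}(p,q)} = U_0\, U_1\cdots U_{n-1}\, S_{(q/p)^n}\,\mathrm{Exp}_{{\cal R}(p,q)},
\]
which is exactly the claimed identity. The ordering $U_0 U_1\cdots U_{n-1}$ (left to right) is fixed and cannot be permuted, since $G(P,Q)$ does not commute with multiplication by $F((q/p)^k z)$ in general; this is why the factors must be accumulated in the specific order that the intertwining forces upon them. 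Once the commutation lemmas for $S_\lambda$ are noted, the induction closes in one line, so essentially all the work is in carefully setting up the substitution and its compatibility with the operator $G(P,Q)$.
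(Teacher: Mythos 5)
Your proof is correct and follows the same route as the paper, whose own proof is just the one-line remark that the result is immediate by induction using Lemma~\ref{lem1}. The intertwining relation $S_{q/p}\,U_k = U_{k+1}\,S_{q/p}$ that you make explicit (resting on the fact that dilations commute with $G(P,Q)$ while shifting the argument of the multiplication operator $F$) is precisely the point the paper leaves implicit, and it is the right way to justify the substitution $z\mapsto (q/p)^n z$ and the left-to-right ordering of the factors in the inductive step.
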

\begin{proof}: It is immediate by induction using   Lemma~\ref{lem1}.
 \cqfd
\end{proof}
Finally, the $\mathcal {R}(p,q)$-binomial formula is given
through the following statement.
\begin{theorem}{\label{theo1}}
Let  Lemmas~\ref{lem1} and \ref{lem2} be satisfied. Then, a
generalization of the $(p,q)-$binomial theorem can be expressed as:
\begin{equation}
\sum_{n=0}^{\infty}\frac{1}{{\cal R}!(p^n,q^n)}(z)^n
=\prod_{n=0}^{\infty}\left[1-F\left(\frac{q^n}{p^n}z\right)G(P,Q)\right]\cdot 1.
\end{equation}
\end{theorem}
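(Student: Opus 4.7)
The plan is to deduce the theorem by iterating Lemma~\ref{lem2} and passing to the limit $n\to\infty$, with the closing factor ``$\cdot 1$'' on the right-hand side arising as the limit of $\mathrm{Exp}_{{\cal R}(p,q)}(q^n z/p^n)$ at the origin. First I would recognize the left-hand side of the theorem: by the defining series~\eqref{exp6}, the sum $\sum_{n\geq 0}z^n/\mathcal{R}!(p^n,q^n)$ is precisely $\mathrm{Exp}_{\mathcal{R}(p,q)}(z)$. So the claim reduces to
\begin{equation*}
\mathrm{Exp}_{\mathcal{R}(p,q)}(z) \;=\; \prod_{k=0}^{\infty}\left[1-F\!\left(\tfrac{q^k}{p^k}z\right)G(P,Q)\right]\cdot 1,
\end{equation*}
where $\cdot 1$ is to be read as the chain of operators on the right acting on the constant function $1$.

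Next I would apply Lemma~\ref{lem2} verbatim to obtain, for every finite $n\geq 1$,
\begin{equation*}
\mathrm{Exp}_{\mathcal{R}(p,q)}(z)=\prod_{k=0}^{n-1}\left[1-F\!\left(\tfrac{q^k}{p^k}z\right)G(P,Q)\right]\mathrm{Exp}_{\mathcal{R}(p,q)}\!\left(\tfrac{q^n}{p^n}z\right).
\end{equation*}
Because $0<q<p$, the sequence $(q/p)^n z$ converges to $0$ in $\mathbb{D}_R$, and $\mathrm{Exp}_{\mathcal{R}(p,q)}$ is analytic on $\mathbb{D}_R$ with $\mathrm{Exp}_{\mathcal{R}(p,q)}(0)=c_0^2=1$. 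Thus the residual factor $\mathrm{Exp}_{\mathcal{R}(p,q)}(q^n z/p^n)$ tends to the constant function $1$ as $n\to\infty$, which is what produces the ``$\cdot 1$'' on the right-hand side of the theorem.

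The main obstacle is the analytic justification of the limit, namely showing that the ordered product of operators
\begin{equation*}
\Pi_n(z):=\prod_{k=0}^{n-1}\left[1-F\!\left(\tfrac{q^k}{p^k}z\right)G(P,Q)\right]
\end{equation*}
admits a well-defined limit as $n\to\infty$ and that this limit commutes with the pointwise limit of the residual exponential factor. For this I would work in $\mathcal{O}(\mathbb{D}_R)$ with the topology of uniform convergence on compact subsets (the natural topology already used implicitly for $I_K(\mathcal{H})$). Applied to the constant function $1$, the operator $G(P,Q)$ produces either $G(1,1)$ or, when telescoped through the dilations $P,Q$, factors of $\mathcal{R}(p^k,q^k)$ in denominators; in both cases the estimate $|F(q^k z/p^k)|=O((q/p)^k)$ dominates, so the infinite product converges normally. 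Once this normal convergence is established, the two limiting operations (taking $n\to\infty$ in the product and in the residual exponential) can be carried out independently, and the identity between the left- and right-hand sides follows directly from Lemma~\ref{lem2}.
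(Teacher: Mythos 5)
Your proposal is correct and follows essentially the same route as the paper: the paper's proof of Theorem~\ref{theo1} consists precisely of invoking the finite product formula (\ref{exp7}) from Lemma~\ref{lem2} and letting $n\to+\infty$, with the residual factor $\mathrm{Exp}_{{\cal R}(p,q)}(q^nz/p^n)$ tending to $\mathrm{Exp}_{{\cal R}(p,q)}(0)=1$. Your additional paragraph justifying the normal convergence of the infinite operator product supplies a detail the paper leaves implicit, but it does not change the argument.
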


\begin{prooft}: The result follows from (~\ref{exp7}), tending $n$ to
$+\infty$.  \cqfd
\end{prooft}

Finally, let us consider the particular case  when ${\cal R}$ is a
rational function defined by
\begin{equation}\label{rRs}
{}_r{\cal R}_s(x,y) =
\frac{x^{1+s-r}(x-y)(c_1p^{-1}x-d_1q^{-1}y) \dots
(c_sp^{-1}x-d_sq^{-1}y)}{(-y)^{1+s-r}(a_1p^{-1}x-b_1q^{-1}y) \dots
(a_rp^{-1}x-b_rq^{-1}y)}
\end{equation}
where $ a_1, \dots a_r; b_1, \cdots b_r; c_1, \cdots, c_s; d_1,
\cdots, d_s$ are complex numbers, $r$ and  $s$ being non negative
integers. Then,  the ${}_r{\cal R}_s(p,q)-$numbers and factorials,
as well as the coefficients $c_n$ of the corresponding exponential
function are readily found to be
\begin{eqnarray}
&&\big[n\big]_{{}_r{\cal R}_s(p,q)}=
\frac{(p^n-q^n)(c_1p^{n-1}-d_1q^{n-1}) \dots
(c_sp^{n-1}-d_sq^{n-1})}{\big(-(q/p)^n\big)^{1+s-r}(a_1p^{n-1}-b_1q^{n-1})
\dots (a_rp^{n-1}-b_rq^{n-1})},
\\ &&\big[n\big]!_{{}_r{\cal R}_s(p,q)}=
\frac{\big((p,q),(c_1,d_1), \cdots,(c_s,d_s);(p,q)\big)_n}{\big((a_1,b_1),
\cdots, (a_r,b_r);(p,q)\big)_n \left[(-1)^n (q/p)^{(^n_2)}\right]^{1+s-r}},
\; n\geq1,\qquad
\\ && c_n^2 =
\frac{\big((a_1,b_1), \cdots, (a_r,b_r);(p,q)\big)_n}{\big((p,q),(c_1,d_1),
\cdots, (c_s,d_s);(p,q)\big)_n}\left[(-1)^n (q/p)^{(^n_2)}\right]^{1+s-r}
\end{eqnarray}
where
\begin{eqnarray}
&&\big((a_1,b_1), \cdots, (a_r,b_r);(p,q)\big)_n =
\big((a_1,b_1);(p,q)\big)_n \cdots, \big((a_r,b_r);(p,q)\big)_n, \nonumber\\
&&\big((a_i,b_i);(p,q)\big)_0 = 1 \qquad\mbox{and}\nonumber \\
&&\big((a_i,b_i);(p,q)\big)_n
=\prod_{k=0}^{n-1}(a_ip^k-b_iq^k)\quad\mbox{for}\quad n\geq 1.
\end{eqnarray}
Therefore,
\begin{eqnarray*}
 \mathrm{Exp}_{{}_r{\cal R}_s(p,q)}(z) =
\sum_{n=0}^{\infty}\frac{\big((a_1,b_1), \cdots,
(a_r,b_r);(p,q)\big)_n \quad z^n}{\big((p,q),(c_1,d_1), \cdots,
(c_s,d_s);(p,q)\big)_n}
\left[(-1)^n (q/p)^{(^n_2)}\right]^{1+s-r}
\end{eqnarray*}
proving that the ${}_r{\cal R}_s(p,q)-$exponential function
corresponds to the twin-basic hypergeometric series
 ${_r\Phi}_s$ ~\cite{Jagannathan&Rao,Chakrabarti&Jagan}:
\begin{equation}
\mathrm{Exp}_{{}_r{\cal R}_s(p,q)}(z)={_r\Phi}_s\big((a_1,b_1), \cdots, (a_r,b_r);(c_1,d_1), \cdots,
(c_s,d_s);(p,q);z\big).
\end{equation}
For $r = s+1$, the coherent states map $K_{{}_r{\cal R}_s(p,q)}$ and
the  exponential function $\mathrm{Exp}_{{}_r{\cal R}_s(p,q)}$ are defined on
the unit disc $\mathbb{D}_1$ while, for $r < s+1$, they
 are defined on the whole complex plane. The exponential function
 $\mathrm{Exp}_{{\cal R}(p,q)}$ thus appears as a
natural generalization of twin-basic (or $(p,q)-$) hypergeometric
series.

\subsection{${\cal R}(p, q)-$ deformed quantum algebras}\label{Sect34}
In this section, we deal with the study of the ${\cal R}(p, q)-$
deformed quantum algebra. Relations between the annihilation and
creation operators, and the operators $Q$ and $P$ as well as  the
algebra generated by the meromorphic function ${\cal R}$ are
obtained. Some relevant particular representations recovered  in
this framework are also investigated.

The use of the relation (\ref{coef3}) and the $({\cal R}(p,q)$-factors engenders
\begin{eqnarray*}
AA^\dag K(z) = \sum_{n=0}^{\infty}{\cal R}(p^{n+1},q^{n+1})
c_nz^n \vert n\rangle = {\cal R}(pP,qQ)K(z)
\end{eqnarray*}
giving
\begin{equation}\label{algRpq1}
AA^\dag = {\cal R}(pP,qQ).
\end{equation}
By analogy, one can show that
\begin{equation}\label{algRpq2}
A^\dag A = {\cal R}(P,Q).
\end{equation}

If one passes to analytic representation in which $Q$, $P$ and
$\partial_{p,q}$ are given by (\ref{operat}), and $A^\dag$ acts as
mutliplication by $z$, one obtains the relations:
\begin{equation}
A Q = q Q A ,\qquad A P = p P A,
\end{equation}
\begin{equation}
Q A^\dag = q A^\dag Q , \qquad P A^\dag = p A^\dag P.
\end{equation}
It is clear that
\begin{equation}\label{algRpq3}
Q P = P Q.
\end{equation}
So, the ${\cal R}(p, q)$-deformed quantum algebra is generated by
the operators $\{1$, $A$, $A^\dag$, $Q$, $P\}$ that verify the
 relations (\ref{algRpq1})-(\ref{algRpq3}).

Let us determine the analytic representation of the number
operator $N$. Taking into account $N\vert n\rangle = n\vert
n\rangle$ and  (\ref{IK}), one obtains
\begin{equation}
I_K\left(\frac{1}{c_n}N\vert
n\rangle\right)(z) = I_K\left(\frac{n}{c_n}\vert n\rangle\right)(z) = nz^n
= z\frac{dz^n}{dz}
\end{equation}
implying
\begin{equation}
\left(I_K\circ N\circ I_K^{-1}\right)(z^n) = z\frac{dz^n}{dz}.
\end{equation}
Therefore, $\forall f \in {\cal O}(\mathbb{D}_R)$ we infer
\begin{equation}
\left(I_K\circ N\circ I_K^{-1}\right)f(z) = z\frac{d}{dz}f(z)
\end{equation}
and
\begin{equation}
p^Nf(z) = Pf(z)= f(pz)\;\;\; \mbox{and}\;\;\; q^Nf(z) = Qf(z)= f(qz)
\end{equation}
for $f \in {\cal O}(\mathbb{D}_R)$.

To sum up, the ${\cal R}(p, q)-$deformed quantum algebra is then generated by
the set of operators $\{1, A, A^\dag, N\}$ and the commutation
relations
\begin{equation}\label{algN1}
\left[N, A\right] = - A \;\;\;\mbox{and}\;\;\; \left[N,
A^\dag\right] = A^\dag \end{equation}
and
\begin{equation}\label{algN2}
\left[A, A^\dag\right] = {\cal R}(p^{N+1},q^{N+1}) - {\cal
R}(p^N,q^N).
\end{equation}

\subsection*{Remarks}
Particulars cases are readily recovered.
\begin{enumerate}

\item[{\bf(i).}]{\bf Odziejewicz generalization} ~\cite{Odzijewicz98} is recovered for $0 < q < p = 1.$\\
{\it Consider the meromorphic function ${\cal R}$ defined on $\mathbb{C}$ by
\begin{equation}\label{Rz}
{\cal R}(z) = \sum_{k= - L}^{\infty} r_k z^k
\end{equation}
which may have an isolated singularity at the zero and such that
$L\in\mathbb{N}\cup\{0\}$,\\ ${\cal R}(q^n)> 0$ for $n>0$, ${\cal
R}(0)> 0$, and ${\cal R}(1) =  0$. The following results hold:
\begin{itemize}
\item
The ${\cal R}(q)$-derivative is given by
\begin{equation}\label{deR}
\partial_{{\cal R}(q)} = \partial_q\frac{1-q}{1-Q}{\cal R}(Q)
= \frac{1-q}{1-qQ}{\cal R}(qQ)\partial_q
\end{equation}
where $ Q\varphi(z) = \varphi(qz)$ and $ \partial_q = \frac{1-Q}{(1-q)z}$.
\item
The ${\cal R}(q)$-factors and ${\cal R}(q)$-factorials are given
 by
\begin{eqnarray} \label{RNbz}
&&{\cal R}(q^n)\quad
\mbox{i.e.}\quad \partial_{{\cal R}}z^n =  {\cal R}(q^n)z^{n-1}
\;\; \mbox{for}\;\; n \geq 0,
\\&& \label{Rfacz}
 {\cal R}!(q^n)= \left\{\begin{array}{lr} 1 \quad \mbox{for   }
\quad n=0 \quad \\ {\cal R}(q)\cdots{\cal R}(q^n) \quad \mbox{for }
\quad n\geq 1,\end{array} \right.
\end{eqnarray}
respectively.
\item
The coefficients $c_n$, the coherent states map $K_{{\cal R}}$ and
the exponential functions $\mathrm{Exp}_{{\cal R}}$ are given  by
\begin{eqnarray} \label{cn}
&&c_n^2 = \frac{c_0^2}{{\cal R}!(q^n)},\\
&&\label{KR} K_{{\cal R}}(z) =
\sum_{n=0}^{\infty}\frac{c_0}{\sqrt{{\cal R}!(q^n)}} z^nn\rangle,
\qquad \qquad \qquad \\
&& \label{ExpR1} \mathrm{Exp}_{{\cal R}}(\bar{v},z)=
\sum_{n=0}^{\infty}\frac{1}{{\cal R}!(q^n)}(\bar{v}z)^n\qquad \quad
\\&&\label{ExpR2} \mathrm{Exp}_{{\cal R}}(z)=
\sum_{n=0}^{\infty}\frac{1}{{\cal R}!(q^n)}z^n,\qquad \qquad
\qquad\qquad
\end{eqnarray}
respectively.
\item The latter exponential function is
solution of the $q-$difference equation
\begin{equation}\label{diffz}
[{\cal R}(Q)\mathrm{Exp}_{{\cal R}}](z) = z\mathrm{Exp}_{{\cal R}}(z),   \qquad
\qquad\qquad \end{equation} and satisfies the following generalization of
the $q-$binomial theorem
\begin{equation}\label{binRq} \mathrm{Exp}_{{\cal R}}(z)
= \sum_{n=0}^{\infty}\frac{1}{{\cal R}!(q^n)}z^{n} =
\prod_{n=0}^{\infty}[1 - F(zq^n)G(Q)].1
\end{equation}
where
\begin{eqnarray} && F(z) = \frac{z}{z-{\cal R}(0)}, \nonumber \\ && G(Q)
=\frac{(Q-1){\cal R}(qQ)+(1-qQ){\cal R}(0)}{Q{\cal R}(qQ)} \qquad
\qquad \qquad \end{eqnarray} for $L = 0 $, and \begin{eqnarray} && F(z) = z,
\nonumber \\ && G(Q) = \frac{qQ-1}{Q{\cal R}(qQ)} \qquad \qquad  \qquad
\qquad \qquad \qquad \qquad
\end{eqnarray} for $L> 0$.
\item The quantum algebra ${\cal A}_{K_{{\cal R}(q)}}$ generated by the set of operators

$\{1, A, A^\dag, Q\}$
 satisfies the relations:
\begin{eqnarray}\label{algRq1}
&&AA^\dag = {\cal R}(qQ), \\
&& \label{algRq2}A^\dag A =
{\cal R}(Q),
\end{eqnarray}
from which one obtains
\begin{equation}
\|A\|= \|A^\dag\|= \sqrt{\sup_{n\in\mathbb{N}}{{\cal R}(q^n)}},
\end{equation}
for $L = 0$, i.e. $A$ and $A^\dag$ are bounded. Due to
(\ref{algRq1}) and (\ref{algRq2}), $A$ and $A^\dag$ are unbounded if $L >0$.
\end{itemize}
}
Indeed,  setting ${\cal R}(z) = {\cal R}(1,z)$, (\ref{Rz}) readily
follows  from (\ref{Rxy}). The derivative (\ref{deR}) is then
deduced from (\ref{deriva1}). Using 
(\ref{Rfac}),
one gets (\ref{RNbz}) and (\ref{Rfacz}). Coefficients (\ref{cn}) and
coherent states maps (\ref{KR}) are obtained from (\ref{CfRpq}) and
(\ref{KRpq}), while the exponential functions in (\ref{exp5}) and
(\ref{exp6}) are reduced to (\ref{ExpR1}) and (\ref{ExpR2}),
respectively. In the other hand, the $q-$difference equation
(\ref{diffz}) is directly obtained from (\ref{diff2}). In the same
way  and using Theorem~\ref{theo1}, we get the generalization of the
$q-$binomial theorem (\ref{binRq}). Finally, the relations
(\ref{algRq1}) and (\ref{algRq2}) between $A$, $A^\dag$, and $Q$
simply follow from (\ref{algRpq1}) and (\ref{algRpq2}).

\vspace{0.2cm}
\item[{\bf(ii).}] {\bf Jagannathan's generalization} ~\cite{Jagannathan&Rao}
The main results summarized as follows are particular cases of
Theorem~\ref{theo1}:
 {\it \begin{itemize}
\item The  $(p,q)-$binomial theorem is:
\begin{equation}\label{binpq}
{_1\Phi}_0((a,b);-;(p,q);z;) = \frac{((p,bz);(p,q))_\infty}{((p,az);(p,q))_\infty}.
\end{equation}
\item The  exponential functions, denoted by $e_{p,q}$ and
$E_{p,q}$, are
\begin{equation}\label{epq} e_{p,q}(z):= {_1\Phi}_0((1,0); -; (p,q);z)=
\sum_{n=0}^{\infty}\frac{p^{n(n-1)/2}}{((p,q);(p,q))_n}z^n
\end{equation}
 \begin{equation}\label{Epq} E_{p,q}(z):=  {_1\Phi}_0((0,1); -;(p,q);-z)
=\sum_{n=0}^{\infty}\frac{q^{n(n-1)/2}}{((p,q);(p,q))_n}z^n \end{equation}
 and \begin{equation}\label{Eepq} e_{p,q}(z)E_{p,q}(-z) =
1. \end{equation}
\end{itemize}
}

Indeed, if we consider \begin{equation} {_1{\cal R}_0}(x,y) =
\frac{x-y}{\frac{a}{p}x-\frac{b}{q}y} \end{equation} where $0<q<p$
and $a, b$ are complex numbers, then we get \begin{equation}
{_1{\cal R}_0}(p^n,q^n) = \frac{p^n-q^n}{ap^{n-1}-bq^{n-1}},
\end{equation} \begin{eqnarray} &&{_1{\cal R}_0}!(p^n,q^n) =
\frac{((p,q);(p,q))_n} {((a,b);(p,q))_n}, \\ && c_n^2 =
\frac{((a,b);(p,q))_n}{((p,q);(p,q))_n}
\end{eqnarray}
and
\begin{eqnarray*} \mathrm{Exp}_{{_1{\cal R}_0}(p,q)} &=&
\sum_{n=0}^{\infty}\frac{(a-b)(ap-bq)...(ap^{n-1}-bq^{n-1})}{(p-q)...(p^n-q^
n)}z^n\\ &=&
\sum_{n=0}^{\infty}\frac{((a,b);(p,q))_n}{((p,q);(p,q))_n}z^n,
\end{eqnarray*} where $((a,b);(p,q))_n =
(a-b)(ap-bq)...(ap^{n-1}-bq^{n-1})$, meaning that \begin{equation}
{\label{q1}} \mathrm{Exp}_{{_1{\cal R}_0}(p,q)} = {_1\Phi}_0((a,b);-;(p,q);z).
\end{equation} In the other hand, since ${_1{\cal R}_0}(1,0)=
\frac{p}{a}$, we get \begin{equation} F(z) =
\frac{az}{az-p}\;\;\;\mbox{and}\;\;\; G(P,Q) = \frac{a-b}{a}.
\end{equation} Then, the application of  Theorem~\ref{theo1} yields
\begin{eqnarray*} \mathrm{Exp}_{{_1{\cal R}_0}(p,q)} &=&
\prod_{n=0}^{\infty}\left[1-\frac{z(q/p)^n}{p-az(q/p)^n}(a-b)\right]\\
&=&
\prod_{n=0}^{\infty}\left[\frac{p-bz(q/p)^n}{p-az(q/p)^n}\right]\\
&=& \prod_{n=0}^{\infty}\left[\frac{pp^n-bzq^n}{pp^n-azq^n}\right].
\end{eqnarray*}
Thus, \begin{equation} {\label{q2}} \mathrm{Exp}_{{_1{\cal
R}_0}(p,q)} = \frac{((p,bz);(p,q))_\infty}{((p,az);(p,q))_\infty}.
\end{equation} So, (\ref{q1}) and (\ref{q2}) lead to (\ref{binpq}).
Finally, (\ref{epq}), (\ref{Epq}) and (\ref{Eepq}) are
straightforwardly obtained. It is worth noticing that one can also
consider the meromorphic functions
\begin{eqnarray} &&{_1{\cal R}_0}(x,y) = \frac{x-y}{\frac{x}{p}}\\
&&{_1{\cal R}_0}(x,y) = \frac{x-y}{-\frac{y}{q}}, \end{eqnarray} and
use Theorem~\ref{theo1} to immediately obtain (\ref{epq}),
(\ref{Epq}) and (\ref{Eepq}).
\end{enumerate}


\subsection{${\cal R}(p, q)-$coherent states}\label{Sect35}

This section aims at proving that the coherent states derived from
the coherent states map (\ref{KRpq}) satisfy the following conditions
~\cite{Ali&al,Klauder63a,Klauder63b}:
\begin{enumerate}
\item[(i).] normalizability (as any vector of Hilbert space)
\item[(ii).] continuity in the label z, and
\item[(iii).] existence of a resolution of identity with a positive definite weight
function, implying that the states form an overcomplete set.
\end{enumerate}

\subsubsection{Normalizability}
The coherent states defined as
\begin{equation}{\label{CSn}}
\vert z\rangle_{\mathcal{R}(p,q)} := \left(\mathrm{Exp}_{{\cal
R}(p,q)}(\vert z\vert^2)\right)^{-1/2}
\sum_{n=0}^{\infty}\frac{1}{\sqrt{{\cal R}!(p^n,q^n)}}z^n\vert
n\rangle.
\end{equation}
are normalized so that
\begin{equation}
{_{\mathcal{R}(p,q)}}\langle z\vert z\rangle_{\mathcal{R}(p,q)} = 1.
\end{equation}

\subsubsection{Continuity in z}
The coherent states $\vert z\rangle_{\mathcal{R}(p,q)}$ are
continuous in $z$. Indeed,
\begin{equation}
\left\| \vert z\rangle_{\mathcal{R}(p,q)} - \vert
z'\rangle_{\mathcal{R}(p,q)} \right\|^2 = 1 - 2
Re\left({_{\mathcal{R}(p,q)}}\langle z\vert z'\rangle_{\mathcal{R}(p,q)}\right) + 1
,
\end{equation}
where
\begin{eqnarray*}
{_{\mathcal{R}(p,q)}}\langle z\vert z'\rangle_{\mathcal{R}(p,q)}
   = \left(\mathrm{Exp}_{{\cal R}(p,q)}(\vert z\vert^2)
\mathrm{Exp}_{{\cal R}(p,q)}(\vert
z'\vert^2)\right)^{-1/2}\mathrm{Exp}_{{\cal R}(p,q)}(\bar{z},z').
\end{eqnarray*}
So,
\begin{equation}
\vert z - z'\vert \to   0   \quad \Longrightarrow \left\|\vert
z\rangle_{\mathcal{R}(p,q)} - \vert z'\rangle_{\mathcal{R}(p,q)}
\right\|^2 \to  0.
\end{equation}

\subsubsection{Resolution of identity}
Assume that there exists a positive measure $\mu$ on the disc
$\mathbb{D}_R$ for which the resolution of the identity
\begin{equation}
\mathbb{I} = \int_{\mathbb{D}_{R}}^{}{\frac{\vert K(z)\rangle\langle
K(z)\vert} {\mathrm{Exp}_{{\cal R}(p,q)}(\bar{z},z)}d\mu(\bar{z},z)}
\end{equation}
holds.
\begin{proposition}
$I_K({\cal H})$ is a subspace of the Hilbert space
$L^2(\mathbb{D}_R,\frac{1}{\mathrm{Exp}_{{\cal R}(p,q)}}d\mu)$.
Moreover, for $\varphi , \psi \in I_K({\cal H}) $
\begin{equation}\label{scaprod}
\langle \varphi \vert \psi \rangle =
\int_{\mathbb{D}_R}^{}{\overline{\varphi(z)} \psi(z) \frac{1}
{\mathrm{Exp}_{{\cal R}(p,q)}(\bar{z},z)}d\mu(\bar{z},z)},
\end{equation}
and
\begin{equation}
\varphi(z) = \int_{\mathbb{D}_R}^{}{\varphi(v)\mathrm{Exp}_{{\cal
R},p,q}(\bar{v},z) \frac{1} {\mathrm{Exp}_{{\cal R}(p,q)}(\bar{v},v)}d\mu(\bar{v},v)}.
\end{equation}
\end{proposition}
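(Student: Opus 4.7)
The plan is to derive all three claims from a single source --- the assumed resolution of the identity in $\mathcal{H}$ --- combined with the explicit representation $I_K(v)(z)=\langle v\vert K(z)\rangle$ and the identity $\mathrm{Exp}_{\mathcal{R}(p,q)}(\bar{u},z)=\langle K(u)\vert K(z)\rangle$ from (\ref{exp1}). The guiding observation is that for any $u,v\in\mathcal{H}$ the integrand appearing in the resolution of $\mathbb{I}$ factorises as $\langle u\vert K(z)\rangle\langle K(z)\vert v\rangle=I_K(u)(z)\,\overline{I_K(v)(z)}$, so every scalar matrix element of $\mathbb{I}$ translates verbatim into an $L^2$-pairing on $\mathbb{D}_R$ with weight $\mathrm{Exp}_{\mathcal{R}(p,q)}^{-1}$.

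For the inner product formula I would take $\varphi=I_K(v_1)$ and $\psi=I_K(v_2)$ and sandwich the resolution of the identity between $\langle v_2\vert$ and $\vert v_1\rangle$:
\[
\langle v_2\vert v_1\rangle=\int_{\mathbb{D}_R}\langle v_2\vert K(z)\rangle\langle K(z)\vert v_1\rangle\,\frac{d\mu(\bar z,z)}{\mathrm{Exp}_{\mathcal{R}(p,q)}(\bar z,z)}=\int_{\mathbb{D}_R}\overline{\varphi(z)}\,\psi(z)\,\frac{d\mu(\bar z,z)}{\mathrm{Exp}_{\mathcal{R}(p,q)}(\bar z,z)},
\]
which, under the inner product inherited from $\mathcal{H}$ via $I_K$, is exactly $\langle\varphi\vert\psi\rangle$. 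Specialising to $v_1=v_2=v$ yields $\Vert I_K(v)\Vert_{L^2}^{2}=\Vert v\Vert_{\mathcal{H}}^{2}<\infty$, so every image $I_K(v)$ lies in $L^2(\mathbb{D}_R,\mathrm{Exp}_{\mathcal{R}(p,q)}^{-1}d\mu)$; combined with the injectivity of $I_K$ already established, this realises $I_K(\mathcal{H})$ as a (dense) vector subspace of that $L^2$-space, and $I_K$ itself as an antilinear isometry onto its image.

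The reproducing property is then obtained by the very same insertion, but now sandwiched between $\langle u\vert$ and $\vert K(z)\rangle$:
\[
\varphi(z)=\langle u\vert K(z)\rangle=\int_{\mathbb{D}_R}\langle u\vert K(v)\rangle\langle K(v)\vert K(z)\rangle\,\frac{d\mu(\bar v,v)}{\mathrm{Exp}_{\mathcal{R}(p,q)}(\bar v,v)}=\int_{\mathbb{D}_R}\varphi(v)\,\mathrm{Exp}_{\mathcal{R}(p,q)}(\bar v,z)\,\frac{d\mu(\bar v,v)}{\mathrm{Exp}_{\mathcal{R}(p,q)}(\bar v,v)},
\]
exhibiting $(v,z)\mapsto\mathrm{Exp}_{\mathcal{R}(p,q)}(\bar v,z)/\mathrm{Exp}_{\mathcal{R}(p,q)}(\bar v,v)$ as the reproducing kernel of $I_K(\mathcal{H})$.

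The main obstacle is not algebraic but analytic: the one-line computations above implicitly commute a scalar bra with a (weakly or strongly convergent) operator-valued integral, and they require absolute convergence of the resulting scalar integrals. Both are controlled by Cauchy--Schwarz,
\[
\int_{\mathbb{D}_R}\vert I_K(u)(z)\vert\,\vert I_K(v)(z)\vert\,\frac{d\mu(\bar z,z)}{\mathrm{Exp}_{\mathcal{R}(p,q)}(\bar z,z)}\leq\Vert u\Vert\,\Vert v\Vert,
\]
which follows from the $L^2$-estimate proved above and is valid as long as the weight $\mathrm{Exp}_{\mathcal{R}(p,q)}(\bar z,z)$ remains strictly positive on $\mathbb{D}_R$ --- an immediate consequence of its defining series, since by hypothesis $\mathcal{R}(p^n,q^n)>0$ and hence every ${\cal R}!(p^n,q^n)>0$.
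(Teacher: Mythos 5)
Your proposal is correct and follows essentially the same route as the paper: both arguments consist of sandwiching the assumed resolution of the identity between suitable bras and kets and reading off the resulting scalar integrals as $L^2$-pairings with weight $\mathrm{Exp}_{{\cal R}(p,q)}^{-1}$, the only difference being that you prove the general inner-product formula first and obtain the norm estimate as the special case $v_1=v_2$, whereas the paper computes $\langle\varphi\vert\varphi\rangle$ first. Your added remarks on absolute convergence via Cauchy--Schwarz and on the strict positivity of the weight are a welcome refinement of points the paper leaves implicit, but they do not change the substance of the argument.
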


\begin{proof}:
Let $\varphi \in I_K({\cal H})$. There exists $\zeta\in{\cal H}$
such that $I_K(\zeta) = \varphi$. So,
\begin{eqnarray*}
\langle \varphi\vert\varphi\rangle &=&
\int_{\mathbb{D}_R}^{}{\overline{\varphi(z)} \varphi(z) \frac{1}
{\mathrm{Exp}_{{\cal R}(p,q)}(\bar{z},z)}d\mu(\bar{z},z)}\\
&=& \int_{\mathbb{D}_R}^{}{\overline{I_K(\zeta)(z)} I_K(\zeta)(z)
\frac{1}
{\mathrm{Exp}_{{\cal R}(p,q)}(\bar{z},z)}d\mu(\bar{z},z)}\\
&=& \int_{\mathbb{D}_R}^{}{\langle\zeta\vert K(z)\rangle\langle
K(z)\vert\zeta\rangle
\frac{1}{\mathrm{Exp}_{{\cal R}(p,q)}(\bar{z},z)}d\mu(\bar{z},z)}\\
&=&\langle\zeta\vert \int_{\mathbb{D}_R}^{}{\frac{\vert
K(z)\rangle\langle K(z)\vert}
{\mathrm{Exp}_{{\cal R}(p,q)}(\bar{z},z)}d\mu(\bar{z},z)}\vert\zeta\rangle\\
&=& \langle\zeta\vert\zeta\rangle = \left\|\zeta\right\|^2 < \infty.
\end{eqnarray*}
Therefore, $\varphi\in L^2(\mathbb{D}_R,
\frac{1}{\mathrm{Exp}_{{\cal R}(p,q)}}d\mu)$. 

Thus, since $ I_K({\cal
H})$ is a vector space as subspace of ${\cal O}(\mathbb{D}_R)$, we
conclude that $I_K({\cal H})$ is a subspace of the Hilbert space
$L^2(\mathbb{D}_R,\frac{1}{\mathrm{Exp}_{{\cal R}(p,q)}}d\mu)$.

Hence, the scalar product (\ref{scaprod}) is well defined. Moreover,
for $\varphi\in I_K({\cal H})$ and $z\in \mathbb{D}_R$, we get
\begin{eqnarray*}
\varphi(z) &=& \left(I_K\circ I_K^{-1}\circ\varphi\right)(z)=
\langle I_K{^{-1}}\circ\varphi\vert K(z)\rangle\\
&=& \langle I_K{^{-1}}\circ\varphi\vert
\int_{\mathbb{D}_{R}}^{}{\frac{\vert K(v)\rangle\langle K(v)\vert}
{\mathrm{Exp}_{{\cal R}(p,q)}(\bar{v},v)}d\mu(\bar{v},v)}\vert K(z)\rangle\\
&=& \int_{\mathbb{D}_{R}}^{}{\frac{\langle
I_K{^{-1}}\circ\varphi\vert K(v)\rangle \langle K(v)\vert
K(z)\rangle}
{\mathrm{Exp}_{{\cal R}(p,q)}(\bar{v},v)}d\mu(\bar{v},v)}\\
&=& \int_{\mathbb{D}_{R}}^{}{\varphi(v)\frac{\mathrm{Exp}_{{\cal R}(p,q)}(\bar{v},z)} {\mathrm{Exp}_{{\cal R}(p,q)}(\bar{v},v)}d\mu(\bar{v},v)}.
\end{eqnarray*} \qquad \cqfd
\end{proof}
\begin{proposition}\ For $n,m\in \mathbb{N}\cup\{0\}$
\begin{equation}\label{integra1}
\int_{\mathbb{D}_{R}}^{}{\frac{\bar{z}^nz^m} {\mathrm{Exp}_{{\cal R}(p,q)}(\bar{z},z)}d\mu(\bar{z},z)} = {\cal R}!(p^n,q^n)\;\delta_{m,n}.
\end{equation}
Passing to polar coordinate $z = \sqrt{x}e^{i\varphi}$, we get
\begin{equation}\label{integra2}
\int_{0}^{R^2}{x^n}\frac{d\nu(x)}{\mathrm{Exp}_{{\cal R}(p,q)}(x)} =
\frac{{\cal R}!(p^n,q^n)}{2\pi}, \quad \mbox{for} \qquad n = 0, 1,
2 \cdots.
\end{equation}
\end{proposition}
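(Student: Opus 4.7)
My plan is to obtain both integral formulas directly from the assumed resolution of identity
\[
\mathbb{I} = \int_{\mathbb{D}_R}\frac{\lvert K(z)\rangle\langle K(z)\rvert}{\mathrm{Exp}_{{\cal R}(p,q)}(\bar z,z)}\,d\mu(\bar z,z),
\]
by sandwiching it between two basis vectors $\lvert n\rangle$ and $\lvert m\rangle$ and reading off the resulting scalar identity. Since $K(z)=\sum_{k=0}^{\infty}c_{k}z^{k}\lvert k\rangle$ with $c_{k}=1/\sqrt{{\cal R}!(p^{k},q^{k})}$ after the normalization $c_{0}=1$, the coefficients $\langle k\mid K(z)\rangle=z^{k}/\sqrt{{\cal R}!(p^{k},q^{k})}$ are already at hand.

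\smallskip

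The first step is therefore to compute
\[
\delta_{m,n}=\langle m\mid\mathbb{I}\mid n\rangle=\int_{\mathbb{D}_R}\frac{\langle m\mid K(z)\rangle\langle K(z)\mid n\rangle}{\mathrm{Exp}_{{\cal R}(p,q)}(\bar z,z)}\,d\mu(\bar z,z)=\int_{\mathbb{D}_R}\frac{z^{m}\bar z^{n}}{\sqrt{{\cal R}!(p^{m},q^{m})\,{\cal R}!(p^{n},q^{n})}}\,\frac{d\mu(\bar z,z)}{\mathrm{Exp}_{{\cal R}(p,q)}(\bar z,z)}.
\]
Multiplying through by $\sqrt{{\cal R}!(p^{m},q^{m})\,{\cal R}!(p^{n},q^{n})}$ and specializing the resulting Kronecker symbol to $m=n$ immediately yields (\ref{integra1}). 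The exchange of summation and integration needed to identify the two sides is justified by the convergence properties of $\mathrm{Exp}_{{\cal R}(p,q)}$ on $\mathbb{D}_R$ guaranteed by (\ref{ray}), so this step should be routine.

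\smallskip

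For (\ref{integra2}), the plan is to assume (as is done throughout the section) a rotation-invariant measure of the product form $d\mu(\bar z,z)=d\nu(x)\,d\varphi$ under the substitution $z=\sqrt{x}\,e^{i\varphi}$, $0\le\varphi<2\pi$, $0<x<R^{2}$. Substituting $\bar z^{n}z^{m}=x^{(n+m)/2}e^{i(m-n)\varphi}$ and observing that $\mathrm{Exp}_{{\cal R}(p,q)}(\bar z,z)=\mathrm{Exp}_{{\cal R}(p,q)}(x)$ depends only on $|z|^{2}$, the angular integral produces $2\pi\delta_{m,n}$, and the radial part becomes exactly (\ref{integra2}).

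\smallskip

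The only genuine obstacle is the existence and identification of the measure $d\nu$: (\ref{integra2}) is a Stieltjes moment problem (for $R=+\infty$) or a Hausdorff moment problem (for $R<\infty$) with prescribed moments ${\cal R}!(p^{n},q^{n})/(2\pi)$, and solvability depends on the choice of ${\cal R}$. The present proposition, however, only derives the constraint that such a measure must satisfy; the actual construction of $d\nu$ for specific ${\cal R}$ is postponed and handled case by case, exactly as was done for the single-parameter instances treated in Sections~\ref{chap1} and~\ref{chap2}.
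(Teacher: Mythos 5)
Your proposal is correct and is essentially the paper's own argument: the paper computes $\langle z^n\vert z^m\rangle$ in $I_K({\cal H})$ using the scalar-product formula of the preceding proposition, which after unwinding is exactly your sandwich $\langle m\vert\,\mathbb{I}\,\vert n\rangle$ with $\langle k\vert K(z)\rangle=c_kz^k$, followed by the same passage to polar coordinates with $d\mu=d\nu(x)\,d\varphi$. Your closing remark that the proposition only records the moment conditions on $\nu$, whose solvability is a separate (Stieltjes/Hausdorff) problem, matches the discussion the paper gives immediately afterwards.
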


\begin{proof}: For $n\in \mathbb{N}\cup\{0\}$, $z^n\in I_K(\mathcal{H})$,
since $z^n = I_K(\sqrt{{\cal R}!(p^n,q^n)}\vert n\rangle)(z)$. So,
\begin{eqnarray*}
\langle z^n\vert z^m\rangle &=&
\int_{\mathbb{D}_R}^{}{\frac{\bar{z}^nz^m}
{\mathrm{Exp}_{{\cal R}(p,q)}(\bar{z},z)}d\mu(\bar{z},z)}\\
&=& \int_{\mathbb{D}_R}^{} {\frac{\overline{I_K(\sqrt{{\cal R}!(p^n,q^n)}\vert n\rangle)(z)} I_K(\sqrt{[m]!_{{\cal R}(p,q)}}\vert
m\rangle)(z)}
{\mathrm{Exp}_{{\cal R}(p,q)}(\bar{z},z)}d\mu(\bar{z},z)}\\
&=& \sqrt{{\cal R}!(p^n,q^n)}\sqrt{[m]!_{{\cal R},p,q}}\int_{\mathbb{D}_R}^{} {\frac{\langle K(z) \vert n
\rangle\langle m\vert K(z)\rangle}
{\mathrm{Exp}_{{\cal R}(p,q)}(\bar{z},z)}d\mu(\bar{z},z)}\\
&=& \sqrt{{\cal R}!(p^n,q^n)}\sqrt{[m]!_{{\cal R}(p,q)}}\langle
m\vert \int_{\mathbb{D}_R}^{} {\frac{\vert K(z)\rangle\langle K(z)
\vert}
{\mathrm{Exp}_{{\cal R}(p,q)}(\bar{z},z)}d\mu(\bar{z},z)}\vert n \rangle\\
&=& \sqrt{{\cal R}!(p^n,q^n)}\sqrt{[m]!_{{\cal R}(p,q)}}\langle
m\vert n \rangle
\end{eqnarray*}
from which (\ref{integra1}) follows. Setting $z=
\sqrt{x}e^{i\varphi}$, $ 0\leq x\leq R^2$ and $0\leq\varphi\leq
2\pi$, we get $d\mu(\bar{z},z)$ =
$d\mu(\sqrt{x}e^{-i\varphi},\sqrt{x}e^{i\varphi})$
 = $d\nu(x)d\varphi$ and
$\mathrm{Exp}_{{\cal R}(p,q)}(\bar{z},z) = \mathrm{Exp}_{{\cal
R},p,q}(x)$. Then, from (\ref{integra1}) and taking $m=n$ we have
\begin{eqnarray*}
{\cal R}!(p^n,q^n) =
\int_{0}^{2\pi}{d\varphi\int_{0}^{R^2}{x^n\frac{d\nu(x)}
{\mathrm{Exp}_{{\cal R}(p,q)}(x)}}} = 2\pi
\int_{0}^{R^2}{x^n\frac{d\nu(x)} {\mathrm{Exp}_{{\cal R}(p,q)}(x)}}.
\end{eqnarray*}
Dividing the left and right sides of the above equalities by $2\pi$ we
obtain (\ref{integra2}).\qquad \cqfd
\end{proof}

As in ~\cite{Odzijewicz98} the quantities (\ref{integra2}) may also be treated as
defining properties of the measure $\nu$. This is the famous moment
problem of finding $\nu$ from the knowledge of the fixed moments
${\cal R}!(p^n,q^n)$, $n = 0, 1, \cdots$.

It may be interesting to formulate the moment problem in a way more
adequate for $K-$analysis. So, the integration with respect to the
measure $\frac{d\nu}{\mathrm{Exp}_{{\cal R}(p,q)}}$ is replaced by
the $K -$ integration
\begin{equation}
{\cal I}x^n := \frac{1}{{\cal R}(p^{n+1},q^{n+1})}x^{n+1},
\end{equation}
with some unknown analytic weight function $\sigma$ such that its
Taylor expansion
\begin{equation}
\sigma(x) = \sum_{k=0}^{\infty}{a_kx^k}
\end{equation}
has $R$ as its convergence radius. The $K -$ integration is just the
right inverse of the $K -$ differentiation, i.e. ${\cal
I}.\partial_{{\cal R}(p,q)}= \mathbb{I}$. Then, instead of looking
for a measure $d\nu$ which satisfies (\ref{integra2}), the problem
is to find an analytic function $\sigma$ which satisfies the
moment conditions
\begin{equation}
 {\cal I}x^n\sigma(x)_{\vert_{x=R^2}} = \frac{{\cal R}!(p^n,q^n)}{2\pi}
\quad \mbox{for} \quad n = 0, 1, 2, \cdots,
\end{equation}\
i.e.
\begin{equation}
 \sum_{k=0}^{\infty}{\frac{R^{2(n+k+1)}}{{\cal R}(p^{n+k+1},q^{n+k+1})}a_k} = \frac{{\cal R}!(p^n,q^n)}{2\pi}
\quad \mbox{for} \quad n = 0, 1, 2, \cdots.
\end{equation}

\subsection*{Remarks}
Special cases can be recovered:
\begin{enumerate}
\item[\bf(i).] {\bf The $(p,q)-$ algebra of Chakrabarty and Jagannathan} ~\cite{Chakrabarti&Jagan}
{\it
\begin{eqnarray}\label{algCJ}
AA^\dag - qA^\dag A = q^{-N} &\qquad& AA^\dag - qA^\dag A = p^{N}
\cr \left[N,A\right] = - A &\qquad& \left[N,A^\dag\right]= A^\dag
\end{eqnarray}
affords the associated coherent states
\begin{equation}{\label{CSCJ}}
\vert z\rangle = \left[{\cal N}_{p,q}(\vert z\vert
^2)\right]^{-1/2}\sum^{\infty}_{n=0}\frac{z^n}{\sqrt{[n]_{p,q}!}}
\vert n\rangle
\end{equation}
where
\begin{equation}\label{expCJ}
{\cal N}_{p,q}(z) =\sum^{\infty}_{n=0}\frac{z^n}{[n]_{p,q}!},
\end{equation}
as well as the $(p,q)-$number and the $(p,q)-$factorial
given by
\begin{equation}\label{RnumCJ}
[n]_{p,q} = \frac{p^{-n}-q^n}{p^{-1}-q}
\end{equation}
and
\begin{equation}\label{cofaCJ}
[n]_{p,q}! =[1]_{p,q}[2]_{p,q} ...[n]_{p,q},
\end{equation}
respectively,
} taking $ {\cal R}(x,y) = \frac{1-
xy}{(p^{-1}-q)x}$.

Indeed, one obtains relations in (\ref{algCJ}) through relations
(\ref{algN1}) and (\ref{algN2}, whereas coherent states (\ref{CSCJ})
become particular cases of coherent states (\ref{CSn}) with the
function ${\cal N}$ in (\ref{expCJ}) as the analog of the
exponential function (\ref{exp6});  the $(p,q)-$number and
$(p,q)-$factorial are deduced from the ${\cal R}(p,q)$-factors and
(\ref{Rfac}), respectively.

\item[\bf(ii).] {\bf $(p,q)-$generalization of $q-$Quesne
algebra}~\cite{Hounkonnou&Ngompe07a,Quesne&al02}: {\it
\begin{eqnarray}\label{algHN1}
p^{-1}AA^\dag - A^\dag A = q^{-N-1} &\qquad& qAA^\dag - A^\dag A =
p^{N+1} \cr \left[N,A\right] = - A &\qquad& \left[N,A^\dag\right]=
A^\dag
\end{eqnarray}
generates the associated coherent states
\begin{equation}{\label{CSHN1}}
\vert z\rangle = \left[{\cal N}_{p,q}(\vert z\vert
^2)\right]^{-1/2}\sum^{\infty}_{n=0}\frac{z^n}{\sqrt{[n]^Q_{p,q}!}}
\vert n\rangle
\end{equation}
where
\begin{equation}\label{expHN1}
{\cal N}_{p,q}(z) =\sum^{\infty}_{n=0}\frac{z^n}{[n]^Q_{p,q}!}
\end{equation}
while the $(p,q)-$number is given by
\begin{equation}\label{RnumHN1}
[n]^Q_{p,q} = \frac{p^{-n}-q^n}{p^{-1}-q},
\end{equation}
}
setting $ {\cal R}(x,y) =
\frac{xy-1}{(q-p^{-1})y}$.

\item[\bf(iii).] {\bf $(p,q;\mu,\nu,f)-$deformed states of Hounkonnou and
Ngompe} ~\cite{Hounkonnou&Ngompe07a}:
{\it
\begin{equation}{\label{CSHN2}}
\vert z\rangle^{\mu,\nu}_{p,q,f} = \left[{\cal N}_{p,q}(\vert z\vert
^2)\right]^{-1/2}\sum^{\infty}_{n=0}\frac{z^n}{\sqrt{[n]^{\mu,\nu}_{p,q,f}!}}
\vert n\rangle
\end{equation}
where
\begin{equation}\label{expHN2}
{\cal N}^{\mu,\nu}_{p,q,f}(z)
=\sum^{\infty}_{n=0}\frac{z^n}{[n]^{\mu,\nu}_{p,q,f}!},
\end{equation}
with the $(p,q)-$number  given by
\begin{equation}\label{RnumHN2}
[n]^{\mu,\nu}_{p,q,f} = f(p,q)\frac{q^{\nu n}}{p^{\mu
n}}\frac{p^n-q^{-n}}{q-p^{-1}},
\end{equation}
such that $0 < pq < 1$, $p^\mu < q^{\nu - 1}$, $p > 1$, and $f$ a
well behaved real and non-negative function of deformation
parameters $p$ and $q$, satisfying
\begin{equation}
\lim_{(p,q) \to (1,1)}f(p,q) = 1,
\end{equation}
 becomes a particular case in the generalization provided in this work, setting $ {\cal R}(x,y) =
f(p,q)\frac{y^{\nu n}}{x^{\mu n}}\frac{x^n-y^{-n}}{q-p^{-1}}$,  $f$
being meromorphic}.

From the above mentioned $(p,q;\mu,\nu,f)-$deformed states,  other
deformed states known in the litterature can be easily recovered. See ~\cite{Hounkonnou&Ngompe07a} for more details.
\end{enumerate}

\subsection{$\mathcal{R}(p,q)$-trigonometric and hyperbolic functions}\label{Sect36}
From the expression (\ref{exp6}) of the exponential function, we  obtain
\begin{eqnarray}
\nonumber \mathrm{Exp}_{\mathcal{R}(p,q)}(iz) &=& \sum_{n=0}^{\infty}\frac{(iz)^n}{{\cal R}!(p^n,q^n)}\\
&=& \sum_{n=0}^{\infty}\frac{(-1)^nz^{2n}}{{\cal R}!(p^{2n},q^{2n})}
+
i\sum_{n=0}^{\infty}\frac{(-1)^nz^{2n+1}}{{\cal R}!(p^{2n+1},q^{2n+1})}
\end{eqnarray}
and
\begin{eqnarray}
\nonumber \mathrm{Exp}_{\mathcal{R}(p,q)}(-iz) &=& \sum_{n=0}^{\infty}\frac{(-iz)^n}{{\cal R}!(p^n,q^n)}\\
&=& \sum_{n=0}^{\infty}\frac{(-1)^nz^{2n}}{{\cal R}!(p^{2n},q^{2n})}
-
i\sum_{n=0}^{\infty}\frac{(-1)^nz^{2n+1}}{{\cal R}!(p^{2n+1},q^{2n+1})}.
\end{eqnarray}
We then define the ${\cal R}(p, q)-$cosine, sinus,
hyperbolic cosine and sinus functions by
\begin{eqnarray}
\cos_{\mathcal{R}(p,q)}(z) =
\sum_{n=0}^{\infty}\frac{(-1)^nz^{2n}}{{\cal R}!(p^{2n},q^{2n})},
\end{eqnarray}
\begin{eqnarray}
\sin_{\mathcal{R}(p,q)}(z) =
\sum_{n=0}^{\infty}\frac{(-1)^nz^{2n+1}}{{\cal R}!(p^{2n+1},q^{2n+1})},
\end{eqnarray}
\begin{eqnarray}
\cosh_{\mathcal{R}(p,q)}(z) =
\sum_{n=0}^{\infty}\frac{z^{2n}}{{\cal R}!(p^{2n},q^{2n})}
\end{eqnarray}
and
\begin{eqnarray}
 \sinh_{\mathcal{R}(p,q)}(z) = \sum_{n=0}^{\infty}\frac{z^{2n+1}}{{\cal R}!(p^{2n+1},q^{2n+1})},
\end{eqnarray}
respectively.
It is readily checked that
\begin{equation}
 \cos_{\mathcal{R}(p,q)}(z) =\frac{1}{2}
\left[\mathrm{Exp}_{\mathcal{R}(p,q)}(iz)+\mathrm{Exp}_{\mathcal{R}(p,q)}(-iz)\right],
\end{equation}
\begin{equation}
 i\sin_{\mathcal{R}(p,q)}(z) =\frac{1}{2}
\left[\mathrm{Exp}_{\mathcal{R}(p,q)}(iz)-\mathrm{Exp}_{\mathcal{R}(p,q)}(-iz)\right],
\end{equation}
\begin{equation}
\mathrm{Exp}_{\mathcal{R}(p,q)}(iz) = \cos_{\mathcal{R}(p,q)}(z) + i
\sin_{\mathcal{R}(p,q)}(z).
\end{equation}
In particular, the Euler formula is expressed as follows:
\begin{equation}
\mathrm{Exp}_{\mathcal{R}(p,q)}(i\theta) =
\cos_{\mathcal{R}(p,q)}(\theta) + i \sin_{\mathcal{R}(p,q)}(\theta).
\end{equation}
Besides, there follow the relations
\begin{equation}
\cosh_{\mathcal{R}(p,q)}(z) =\frac{1}{2}
\left[\mathrm{Exp}_{\mathcal{R}(p,q)}(z)+\mathrm{Exp}_{\mathcal{R}(p,q)}(-z)\right],
\end{equation}
\begin{equation}
\sinh_{\mathcal{R}(p,q)}(iz) =\frac{1}{2}
\left[\mathrm{Exp}_{\mathcal{R}(p,q)}(z)-\mathrm{Exp}_{\mathcal{R}(p,q)}(-z)\right],
\end{equation}
\begin{equation}
\mathrm{Exp}_{\mathcal{R}(p,q)}(z) =
\cosh_{\mathcal{R}(p,q)}(z)+ \sinh_{\mathcal{R}(p,q)}(z).
\end{equation}
The  derivatives are immediately expressible from
their definition:
\begin{equation}
\partial_{\mathcal{R}(p,q)}\sinh_{\mathcal{R}(p,q)}(az) = a\;
\cosh_{\mathcal{R}(p,q)}(az),
\end{equation}
\begin{equation}
\partial_{\mathcal{R}(p,q)}\cosh_{\mathcal{R}(p,q)}(az)= - a\;
\sinh_{\mathcal{R}(p,q)}(az),
\end{equation}
\begin{equation}
\partial_{\mathcal{R}(p,q)}\sin_{\mathcal{R}(p,q)}(az) = a\;
\cos_{\mathcal{R}(p,q)}(az) \quad \mbox{and}
\end{equation}
\begin{equation}
\partial_{\mathcal{R}(p,q)}\cos_{\mathcal{R}(p,q)}(az)= - a\;
\sin_{\mathcal{R}(p,q)}(az), \quad a\in\mathbb{C}.
\end{equation}
Therefore, the ${\mathcal{R}(p,q)}-$oscillator equation
\begin{equation}
\partial^2_{\mathcal{R}(p,q)}f(z) + \omega^2f(z)=0
\end{equation}
can be solved to give the solution
\begin{equation}
f(z) = C_1 \cos_{\mathcal{R}(p,q)}(\omega z) + C_2
\sin_{\mathcal{R}(p,q)}(\omega z).
\end{equation}

\subsection{Modified $(p,q)$-Bessel functions}\label{Sect37}
The $(p,q)-$analogues of the $q-$Bessel functions ~\cite{IsmailMoudard05,Jackson1905}
\begin{eqnarray}
&& J_s^{(1)}(z;q) = \frac{(q^{s+1};q)_{\infty}}{(q;q)_{\infty}}
\sum_{n=0}^{\infty}\frac{(-1)^n(\frac{z}{2})^{2n+s}}{(q,q^{s+1};q)_n}
\\&&
J_s^{(2)}(z;q) = \frac{(q^{s+1};q)_{\infty}}{(q;q)_{\infty}}
\sum_{n=0}^{\infty}\frac{q^{n(n-1)}(-1)^n(\frac{z}{2})^{2n+s}q^{n(s+1)}}{(q,
q^{s+1};q)_n}
\end{eqnarray}
can be defined  by
\begin{eqnarray}
&& J_s^{(1)}(z\vert p,q) = B(s+1\vert
p,q)
\sum_{n=0}^{\infty}\frac{p^{n(n-1)}(-1)^n(\frac{z}{2})^{2n+s}}{((p,q),(p^{s+
1},q^{s+1});(p,q))_n} \\
&& J_s^{(2)}(z\vert p,q) = B(s+1\vert p,q) \sum_{n=0}^{\infty}
\frac{q^{n(n-1)}(-1)^n(\frac{z}{2})^{2n+s}(\frac{q}{p})^{n(s+1)}}{((p,q),(p^
{s+1},q^{s+1});(p,q))_n},
\end{eqnarray}
where $0<q<p<1$, $z, s\in\mathbb{C}$ with $0<\vert z\vert<1$, and
\begin{equation}
B(s\vert p,q) = \frac{((p^s,q^s);(p,q))_{\infty}}{((p,q);(p,q))_{\infty}}.
\qquad\qquad\qquad\qquad
\end{equation}

\begin{remark}\
\begin{itemize}
\item $J^{(k)}_s(z;\frac{q}{p}) \neq J^{(k)}_s(z\vert p,q) $  for $ k
= 1, 2 $ since:
\begin{eqnarray*}
&& \frac{((p^s,q^s);(p,q))_n}{((p,q);(p,q))_n} =
\frac{p^{ns}((\frac{q}{p})^s;q/p)_n}{(\frac{q}{p};\frac{q}{p})_n}
\\&&
\sum_{n=0}^{\infty}\frac{(-1)^n(z/2)^{2n+s}}{(\frac{q}{p},(\frac{q}{p})^{s+1}
;\frac{q}{p})_n}\neq
\sum_{n=0}^{\infty}\frac{p^{n(n-1)}(-1)^n(z/2)^{2n+s}}{((p,q),(p^{s+1},q^{s+1})
;(p,q))_n} \\&&
\sum_{n=0}^{\infty}\frac{(\frac{q}{p})^{n(n-1)+s+1}(-1)^n(\frac{z}{2})^{2n+s}}
{(\frac{q}{p},(\frac{q}{p})^{s+1};\frac{q}{p})_n}\neq
\sum_{n=0}^{\infty}\frac{q^{n(n-1)}(-1)^n(\frac{z}{2})^{2n+s}}{((p,q),(q^{s+
1},q^{s+1});(p,q))_n}.
\end{eqnarray*}
\item For $p=1$,
\begin{equation}
J^{(k)}_s(z\vert p,q)= J^{(k)}_s(z;q) \quad \mbox{for}\quad  k = 1, 2.
\end{equation}
\end{itemize}
 So, the $(p,q)-$Bessel functions generalize the $q-$Bessel functions ~\cite{Koekoek}.
\end{remark}

\begin{proposition}:
The following relation between $J_s^{(1)}$ and $J_s^{(2)}$ holds:
\begin{eqnarray}
{\label{prop1}} J_s^{(2)}(z\vert p,q) =
\frac{((p^{s+2},\frac{z^2}{4});(p,q))_{\infty}}{((p^{s+2},0);(p,q))_{\infty}
} J_s^{(1)}(z\vert p,q).
\end{eqnarray}
\end{proposition}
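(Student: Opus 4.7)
The plan is to reduce the proposition to an explicit series identity by expanding the prefactor via the $(p,q)$-binomial theorem of Section~\ref{Sect33} and then taking a Cauchy product against $J_s^{(1)}$. First I would simplify the ratio of infinite products by term-by-term cancellation,
\[
\frac{((p^{s+2},z^2/4);(p,q))_\infty}{((p^{s+2},0);(p,q))_\infty}
=\prod_{k=0}^{\infty}\left(1-\frac{z^2}{4p^{s+2}}\left(\frac{q}{p}\right)^k\right)
=\left(\frac{z^2}{4p^{s+2}};\frac{q}{p}\right)_\infty,
\]
which is a single-base $q$-Pochhammer in base $q/p$. Because $0<q<p<1$, we have $0<q/p<1$, so this product converges absolutely on the common domain of definition of $J_s^{(1)}$ and $J_s^{(2)}$.

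Next, I would invoke Euler's identity -- the specialization of the $(p,q)$-binomial theorem (Theorem~\ref{theo1}) obtained by letting one parameter vanish so that the effective base is $q/p$ and the variable is $z^2/(4p^{s+2})$ -- to expand the prefactor as a convergent power series,
\[
\left(\frac{z^2}{4p^{s+2}};\frac{q}{p}\right)_\infty
=\sum_{n=0}^{\infty}\frac{(-1)^n(q/p)^{n(n-1)/2}}{(q/p;q/p)_n\,p^{n(s+2)}}\left(\frac{z}{2}\right)^{2n}.
\]
The central step is to form the Cauchy product of this series with the series defining $J_s^{(1)}(z\vert p,q)$ and to match the coefficient of $(z/2)^{2k+s}$ on both sides of the claimed identity. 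After clearing the common factor $B(s+1\vert p,q)\,(-1)^k$, this reduces the proposition to the finite summation
\[
\sum_{n=0}^{k}\frac{(q/p)^{n(n-1)/2}\,p^{(k-n)(k-n-1)}}{p^{n(s+2)}\,(q/p;q/p)_n\,((p,q),(p^{s+1},q^{s+1});(p,q))_{k-n}}
=\frac{q^{k(k-1)}(q/p)^{k(s+1)}}{((p,q),(p^{s+1},q^{s+1});(p,q))_k},
\]
which is a $(p,q)$-analog of the $q$-Chu-Vandermonde summation.

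The main obstacle is the verification of this combinatorial identity. I would prove it by induction on $k$, using the recursion $((a,b);(p,q))_{n+1}=((a,b);(p,q))_n\,(ap^n-bq^n)$ together with the reduction $((p,q);(p,q))_n=p^{n(n+1)/2}(q/p;q/p)_n$ in order to re-express every $(p,q)$-Pochhammer symbol in terms of $(q/p;q/p)_n$; after this rewriting, the sum collapses to a standard $q$-Chu-Vandermonde summation in base $q/p$. A conceptually cleaner alternative that sidesteps the combinatorics is to establish, via the $(p,q)$-derivative $\partial_{p,q}$ of Section~\ref{Sect33}, a second-order $(p,q)$-difference equation of Bessel type satisfied by $J_s^{(1)}$, verify that both $J_s^{(2)}$ and the right-hand side of the proposition satisfy the corresponding shifted equation, and then match the two leading coefficients (those of $z^s$ and $z^{s+2}$) to conclude by uniqueness of the analytic solution.
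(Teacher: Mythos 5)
Your route is genuinely different in execution from the paper's, though it ultimately proves the same underlying fact. The paper rebases both series to the single base $q/p$ exactly as you do, recognizes the $J_s^{(1)}$-series as ${}_2\phi_1\bigl(0,0;(q/p)^{s+1};q/p;-z^2/(4p^{s+2})\bigr)$, and then simply \emph{cites} the degenerate Heine transformation ${}_2\phi_1(0,0;c;q,x)=(x;q)_{\infty}^{-1}\,{}_0\phi_1(-;c;q,cx)$ from \cite{Koekoek} to produce the ${}_0\phi_1$ underlying $J_s^{(2)}$. Your Cauchy-product reduction is precisely the coefficient-wise form of that same transformation: once every $(p,q)$-shifted factorial is rewritten in base $q/p$ via $((p,q);(p,q))_n=p^{n(n+1)/2}(q/p;q/p)_n$ and $((p^{s+1},q^{s+1});(p,q))_n=p^{n(s+1)+n(n-1)/2}((q/p)^{s+1};q/p)_n$, your finite sum is exactly the statement that $(x;Q)_\infty\cdot{}_2\phi_1(0,0;c;Q,x)={}_0\phi_1(-;c;Q,cx)$ with $Q=q/p$ and $c=Q^{s+1}$. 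What your approach buys is an elementary, self-contained verification of the lemma the paper uses as a black box; what it costs is that this verification — the entire content of the proposition — is only sketched.

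There is, moreover, a concrete obstruction: the finite identity as you have written it is false. At $k=1$ its left-hand side is $\frac{1}{(p-q)(p^{s+1}-q^{s+1})}+\frac{1}{p^{s+1}(p-q)}=\frac{2p^{s+1}-q^{s+1}}{p^{s+1}(p-q)(p^{s+1}-q^{s+1})}$, while the right-hand side is $\frac{q^{s+1}}{p^{s+1}(p-q)(p^{s+1}-q^{s+1})}$; these agree only when $p^{s+1}=q^{s+1}$. The summand is missing a factor $(-1)^n$, and the omission traces back to the sign of $z^2/4$ in the prefactor: the correct relation requires $\bigl(-z^2/(4p^{s+2});q/p\bigr)_{\infty}$, which at $p=1$ reduces to Hahn's classical $J^{(2)}_s(z;q)=(-z^2/4;q)_{\infty}J^{(1)}_s(z;q)$, whereas the product ratio displayed in the proposition evaluates, with the convention $((a,b);(p,q))_{\infty}=\prod_k(ap^k-bq^k)$, to $\bigl(+z^2/(4p^{s+2});q/p\bigr)_{\infty}$. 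This sign slip is already present in the paper (its proof applies the transformation at argument $x=-z^2/(4p^{s+2})$ but then records $(x;q/p)_{\infty}$ as $(z^2/(4p^{s+2});q/p)_{\infty}$), so your $k=1$ base case would in fact have exposed an error in the statement itself; as the proposal stands, however, the combinatorial identity on which the whole argument rests does not hold and the proposed induction cannot close. The difference-equation alternative you mention is plausible but too undeveloped to assess.
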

\begin{proof}
\begin{eqnarray*}
&&\sum_{n=0}^{\infty}\frac{p^{n(n-1)}(-1)^n(\frac{z}{2})^{2n}}{((p,q),(p^{s+1
}, q^{s+1});(p,q))_n}
=\sum_{n=0}^{\infty}\frac{(-1)^n(\frac{z^2}{4p^{s+2}})^{n}}
{(\frac{q}{p},(\frac{q}{p})^{s+1};\frac{q}{p})_n} \\ && \qquad\qquad
={_2\phi_1}\left(0,0;\left(\frac{q}{p}\right)^{s+1};\frac{q}{p};-\frac{z^2}
{4p^{s+2}}\right) \\&& \qquad\qquad =
\frac{1}{(\frac{z^2}{4p^{s+2}};\frac{q}{p})_{\infty}}
{_0\phi_1}\left(-;\left(\frac{q}{p}\right)^{s+1};\frac{q}{p},-\frac{z^2q^{s+
1}}{4p^{2s+3}}\right) \\&& \qquad\qquad =
\lim_{n\to\infty}\left(\frac{1}{(\frac{z^2}{4p^{s+2}};\frac{q}{p})_n}\right
) \sum_{n=0}^{\infty}\frac{(-1)^n(\frac{z^2q^{s+1}}{4p^{2s+3}})^{n}[(-1)^n
(\frac{q}{p})^{n(n-1)/2}]^2}
{(\frac{q}{p},(\frac{q}{p})^{s+1};\frac{q}{p})_n} \\&&\qquad\qquad=
\frac{(p^{s+2},0);(p,q)_{\infty}}{(p^{s+2},\frac{z^2}{4});(p,q)_{\infty}}
\sum_{n=0}^{\infty}\frac{(-1)^nq^{n(n-1)}(\frac{z}{2})^{2n}(\frac{q^{s+1}}{p
^{s+1}})^{n}}{((p,q),(p^{s+1},q^{s+1});(p,q))_n}
\end{eqnarray*}
 using the following Heine's transformation ~\cite{Koekoek}:
\begin{equation}
{_2\phi_1}(a,b;c;q,z) =
\frac{1}{(z;q)_{\infty}}{_0\phi_1}(-;c,z;q,cz).
\end{equation}
Thus \begin{eqnarray} &&
\sum_{n=0}^{\infty}\frac{(-1)^nq^{n(n-1)}(\frac{z}{2})^{2n}(\frac{q^{s+1}}
{ p ^{s+1}})^{n}}{((p,q),(p^{s+1},q^{s+1});(p,q))_n}=  \nonumber \\
&&\qquad\qquad \qquad
\frac{(p^{s+2},\frac{z^2}{4});(p,q)_{\infty}}{(p^{s+2},0);(p,q)_{\infty}}
\sum_{n=0}^{\infty}\frac{p^{n(n-1)}(-1)^n(\frac{z}{2})^{2n}}{((p,q),(p^{s+1}
,q^{s+1});(p,q))_n}.
\end{eqnarray}
Multiplying both sides of this equality by $B(s+1\vert
p,q)(\frac{z}{2})^2$ leads to (\ref{prop1}).\cqfd
\end{proof}

\begin{proposition}
The following three-term recursion relation:
\begin{eqnarray}
&& \qquad
\left[(p^{\frac{s}{2}}P^{\frac{1}{2}}-q^{\frac{s}{2}}Q^{\frac{1}{2}})(p^{\frac{s}{2}+1}
P^{\frac{1}{2}}-q^{\frac{s}{2}+1}Q^{\frac{1}{2}})+\frac{z^2}{4}
\right]J_s^{(1)}(z\vert p,q)= \nonumber
\\&&
{\label{prop2}} \qquad
\frac{z}{2}(p^{\frac{s+3}{2}}-q^{\frac{s+3}{2}})(p^{\frac{s}{2}+1}P^{\frac{1
}{2}}-q^{\frac{s}{2}+1}Q^{\frac{1}{2}})\left[J_{s+1}^{(1)}(z\vert
p,q)+ J_{s-1}^{(1)}(z\vert p,q)\right]
\end{eqnarray}
is satisfied for the $(p,q)-$Bessel function $J_s^{(1)}(z\vert p,q)$.
\end{proposition}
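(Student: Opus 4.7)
The plan is to verify the recursion by expanding both sides as formal power series in $z$, using the defining series $J_s^{(1)}(z\vert p,q) = B(s+1\vert p,q)\sum_{n\geq 0}a_n^{(s)}(z/2)^{2n+s}$ where $a_n^{(s)}=(-1)^n p^{n(n-1)}/\big((p,q),(p^{s+1},q^{s+1});(p,q)\big)_n$, and then matching coefficients at each order. The key observation is that, because $P$ and $Q$ act as dilations ($Pf(z)=f(pz)$, $Qf(z)=f(qz)$), their square roots act diagonally on monomials, $P^{1/2}z^m = p^{m/2}z^m$ and $Q^{1/2}z^m=q^{m/2}z^m$, so each first-order operator in the statement reduces to multiplication by an $n$-dependent scalar on the level of series coefficients.

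First I would compute the left-hand side. A direct calculation gives that the operator $(p^{s/2}P^{1/2}-q^{s/2}Q^{1/2})(p^{s/2+1}P^{1/2}-q^{s/2+1}Q^{1/2})$ acts on the monomial $(z/2)^{2n+s}$ by multiplication by $(p^{n+s}-q^{n+s})(p^{n+s+1}-q^{n+s+1})$, while the $z^2/4$ term simply shifts the summation index by one. After reindexing, the LHS becomes a single series $\sum_{m\geq 0}L_m(z/2)^{2m+s}$ with
\begin{equation*}
L_m=B(s+1\vert p,q)\Bigl[(p^{m+s}-q^{m+s})(p^{m+s+1}-q^{m+s+1})a_m^{(s)}+a_{m-1}^{(s)}\Bigr],
\end{equation*}
with the convention $a_{-1}^{(s)}=0$.

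Next I would handle the right-hand side analogously: the operator $p^{s/2+1}P^{1/2}-q^{s/2+1}Q^{1/2}$ acts on the $n$-th term of $J_{s-1}^{(1)}$ (which carries $(z/2)^{2n+s-1}$) by multiplication by $p^{n+s+1/2}-q^{n+s+1/2}$, and on the $n$-th term of $J_{s+1}^{(1)}$ (carrying $(z/2)^{2n+s+1}$) by multiplication by $p^{n+s+3/2}-q^{n+s+3/2}$; the prefactor $z/2$ then rescales the exponents so that the $J_{s-1}^{(1)}$ contribution lands at $(z/2)^{2n+s}$ and, after reindexing $n\mapsto n-1$, the $J_{s+1}^{(1)}$ contribution also lands at $(z/2)^{2n+s}$. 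Collecting both pieces yields the RHS coefficient $R_m$.

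The final step is to show $L_m = R_m$ for every $m\geq 0$. Here I would rewrite $a_n^{(s-1)}$ and $a_{n-1}^{(s+1)}$ in terms of $a_n^{(s)}$ and $a_{n-1}^{(s)}$ using the telescoping identities
\begin{equation*}
\frac{B(s+1\vert p,q)}{B(s\vert p,q)}=\frac{1}{p^s-q^s},\qquad
\frac{\bigl((p^{s+1},q^{s+1});(p,q)\bigr)_n}{\bigl((p^s,q^s);(p,q)\bigr)_n}=\frac{p^{s+n}-q^{s+n}}{p^s-q^s},
\end{equation*}
which follow at once from the definition of the $(p,q)$-Pochhammer symbol and of $B(s\vert p,q)$. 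The equality $L_m=R_m$ then separates into two scalar identities: the coefficient of $a_m^{(s)}$ matches through a product of $(p^{\alpha}-q^{\alpha})$-factors, and the coefficient of $a_{m-1}^{(s)}$ matches through the companion product with shifted indices. The main obstacle lies precisely in this last step: tracking the half-integer exponents $p^{m+s\pm 1/2}$ arising on the right-hand side and showing that, after using the two telescoping identities together with the scalar prefactor $(p^{(s+3)/2}-q^{(s+3)/2})$, the half-integer factors assemble into the integer-exponent products $(p^{m+s}-q^{m+s})(p^{m+s+1}-q^{m+s+1})$ appearing on the left-hand side. Once this algebraic identity is established at every level $m$, the proposition follows.
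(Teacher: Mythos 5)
Your framework is sound and your intermediate computations are correct: the dilation operators act diagonally on monomials, your $L_m$ is exactly right, and the right-hand operator does multiply the $n$-th term of $J_{s-1}^{(1)}$ (resp.\ $J_{s+1}^{(1)}$) by $p^{n+s+1/2}-q^{n+s+1/2}$ (resp.\ $p^{n+s+3/2}-q^{n+s+3/2}$). The problem is that you stop exactly where the proof has to be made: you describe the verification $L_m=R_m$ as ``the main obstacle'' without carrying it out, and that obstacle cannot be overcome. Using your own two telescoping identities one finds $B(s\vert p,q)\,a_m^{(s-1)}=B(s+1\vert p,q)\,a_m^{(s)}(p^{m+s}-q^{m+s})$ and $B(s+2\vert p,q)\,a_{m-1}^{(s+1)}=B(s+1\vert p,q)\,a_{m-1}^{(s)}/(p^{m+s}-q^{m+s})$, so that
\begin{equation*}
R_m=B(s+1\vert p,q)\,(p^{\frac{s+3}{2}}-q^{\frac{s+3}{2}})(p^{m+s+\frac12}-q^{m+s+\frac12})\Bigl[(p^{m+s}-q^{m+s})\,a_m^{(s)}+\tfrac{a_{m-1}^{(s)}}{p^{m+s}-q^{m+s}}\Bigr].
\end{equation*}
Matching the coefficient of $a_m^{(s)}$ against your $L_m$ would require $(p^{m+s+1}-q^{m+s+1})=(p^{\frac{s+3}{2}}-q^{\frac{s+3}{2}})(p^{m+s+\frac12}-q^{m+s+\frac12})$, which is false: the extreme exponents are $m+s+1$ on the left and $m+\tfrac{3s}{2}+2$ on the right, and the right-hand product has cross terms in powers of $pq$ that the left does not (set $p=1$, $m=0$: $1-q^{s+1}\neq(1-q^{(s+3)/2})(1-q^{s+1/2})$). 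The $a_{m-1}^{(s)}$ coefficients fail to match for the same reason. So the half-integer factors do \emph{not} assemble into the integer-exponent products; your method, honestly completed, refutes the displayed identity rather than proving it. A scalar prefactor cannot work here, because the ratio $L_m/R_m$ it would have to absorb is $m$-dependent.

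For comparison, the paper does not match coefficients at all: it derives two contiguous relations, $(p^{s/2}P^{1/2}-q^{s/2}Q^{1/2})J_s^{(1)}=\frac{z}{2}J_{s-1}^{(1)}$ and a companion relating $(p^{s/2+1}P^{1/2}-q^{s/2+1}Q^{1/2})^{-1}J_s^{(1)}$ to $J_{s+1}^{(1)}$, and then ``adds'' them. The first of these is correct and is a two-line consequence of your diagonal-action observation together with $B(s\vert p,q)=(p^s-q^s)B(s+1\vert p,q)$; the second, done carefully, reads $(p^{s/2+1}P^{1/2}-q^{s/2+1}Q^{1/2})^{-1}J_s^{(1)}=\frac{2}{z}J_{s+1}^{(1)}$ (not $\frac{z}{2}J_{s+1}^{(1)}$). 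But the final ``addition'' silently commutes multiplication by $z$ past $P^{1/2}$ and $Q^{1/2}$, which is illegitimate since $(p^{a/2}P^{1/2}-q^{a/2}Q^{1/2})(zf)=z(p^{(a+1)/2}P^{1/2}-q^{(a+1)/2}Q^{1/2})f$. If you want a correct statement to prove, keep your coefficient framework but start from these two contiguous relations and track that commutation; the prefactor in front of $J_{s\pm1}^{(1)}$ then necessarily comes out as an operator in $P^{1/2},Q^{1/2}$, not the scalar $(p^{\frac{s+3}{2}}-q^{\frac{s+3}{2}})$.
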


\begin{proof}: We have
\begin{eqnarray}
&& \left(p^{\frac{s}{2}}P^{\frac{1}{2}}-q^{\frac{s}{2}}Q^{\frac{1}{2}}\right)
J ^ {(1)}_{s}(z\vert p,q)\qquad \nonumber \\ &&\qquad \qquad\qquad
=\frac{z}{2}B(s+1\vert p,q)
\sum_{n=0}^{\infty}\frac{p^{n(n-1)}(-1)^n(\frac{z}{2})^{2n+s}(p^{s+n}-q^{s+n
} )}{((p,q),(p^{s+1},q^{s+1});(p,q))_n}\qquad
\nonumber\\ &&\qquad \qquad\qquad
{\label{prop3}}
=\frac{z}{2}\left(\frac{z}{2}\right)J^{(1)}_{s-1}(z\vert p,q)
\end{eqnarray}
and
\begin{eqnarray}
&&\left(p^{\frac{s}{2}+1}P^{\frac{1}{2}}-q^{\frac{s}{2}+1}Q^{\frac{1}{2}}
\right)^{-1}J^{(1)}_{s}(z\vert p,q)=\frac{1}{2}B(s+1\vert p,q)
\nonumber\\ &&  \qquad \qquad\qquad
\times\sum_{n=0}^{\infty}\frac{p^{n(n-1)}(-1)^n(\frac{z}{2})^{2n+s}}{((p,q),
( p ^ { s +1 } ,q^{s+1});(p,q))_n(p^{s+n+1}-q^{s+n+1})^{-1}}
\nonumber\qquad \\ &&\qquad \qquad\qquad\qquad \qquad \qquad\qquad\qquad
{\label{prop4}} = \frac{z}{2}J^{(1)}_{s+1}(z\vert p,q).
\end{eqnarray}
Adding (\ref{prop3}) and (\ref{prop4}), we obtain (\ref{prop2}).
\cqfd
\end{proof}

\section[${\cal R}(p,q)$-calculus]{${\cal R}(p,q)$-calculus: differentiation, integration and Hopf algebras}\label{chap4}

$\;$

In this section  we build a framework for ${\cal R}(p,q)$-deformed calculus, which provides a method of computation for a deformed 
${\cal R}(p,q)$-derivative,  generalizing known deformed derivatives of analytic function
defined on a complex disc as particular cases corresponding to  conveniently chosen  meromorphic functions.
Under prescribed conditions, we define the ${\cal R}(p,q)$-derivative.  
The main result resides in the proof that ${\cal R}(p,q)$-algebra is a Hopf algebra. 
Relevant examples are also given.
%

\subsection{${\cal R}(p,q)-$factors and their associated quantum algebras}\label{2section2}

In the previous chapter (see also \cite{Bukweli&Hounkonnou12a,Bukweli&Hounkonnou12d,Hounkonnou&Bukweli10}) we have built the ${\cal R}(p,q)-$factors
which are a generalization of Heine $q-$factors (also called Heine $q-$number in physics literature)
\begin{eqnarray}
[n]_q=\frac{1-q^n}{1-q}, \qquad n= 0, 1, 2, \cdots
\end{eqnarray}
and Jagannathan-Srinivasa $(p,q)-$factors \cite{Jagannathan&Rao}
\begin{eqnarray}\label{2pqnbrs}
[n]_{p,q}= \frac{p^n-q^n}{p-q}, \qquad n= 0, 1, 2, \cdots
\end{eqnarray}
as follows.
Let $p$ and $q$ be two
positive real numbers  such that $0<q<p\leq1$. Consider, as in the previous chapter, a meromorphic function
${\cal R}$, defined on $\mathbb{C}\times\mathbb{C}$ by
\begin{equation}\label{2Rxy}
{\cal R}(x,y) = \sum_{k, l = - L}^{\infty} r_{kl}x^ky^l
\end{equation}
with an eventual isolated singularity at the zero, where $r_{kl}$
are complex numbers, $L\in\mathbb{N}\cup\{0\}$, ${\cal R}(p^n, q^n)>
0$ $\forall n\in\mathbb{N}$, and ${\cal R}(1, 1) = 0$ by definition.
Then, the  ${\cal R}(p,q)$-factors denoted by $\displaystyle {\cal R}(p^n,q^n)$, $n= 0, 1, 2, \cdots$
are used to deduce the ${\cal R}(p,q)-$factorial
\begin{eqnarray}
{{\cal R}!(p^n,q^n)}= \left\{\begin{array}{l} 1 \quad \mbox{for   } \quad n=0 \quad \\
{\cal R}(p,q)\cdots{\cal R}(p^n,q^n) \quad \mbox{for } \quad n\geq
1, \quad \end{array} \right.
\end{eqnarray}
the  ${\cal R}(p,q)-$binomial coefficient
\begin{eqnarray}
&&\left[\begin{array}{c} m  \\ n\end{array} \right]_{{\cal R}(p,q)}=
\frac{{\cal R}!(p^m,q^m)}{{\cal R}!(p^n,q^n){\cal R}!(p^{m-n},q^{m-n})},\quad m, n= 0, 1, 2, \cdots,\quad m\geq n,\cr
&&
\end{eqnarray}
and the  ${\cal R}(p,q)$-exponential function
\begin{eqnarray}\label{2dino}
 \mathrm{Exp}_{{\cal R}(p,q)}(z) =
\sum_{n=0}^{\infty}\frac{1}{{{\cal R}!(p^n,q^n)}}z^n.
\end{eqnarray}
 Denote by $\mathbb{D}_R$ =$\{z\in\mathbb{C}$ : $\vert z\vert < R \}$ a complex disc  and by
${\cal O}(\mathbb{D}_R)$ the set of holomorphic functions defined
on $\mathbb{D}_R$, where $R$ is the radius of convergence of the series (\ref{2dino}). 
 
 We then define the following linear operators   on ${\cal O}(\mathbb{D}_R)$ by
(see \cite{Bukweli&Hounkonnou12a,Hounkonnou&Bukweli10} and references therein):
\begin{eqnarray}\label{2operat}
 &&\quad Q: \varphi \longmapsto Q\varphi(z) = \varphi(qz),
\nonumber \\
 &&\quad P: \varphi \longmapsto P\varphi(z) = \varphi(pz), \nonumber\\
&&{\label{2deriva}} \partial_{p,q}:\varphi \longmapsto
\partial_{p,q}\varphi(z) = \frac{\varphi(pz) - \varphi(qz)}{z(p-q)},
\end{eqnarray}
$\varphi\in{\cal O}(\mathbb{D}_R)$,
$0<q<p\leq 1$, and the ${\cal R}(p,q)$-derivative  by
\begin{equation}{\label{2deriva1}}
\partial_{{\cal R}(p,q)} := \partial_{p,q}\frac{p - q}{P-Q}{\cal R}(P, Q)
= \frac{p - q}{pP-qQ}{\cal R}(pP, qQ)\partial_{p,q}.
\end{equation}
Note that the  ${\cal R}(p,q)$-exponential function is invariant under the action
of the  ${\cal R}(p,q)$-derivative since
\begin{eqnarray}
\partial_{{\cal R}(p,q)}z^n=\left\{\begin{array}{l} 0 \quad \mbox{for   } \quad n=0 \quad \\
{\cal R}(p^n,q^n)z^{n-1} \quad \mbox{for } \quad n\geq
1. \quad \end{array} \right.
\end{eqnarray}
In \cite{Hounkonnou&Bukweli10}, we also studied
the ${\cal R}(p,q)-$deformed quantum algebra generated by the set of operators
$\{1, A, A^\dag, N\}$ and the commutation relations
\begin{equation}\label{2algN1}
\left[N, A\right] = - A \;\;\;\mbox{and}\;\;\; \left[N,
A^\dag\right] = A^\dag \end{equation}
with
\begin{equation}\label{2algN2}
A A^\dag= {\cal R}(p^{N+1},q^{N+1}),\quad \mbox{and}\quad A^\dag  A = {\cal
R}(p^N,q^N).
\end{equation}
 This algebra is defined  on ${\cal O}(\mathbb{D}_R)$ as:
\begin{eqnarray}
A^\dag := z,\qquad A:=\partial_{{\cal R}(p,q)}, \qquad N:= z\partial_z,
\end{eqnarray}
where $\partial_z:=\frac{\partial}{\partial z}$ is the usual derivative on $\mathbb{C}$.
Therefore, the following  holds:
\begin{proposition}
\begin{eqnarray}
P= p^{z\partial z},\qquad Q=q^{z\partial z}
\end{eqnarray}
and the algebra ${\cal A}_{{\cal R}(p,q)}$ generated by $\{1,\; z,\; z\partial_z,\;\partial_{{\cal R}(p,q)}\}$
satisfies the  relations:
\begin{eqnarray}\label{2ComuRel2}
&&z\;\partial_{{\cal R}(p,q)}= {\cal R}(P, Q),\qquad
\partial_{{\cal R}(p,q)}\;z ={\cal R}(pP, qQ),\cr
&&[z\partial_z,\;z]= z,\qquad\qquad [z\partial_z,\;\partial_{{\cal R}(p,q)}]= -\partial_{{\cal R}(p,q)}.
\end{eqnarray}
\end{proposition}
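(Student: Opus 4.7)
The plan is to verify each of the stated identities by acting on the canonical basis $\{z^n\}_{n\geq 0}$ of $\mathcal{O}(\mathbb{D}_R)$ and then extending by linearity and analyticity. The strategy is uniform: reduce every operator identity to an identity on monomials, use the fact that $z^n$ diagonalizes $z\partial_z$ with eigenvalue $n$, and exploit the series expansion \eqref{2Rxy} of ${\cal R}$.

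First I would establish $P = p^{z\partial_z}$ and $Q = q^{z\partial_z}$. Since $z\partial_z z^n = n z^n$, the formal power series for $p^{z\partial_z}$ gives $p^{z\partial_z} z^n = p^n z^n = (pz)^n = P z^n$, and the same computation works for $Q$. The identities extend to all $\varphi \in \mathcal{O}(\mathbb{D}_R)$ by Taylor expansion, since both $P$ and $p^{z\partial_z}$ are bounded linear operators that agree on a dense subspace.

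Next, for the product relations $z\,\partial_{{\cal R}(p,q)} = {\cal R}(P,Q)$ and $\partial_{{\cal R}(p,q)}\,z = {\cal R}(pP, qQ)$, I would compute both sides on $z^n$. From the definition of $\partial_{{\cal R}(p,q)}$ we have $z\,\partial_{{\cal R}(p,q)} z^n = z \cdot {\cal R}(p^n,q^n) z^{n-1} = {\cal R}(p^n, q^n) z^n$, while using \eqref{2Rxy} together with $P z^n = p^n z^n$ and $Q z^n = q^n z^n$ yields
\begin{eqnarray*}
{\cal R}(P,Q) z^n = \sum_{k,l \geq -L} r_{kl}\, p^{kn} q^{ln} z^n = {\cal R}(p^n, q^n)\, z^n,
\end{eqnarray*}
so both sides agree. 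For the second identity, applying the LHS to $z^n$ gives $\partial_{{\cal R}(p,q)} z^{n+1} = {\cal R}(p^{n+1}, q^{n+1}) z^n$, and the RHS gives $\sum r_{kl}\,p^{k(n+1)} q^{l(n+1)} z^n = {\cal R}(p^{n+1}, q^{n+1}) z^n$, matching.

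Finally, the two commutator relations are routine. On $z^n$, $[z\partial_z, z] z^n = (n+1)z^{n+1} - n z^{n+1} = z^{n+1}$, so $[z\partial_z, z] = z$. For the remaining commutator,
\begin{eqnarray*}
[z\partial_z,\, \partial_{{\cal R}(p,q)}] z^n &=& z\partial_z\big({\cal R}(p^n,q^n) z^{n-1}\big) - \partial_{{\cal R}(p,q)}\big(n z^n\big) \\
&=& (n-1){\cal R}(p^n,q^n) z^{n-1} - n\,{\cal R}(p^n,q^n) z^{n-1} = -\partial_{{\cal R}(p,q)} z^n.
\end{eqnarray*}
There is no serious obstacle in this proof; the only delicate point is ensuring that the formal series \eqref{2Rxy} can be applied term by term to a monomial $z^n$, which is automatic because each $z^n$ is an eigenfunction of every power of $P$ and $Q$, so the series ${\cal R}(P,Q) z^n$ reduces to the scalar series ${\cal R}(p^n,q^n)$, already assumed to be well defined.
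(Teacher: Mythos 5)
Your proof is correct. The paper actually states this proposition without an explicit proof in Section \ref{2section2}; the identities are established earlier (Subsection \ref{Sect34}) by computing $AA^\dag K(z)=\sum_n {\cal R}(p^{n+1},q^{n+1})c_nz^n|n\rangle={\cal R}(pP,qQ)K(z)$ and $I_K\circ N\circ I_K^{-1}(z^n)=z\frac{d}{dz}z^n$ in the coherent-state-map representation, which is the same monomial-eigenvalue computation you perform, just transported through $I_K$. Your direct verification on the basis $\{z^n\}$, using ${\cal R}(1,1)=0$ to cover the $n=0$ case of $z\,\partial_{{\cal R}(p,q)}={\cal R}(P,Q)$, is the natural self-contained argument and is complete; the only cosmetic remark is that the extension from monomials to all of ${\cal O}(\mathbb{D}_R)$ is better justified by termwise application to the Taylor series (continuity of $P$, $Q$ for $0<q<p\leq1$) than by an appeal to boundedness, which is not the relevant notion on ${\cal O}(\mathbb{D}_R)$.
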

\begin{proposition}
If there exist two functions
$\Psi_1$ and $\Psi_2: $ $\mathbb{C}\times\mathbb{C}\longrightarrow \mathbb{C}$ such that
\begin{eqnarray}
&&\Psi_i(p,q)>0\quad\mbox{for   }\; i=1,\;2\\
&&\left[\begin{array}{c} n+1 \\ k\end{array} \right]_{{\cal R}(p,q)}= \Psi_1^k(p,q)\left[\begin{array}{c} n  \\ k\end{array} \right]_{{\cal R}(p,q)}
+ \Psi_2^{n+1-k}(p,q)\left[\begin{array}{c} n  \\ k-1\end{array} \right]_{{\cal R}(p,q)},
\end{eqnarray}
\begin{eqnarray}
&& ba= \Psi_1(p,q)ab,\; xy=\Psi_2(p,q)yx,\mbox{ and }
[i,\;j]=0 \mbox{ for } i{\in}\{a,b\}, j{\in}\{x,y\}\cr
&&
\end{eqnarray}
for quantities  $a$, $b$, $x$,  $y$, then 
\begin{eqnarray}\label{2genbin}
 (ax+by)^n= \sum_{k=0}^n \left[\begin{array}{c} n \\ k\end{array} \right]_{{\cal R}(p,q)}a^{n-k}b^ky^kx^{n-k}.
\end{eqnarray}

\end{proposition}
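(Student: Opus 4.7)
The plan is to proceed by induction on $n$, multiplying $(ax+by)$ from the \emph{right} at each step and invoking the commutation relations to bring products into the canonical normal form $a^{n-k}b^k y^k x^{n-k}$. The base cases $n=0$ and $n=1$ reduce to $1=1$ and $ax+by=ax+by$ respectively, using $\left[\begin{array}{c}0\\0\end{array}\right]_{{\cal R}(p,q)}=1$ and the evident values at $n=1$.

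For the inductive step, I would first record two auxiliary ``hopping'' identities that follow by a one-line induction from the given commutation relations:
\begin{eqnarray*}
b^k a \;=\; \Psi_1^k(p,q)\,a\,b^k,\qquad x^m y \;=\; \Psi_2^m(p,q)\,y\,x^m.
\end{eqnarray*}
Combining these with the hypothesis that $a,b$ commute with $x,y$, one obtains the two elementary normal-ordering rules
\begin{eqnarray*}
a^{n-k}b^k y^k x^{n-k}\cdot ax &=& \Psi_1^k(p,q)\,a^{n+1-k}\,b^k\,y^k\,x^{n+1-k},\\
a^{n-k}b^k y^k x^{n-k}\cdot by &=& \Psi_2^{n-k}(p,q)\,a^{n-k}\,b^{k+1}\,y^{k+1}\,x^{n-k}.
\end{eqnarray*}
Using them to expand $(ax+by)^{n+1}=(ax+by)^n\,(ax+by)$ via the inductive hypothesis and reindexing the second resulting sum by $k\mapsto k-1$, I would collect the monomial $a^{n+1-k}b^k y^k x^{n+1-k}$ to obtain the coefficient
\begin{eqnarray*}
\Psi_1^k(p,q)\left[\begin{array}{c} n\\ k\end{array}\right]_{{\cal R}(p,q)}+\Psi_2^{n+1-k}(p,q)\left[\begin{array}{c} n\\ k-1\end{array}\right]_{{\cal R}(p,q)},
\end{eqnarray*}
with the standard convention that boundary binomials vanish. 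The hypothesized Pascal-type recursion in the proposition then identifies this precisely with $\left[\begin{array}{c}n+1\\k\end{array}\right]_{{\cal R}(p,q)}$, closing the induction.

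The most delicate point, in my view, is the choice of \emph{right}-multiplication by $ax+by$: only this choice produces $\Psi_1^k$ attached to $\left[\begin{array}{c}n\\k\end{array}\right]_{{\cal R}(p,q)}$ and $\Psi_2^{n+1-k}$ attached to $\left[\begin{array}{c}n\\k-1\end{array}\right]_{{\cal R}(p,q)}$ in exactly the positions demanded by the stated recursion. Left-multiplication would interchange the roles of $\Psi_1$ and $\Psi_2$ in the collected coefficient and would therefore call for a companion recursion that is equivalent to the stated one only under further symmetry assumptions on the ${\cal R}(p,q)$-binomials. Apart from this bookkeeping subtlety, the argument is entirely mechanical once the two hopping identities are in hand.
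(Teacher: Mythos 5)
Your proposal is correct and follows essentially the same route as the paper: induction on $n$ with right-multiplication by $(ax+by)$, normal-ordering via the hopping identities $b^k a=\Psi_1^k ab^k$ and $x^m y=\Psi_2^m yx^m$, reindexing the second sum, and closing with the hypothesized Pascal-type recursion. The only cosmetic difference is that the paper handles the boundary terms $a^{m+1}x^{m+1}$ and $b^{m+1}y^{m+1}$ by splitting them off explicitly rather than by a vanishing-binomial convention.
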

{\bf Proof:} By induction over $n$. Indeed, the equality (\ref{2genbin}) holds for $n=1$ since
\begin{eqnarray*}
 (ax+by)^1 &=& ax+by = \left[\begin{array}{c} 1  \\ 0\end{array} \right]_{{\cal R}(p,q)}a^1b^0y^0x^2 + \left[\begin{array}{c} 1  \\ 1\end{array} \right]_{{\cal R}(p,q)}a^0b^1y^1x^0 \cr&=&
 \sum_{k=0}^1\left[\begin{array}{c} 1  \\ k\end{array} \right]_{{\cal R}(p,q)}a^{1-k}b^ky^kx^{1-k}.
\end{eqnarray*}
Suppose that the equality (\ref{2genbin}) holds for $n\leq m$, this means in particular for $n=m$,
\begin{eqnarray}
 (ax+by)^m= \sum_{k=0}^m \left[\begin{array}{c} m \\ k\end{array} \right]_{{\cal R}(p,q)}a^{m-k}b^ky^kx^{m-k},
\end{eqnarray}
and let us prove that it remains valid for $n=m+1$. Indeed,
\begin{eqnarray*}
&& (ax+by)^{m+1}=(ax+by)^m(ax+by) \cr&&\quad= \sum_{k=0}^m \left[\begin{array}{c} m \\ k\end{array} \right]_{{\cal R}(p,q)}a^{m-k}b^ky^kx^{m-k}(ax+by)
 \cr&&\quad=\sum_{k=0}^m \left[\begin{array}{c} m \\ k\end{array} \right]_{{\cal R}(p,q)}a^{m-k}b^ky^kx^{m-k}ax+
\sum_{k=0}^m \left[\begin{array}{c} m \\ k\end{array} \right]_{{\cal R}(p,q)}a^{m-k}b^ky^kx^{m-k}by
 \cr&&\quad=\sum_{k=0}^m \left[\begin{array}{c} m \\ k\end{array} \right]_{{\cal R}(p,q)}\Psi_1^k(p,q)a^{m+1-k}b^ky^kx^{m+1-k}
 \cr&&\quad+\sum_{k=0}^m \left[\begin{array}{c} m \\ k\end{array} \right]_{{\cal R}(p,q)}\Psi_2^{m-k}(p,q)a^{m-k}b^{k+1}y^{k+1}x^{m-k} \cr&&\quad=a^{m+1}x^{m+1}+\sum_{k=1}^m \left[\begin{array}{c} m \\ k\end{array} \right]_{{\cal R}(p,q)}\Psi_1^k(p,q)a^{m+1-k}b^ky^kx^{m+1-k}
 \cr&&\quad+\sum_{k=0}^{m-1} \left[\begin{array}{c} m \\ k\end{array} \right]_{{\cal R}(p,q)}\Psi_2^{m-k}(p,q)a^{m-k}b^{k+1}y^{k+1}x^{m-k}+ b^{m+1}y^{m+1}
 \cr&&\quad=a^{m+1}x^{m+1}+\sum_{k=1}^m \Psi_1^k(p,q)\left[\begin{array}{c} m \\ k\end{array} \right]_{{\cal R}(p,q)}a^{m+1-k}b^ky^kx^{m+1-k}
 \cr&&\quad+\sum_{k=1}^{m} \Psi_2^{m+1-k}(p,q)\left[\begin{array}{c} m \\ k-1\end{array} \right]_{{\cal R}(p,q)}a^{m+1-k}b^{k}y^{k}x^{m+1-k}+ b^{m+1}y^{m+1}
 \cr&&\quad=a^{m+1}x^{m+1}+ b^{m+1}y^{m+1}
+\sum_{k=1}^{m}\left(\Psi_1^k(p,q)\left[\begin{array}{c} m \\ k\end{array} \right]_{{\cal R}(p,q)}\right. \cr&&\qquad\qquad+\left.
\Psi_2^{m+1-k}(p,q)\left[\begin{array}{c} m \\ k-1\end{array} \right]_{{\cal R}(p,q)}\right)a^{m+1-k}b^{k}y^{k}x^{m+1-k}
 \cr&&\quad=a^{m+1}x^{m+1}+ b^{m+1}y^{m+1}+\sum_{k=1}^{m}\left[\begin{array}{c} m \\ k\end{array} \right]_{{\cal R}(p,q)}a^{m+1-k}b^{k}y^{k}x^{m+1-k}.
\end{eqnarray*}
\hfill$\Box$

\subsection{${\cal R}(p,q)-$differential and integration calculi}\label{2section3}
\subsubsection{Differential calculus}
We define a linear operator $d_{{\cal R}(p,q)}$ on ${\cal A}_{{\cal R}(p,q)}$ by
\begin{equation}\label{2diff1}
d_{{\cal R}(p,q)}= (dz)\partial_{{\cal R}(p,q)}.
\end{equation}
It follows that
\begin{eqnarray}
&& d_{{\cal R}(p,q)}1= 0,\quad d_{{\cal R}(p,q)}z = (dz){\cal R}(p,q),
\quad d_{{\cal R}(p,q)}\partial_{{\cal R}(p,q)} = (dz)\partial_{{\cal R}(p,q)}^2\nonumber\\
&& d_{{\cal R}(p,q)}(z\partial_z)= (dz)(z\partial_z+1)\partial_{{\cal R}(p,q)}
\qquad\mbox{and}\quad d_{{\cal R}(p,q)}^2 = 0.
\end{eqnarray}
Hence, the set of  zero-forms $\Omega^0({\cal A}_{{\cal R}(p,q)})$ is naturally ${\cal A}_{{\cal R}(p,q)}$,
 while a one form $\omega$, element of $\Omega^1({\cal A}_{{\cal R}(p,q)})$, is given by
\begin{eqnarray}
\omega= (dz)\omega_0(z,z\partial_z,\partial_{{\cal R}(p,q)}),
\end{eqnarray}
where
$\omega_0(z,z\partial_z,\partial_{{\cal R}(p,q)})=
\sum_{i,j,k=0}^\infty \alpha_{ijk}(z)^i(z\partial_z)^j(\partial_{{\cal R}(p,q)})^k$
with $\alpha_{ijk}$ belonging to $\mathbb{C}$. Therefore, $d\omega=0$ for
$\omega\in\Omega^1({\cal A}_{{\cal R}(p,q)})$.

\begin{proposition}
For a nonnegative integer $n$, the following equalities hold:
\begin{eqnarray}\label{2diff20}
& & d_{{\cal R}(p,q)}(z^n)= (dz){\cal R}(p^n,q^n)z^{n-1},\cr
&& d_{{\cal R}(p,q)}(z\partial_z)^n= (dz)(z\partial_z+1)^n\partial_{{\cal R}(p,q)},\\
&& d_{{\cal R}(p,q)}(\partial^n_{{\cal R}(p,q)})= (dz)\partial^{n+1}_{{\cal R}(p,q)}.\nonumber
\end{eqnarray}
Moreover if $f\in{\cal O}(\mathbb{D}_R)$ then
\begin{eqnarray}\label{2diff2}
 d_{{\cal R}(p,q)}f(z)= (dz)\partial_{{\cal R}(p,q)} f(z).
\end{eqnarray}
\end{proposition}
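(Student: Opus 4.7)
The proof will lean entirely on the definition $d_{\mathcal{R}(p,q)} = (dz)\,\partial_{\mathcal{R}(p,q)}$, which reduces every identity in the statement to a computation with the single linear operator $\partial_{\mathcal{R}(p,q)}$ acting on elements of $\mathcal{A}_{\mathcal{R}(p,q)}$. So my first move would be to strip off the factor $(dz)$ throughout and prove the three companion identities
\begin{eqnarray*}
\partial_{\mathcal{R}(p,q)}(z^n) &=& \mathcal{R}(p^n,q^n)\,z^{n-1},\\
\partial_{\mathcal{R}(p,q)}(z\partial_z)^n &=& (z\partial_z+1)^n\,\partial_{\mathcal{R}(p,q)},\\
\partial_{\mathcal{R}(p,q)}(\partial_{\mathcal{R}(p,q)}^{\,n}) &=& \partial_{\mathcal{R}(p,q)}^{\,n+1},
\end{eqnarray*}
each of which implies the corresponding formula in (\ref{2diff20}) after multiplying by $(dz)$ on the left.

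The first identity is immediate from the definition of $\partial_{\mathcal{R}(p,q)}$ on monomials, established already just after (\ref{2deriva1}); nothing new is required. The third identity is also automatic since $\partial_{\mathcal{R}(p,q)}$ is a linear operator and $\partial_{\mathcal{R}(p,q)}\cdot\partial_{\mathcal{R}(p,q)}^{\,n} = \partial_{\mathcal{R}(p,q)}^{\,n+1}$ by associativity of operator composition.

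The only identity requiring genuine work is the second one, and my plan is to prove it by induction on $n$ using the commutation relation $[z\partial_z,\partial_{\mathcal{R}(p,q)}] = -\partial_{\mathcal{R}(p,q)}$ supplied by (\ref{2ComuRel2}). This relation rewrites as $\partial_{\mathcal{R}(p,q)}\,(z\partial_z) = (z\partial_z+1)\,\partial_{\mathcal{R}(p,q)}$, which is the base case $n=1$. Assuming the identity holds at stage $n$, one computes
\begin{eqnarray*}
\partial_{\mathcal{R}(p,q)}(z\partial_z)^{n+1}
&=& \bigl[\partial_{\mathcal{R}(p,q)}(z\partial_z)^n\bigr](z\partial_z) \\
&=& (z\partial_z+1)^n\,\partial_{\mathcal{R}(p,q)}(z\partial_z) \\
&=& (z\partial_z+1)^n(z\partial_z+1)\,\partial_{\mathcal{R}(p,q)} \\
&=& (z\partial_z+1)^{n+1}\,\partial_{\mathcal{R}(p,q)},
\end{eqnarray*}
closing the induction. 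I expect this to be the main (mild) obstacle, and it is purely bookkeeping once one trusts (\ref{2ComuRel2}).

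For the final claim (\ref{2diff2}) concerning arbitrary $f\in\mathcal{O}(\mathbb{D}_R)$, I would invoke the Taylor expansion $f(z)=\sum_{n\ge 0} a_n z^n$ convergent on $\mathbb{D}_R$, apply $d_{\mathcal{R}(p,q)}$ termwise using linearity and the first identity of (\ref{2diff20}) already proved, and conclude
\begin{eqnarray*}
d_{\mathcal{R}(p,q)}f(z)
= (dz)\sum_{n\ge 1} a_n\,\mathcal{R}(p^n,q^n)\,z^{n-1}
= (dz)\,\partial_{\mathcal{R}(p,q)}f(z).
\end{eqnarray*}
Termwise differentiation is legitimate because $\partial_{\mathcal{R}(p,q)}$ is a continuous linear operator on $\mathcal{O}(\mathbb{D}_R)$ in the topology of uniform convergence on compacta, as is clear from its expression $\partial_{\mathcal{R}(p,q)} = \partial_{p,q}\,(p-q)(P-Q)^{-1}\mathcal{R}(P,Q)$ in (\ref{2deriva1}).
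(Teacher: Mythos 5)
Your proposal is correct and follows essentially the same route as the paper, whose proof is a one-line appeal to the definition of $\partial_{{\cal R}(p,q)}$, the commutation relations (\ref{2ComuRel2}) and the definition (\ref{2diff1}) of $d_{{\cal R}(p,q)}$. You simply make explicit the details the paper leaves implicit, in particular the induction on $n$ via $\partial_{{\cal R}(p,q)}(z\partial_z)=(z\partial_z+1)\partial_{{\cal R}(p,q)}$ and the termwise treatment of $f\in{\cal O}(\mathbb{D}_R)$.
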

{\bf Proof:} The equalities in (\ref{2diff20}) follow from the definition of the
${\cal R}(p,q)-$ derivative  (\ref{2deriva1}), the commutation relations (\ref{2ComuRel2}) and
the definition of the differential (\ref{2diff1}).
Then, (\ref{2diff2}) follows by definition (\ref{2deriva1}).\hfill$\Box$
%

\begin{proposition}
The differential $d_{{\cal R}(p,q)}$ obeys the two following equivalent Leibniz rules
\begin{equation}\label{2leibniz1}
 d_{{\cal R}(p,q)}(fg)= (dz)\frac{p - q}{pP-qQ}{\cal R}(pP, qQ)
\left\{\partial_{p,q}(f))(Pg)+(Qf)(\partial_{p,q}(g))\right\},
\end{equation}
\begin{equation}\label{2leibniz2}
d_{{\cal R}(p,q)}(fg) = (dz)\frac{p - q}{pP-qQ}{\cal R}(pP, qQ)
\left\{(\partial_{p,q}(f))(Qg)+(Pf)(\partial_{p,q}(g))\right\}.
\end{equation}
for $f,g\in{\cal O}(\mathbb{D}_R)$.
\end{proposition}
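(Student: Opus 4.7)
The plan is to reduce both identities to the well-known two-sided Leibniz rule for the Jackson $(p,q)$-derivative $\partial_{p,q}$ and then apply the definition of $\partial_{\mathcal{R}(p,q)}$ given in \eqref{2deriva1}.

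First I would establish the auxiliary identities
\begin{eqnarray*}
\partial_{p,q}(fg) &=& (\partial_{p,q}f)(Pg) + (Qf)(\partial_{p,q}g),\\
\partial_{p,q}(fg) &=& (Pf)(\partial_{p,q}g) + (\partial_{p,q}f)(Qg),
\end{eqnarray*}
for $f,g\in\mathcal{O}(\mathbb{D}_R)$. Both follow from the standard add-and-subtract trick applied to
\begin{eqnarray*}
\partial_{p,q}(fg)(z) = \frac{f(pz)g(pz)-f(qz)g(qz)}{z(p-q)},
\end{eqnarray*}
inserting $\pm f(qz)g(pz)$ in the numerator yields the first identity and inserting $\pm f(pz)g(qz)$ yields the second. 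Because the operators $P,Q$ act as dilations, each is immediately recognised in the resulting expressions.

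Next, by the very definition
\begin{eqnarray*}
\partial_{\mathcal{R}(p,q)} \;=\; \frac{p-q}{pP-qQ}\,\mathcal{R}(pP,qQ)\,\partial_{p,q},
\end{eqnarray*}
one has $d_{\mathcal{R}(p,q)}(fg) = (dz)\partial_{\mathcal{R}(p,q)}(fg)$; substituting the two $(p,q)$-Leibniz rules into this formula directly produces \eqref{2leibniz1} and \eqref{2leibniz2}. The equivalence of the two rules then follows from their common left-hand side, or equivalently from the fact that both expressions inside the braces coincide with $\partial_{p,q}(fg)$.

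The step that requires a tiny bit of care, rather than a real obstacle, is justifying that the operator-valued prefactor $\frac{p-q}{pP-qQ}\mathcal{R}(pP,qQ)$ may be pulled across the equality (in other words, that it acts sensibly on the products $(\partial_{p,q}f)(Pg)+(Qf)(\partial_{p,q}g)$, viewed as elements of $\mathcal{O}(\mathbb{D}_R)$). This is handled by expanding $\mathcal{R}$ as the Laurent series in \eqref{2Rxy} and verifying termwise convergence on $\mathbb{D}_R$, exactly as in the justification of $\partial_{\mathcal{R}(p,q)}$ itself; no extra hypothesis beyond those already imposed on $\mathcal{R}$ is needed.
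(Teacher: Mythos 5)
Your proposal is correct and follows essentially the same route as the paper: the paper's proof likewise reduces both identities to the two forms of the $(p,q)$-Leibniz rule for $\partial_{p,q}$ and then applies the operator prefactor $\frac{p-q}{pP-qQ}{\cal R}(pP,qQ)$ from the definition of $\partial_{{\cal R}(p,q)}$. Your version merely spells out the add-and-subtract derivation of the $(p,q)$-Leibniz rules and the linearity/termwise justification, which the paper leaves implicit.
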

{\bf Proof:}
This follows from
\begin{eqnarray*}
\cr &&\partial_{p,q}(fg)=(\partial_{p,q}(f))(Qg)+(Pf)(\partial_{p,q}(f)
=(\partial_{p,q}(f)(Pg)+(Qf)(\partial_{p,q}(g)).
\end{eqnarray*}
\hfill$\Box$

\subsubsection{${\cal R}(p,q)$-integration}
We define the operator  ${\cal I}_{{\cal R}(p,q)}$ over ${\cal O}(\mathbb{D}_R)$ as the inverse image of the ${\cal R}(p,q)$-derivative.
For elements $z^n$ of the basis of ${\cal O}(\mathbb{D}_R)$, ${\cal I}_{{\cal R}(p,q)}$ acts as follows:
\begin{eqnarray}
 {\cal I}_{{\cal R}(p,q)} z^n:=\left(\partial_{{\cal R}(p,q)}\right)^{-1}z^n
  = \frac{1}{{\cal R}(p^{n+1},q^{n+1})}z^{n+1}+ c,
\end{eqnarray}
where $n\geq 0 $ and $c$ is an integration constant.\\
Hence, if $f\in{\cal O}(\mathbb{D}_R)$ then
\begin{eqnarray}\label{2integra1}
 &&{\cal I}_{{\cal R}(p,q)}\;\partial_{{\cal R}(p,q)}f(z)= f(z)+c\;\; \mbox{  and   }\;\;
 \partial_{{\cal R}(p,q)}\;{\cal I}_{{\cal R}(p,q)}f(z)= f(z)+c',
\end{eqnarray}
where $c$ and $c'$ are integration constants.

Provided that ${\cal R}(P,Q)$ is invertible, one can define the ${\cal R}(p,q)$-integration
by the following formula
\begin{eqnarray}\label{2integra2}
 {\cal I}_{{\cal R}(p,q)}= {\cal R}^{-1}(P,Q)\;z,
\end{eqnarray}
with $c=c'=0$.

One can also derive the definite integrals:
\begin{eqnarray}
&&\int_\alpha^\beta f(z)d_{{\cal R}(p,q)}z={\cal I}_{{\cal R}(p,q)}(\beta)-{\cal I}_{{\cal R}(p,q)}f(\alpha),\quad
\alpha,\beta\in\mathbb{D}_R;\\
&&\int_\alpha^{+\infty} f(z)d_{{\cal R}(p,q)}z=\lim_{n\to\infty}\int_\alpha^{p^n/q^n} f(z)d_{{\cal R}(p,q)}z;\\
&&\int_{-\infty}^{+\infty} f(z)d_{{\cal R}(p,q)}z=\lim_{n\to\infty}\int_{-p^n/q^n}^{p^n/q^n} f(z)d_{{\cal R}(p,q)}z.
\end{eqnarray}
Moreover, the Eqs. (\ref{2leibniz1}) and (\ref{2leibniz2}) lead to the following formulae:
\begin{eqnarray}
&&{\cal I}_{{\cal R}(p,q)}\;\partial_{{\cal R}(p,q)}(f(z)g(z))= f(z)g(z)+c
\\&&\qquad\qquad\qquad\qquad\qquad= {\cal I}_{{\cal R}(p,q)}\left\{ \frac{p - q}{pP-qQ}{\cal R}(pP, qQ)
(\partial_{p,q}(f))(Pg)\right\}
\cr&&\qquad\qquad\qquad\qquad\qquad+{\cal I}_{{\cal R}(p,q)}\;\frac{p - q}{pP-qQ}{\cal R}(pP, qQ)
\left\{(Qf)(\partial_{p,q}(g))\right\}\nonumber
\end{eqnarray}
and
\begin{eqnarray}
&&{\cal I}_{{\cal R}(p,q)}\; \partial_{{\cal R}(p,q)}(f(z)g(z))= f(z)g(z)+c
\\&&\qquad\qquad\qquad\qquad\qquad={\cal I}_{{\cal R}(p,q)}\left\{\frac{p - q}{pP-qQ}{\cal R}(pP, qQ)
(\partial_{p,q}(f))(Qg)\right\}
\cr&&\qquad\qquad\qquad\qquad\qquad+{\cal I}_{{\cal R}(p,q)}\;\frac{p - q}{pP-qQ}{\cal R}(pP, qQ)
\left\{(Pf)(\partial_{p,q}(g))\right\},\nonumber
\end{eqnarray}
respectively. These relation can be viewed as formulae of integration by parts.

\subsection{${\cal R}(p,q)$-Hopf algebra}\label{2section4}
The aim of this section is to establish conditions for which the ${\cal R}(p,q)$-algebra carries a Hopf algebra structure. This is summarized in a theorem given below.
Remind first that the algebra ${\cal A}_{{\cal R}(p,q)}$  will be a Hopf algebra if it admits operations of homomorphisms of a co-product $\mathbf{\Delta}$, a counit $\mathbf{\epsilon}$ and an anti-homomorphism of an antipode $\mathbf{S}$ \cite{Abe77}:
\begin{eqnarray}
&&\mathbf{\Delta}:\xymatrix{{\cal A}_{{\cal R}(p,q)}\ar[r]&{\cal A}_{{\cal R}(p,q)}\otimes{\cal A}_{{\cal R}(p,q)}},\quad\mathbf{\Delta}(\Omega_1\Omega_2)=
\mathbf{\Delta}(\Omega_1)\mathbf{\Delta}(\Omega_2); \cr
&&\mathbf{\epsilon}:\xymatrix{{\cal A}_{{\cal R}(p,q)}\ar[r]&\mathbb{C}},\quad\mathbf{\epsilon}(\Omega_1\Omega_2)=
\mathbf{\epsilon}(\Omega_1)\mathbf{\epsilon}(\Omega_2); \\
&&\mathbf{S}:\xymatrix{{\cal A}_{{\cal R}(p,q)}\ar[r]&{\cal A}_{{\cal R}(p,q)}},\quad\mathbf{S}(\Omega_1\Omega_2)=\mathbf{S}(\Omega_2)\mathbf{S}(\Omega_1)\nonumber
\end{eqnarray}
 satisfying
\begin{eqnarray}
&&(\mbox{id}\otimes\mathbf{\Delta})\mathbf{\Delta}(\Omega)= (\mathbf{\Delta}\otimes\mbox{id})\mathbf{\Delta}(\Omega),\label{2Hopf1}\\\
&&(\mbox{id}\otimes\mathbf{\epsilon})\mathbf{\Delta}(\Omega)=\Omega= (\mathbf{\epsilon}\otimes\mbox{id})\mathbf{\Delta}(\Omega),\label{2Hopf2}\\
&&m((\mbox{id}\otimes\mathbf{S})\mathbf{\Delta}(\Omega))=m((\mathbf{S}\otimes\mbox{id})\mathbf{\Delta}(\Omega))
=\mathbf{\epsilon}(\Omega)\mathbf{1},\label{2Hopf3}
\end{eqnarray}
for all $\Omega, \Omega_1, \Omega_2\in{\cal A}_{{\cal R}(p,q)}$. To prove this it is sufficient to show that these relations are satisfied by the generators governing the considered algebra. See \cite{Chari&Pressley} and references therein.

Let the Leibniz rule be written as
\begin{eqnarray}
(\partial_{{\cal R}(p,q)}\;fg)(z)&=& (\partial_{{\cal R},p,}f(z))(\Psi(p^{z\partial_z},q^{z\partial_z})g(z) \\&&\quad +
(\tilde{\Psi}(p^{z\partial_z},q^{z\partial_z})f(z))\partial_{{\cal R}(p,q)}g(z),\nonumber
\end{eqnarray}
for $ f, g\in{\cal O}(\mathbb{D}_R)$, where $\Psi(.,.)$ and $\tilde{\Psi}(.,.)$ are meromorphic functions.
Let the coproduct $\mathbf{\Delta}$, the counit $\mathbf{\epsilon}$, and the antipode $\mathbf{S}$ be defined as follows:
\begin{eqnarray}
&&\mathbf{\Delta}(A)= \alpha A\otimes\Psi(p^{\alpha_1N},q^{\alpha_2N})+
 \tilde{\alpha}\tilde\Psi(p^{\tilde\alpha_1N},q^{\tilde\alpha_2N})\otimes A,\label{2RHop1}\\
&&\mathbf{\Delta}(A^\dag)= \beta A^\dagger\otimes\Psi(p^{\beta_1N},q^{\beta_2N})+ \tilde{\beta}\tilde\Psi(p^{\tilde\beta_1N},q^{\tilde\beta_2N})\otimes A^\dag,\label{2RHop2}\\
&&\mathbf{\Delta}(N)= N\otimes \mathbf{1} + \mathbf{1}\otimes N+ \tau \mathbf{1}\otimes\mathbf{1},\label{2RHop3} \\
&&\mathbf{\Delta(1)}=\mathbf{1}\otimes\mathbf{1},\label{2RHop4}\\
&&\mathbf{\epsilon}(A)= 0,\quad
\mathbf{\epsilon}(A^\dag)= 0,\quad
\mathbf{\epsilon}(N)= -\tau,\quad
\mathbf{\epsilon(1)}= 1\label{2RHop5}\\
&&\mathbf{S}(A)=- s_1A,\quad
\mathbf{S}(A^\dag)= -\tilde{s}_1A^\dag,\quad
\mathbf{S}(N)= -N -2\tau\mathbf{1},\quad
\mathbf{S(1)}= \mathbf{1}\label{2RHop6},
\end{eqnarray}
where  $\alpha_i$, $\tilde\alpha_i$, $\beta_i$, $\tilde\beta_i$ ($i=1,2$) and $s_1$, $\tilde{s}_1$  $\alpha$, $\tilde{\alpha}$,
 $\beta$, $\tilde{\beta}$ and $\tau$ are real constants such that the following equations hold:
\begin{eqnarray}
&& \alpha \Psi(p^{-\tau\alpha_1},q^{-\tau\alpha_2})=1, \qquad \tilde{\alpha}\tilde\Psi(p^{-\tau\tilde\alpha_1},q^{-\tau\tilde\alpha_2})=1,\label{2CondHop1}\\
&&b\Psi(p^{-\tau\beta_1},q^{-\tau\beta_2})=1,\qquad
\tilde{\beta}\tilde\Psi(p^{-\tau\tilde\beta_1},q^{-\tau\tilde\beta_2})=1,\label{2CondHop2}\\
&&\alpha\beta=1,\quad\quad \tilde{\alpha}\tilde{\beta}=1,\label{2CondHop3}\\
&&\mathbf{\Delta}(\Psi(p^{\alpha_1N},q^{\alpha_2N}))=
\alpha\Psi(p^{\alpha_1N},q^{\alpha_2N})\otimes\Psi(p^{\alpha_1N},q^{\alpha_2N}),\label{2CondHop4}\\
&&\mathbf{\Delta}(\tilde\Psi(p^{\tilde\alpha_1N},q^{\tilde\alpha_2N}))
= \tilde{\alpha} \tilde\Psi(p^{\tilde\alpha_1N},q^{\tilde\alpha_2N})\otimes\tilde\Psi(p^{\tilde\alpha_1N},q^{\tilde\alpha_2N})\label{2CondHop5}\\
&&\mathbf{\Delta}(\Psi(p^{\beta_1N},q^{\beta_2N}))=\beta\Psi(p^{\beta_1N},q^{\beta_2N})\otimes\Psi(p^{\beta_1N},q^{\beta_2N}),\label{2CondHop6}\\
&&\mathbf{\Delta}(\tilde\Psi(p^{\tilde\beta_1N},q^{\tilde\beta_2N}))
=\tilde{\beta}\tilde\Psi(p^{\tilde\beta_1N},q^{\tilde\beta_2N})\otimes\tilde\Psi(p^{\tilde\beta_1N},q^{\tilde\beta_2N}),
\label{2CondHop7}\\
&&s_1 \alpha\Psi(p^{\alpha_1(N+1)},q^{\alpha_2(N+1)})= \tilde{\alpha}\tilde\Psi(p^{-\tilde\alpha_1(N+2\tau)},q^{-\tilde\alpha_2(N+2\tau)}),\label{2CondHop8}\\
&&\alpha\Psi(p^{-\alpha_1(N+2\tau)},q^{-\alpha_2(N+2\tau)})= s_1\tilde{\alpha}\tilde\Psi(p^{\tilde\alpha_1(N-1)},q^{\tilde\alpha_2(N-1)}),\label{2CondHop9}\\
&&\beta\Psi(p^{-\beta_1(N+2\tau)},q^{-\beta_2(N+2\tau)})=
\tilde{s}_1\tilde{\beta}\tilde\Psi(p^{\tilde\beta_1(N+1)},q^{\tilde\beta_2(N+1)}),\label{2CondHop10}\\
&&\tilde{s}_1\beta\Psi(p^{-\beta_1(N-1)},q^{-\beta_2(N-1)})=
\tilde{\beta}\tilde\Psi(p^{\tilde\beta_1(N+2\tau)},q^{\tilde\beta_2(N+2\tau)}).\label{2CondHop11}.
\end{eqnarray}
Then the following main statement is true:
\begin{theorem}\label{2Hoptheo}
The more general deformed algebra generated by $\{\mathbf{1}, N, A, A^\dag\}$ satisfying:
\begin{eqnarray}
 [N,\;A]= -A,\quad [N,\;A^\dag]=A^\dag \quad [A,\;A^\dag]_\gamma= AA^\dag + \gamma A^\dag A,
\end{eqnarray}
where $\gamma$ is a real constant such that
\begin{eqnarray}
&&\tilde\Psi(p^{\tilde\beta_1N},q^{\tilde\beta_2N})\otimes\Psi(p^{\alpha_1N},q^{\alpha_2N})
=
\\&&\qquad\qquad\qquad\qquad \gamma\tilde\Psi(p^{\tilde\beta_1(N-1)},q^{\tilde\beta_2(N-1)})\otimes \Psi(p^{\alpha_1(N-1)},q^{\alpha_2(N-1)}),\label{2CondHop12}
\cr&&\cr
&&\tilde\Psi(p^{\tilde\alpha_1(N+1)},q^{\tilde\alpha_2(N+1)})\otimes\Psi(p^{\beta_1(N+1)},q^{\beta_2(N+1)})
=\\&&\qquad\qquad\qquad\qquad
\gamma \tilde\Psi(p^{\tilde\alpha_1N},q^{\tilde\alpha_2N})\otimes\Psi(p^{\beta_1N},q^{\beta_2N}),\label{2CondHop13}\qquad\nonumber
\end{eqnarray}
is a Hopf algebra.
\end{theorem}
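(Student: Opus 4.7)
The plan is to verify, directly on the generators $\{\mathbf{1}, N, A, A^\dag\}$, each of the three structural requirements that define a Hopf algebra: (a) that $\mathbf{\Delta}$ and $\mathbf{\epsilon}$ are algebra homomorphisms compatible with the defining relations of ${\cal A}_{{\cal R}(p,q)}$; (b) coassociativity of $\mathbf{\Delta}$ together with the counit law~(\ref{2Hopf2}); and (c) the antipode condition~(\ref{2Hopf3}) with $\mathbf{S}$ an antihomomorphism. Because all three axioms are multiplicative in nature, checking them on generators is sufficient, so the work reduces to a bounded number of algebraic identities controlled by the hypotheses (\ref{2CondHop1})--(\ref{2CondHop13}).

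First I would handle the counit. Applying $(\mbox{id}\otimes\mathbf{\epsilon})$ to (\ref{2RHop1}) and using $\mathbf{\epsilon}(\Psi(p^{\alpha_1 N},q^{\alpha_2 N}))=\Psi(p^{-\tau\alpha_1},q^{-\tau\alpha_2})$ (since $\mathbf{\epsilon}(N)=-\tau$), condition (\ref{2CondHop1}) collapses the result to $A$; the symmetric check with $(\mathbf{\epsilon}\otimes\mbox{id})$ also uses (\ref{2CondHop1}). The analogous computation for $A^\dag$ is governed by (\ref{2CondHop2}), and the counit law on $N$ and $\mathbf{1}$ is immediate from (\ref{2RHop3})--(\ref{2RHop5}). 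Next I would verify coassociativity: equations (\ref{2CondHop4})--(\ref{2CondHop7}) state precisely that the four operators $\Psi(p^{\alpha_1 N},q^{\alpha_2 N})$, $\tilde\Psi(p^{\tilde\alpha_1 N},q^{\tilde\alpha_2 N})$, $\Psi(p^{\beta_1 N},q^{\beta_2 N})$, $\tilde\Psi(p^{\tilde\beta_1 N},q^{\tilde\beta_2 N})$ are group-like up to the multiplicative factors $\alpha,\tilde\alpha,\beta,\tilde\beta$; coassociativity on $A$ and $A^\dag$ then follows by a direct substitution of (\ref{2RHop1}) and (\ref{2RHop2}) into either side of (\ref{2Hopf1}), while on $N$ it is trivial since $\mathbf{\Delta}(N)$ is primitive up to the constant shift $\tau\mathbf{1}\otimes\mathbf{1}$.

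The substantial part is showing that $\mathbf{\Delta}$ is an algebra homomorphism, i.e.\ that it preserves the three defining relations. The two linear relations $[N,A]=-A$ and $[N,A^\dag]=A^\dag$ reduce, after expanding $\mathbf{\Delta}(N)=N\otimes\mathbf{1}+\mathbf{1}\otimes N+\tau\mathbf{1}\otimes\mathbf{1}$, to the elementary identities $[N,\Psi(p^{\alpha_1 N},q^{\alpha_2 N})]=0$ and the grading property $NA=A(N-1)$; both are immediate. The hard step is the deformed bracket $[A,A^\dag]_\gamma=AA^\dag+\gamma A^\dag A$: one computes
\begin{eqnarray*}
\mathbf{\Delta}(A)\mathbf{\Delta}(A^\dag)+\gamma\mathbf{\Delta}(A^\dag)\mathbf{\Delta}(A)
\end{eqnarray*}
by expanding each factor, grouping the four types of terms ($AA^\dag\otimes\bullet$, $A^\dag A\otimes\bullet$, $\bullet\otimes AA^\dag$, $\bullet\otimes A^\dag A$), and using the weight-shifting identities $\Psi(p^{\alpha_1 N},q^{\alpha_2 N})A^\dag=A^\dag\Psi(p^{\alpha_1(N+1)},q^{\alpha_2(N+1)})$ and its analogues. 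Conditions (\ref{2CondHop12}) and (\ref{2CondHop13}) are designed precisely so that the two mixed cross terms $A\otimes\tilde\Psi\cdot\Psi\otimes A^\dag$ and $A^\dag\otimes\tilde\Psi\cdot\Psi\otimes A$ combine into the right multiple of $\mathbf{\Delta}(AA^\dag+\gamma A^\dag A)$; this is where the parameter $\gamma$ is consumed, and it is the main obstacle of the proof.

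Finally I would verify the antipode law (\ref{2Hopf3}) on each generator. For $N$ and $\mathbf{1}$ it is direct. For $A$, apply $m(\mbox{id}\otimes\mathbf{S})$ to (\ref{2RHop1}), pass $\mathbf{S}$ through the $\Psi$ factor via $\mathbf{S}(N)=-N-2\tau\mathbf{1}$, and use (\ref{2CondHop8}) and (\ref{2CondHop9}) to collapse the sum to $0=\mathbf{\epsilon}(A)\mathbf{1}$; the mirror calculation on the left uses the same conditions. The analogous identities (\ref{2CondHop10}) and (\ref{2CondHop11}) handle $A^\dag$. That $\mathbf{S}$ is an antihomomorphism on the three defining relations is then a consistency check in the same spirit as step (a), automatic once (\ref{2CondHop8})--(\ref{2CondHop11}) and (\ref{2CondHop3}) are in force. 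Once all generator-level checks pass, the standard extension argument for bialgebras (applying the homomorphism/antihomomorphism property to monomials) finishes the proof.
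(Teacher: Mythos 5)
Your proposal is correct and follows essentially the same route as the paper: verify the counit, coassociativity, homomorphism and antipode axioms generator by generator, invoking conditions (\ref{2CondHop1})--(\ref{2CondHop11}) exactly where the paper does, and reserving (\ref{2CondHop12})--(\ref{2CondHop13}) together with $\alpha\beta=\tilde\alpha\tilde\beta=1$ for the compatibility of $\mathbf{\Delta}$ with the deformed bracket, before extending multiplicatively. The only discrepancy is the sign in front of $\gamma$ in the bracket (you follow the theorem statement's $AA^\dag+\gamma A^\dag A$, while the paper's proof computes with $AA^\dag-\gamma A^\dag A$), which is an inconsistency internal to the paper rather than a flaw in your argument.
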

\begin{flushleft}{\bf Proof:} Notice first that
\end{flushleft}
\begin{eqnarray}
&& A\theta^{\lambda N}= \theta^{\lambda(N+1)}A,\quad \theta^{\lambda(N)}A=A\theta^{\lambda(N-1)},\\
&&A^\dag\theta^{\lambda(N)}=A^\dag\theta^{\lambda(N-1)},\quad \theta^{\lambda(N)}A^\dag= A^\dag\theta^{\lambda(N+1)},\nonumber
\end{eqnarray}
where $\theta= p,\;q$, so that, for $\lambda = \alpha$,$\beta$ and $\tilde\lambda = \tilde\alpha$, $\tilde\beta.$
\begin{eqnarray}
&&A\Psi(p^{\lambda_1 N},q^{\lambda_2 N})= \Psi(p^{\lambda_1 (N+1)},q^{\lambda_2 (N+1)})A,\cr&&
 A\tilde\Psi(p^{\tilde\lambda_1 N},q^{\tilde\lambda_2 N})= \tilde\Psi(p^{\tilde\lambda_1 (N+1)},q^{\tilde\lambda_2 (N+1)})A,\cr&&
A^\dag\Psi(p^{\lambda_1 N},q^{\lambda_2 N})=\Psi(p^{\lambda_1 (N-1)},q^{\lambda_2 (N-1)})A^\dag,\\&&
A^\dag\tilde\Psi(p^{\tilde\lambda_1 N},q^{\tilde\lambda_2 N})=\tilde\Psi(p^{\tilde\lambda_1 (N-1)},q^{\tilde\lambda_2 (N-1)})A^\dag.\nonumber
\end{eqnarray}
Let us now prove that the above definitions of coproduct, counit and antipode satisfy the properties (\ref{2Hopf1})-(\ref{2Hopf3}) for $\Omega\in\{A, A^\dag, N, \mathbf{1}\}$. Indeed,
\\
$\maltese$ For $\Omega=N$ and using (\ref{2RHop3}) and (\ref{2RHop4}), we have
\begin{eqnarray*}
(\mbox{id}\otimes\mathbf{\Delta})\mathbf{\Delta}(N)&=& (\mbox{id}\otimes\mathbf{\Delta})(N\otimes \mathbf{1} + \mathbf{1}\otimes N+ \tau \mathbf{1}\otimes\mathbf{1})\\
&=&N\otimes\mathbf{\Delta(1)}+\mathbf{1}\otimes\mathbf{\Delta}(N)+\tau\mathbf{1}\otimes\mathbf{\Delta(1)}\\
&=&N\otimes \mathbf{1}\otimes \mathbf{1}+ \mathbf{1}\otimes N\otimes \mathbf{1}+\mathbf{1}\otimes\mathbf{1}\otimes N+2\tau \mathbf{1}\otimes \mathbf{1}\otimes \mathbf{1}\\
&=&\mathbf{\Delta}(N)\otimes\mathbf{1}+\mathbf{\Delta(1)}\otimes N+\tau\mathbf{\Delta(1)}\otimes\mathbf{1}\\
&=&(\mathbf{\Delta}\otimes\mbox{id})(N\otimes\mathbf{1}+\mathbf{1}\otimes N+\tau\mathbf{1}\otimes\mathbf{1})\\
&=&(\mathbf{\Delta}\otimes\mbox{id})\mathbf{\Delta}(N).
\end{eqnarray*}
So, (\ref{2Hopf1}) is satisfied. Also,
\begin{eqnarray*}
(\mbox{id}\otimes\mathbf{\epsilon})\mathbf{\Delta}(N)&=& (\mbox{id}\otimes\mathbf{\epsilon})(N\otimes \mathbf{1} +
\mathbf{1}\otimes N+ \tau \mathbf{1}\otimes\mathbf{1})
\cr&=&N\otimes\mathbf{\epsilon(1)}+\mathbf{1}\otimes\mathbf{\epsilon}(N)+\tau\mathbf{1}\otimes\mathbf{\epsilon(1)}\\
&=&N\otimes 1-\tau\mathbf{1}\otimes 1 +\tau\mathbf{1}\otimes 1
= N
\end{eqnarray*}
and
\begin{eqnarray*}
(\mathbf{\epsilon}\otimes\mbox{id})\mathbf{\Delta}(N)
&=&(\mathbf{\epsilon}\otimes\mbox{id})(N\otimes\mathbf{1}+\mathbf{1}\otimes N+\tau\mathbf{1}\otimes\mathbf{1})
\cr&=& \mathbf{\epsilon}(N)\otimes\mathbf{1}+\mathbf{\epsilon(1)}\otimes N+\tau\mathbf{\epsilon(1)}\otimes\mathbf{1}\\
&=&-\tau\otimes \mathbf{1}+ 1\otimes N\otimes \tau\otimes \mathbf{1}=N,
\end{eqnarray*}
where we use (\ref{2RHop3}) and the fact that $\mathbf{\epsilon}(N)=-\tau$. Hence, $N$ satisfies (\ref{2Hopf2}). Next,
\begin{eqnarray*}
m((\mbox{id}\otimes\mathbf{S})\mathbf{\Delta}(N))&=& m((\mbox{id}\otimes\mathbf{S})(N\otimes \mathbf{1} +
\mathbf{1}\otimes N+ \tau \mathbf{1}\otimes\mathbf{1}))\\
&=&m(N\otimes\mathbf{S(1)}+\mathbf{1}\otimes\mathbf{S}(N)+\tau\mathbf{1}\otimes\mathbf{S(1))}\\
&=&m(N\otimes \mathbf{1}-\mathbf{1}\otimes N - 2\tau\mathbf{1}\otimes\mathbf{1}+\tau\mathbf{1}\otimes\mathbf{1})\\
&=&-\tau m(\mathbf{1}\otimes\mathbf{1})=-\tau\mathbf{1} =\mathbb{\epsilon}(N)\mathbf{1},
\end{eqnarray*}
and similarly
\begin{eqnarray*}
m((\mathbf{S}\otimes\mbox{id})\mathbf{\Delta}(N))
=-\tau m(\mathbf{1}\otimes\mathbf{1})=-\tau\mathbf{1} =\mathbb{\epsilon}(N)\mathbf{1},
\end{eqnarray*}
where we use (\ref{2RHop3}) and the fact that $\mathbf{S}(N)= - N -2\tau\mathbf{1}$. Therefore  $N$ satisfies (\ref{2Hopf3}).

$\maltese$ For $\Omega=A$, we have
\begin{eqnarray*}
(\mbox{id}\otimes\mathbf{\Delta})\mathbf{\Delta}(A)
&=&(\mbox{id}\otimes\mathbf{\Delta})(\alpha A\otimes\Psi(p^{\alpha_1N},q^{\alpha_2N})+\tilde{\alpha}\tilde\Psi(p^{\tilde\alpha_1N},q^{\tilde\alpha_2N})\otimes A)\\
&=&\alpha A\otimes\mathbf{\Delta}(\Psi(p^{\alpha_1N},q^{\alpha_2N}))
+\tilde{\alpha}\tilde\Psi(p^{\tilde\alpha_1N},q^{\tilde\alpha_2N})\otimes\mathbf{\Delta}(A)\\
&=&\alpha A\otimes\mathbf{\Delta}(\Psi(p^{\alpha_1N},q^{\alpha_2N}))
\\&&+ \alpha\tilde{\alpha}\tilde\Psi(p^{\tilde\alpha_1N},q^{\tilde\alpha_2N})\otimes A\otimes\Psi(p^{\alpha_1N},q^{\alpha_2N})
\\&&+ \tilde{\alpha}^2 \tilde\Psi(p^{\tilde\alpha_1N},q^{\tilde\alpha_2N})\otimes\tilde\Psi(p^{\tilde\alpha_1N},q^{\tilde\alpha_2N})\otimes A\\
&=&\alpha^2 A\otimes\Psi(p^{\alpha_1N},q^{\alpha_2N})\otimes\Psi(p^{\alpha_1N},q^{\alpha_2N})\\
&&+\alpha\tilde{\alpha}\tilde\Psi(p^{\tilde\alpha_1N},q^{\tilde\alpha_2N})\otimes A\otimes\Psi(p^{\alpha_1N},q^{\alpha_2N})
\\&&+\tilde{\alpha}\mathbf{\Delta}(\tilde\Psi(p^{\tilde\alpha_1N},q^{\tilde\alpha_2N}))\otimes A\\
&=&\alpha\mathbf{\Delta}(A)\otimes\Psi(p^{\alpha_1N},q^{\alpha_2N})+
\tilde{\alpha}\mathbf{\Delta}(\tilde\Psi(p^{\tilde\alpha_1N},q^{\tilde\alpha_2N}))\otimes A\\
&=&(\mathbf{\Delta}\otimes\mbox{id})(\alpha A\otimes\Psi(p^{\alpha_1N},q^{\alpha_2N})+ \tilde{\alpha}\tilde\Psi(p^{\tilde\alpha_1N},q^{\tilde\alpha_2N})\otimes A)\\
&=&(\mathbf{\Delta}\otimes\mbox{id})\mathbf{\Delta}(A),
\end{eqnarray*}
where we use (\ref{2RHop1}) and the fact that $\alpha$, $\tilde\alpha$, $\alpha_i$ and $\tilde\alpha_i$ ($i=1,2$) satisfy equations
(\ref{2CondHop4}) and (\ref{2CondHop5}). Hence, (\ref{2Hopf1}) holds.

The property (\ref{2Hopf2}) also holds since
\begin{eqnarray*}
(\mathbf{\epsilon}\otimes\mbox{id})\mathbf{\Delta}(A)
&=& (\alpha \mathbf{\epsilon}\otimes\mbox{id})(A\otimes\Psi(p^{\alpha_1N},q^{\alpha_2N}) +\tilde{\alpha}\tilde\Psi(p^{\tilde\alpha_1N},q^{\tilde\alpha_2N})\otimes A)\\
&=& \alpha\mathbf{\epsilon}(A)\otimes\Psi(p^{\alpha_1N},q^{\alpha_2N})+ \tilde{\alpha}\mathbf{\epsilon}(\tilde\Psi(p^{\tilde\alpha_1N},q^{\tilde\alpha_2N}))\otimes A\\
&=& \tilde{\alpha}\tilde\Psi(p^{-\tilde\alpha_1\tau},q^{-\tilde\alpha_2\tau})A = A=\alpha\Psi(p^{-\alpha_1\tau},q^{-\alpha_2\tau})A\\
&=& \alpha A\otimes\mathbf{\epsilon}(\Psi(p^{\alpha_1N},q^{\alpha_2N}))+
 \tilde{\alpha}\tilde\Psi(p^{\tilde\alpha_1N},q^{\tilde\alpha_2N})\otimes\mathbf{\epsilon}(A)\\
&=&(\mbox{id}\otimes\mathbf{\epsilon})(\alpha A\otimes\Psi(p^{\alpha_1N},q^{\alpha_2N})
+\tilde{\alpha}\tilde\Psi(p^{\tilde\alpha_1N},q^{\tilde\alpha_2N})\otimes A)\\
&=&(\mbox{id}\otimes\mathbf{\epsilon})\mathbf{\Delta}(A),
\end{eqnarray*}
where the use of (\ref{2RHop1}), (\ref{2RHop5}) and (\ref{2CondHop1}) has been made.

$A$ satisfies also property (\ref{2Hopf3}) since
\begin{eqnarray*}
&&m((\mathbf{S}\otimes\mbox{id})\mathbf{\Delta}(A))
=m((\mathbf{S}\otimes\mbox{id})(\alpha A\otimes\Psi(p^{\alpha_1N},q^{\alpha_2N})
\cr&&\quad\quad\quad+ \tilde{\alpha}\tilde\Psi(p^{\tilde\alpha_1N},q^{\tilde\alpha_2N})\otimes A)\\
&&\quad=m(\alpha\mathbf{S}(A)\otimes\Psi(p^{\alpha_1N},q^{\alpha_2N})+ \tilde{\alpha}\mathbf{S}(\tilde\Psi(p^{\tilde\alpha_1N},q^{\tilde\alpha_2N}))\otimes A)\\
&&\quad= -s_1\alpha A\Psi(p^{\alpha_1N},q^{\alpha_2N})+\tilde{\alpha}\mathbf{S}(\tilde\Psi(p^{\tilde\alpha_1N},q^{\tilde\alpha_2N}))A\\
&&\quad=\left[-s_1 \alpha\Psi(p^{\alpha_1(N+1)},q^{\alpha_2(N+1)})+ \tilde{\alpha}\tilde\Psi(p^{-\tilde\alpha_1(N+2\tau)},q^{-\tilde\alpha_2(N+2\tau)})\right]A\\
&&\quad=0.A=\mathbf{\epsilon}(A)\mathbf{1}= A.0
\cr&&\quad=A\left[\alpha\Psi(p^{-\alpha_1(N+2\tau)},q^{-\alpha_2(N+2\tau)}) -s_1\tilde{\alpha}\tilde\Psi(p^{\tilde\alpha_1(N-1)},q^{\tilde\alpha_2(N-1)})\right]
\cr&&\quad=\alpha A\mathbf{S}(\Psi(p^{\alpha_1N},q^{\alpha_2N}))
-s_1\tilde{\alpha}\tilde\Psi(p^{\tilde\alpha_1N},q^{\tilde\alpha_2N})A\\
&&\quad= m(\alpha A\otimes\mathbf{S}(\Psi(p^{\alpha_1N},q^{\alpha_2N}))+ \tilde{a}\tilde\Psi(p^{\tilde\alpha_1N},q^{\tilde\alpha_2N})\otimes\mathbf{S}(A))\\
&&\quad=m((\mbox{id}\otimes\mathbf{S})(\alpha A\otimes\Psi(p^{\alpha_1N},q^{\alpha_2N})+ \tilde{\alpha}\tilde\Psi(p^{\tilde\alpha_1N},q^{\tilde\alpha_2N})\otimes A))\\
&&\quad=m((\mbox{id}\otimes\mathbf{S})\mathbf{\Delta}(A)),
\end{eqnarray*}
where we  use (\ref{2RHop1}), (\ref{2RHop6}) and the fact that $s_1$, $\alpha$, $\tilde\alpha$, $\alpha_i$ and $\tilde\alpha_i$ ($i=1,2$)
 satisfy equations (\ref{2CondHop8}) and (\ref{2CondHop9}).

$\maltese$ For $\Omega=A^\dag$, one can perform the same computations and use (\ref{2RHop2}), (\ref{2CondHop2}), (\ref{2RHop5}), (\ref{2RHop6}),
(\ref{2CondHop6}), (\ref{2CondHop7}), (\ref{2CondHop10}) and (\ref{2CondHop11}) to prove that (\ref{2Hopf1})-(\ref{2Hopf3}) also hold.

$\maltese$ Computing $\mathbf{\Delta}(A)\mathbf{\Delta}(A^\dag)$ and $\mathbf{\Delta}(A^\dag)\mathbf{\Delta}(A)$ we obtain:
\begin{eqnarray*}
 \mathbf{\Delta}(A)\mathbf{\Delta}(A^\dag)&=& (\alpha A\otimes\Psi(p^{\alpha_1N},q^{\alpha_2N})+\tilde{\alpha}\tilde\Psi(p^{\tilde\alpha_1N},q^{\tilde\alpha_2N})\otimes A)\\
&&\quad\times(\beta A^\dag\otimes\Psi(p^{\beta_1N},q^{\beta_2N})
+\tilde{\beta}\tilde\Psi(p^{\tilde\beta_1N},q^{\tilde\beta_2N})\otimes A^\dag)\\
&=& \alpha\beta AA^\dag\otimes\Psi(p^{\alpha_1N},q^{\alpha_2N})\Psi(p^{\beta_1N},q^{\beta_2N})
\cr&&\quad+\alpha\tilde{\beta}A\tilde\Psi(p^{\tilde\beta_1N},q^{\tilde\beta_2N})\otimes\Psi(p^{\alpha_1N},q^{\alpha_2N}) A^\dag\\
&&\quad+\tilde{\alpha}\beta\tilde\Psi(p^{\tilde\alpha_1N},q^{\tilde\alpha_2N})A^\dag\otimes A\Psi(p^{\beta_1N},q^{\beta_2N})
\cr&&\quad+\tilde{\alpha}\tilde{\beta}\tilde\Psi(p^{\tilde\alpha_1N},q^{\tilde\alpha_2N})\tilde\Psi(p^{\tilde\beta_1N},q^{\tilde\beta_2N})\otimes AA^\dag
\end{eqnarray*}
and
\begin{eqnarray*}
\mathbf{\Delta}(A^\dag)\mathbf{\Delta}(A)&=&(\beta A^\dag\otimes\Psi(p^{\beta_1N},q^{\beta_2N})+\tilde{\beta}\tilde\Psi(p^{\tilde\beta_1N},q^{\tilde\beta_2N})\otimes A^\dag)\\
&&\quad\times(\alpha A\otimes\Psi(p^{\alpha_1N},q^{\alpha_2N})+\tilde{\alpha}\tilde\Psi(p^{\tilde\alpha_1N},q^{\tilde\alpha_2N})\otimes A)\\
&=& \alpha\beta A^\dag A\otimes\Psi(p^{\beta_1N},q^{\beta_2N})\Psi(p^{\alpha_1N},q^{\alpha_2N})
\cr&&\quad+
\tilde{\alpha}\beta A^\dag\tilde\Psi(p^{\tilde\alpha_1N},q^{\tilde\alpha_2N})\otimes\Psi(p^{\beta_1N},q^{\beta_2N})A\\
&&\quad + \alpha\tilde{\beta}\tilde\Psi(p^{\tilde\beta_1N},q^{\tilde\beta_2N})A\otimes A^\dag\Psi(p^{\alpha_1N},q^{\alpha_2N})
\cr&&\quad+\tilde{\alpha}\tilde{\beta}\tilde\Psi(p^{\tilde\beta_1N},q^{\tilde\beta_2N})\tilde\Psi(p^{\tilde\alpha_1N},q^{\tilde\alpha_2N})\otimes A^\dag A
\end{eqnarray*}
respectively.
Thus,
\begin{eqnarray*}
\mathbf{\Delta}(A)\mathbf{\Delta}(A^\dag)&-&\gamma\mathbf{\Delta}(A^\dag)\mathbf{\Delta}(A)
\cr&&\quad=\alpha\beta(AA^\dag-\gamma A^\dag A)\otimes\Psi(p^{\alpha_1N},q^{\alpha_2N})\Psi(p^{\beta_1N},q^{\beta_2N})\\
&&\qquad+\tilde{\alpha}\tilde{\beta}\tilde\Psi(p^{\tilde\alpha_1N},q^{\tilde\alpha_2N})\tilde\Psi(p^{\tilde\beta_1N},q^{\tilde\beta_2N})\otimes (AA^\dag-\gamma A^\dag A)
\cr
&&\qquad +\alpha\tilde{\beta}(A\tilde\Psi(p^{\tilde\beta_1N},q^{\tilde\beta_2N})\otimes\Psi(p^{\alpha_1N},q^{\alpha_2N}) A^\dag
\cr&&\qquad-\gamma\tilde\Psi(p^{\tilde\beta_1N},q^{\tilde\beta_2N})A\otimes A^\dag\Psi(p^{\alpha_1N},q^{\alpha_2N}))\\
&&\qquad + \tilde{\alpha}\beta(\tilde\Psi(p^{\tilde\alpha_1N},q^{\tilde\alpha_2N})A^\dag\otimes A\Psi(p^{\beta_1N},q^{\beta_2N})
\cr&&\qquad-\gamma A^\dag\tilde\Psi(p^{\tilde\alpha_1N},q^{\tilde\alpha_2N})\otimes\Psi(p^{\beta_1N},q^{\beta_2N})A).
\end{eqnarray*}
Therefore, $[A,\;A^\dag]_\gamma= AA^\dag-\gamma A^\dag A$  implies
\begin{eqnarray}
\mathbf{\Delta}([A,\;A^\dag]_\gamma)&=&[A,\;A^\dag]_\gamma\otimes\Psi(p^{\alpha_1N},q^{\alpha_2N})\Psi(p^{\beta_1N},q^{\beta_2N})
\\&&+\tilde\Psi(p^{\tilde\alpha_1N},q^{\tilde\alpha_2N})\tilde\Psi(p^{\tilde\beta_1N},q^{\tilde\beta_2N})\otimes[A,\;A^\dag]_\gamma\nonumber
\end{eqnarray}
provided that
\begin{eqnarray*}
A\tilde\Psi(p^{\tilde\beta_1N},q^{\tilde\beta_2N})\otimes\Psi(p^{\alpha_1N},q^{\alpha_2N}) A^\dag
=\gamma\tilde\Psi(p^{\tilde\beta_1N},q^{\tilde\beta_2N})A\otimes A^\dag\Psi(p^{\alpha_1N},q^{\alpha_2N})
\end{eqnarray*}
\begin{eqnarray*}
\tilde\Psi(p^{\tilde\alpha_1N},q^{\tilde\alpha_2N})A^\dag\otimes A\Psi(p^{\beta_1N},q^{\beta_2N})
=\gamma A^\dag\tilde\Psi(p^{\tilde\alpha_1N},q^{\tilde\alpha_2N})\otimes\Psi(p^{\beta_1N},q^{\beta_2N})A,
\end{eqnarray*}
\begin{eqnarray*}
\alpha\beta=1,\quad\mbox{and}\quad \tilde{\alpha}\tilde{\beta}=
\end{eqnarray*}
or
\begin{eqnarray*}\label{2Hopeq5}
&&\tilde\Psi(p^{\tilde\beta_1N},q^{\tilde\beta_2N})\otimes\Psi(p^{\alpha_1N},q^{\alpha_2N})
=\gamma\tilde\Psi(p^{\tilde\beta_1(N-1)},q^{\tilde\beta_2(N-1)})\otimes \Psi(p^{\alpha_1(N-1)},q^{\alpha_2(N-1)}),
\cr
&&\tilde\Psi(p^{\tilde\alpha_1(N+1)},q^{\tilde\alpha_2(N+1)})\otimes\Psi(p^{\beta_1(N+1)},q^{\beta_2(N+1)})
=\gamma \tilde\Psi(p^{\tilde\alpha_1N},q^{\tilde\alpha_2N})\otimes\Psi(p^{\beta_1N},q^{\beta_2N}),
\end{eqnarray*}
\begin{eqnarray*}
\alpha\beta=1,\quad\mbox{and}\quad \tilde{\alpha}\tilde{\beta}=1,
\end{eqnarray*}
which are  (\ref{2CondHop12}), (\ref{2CondHop13}) and (\ref{2CondHop2}),  respectively.
\hfill$\Box$

\subsection{Relevant particular cases}\label{2section5}
Let us now apply the above general formalism to particular deformed algebras, well spred in the literature.

\subsubsection{Jagannathan-Srinivasa deformation}\label{2janasec}
\begin{itemize}
\item[{\bf A.}] Taking ${\cal R}(x,y) = \frac{x-y}{p-q}$, we obtain the Jagannathan-Srinivasa
$(p,q)$- factors and $(p,q)$-factorials
\begin{eqnarray*}
 [n]_{p,q}=\frac{p^n-q^n}{p-q},
\end{eqnarray*}
and
\begin{eqnarray}
[n]!_{p,q}= \left\{\begin{array}{lr} 1 \quad \mbox{for   } \quad n=0 \quad \\
\frac{((p,q);(p,q))_n}{(p-q)^n} \quad \mbox{for } \quad n\geq
1, \quad \end{array} \right.
\end{eqnarray}
respectively.

Referring the readers to \cite{Jagannathan&Rao} for details on  $(p,q)$-calculus, let us restrict the present description to  some new  relevant properties.
\begin{proposition}\label{2PropJan1}
If $n$ and $m$ are nonnegative integers, then
\begin{eqnarray}
[n]_{p,q}&=& \sum_{k=0}^{n-1}p^{n-1-k}q^k,
\cr[n+m]_{p,q}&=& q^m[n]_{p,q}+p^n[m]_{p,q}\cr&=& p^m[n]_{p,q}+q^n[m]_{p,q},
\cr [-m]_{p,q}&=& -q^{-m}p^{-m}[m]_{p,q},
\\\; [n-m]_{p,q}&=& q^{-m}[n]_{p,q}-q^{-m}p^{n-m}[m]_{p,q}\cr&=&p^{-m}[n]_{p,q}-q^{n-m}p^{-m}[m]_{p,q},\quad
\cr [n]_{p,q}&=& [2]_{p,q}[n-1]_{p,q}-pq[n-2]_{p,q}.\nonumber
\end{eqnarray}
\end{proposition}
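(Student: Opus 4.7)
My plan is to prove all six identities by direct algebraic manipulation from the definition
\[
[n]_{p,q} = \frac{p^n - q^n}{p-q},
\]
with the second and third identities serving as lemmas that make the remaining ones nearly automatic.

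First I would verify the geometric-sum formula $[n]_{p,q} = \sum_{k=0}^{n-1} p^{n-1-k} q^k$ by recalling the classical factorization $p^n - q^n = (p-q)\sum_{k=0}^{n-1} p^{n-1-k} q^k$ and dividing by $p-q$; this reduces to  a one-line computation (or a short induction on $n$). Next, for the addition formula $[n+m]_{p,q} = q^m[n]_{p,q} + p^n[m]_{p,q}$, I would simply substitute the definition on the right-hand side and observe that the cross terms $q^m p^n$ and $p^n q^m$ cancel, leaving $(p^{n+m} - q^{n+m})/(p-q)$. The symmetric form $p^m[n]_{p,q} + q^n[m]_{p,q}$ follows by the same computation with the roles of the two mixed terms swapped, or by noting that the definition is symmetric under $(p,q,n) \leftrightarrow (q,p,n)$ and $(n,m) \leftrightarrow (m,n)$.

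For the negative-index identity $[-m]_{p,q} = -q^{-m}p^{-m}[m]_{p,q}$, I would multiply numerator and denominator of $(p^{-m} - q^{-m})/(p-q)$ by $(pq)^m$, giving $(q^m - p^m)/((p-q)(pq)^m) = -(pq)^{-m}[m]_{p,q}$. Then $[n-m]_{p,q}$ in both of its forms is obtained immediately by combining this identity with the addition formula applied to $[n + (-m)]_{p,q}$: the first form uses $q^{-m}[n]_{p,q} + p^n[-m]_{p,q}$ and the second uses $p^{-m}[n]_{p,q} + q^n[-m]_{p,q}$, after which the expression for $[-m]_{p,q}$ is plugged in.

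Finally, for the three-term recursion $[n]_{p,q} = [2]_{p,q}[n-1]_{p,q} - pq\,[n-2]_{p,q}$, I would note that $[2]_{p,q} = p+q$ and invoke the well-known factorization $p^n - q^n = (p+q)(p^{n-1} - q^{n-1}) - pq(p^{n-2} - q^{n-2})$ (easily verified by expanding the right-hand side), then divide throughout by $p-q$. I do not foresee any serious obstacle in this proposition since every claim is a polynomial identity in $p$ and $q$ after clearing the common denominator $p-q$; the only care needed is in handling the negative-index identities, where one must assume $p,q \neq 0$ so that $p^{-m}$ and $q^{-m}$ make sense, which is already guaranteed by the standing hypothesis $0 < q < p \leq 1$.
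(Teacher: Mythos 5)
Your proof is correct, and it matches the paper's method: the paper states this proposition without proof (referring to Jagannathan--Rao for the $(p,q)$-calculus), but its proof of the analogous Proposition~\ref{2PropoQes1} for the Quesne deformation proceeds exactly as you do, by verifying the corresponding numerator identities such as $p^{n+m}-q^{n+m}=q^{m}(p^{n}-q^{n})+p^{n}(p^{m}-q^{m})$ and then combining the negative-index and addition formulas to get the subtraction formula. All six identities check out, including the legitimate substitution $m\mapsto-m$ in the addition formula, so nothing further is needed.
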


\begin{proposition}\label{2PropJan2}
The $(p,q)-$binomial coefficients
\begin{equation}
 \left[\begin{array}{c} n \\ k \end{array}\right]_{p,q}=
\frac{((p,q);(p,q))_n}{((p,q);(p,q))_k((p,q);(p,q))_{n-k}},\quad 0\leq k\leq n;\;\; n\in\mathbb{N},
\end{equation}
where $((p,q);(p,q))_m = (p-q)(p^2-q^2)\cdots(p^m-q^m)$, $m\in\mathbb{N}$ satisfy the following identities
\begin{eqnarray}
 \left[\begin{array}{c} n \\ k \end{array}\right]_{p,q}&=& \left[\begin{array}{c} n \\ n-k \end{array}\right]_{p,q}\\&=&
p^{k(n-k)}\left[\begin{array}{c} n \\ k \end{array}\right]_{q/p}=
p^{k(n-k)}\left[\begin{array}{c} n \\ n-k \end{array}\right]_{q/p},\label{2Janeq1}
\cr \left[\begin{array}{c} n+1 \\ k \end{array}\right]_{p,q} &=& p^k\left[\begin{array}{c} n \\ k \end{array}\right]_{p,q}
+q^{n+1-k}\left[\begin{array}{c} n \\ k-1 \end{array}\right]_{p,q},\label{2Janeq2}
\\ \left[\begin{array}{c} n+1 \\ k \end{array}\right]_{p,q} &=&
p^{k}\left[\begin{array}{c} n \\ k \end{array}\right]_{p,q} +
p^{n+1-k}\left[\begin{array}{c} n \\ k-1 \end{array}\right]_{p,q}\\&&\quad -(p^n-q^n)
\left[\begin{array}{c} n-1 \\ k-1 \end{array}\right]_{p,q}\quad\label{2Janeq3}\nonumber
\end{eqnarray}
with
\begin{eqnarray*}
\left[\begin{array}{c} n \\ k \end{array}\right]_{q/p}= \frac{(q/p; q/p)_n}{(q/p; q/p)_k(q/p; q/p)_{n-k}},
\end{eqnarray*}
where $(q/p; q/p)_n = (1-q/p)(1-q^2/p^2)\cdots (1-q^n/p^n)$
and the $(p,q)$-shifted factorial
\begin{eqnarray*}
((a,b);(p,q))_n &\equiv& (a-b)(ap-bq)\cdots(ap^{n-1}-bq^{n-1})\cr
 &=& \sum_{k=0}^{n}\left[\begin{array}{c} n \\ k \end{array}\right]_{p,q}(-1)^kp^{(n-k)(n-k-1)/2}
q^{k(k-1)/2}a^{n-k}b^k.
\end{eqnarray*}
\end{proposition}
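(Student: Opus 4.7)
The plan is to establish each identity by a direct algebraic reduction to the definition of the $(p,q)$-factorial $((p,q);(p,q))_n$, handling the five assertions in order so that each builds on a previous one. First, the symmetry $\left[\begin{array}{c} n \\ k\end{array}\right]_{p,q}=\left[\begin{array}{c} n \\ n-k\end{array}\right]_{p,q}$ is immediate from the definition, since the denominator is symmetric in $k \leftrightarrow n-k$. For the equality with $\left[\begin{array}{c} n \\ k\end{array}\right]_{q/p}$ in (\ref{2Janeq1}), I would factor $p^m-q^m=p^m(1-(q/p)^m)$ inside every Pochhammer symbol to get $((p,q);(p,q))_m = p^{m(m+1)/2}(q/p;q/p)_m$, and then the net power of $p$ in the ratio is
\begin{equation*}
\tfrac{n(n+1)}{2}-\tfrac{k(k+1)}{2}-\tfrac{(n-k)(n-k+1)}{2}=k(n-k),
\end{equation*}
which gives the stated formula.

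For the first Pascal-type relation (\ref{2Janeq2}), I would combine the right-hand side over the common denominator $((p,q);(p,q))_k\,((p,q);(p,q))_{n+1-k}$ and observe the crucial cancellation
\begin{equation*}
p^k\bigl(p^{n+1-k}-q^{n+1-k}\bigr)+q^{n+1-k}\bigl(p^k-q^k\bigr)=p^{n+1}-q^{n+1},
\end{equation*}
which converts $((p,q);(p,q))_n$ into $((p,q);(p,q))_{n+1}$ and produces $\left[\begin{array}{c} n+1 \\ k\end{array}\right]_{p,q}$. To obtain the second Pascal-type relation (\ref{2Janeq3}), I would rewrite $q^{n+1-k}=p^{n+1-k}-(p^{n+1-k}-q^{n+1-k})$ in (\ref{2Janeq2}) and use the two telescoping identities $((p,q);(p,q))_{n-k+1}=(p^{n+1-k}-q^{n+1-k})((p,q);(p,q))_{n-k}$ and $((p,q);(p,q))_n=(p^n-q^n)((p,q);(p,q))_{n-1}$, which together collapse $(p^{n+1-k}-q^{n+1-k})\left[\begin{array}{c} n \\ k-1\end{array}\right]_{p,q}$ into $(p^n-q^n)\left[\begin{array}{c} n-1 \\ k-1\end{array}\right]_{p,q}$.

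Finally, the expansion of the $(p,q)$-shifted factorial is proved by induction on $n$. The base cases $n=0,1$ are trivial. Assuming the formula for $n$, I multiply both sides by $(ap^n-bq^n)$, split into two sums, reindex the $b$-sum by $k\mapsto k-1$, and collect powers. A careful bookkeeping shows that, after factoring out $(-1)^k p^{(n+1-k)(n-k)/2}q^{k(k-1)/2}a^{n+1-k}b^k$, the coefficient reduces to $p^k\left[\begin{array}{c} n \\ k\end{array}\right]_{p,q}+q^{n+1-k}\left[\begin{array}{c} n \\ k-1\end{array}\right]_{p,q}$, which is precisely $\left[\begin{array}{c} n+1 \\ k\end{array}\right]_{p,q}$ by (\ref{2Janeq2}). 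The main obstacle is nothing deep but rather this bookkeeping of the $p$- and $q$-exponents in the inductive step; the key observation that controls it is the identity $(n+1-k)(n-k)/2=(n-k)(n-k-1)/2+(n-k)$, which matches the extra factor of $p^n$ introduced on the left with the target exponent on the right, and similarly for $q^n$.
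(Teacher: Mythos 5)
Your proof is correct and complete. Note that the paper itself does not prove this proposition: it states the identities and defers to the Jagannathan--Srinivasa reference for the $(p,q)$-calculus background, so there is no in-paper argument to compare against. Your route is the natural one and every step checks out: the exponent count $\tfrac{n(n+1)}{2}-\tfrac{k(k+1)}{2}-\tfrac{(n-k)(n-k+1)}{2}=k(n-k)$ is right; the cancellation $p^k(p^{n+1-k}-q^{n+1-k})+q^{n+1-k}(p^k-q^k)=p^{n+1}-q^{n+1}$ gives the first Pascal rule; the substitution $q^{n+1-k}=p^{n+1-k}-(p^{n+1-k}-q^{n+1-k})$ together with the two telescoping factorizations does collapse the correction term to $(p^n-q^n)\left[\begin{array}{c} n-1 \\ k-1\end{array}\right]_{p,q}$; and in the induction for the shifted factorial the exponent identities $(n+1-k)(n-k)/2=(n-k)(n-k-1)/2+(n-k)$ and $k(k-1)/2=(k-1)(k-2)/2+(k-1)$ reduce the coefficient of $(-1)^k a^{n+1-k}b^k$ exactly to $p^k\left[\begin{array}{c} n \\ k\end{array}\right]_{p,q}+q^{n+1-k}\left[\begin{array}{c} n \\ k-1\end{array}\right]_{p,q}$, which is the first Pascal rule again. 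It is worth remarking that your inductive step for the shifted-factorial expansion is structurally the same computation the paper does carry out explicitly for the closely related noncommutative binomial theorem (Proposition on $(ax+by)^n$), so your proof is consistent in style with the surrounding material.
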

\begin{proposition}\label{2PropJan3}
If  the quantities $x$, $y$, $a$ and $b$ are  such that $xy= qyx$, $ba=pab$, $[i,\;j]=0$ for  $i\in\{a, b\}$ and $j\in\{x, y\}$,  and, moreover, $p$ and $q$ commute with each element of the set $\{a, b, x, y\}$,
then
\begin{eqnarray}
 (ax+by)^n= \sum_{k=0}^{n}\left[\begin{array}{c} n \\ k \end{array}\right]_{p,q}a^{n-k}b^ky^kx^{n-k}.
\end{eqnarray}
\end{proposition}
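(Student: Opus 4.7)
The plan is to obtain this as an immediate corollary of the general $\mathcal{R}(p,q)$-binomial formula (\ref{2genbin}) already established in Section \ref{2section2}, by verifying that the Jagannathan--Srinivasa setting satisfies its hypotheses. Thus the work reduces to identifying the correct scalars $\Psi_1, \Psi_2$ and checking a Pascal-type recursion for the binomial coefficients.

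First I would specialize: with the meromorphic function $\mathcal{R}(x,y) = (x-y)/(p-q)$ of Subsection \ref{2janasec}, the $\mathcal{R}(p,q)$-factors become the Jagannathan--Srinivasa $[n]_{p,q}$ and the associated $\mathcal{R}(p,q)$-binomial coefficients reduce to the standard $\left[\begin{array}{c} n \\ k \end{array}\right]_{p,q}$. The commutation hypotheses of the proposition dictate $\Psi_1(p,q)=p$ (from $ba=pab$) and $\Psi_2(p,q)=q$ (from $xy=qyx$), both of which are positive, while the cross-commutativity $[i,j]=0$ for $i\in\{a,b\}$, $j\in\{x,y\}$ and the centrality of $p,q$ with respect to $\{a,b,x,y\}$ match exactly the structural assumptions placed on the four quantities in the statement of (\ref{2genbin}).

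Second, I would verify the Pascal-type recursion required by (\ref{2genbin}), namely that $\left[\begin{array}{c} n+1 \\ k \end{array}\right]_{p,q}$ equals $p^k \left[\begin{array}{c} n \\ k \end{array}\right]_{p,q} + q^{n+1-k} \left[\begin{array}{c} n \\ k-1 \end{array}\right]_{p,q}$. This is precisely identity (\ref{2Janeq2}) in Proposition \ref{2PropJan2}, which follows at once from the definition in terms of the symbol $((p,q);(p,q))_n$ and the elementary factorization $p^{n+1}-q^{n+1} = p^k(p^{n+1-k}-q^{n+1-k}) + q^{n+1-k}(p^k-q^k)$. Since (\ref{2janasec}) assembles Proposition \ref{2PropJan2} prior to Proposition \ref{2PropJan3}, this identity is at our disposal.

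With both the commutation hypothesis and the recursion in place, an invocation of (\ref{2genbin}) delivers the claimed identity directly. I do not anticipate a genuine obstacle: the only subtle point is to confirm that the inductive step of (\ref{2genbin}) uses only the centrality of the scalars $\Psi_1(p,q),\Psi_2(p,q)$ relative to $a,b,x,y$ when moving them across products such as $y^k x^{n-k} a$ or $y^k x^{n-k} b$, and that centrality is guaranteed here by assumption. Consequently the argument is essentially bookkeeping, and the full proof by induction on $n$ could alternatively be reproduced verbatim from the proof of (\ref{2genbin}) with $\Psi_1=p$, $\Psi_2=q$ substituted throughout.
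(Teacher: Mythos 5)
Your proposal is correct and, in substance, coincides with the paper's argument: the paper proves Proposition \ref{2PropJan3} by a direct induction on $n$ whose inductive step is exactly the one in the proof of (\ref{2genbin}) with $\Psi_1=p$, $\Psi_2=q$, and it invokes the same key identity (\ref{2Janeq2}) to combine the two sums. Your packaging of this as a corollary of the already-established general $\mathcal{R}(p,q)$-binomial formula, after checking $\Psi_1(p,q)=p>0$, $\Psi_2(p,q)=q>0$ and the Pascal-type recursion, is a legitimate and slightly more economical presentation of the identical argument.
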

The latter result is a generalization of noncommutative form of the $q$-binomial theorem \cite{Gasper&Rahman90}, which
can be obtained setting $a$, $b$ and $p$ equal to $1$, i.e.
\begin{eqnarray}
 (x+y)^n= \sum_{k=0}^{n}\left[\begin{array}{c} n \\ k \end{array}\right]_{q}y^kx^{n-k},
\end{eqnarray}
where $$\left[\begin{array}{c} n \\ k \end{array}\right]_{q}= {(q;q)_n}/{(q;q)_k(q;q)_{n-k}},$$
with $(q;q)_n = (1-q)(1-q^2)\cdots(1-q^n)$.\\
{\bf Proof of Proposition~\ref{2PropJan3}} A proof has been proposed in \cite{Jagannathan&Rao}. Here we provide another one by induction
on $n$. Indeed, the result is true for $n=1$. Suppose it remains valid for all $n\leq m$ and prove that this
is also true for $n=m+1$:
\begin{eqnarray*}
&& (ax+by)^{m+1}= (ax+by)^m(ax+by)
\cr&&\qquad=
\sum_{k=0}^{m}\left[\begin{array}{c} m \\ k \end{array}\right]_{p,q}a^{m-k}b^ky^kx^{m-k}(ax+by)
\cr&&\qquad= \sum_{k=0}^{m}\left[\begin{array}{c} m \\ k \end{array}\right]_{p,q}p^ka^{m+1-k}b^ky^kx^{m+1-k}
\cr&&\qquad+ \sum_{k=0}^{m}\left[\begin{array}{c} m \\ k \end{array}\right]_{p,q}q^{m-k}a^{m-k}b^{k+1}y^{k+1}x^{m-k}
\cr&&\qquad= a^{m+1}x^{m+1}+\sum_{k=1}^{m}\left[\begin{array}{c} m \\ k \end{array}\right]_{p,q}p^ka^{m+1-k}b^ky^kx^{m+1-k}
\cr&&\qquad+\sum_{k=0}^{m-1}\left[\begin{array}{c} m \\ k \end{array}\right]_{p,q}q^{m-k}a^{m-k}b^{k+1}y^{k+1}x^{m-k}
+ b^{m+1}y^{m+1}
\cr&&\qquad= a^{m+1}x^{m+1}+b^{m+1}y^{m+1}
\cr&&\qquad+
\sum_{k=1}^{m}\left(p^k\left[\begin{array}{c} m \\ k \end{array}\right]_{p,q}
q^{m+1-k}\left[\begin{array}{c} m \\ k-1 \end{array}\right]_{p,q}\right)
a^{m+1-k}b^ky^kx^{m+1-k} 
\cr&&\qquad= a^{m+1}x^{m+1}+
\sum_{k=1}^{m}\left[\begin{array}{c} m+1 \\ k \end{array}\right]_{p,q}a^{m+1-k}b^ky^kx^{m+1-k}
\cr&&\qquad\quad+ b^{m+1}y^{m+1},
\cr&&\qquad=\sum_{k=0}^{m+1}\left[\begin{array}{c} m+1 \\ k \end{array}\right]_{p,q}a^{m+1-k}b^ky^kx^{m+1-k}
\end{eqnarray*}
where the use of (\ref{2Janeq2}) has been made. Hence the  result is true for all $n\in\mathbb{N}.$\hfill$\Box$\\
The ${\cal R}(p,q)-$derivative is thus reduced to the $(p,q)-$derivative \cite{Jagannathan&Rao}
\begin{eqnarray}
 \partial_{p,q}= \frac{1}{(p-q)z}(P-Q),
\end{eqnarray}
namely, for $f\in{\cal O}(\mathbb{D}_R),$
\begin{eqnarray}
\partial_{p,q}f(z)= \frac{f(pz)-f(qz)}{z(p-q)}.
\end{eqnarray}
The associated algebra ${\cal A}_{p,q}$, generated by $\{1,\; A,\; A^\dag,\;N\},$
satisfies the relations:
\begin{eqnarray}
&& A\;A^\dag- pA^\dag A= q^N, \qquad A\;A^\dag- qA^\dag A= p^N\\
&&[N,\;A^\dag]= A^\dag\quad\qquad [N,\;A]= -A,
\end{eqnarray}
and its realization on ${\cal O}(\mathbb{D}_R)$, engendered by
$\{1,\; z,\; z\partial_z,\;\partial_{p,q}\}$, satisfies the relations
\begin{eqnarray}\label{2ComuRel1}
&&z\;\partial_{p,q} -p\;\partial_{p,q}\;z = q^{z\partial_z}\quad
z\;\partial_{p,q} -q\;\partial_{p,q}\;z = p^{z\partial_z}\cr
&&[z\partial_z,\;z]= z\qquad\qquad [z\partial_z,\;\partial_{p,q}]= -\partial_{p,q}.
\end{eqnarray}
Therefore, the differential operator $d_{p,q}$ is then given by
\begin{eqnarray}
d_{p,q} = (dz)\frac{1}{(p-q)z}(P-Q)
\end{eqnarray}
with the following properties:
\begin{eqnarray}
&& d_{p,q}1= 0,\qquad d_{p,q}z = (dz),
\qquad d_{p,q}\partial_{p,q} = (dz)\partial_{p,q}^2\\
&& d_{p,q}(z\partial_z)= (dz)(z\partial_z+1)\partial_{p,q}
\qquad\mbox{and}\quad d_{p,q}^2 = 0.\nonumber
\end{eqnarray}
The differential of $f\in{\cal O}(\mathbb{D}_R)$ is then 
\begin{eqnarray}
 d_{p,q}f(z)= (dz)\frac{f(pz)-f(qz)}{(p-q)z}
\end{eqnarray}
affording the Leibniz rule
\begin{eqnarray}
 d_{p,q}(fg)(z)&=&(dz)\frac{f(pz)-f(qz)}{(p-q)z} g(qz) \\&&+ (dz)f(pz)\frac{g(pz)-g(qz)}{(p-q)z}\cr
&=& \{d_{p,q}f(z)\}\cdot g(qz) + f(pz)\cdot d_{p,q}g(z)\nonumber
\end{eqnarray}
or, equivalently,
\begin{eqnarray}
d_{p,q}(fg)(z)&=& (dz)\frac{f(pz)-f(qz)}{(p-q)z} g(pz)\\&& + (dz)f(qz)\frac{g(pz)-g(qz)}{(p-q)z}\cr
&=& \{d_{p,q}f(z)\}\cdot g(pz) + f(qz)\cdot d_{p,q}g(z).\nonumber
\end{eqnarray}
The  $(p,q)-$integration is obtained from (\ref{2integra2}) as follows:
\begin{eqnarray}
 {\cal I}_{p,q}f(z) &=& \frac{p-q}{P-Q}zf(z)
= (p-q)\sum_{\nu=0}^\infty\frac{Q^\nu}{P^{\nu+1}}zf(z)
\\&=& (p-q)z\sum_{\nu=0}^\infty f(zq^\nu/p^{\nu+1})q^\nu/p^{\nu+1}.\nonumber
\end{eqnarray}
Setting $p=1$, one recovers the $q-$derivative and $q-$integral of Jackson \cite{Koekoek}.
\item[{\bf B.}] Tacking ${\cal R}(x,y) = \frac{x-y}{\frac{a}{p}x-\frac{b}{q}y}$,
where $a, b\in\mathbb{C}$ with
$a\neq b$, the ${\cal R}(p,q)$-factors are given by
\begin{eqnarray}
{\cal R}(p^n,q^n)= [n]_{p,q}^{a,b}= \frac{p^n-q^n}{ap^{n-1}-bq^{n-1}},\quad n=0, 1, \cdots.
\end{eqnarray}
The ${\cal R}(p,q)$-factorials become
\begin{eqnarray}
[n]!_{p,q}^{a,b}= \left\{\begin{array}{lr} 1 \quad \mbox{for   } \quad n=0 \quad \\
\frac{((p,q);(p,q))_n}{((a,b);(p,q))_n} \quad \mbox{for } \quad n\geq
1, \quad \end{array} \right.
\end{eqnarray}
The derivative is now given by
\begin{eqnarray}
 \partial_{{\cal R}(p,q)}= \partial_{p,q}\frac{p-q}{P-Q}\frac{P-Q}{\frac{a}{p}P-\frac{b}{q}Q}
= \frac{1}{z}\frac{P-Q}{\frac{a}{p}P-\frac{b}{q}Q},
\end{eqnarray}
so that for $f\in{\cal O}(\mathbb{D}_R)$ we have
\begin{eqnarray}
\qquad\partial_{{\cal R}(p,q)}f(z)&=& \frac{1}{z}\frac{P-Q}{\frac{a}{p}P-\frac{b}{q}Q}f(z)\\
&=&\frac{1}{z}(P-Q)\frac{p}{aP}\sum_{\nu=0}^\infty(bp/aq)^\nu(Q/P)^\nu f(z)\cr
&=&\frac{p}{az}\sum_{\nu=0}^\infty(bp/aq)^\nu\left[(Q/P)^\nu-(Q/P)^{\nu+1}\right]f(z)\cr
&=& \frac{p}{az}
\sum_{\nu=0}^\infty(bp/aq)^\nu\left[f\left((q/p)^\nu z\right)-f\left((q/p)^{\nu+1} z\right)\right].\nonumber
\end{eqnarray}
Moreover,
\begin{eqnarray}
 {\cal I}_{{\cal R}(p,q)}= \frac{\frac{a}{p}P-\frac{b}{q}Q}{P-Q}z.
\end{eqnarray}
Applying this to $f\in{\cal O}(\mathbb{D}_R)$ we obtain
\begin{eqnarray}
\qquad {\cal I}_{{\cal R}(p,q)}f(z)&=& \frac{\frac{a}{p}P-\frac{b}{q}Q}{P-Q}zf(z)\\
&=& \left(\frac{a}{p}P-\frac{b}{q}Q\right)\frac{1}{P}\sum_{\nu=0}^\infty\left(Q/P\right)^\nu zf(z)\cr
&=& \left(\frac{a}{p}-\frac{b}{q}(Q/P)\right)\sum_{\nu=0}^\infty\left(Q/P\right)^\nu zf(z)\cr
&=& \sum_{\nu=0}^\infty\left[(a/p)\left(Q/P\right)^\nu-(b/q)\left(Q/P\right)^{\nu+1}\right] zf(z)
\cr&=& \sum_{\nu=0}^\infty\left[(a/p)(q/p)^\nu zf\left((q/q)^\nu z\right)\right.\cr&&\left.\qquad\qquad-(b/q)(q/p)^{\nu+1}zf\left((q/p)^{\nu+1}\right)\right]\cr
&=& (z/p)\sum_{\nu=0}^\infty(q/p)^\nu\left[af\left((q/q)^\nu z\right)-bf\left((q/p)^{\nu+1}\right)\right].\nonumber
\end{eqnarray}
\end{itemize}

Let us display the Hopf algebra structure of Jagannathan-Srinivasa algebra according to the theorem~\ref{2Hoptheo}.
To this end notice first that the Leibniz rule of the derivative is given by
\begin{eqnarray}
 \partial_{p,q}(fg)(z)= (\partial_{p,q}f(z))p^{z\partial_z}g(z)+(q^{z\partial_z}f(z))\partial{p,q}g(z)
\end{eqnarray}
from which we deduce $\Psi(x,y)= x$ and $\tilde\Psi(x,y)=y$. So, $\alpha_2=0$, $\tilde\alpha_1=0$, $\beta_2=0$ and $\tilde\beta_1=0$. Equations (\ref{2CondHop1})-(\ref{2CondHop3}) yield
\begin{eqnarray}
\alpha=p^{\alpha_1\tau},\quad \tilde{\alpha}=q^{\tilde\alpha_2\tau},\quad \beta=p^{\beta_1\tau},\quad \tilde{\beta}=q^{\tilde\beta_2\tau}\\
\tau(\alpha_1+\beta_1)=0\quad \mbox{and}\quad \tau(\tilde\alpha_2+\tilde\beta_2)=0,
\end{eqnarray}
while equations (\ref{2CondHop8})-(\ref{2CondHop11}) are reduced to
\begin{eqnarray*}
&&s_1p^{\alpha_1(\tau+1)}p^{\alpha_1N}=q^{-\tilde\alpha_2\tau}q^{-\tilde\alpha_2N},\qquad
p^{-\alpha_1\tau}p^{-\alpha_1N}=s_1q^{\tilde\alpha_2(\tau-1)}q^{\tilde\alpha_2N},\\
&&p^{-\beta_1\tau}p^{-\beta_1N}=\tilde{s}_1q^{\tilde\beta_2(\tau+1)}q^{\tilde\beta_2N},\qquad
\tilde{s}_1p^{\beta_1(\tau+1)}p^{-\beta_1N}=q^{3\tilde\beta_2\tau}q^{\tilde\beta_2N}.
\end{eqnarray*}
Therefore,
\begin{eqnarray}
\alpha_1=\tilde\alpha_2= \beta_1=\tilde\beta_2=0,\qquad \alpha=\tilde \alpha=\beta=\tilde \beta=s_1=\tilde s_1=1
\end{eqnarray}
and equations (\ref{2CondHop4})-(\ref{2CondHop7}) are satisfied and (\ref{2CondHop12})-(\ref{2CondHop13}) yield
$\gamma=1$. Thus, the coproduct, the counit and the antipode are given by
\begin{eqnarray}
&&\mathbf{\Delta}(A)= A\otimes1+ 1\otimes A,\\
&&\mathbf{\Delta}(A^\dag)= A^\dagger\otimes1+ 1\otimes A^\dag,\\
&&\mathbf{\Delta}(N)= N\otimes \mathbf{1} + \mathbf{1}\otimes N+ \tau \mathbf{1}\otimes\mathbf{1}, \\
&&\mathbf{\Delta(1)}=\mathbf{1}\otimes\mathbf{1},\\
&&\mathbf{\epsilon}(A)= 0,\quad
\mathbf{\epsilon}(A^\dag)= 0,\quad
\mathbf{\epsilon}(N)= -\tau,\quad
\mathbf{\epsilon(1)}= 1\\
&&\mathbf{S}(A)=- A,\quad
\mathbf{S}(A^\dag)= -A^\dag,\quad
\mathbf{S}(N)= -N -2\tau\mathbf{1},\quad
\mathbf{S(1)}= \mathbf{1}
\end{eqnarray}
respectively, where $\tau$ is a real number, usually set equal to  $0$.

\subsubsection{Chakrabarty and Jagannathan deformation}

The algebra of Chakrabarty and Jagannathan \cite{Chakrabarti&Jagan} is obtained by taking $\displaystyle{\cal R}(x,y)$  $= \frac{1-xy}{(p^{-1}-q)x}$.
Indeed, the ${\cal R}(p,q)$-factors and ${\cal R}(p,q)$-factorials are reduced to $(p^{-1},q)$-factors and
$(p^{-1},q)$-factorials, namely
\begin{eqnarray*}
 [n]_{p^{-1},q}=\frac{p^{-n}-q^n}{p^{-1}-q},
\end{eqnarray*}
and
\begin{eqnarray}
[n]!_{p^{-1},q}= \left\{\begin{array}{lr} 1 \quad \mbox{for   } \quad n=0 \quad \\
\frac{((p^{-1},q);(p^{-1},q))_n}{(p^{-1}-q)^n} \quad \mbox{for } \quad n\geq
1, \quad \end{array} \right.
\end{eqnarray}
respectively.
The properties of this deformation can be readily recovered from the previous  section \ref{2janasec}
by replacing the parameter $p$ by $p^{-1}$.\\
The ${\cal R}(p,q)-$derivative is also reduced to $(p^{-1},q)-$derivative. Indeed,
\begin{eqnarray}
 \partial_{{\cal R}(p,q)} &=& \partial_{p,q}\frac{p-q}{P-Q}\frac{1-PQ}{(p^{-1}-q)P}
\cr&=&\frac{1}{(p^{-1}-q)z}(P^{-1}-Q)=: \partial_{p^{-1},q}.
\end{eqnarray}
Therefore, for $f\in{\cal O}(\mathbb{D}_R)$
\begin{eqnarray}
\partial_{p^{-1},q}f(z)= \frac{f(p^{-1}z)-f(qz)}{z(p^{-1}-q)}
\end{eqnarray}
and the differential of $f\in{\cal O}(\mathbb{D}_R)$ is given by
\begin{eqnarray}
 d_{p^{-1},q}f(z)= (dz)\frac{f(p^{-1}z)-f(qz)}{z(p^{-1}-q)}.
\end{eqnarray}
Computing the Leibniz rule we get
\begin{eqnarray}
 d_{p^{-1},q}(fg)(z)&=&(dz)\frac{f(p^{-1}z)-f(qz)}{z(p^{-1}-q)} g(qz) \\
&&+ (dz)f(p^{-1}z)\frac{g(p^{-1}z)-g(qz)}{z(p^{-1}-q)}\cr
&=& \{d_{p^{-1},q}f(z)\}\cdot g(qz) + f(p^{-1}z)\cdot d_{p^{-1},q}g(z)\nonumber
\end{eqnarray}
or, equivalently,
\begin{eqnarray}
d_{p^{-1},q}(fg)(z)&=&(dz)\frac{f(p^{-1}z)-f(qz)}{z(p^{-1}-q)} g(p^{-1}z)
\\&& + (dz)f(qz)\frac{g(p^{-1}z)-g(qz)}{z(p^{-1}-q)}\cr
&=& \{d_{p^{-1},q}f(z)\}\cdot g(p^{-1}z) + f(qz)\cdot d_{p^{-1},q}g(z).\nonumber
\end{eqnarray}
We obtain from (\ref{2integra2}) the action of the 
$(p^{-1},q)$-integration on $f\in{\cal O}(\mathbb{D}_R)$ as follows:
\begin{eqnarray}
\quad {\cal I}_{p^{-1},q}f(z) &=& \frac{p^{-1}-q}{P^{-1}-Q}zf(z)
= (p^{-1}-q)\sum_{\nu=0}^\infty Q^\nu P^{\nu+1}zf(z)
\\&=& (1-pq)z\sum_{\nu=0}^\infty f(zq^\nu p^{\nu+1})(pq)^\nu.\nonumber
\end{eqnarray}

The same   Hopf algebra structure as that of Jagannathan-Srinivasa is also obtained for Chakrabarty and Jagannathan deformation.

\subsubsection{Generalized  $q$-Quesne deformation}
The generalized Quesne algebra  \cite{Hounkonnou&Ngompe07b, Quesne&al02}  is found by taking ${\cal R}(x,y)= \frac{xy-1}{(q-p^{-1})y}$.
Indeed, the $(p,q)$-Quesne factors and  factorials are given by
\begin{eqnarray*}
 [n]_{p,q}^Q=\frac{p^n-q^{-n}}{q-p^{-1}},
\end{eqnarray*}
and
\begin{eqnarray}
[n]_{p,q}^Q!= \left\{\begin{array}{lr} 1 \quad \mbox{for   } \quad n=0 \quad \\
\frac{((p,q^{-1});(p,q^{-1}))_n}{(q-p^{-1})^n} \quad \mbox{for } \quad n\geq
1, \quad \end{array} \right.
\end{eqnarray}
respectively.
There follow some relevant new properties:
\begin{proposition}\label{2PropoQes1}
 If $n$ and $m$ are nonnegative integers, then
\begin{eqnarray}
\;[-m]_{p,q}^Q&=& -p^{-m}q^m[m]_{p,q}^Q,\label{2Qeq1}\\
\;[n+m]_{p,q}^Q&=& q^{-m}[n]_{p,q}^Q+p^n[m]_{p,q}^Q= p^m[n]_{p,q}^Q+q^{-n}[m]_{p,q}^Q,\label{2Qeq2}\\
\quad\qquad [n-m]_{p,q}^Q&=& q^{m}[n]_{p,q}^Q-p^{n-m}q^m[m]_{p,q}^Q= p^{-m}[n]_{p,q}^Q+p^{-m}q^{m-n}[m]_{p,q}^Q,\label{2Qeq3}\cr
&&\\
\;[n]_{p,q}^Q &=& \frac{q-p^{-1}}{p-q^{-1}}[2]_{p,q}^Q[n-1]_{p,q}^Q-pq^{-1}[n-2]_{p,q}^Q.\label{2Qeq4}
\end{eqnarray}
\end{proposition}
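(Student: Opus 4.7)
The plan is to establish each of the four identities by direct algebraic manipulation starting from the definition $[n]_{p,q}^Q = \dfrac{p^n - q^{-n}}{q - p^{-1}}$, in the order (\ref{2Qeq1}), (\ref{2Qeq2}), (\ref{2Qeq3}), (\ref{2Qeq4}), because the later identities can be reduced to the earlier ones.

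First I would verify (\ref{2Qeq1}) by substituting $n\mapsto -m$ in the definition: $[-m]_{p,q}^Q = (p^{-m}-q^{m})/(q-p^{-1}) = -(q^{m}-p^{-m})/(q-p^{-1})$; factoring $p^{-m}q^{m}$ out of the numerator yields $-p^{-m}q^{m}[m]_{p,q}^Q$. For (\ref{2Qeq2}), the strategy is to compute the right-hand side and observe cancellations: the sum $q^{-m}(p^n-q^{-n})+p^n(p^m-q^{-m})$ expands to $q^{-m}p^n - q^{-n-m} + p^{n+m} - p^n q^{-m}$, whose cross terms cancel, leaving $p^{n+m}-q^{-n-m}$, which divided by $q-p^{-1}$ is exactly $[n+m]_{p,q}^Q$. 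The symmetric expression $p^m[n]_{p,q}^Q+q^{-n}[m]_{p,q}^Q$ is handled by the same cancellation with $n$ and $m$ interchanged.

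Identity (\ref{2Qeq3}) is then not a separate computation but a consequence of (\ref{2Qeq1}) and (\ref{2Qeq2}): I would write $[n-m]_{p,q}^Q = [n+(-m)]_{p,q}^Q$ and apply (\ref{2Qeq2}) with $m$ replaced by $-m$, then substitute (\ref{2Qeq1}) to eliminate $[-m]_{p,q}^Q$; the two alternative expressions in (\ref{2Qeq3}) arise from the two forms in (\ref{2Qeq2}).

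Finally, for (\ref{2Qeq4}), I would first simplify the coefficient of $[n-1]_{p,q}^Q$: since $[2]_{p,q}^Q = (p^2-q^{-2})/(q-p^{-1}) = (p-q^{-1})(p+q^{-1})/(q-p^{-1})$, the prefactor $(q-p^{-1})/(p-q^{-1})$ telescopes this to $p+q^{-1}$. What remains is to check the identity $(p+q^{-1})[n-1]_{p,q}^Q - pq^{-1}[n-2]_{p,q}^Q = [n]_{p,q}^Q$, which follows by expanding the numerator $(p+q^{-1})(p^{n-1}-q^{-n+1}) - pq^{-1}(p^{n-2}-q^{-n+2}) = p^n + p^{n-1}q^{-1} - pq^{-n+1} - q^{-n} - p^{n-1}q^{-1} + pq^{-n+1} = p^n - q^{-n}$ and dividing by $q-p^{-1}$. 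No step presents a real obstacle; the only point that deserves care is keeping the sign conventions straight when manipulating negative exponents of $p$ and $q$ in (\ref{2Qeq1}) and (\ref{2Qeq3}).
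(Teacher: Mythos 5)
Your proof is correct and, for (\ref{2Qeq1})--(\ref{2Qeq3}), follows essentially the same route as the paper: the factorization $p^{-m}-q^{m}=-p^{-m}q^{m}(p^{m}-q^{-m})$ for (\ref{2Qeq1}), the splitting of $p^{n+m}-q^{-n-m}$ for (\ref{2Qeq2}), and the combination of the two for (\ref{2Qeq3}). For (\ref{2Qeq4}) you diverge slightly: the paper reduces the identity to the known three-term relation for $[n]_{p,q^{-1}}$ via $[n]_{p,q^{-1}}=\frac{q-p^{-1}}{p-q^{-1}}[n]_{p,q}^{Q}$, whereas you telescope the prefactor to $p+q^{-1}$ and expand directly; both are equally elementary and your expansion checks out. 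One remark worth making explicit: carrying out your step for the second expression in (\ref{2Qeq3}) gives $[n-m]_{p,q}^{Q}=p^{-m}[n]_{p,q}^{Q}-p^{-m}q^{m-n}[m]_{p,q}^{Q}$, i.e.\ with a minus sign; the plus sign printed in the statement is a typo (the printed right-hand side equals $\bigl(p^{n-m}+q^{m-n}-2p^{-m}q^{-n}\bigr)/(q-p^{-1})$, not $[n-m]_{p,q}^{Q}$), so your derivation in fact yields the corrected identity.
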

{\bf Proof:} We obtain Eqs.(\ref{2Qeq1}) and (\ref{2Qeq2}) applying the relations
\begin{eqnarray*}
 p^{-m}-q^{m} = -p^{-m}q^{m}(p^m-q^{-m})
\end{eqnarray*}
and
\begin{eqnarray*}
p^{n+m}-q^{-n-m}&=& q^{-m}(p^{n}-q^{-n})+p^n(p^{m}-q^{-m})
\cr&=&p^m(p^{n}-q^{-n})+q^{-n}(p^{m}-q^{-m}),
\end{eqnarray*}
respectively.  Eq.(\ref{2Qeq3}) follows combining Eqs.(\ref{2Qeq1}) and (\ref{2Qeq2}). Note that
\begin{eqnarray}\label{2Qeq5}
\qquad[n]_{p,q^{-1}}&=& \frac{p^n-q^{-n}}{p-q^{-1}}= \frac{q-p^{-1}}{p-q^{-1}}\frac{p^n-q^{-n}}{q-p^{-1}}
= [n]_{p,q}^{Q},\quad n=1, 2, \cdots
\end{eqnarray}
which, combined with the following identity
\begin{eqnarray*}
 [n]_{p,q{-1}}&=& [2]_{p,q{-1}}[n-1]_{p,q{-1}}-pq^{-1}[n-2]_{p,q^{-1}},
\end{eqnarray*}
gives Eq.(\ref{2Qeq4}).\hfill$\Box$
\begin{proposition}\label{2PropoQes2}
The $(p,q)$-Quesne binomial coefficients
\begin{eqnarray}
 \left[\begin{array}{c} n \\ k \end{array}\right]_{p,q}^Q=
\frac{((p,q^{-1});(p,q^{-1}))_n}{((p,q^{-1});(p,q^{-1}))_k((p,q^{-1});(p,q^{-1}))_{n-k}},
\label{2Qeq6}
\end{eqnarray}
where $\quad 0\leq k\leq n,\;\; n\in\mathbb{N},$ satisfy the following properties:
\begin{eqnarray}
\left[\begin{array}{c} n \\ k \end{array}\right]_{p,q}^Q
&=& \left[\begin{array}{c} n \\ n-k \end{array}\right]_{p,q}^Q=
p^{k(n-k)}\left[\begin{array}{c} n \\ k \end{array}\right]_{1/qp}\\&=&
p^{k(n-k)}\left[\begin{array}{c} n \\ n-k \end{array}\right]_{1/qp},\label{2Qeq7}
\cr \left[\begin{array}{c} n+1 \\ k \end{array}\right]_{p,q}^Q &=&
p^k\left[\begin{array}{c} n \\ k \end{array}\right]_{p,q}^Q
+q^{-n-1+k}\left[\begin{array}{c} n \\ k-1 \end{array}\right]_{p,q}^Q,\label{2Qeq8}
\\ \left[\begin{array}{c} n+1 \\ k \end{array}\right]_{p,q}^Q &=&
p^{k}\left[\begin{array}{c} n \\ k \end{array}\right]_{p,q}^Q +
p^{n+1-k}\left[\begin{array}{c} n \\ k-1 \end{array}\right]_{p,q}^Q \cr&&-(p^n-q^{-n})
\left[\begin{array}{c} n-1 \\ k-1 \end{array}\right]_{p,q}^Q.\quad\label{2Qeq9}\nonumber
\end{eqnarray}
\end{proposition}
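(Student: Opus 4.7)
The plan is to reduce everything to the corresponding identities for Jagannathan–Srinivasa binomial coefficients by exploiting the key observation (\ref{2Qeq5}) from the previous Proposition~\ref{2PropoQes1}, namely that $[n]_{p,q}^Q = [n]_{p,q^{-1}}$ for all $n\in\mathbb{N}$. Since the Quesne binomial coefficient in (\ref{2Qeq6}) is built from the $(p,q^{-1})$-shifted factorials, this immediately yields the identification
\begin{eqnarray*}
\left[\begin{array}{c} n \\ k \end{array}\right]_{p,q}^Q = \left[\begin{array}{c} n \\ k \end{array}\right]_{p,q^{-1}},
\end{eqnarray*}
so every claim in (\ref{2Qeq7})--(\ref{2Qeq9}) becomes a direct specialization of Proposition~\ref{2PropJan2} with the formal substitution $q\mapsto q^{-1}$.

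First I would establish the chain of equalities in (\ref{2Qeq7}). The symmetry $\left[\begin{smallmatrix} n\\k\end{smallmatrix}\right]_{p,q}^Q = \left[\begin{smallmatrix} n\\n-k\end{smallmatrix}\right]_{p,q}^Q$ is immediate from the definition (\ref{2Qeq6}). For the factor $p^{k(n-k)}$ I would invoke (\ref{2Janeq1}) applied to the pair $(p,q^{-1})$, which produces $\left[\begin{smallmatrix} n\\k\end{smallmatrix}\right]_{p,q^{-1}} = p^{k(n-k)}\left[\begin{smallmatrix} n\\k\end{smallmatrix}\right]_{q^{-1}/p}$; then note that $q^{-1}/p = 1/(pq)$ to finish.

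For the two Pascal-type recursions (\ref{2Qeq8}) and (\ref{2Qeq9}), I would apply the analogous recursions (\ref{2Janeq2}) and (\ref{2Janeq3}) directly to $\left[\begin{smallmatrix} n+1\\k\end{smallmatrix}\right]_{p,q^{-1}}$ and then translate. For (\ref{2Qeq8}) the factor $(q^{-1})^{n+1-k}=q^{-n-1+k}$ appears, and for (\ref{2Qeq9}) the term $p^n-(q^{-1})^n=p^n-q^{-n}$ appears, which match the target expressions exactly. Alternatively, for readers who prefer a self-contained derivation, the recursion (\ref{2Qeq8}) can be obtained directly from the factorial ratio (\ref{2Qeq6}) by writing ${((p,q^{-1});(p,q^{-1}))_{n+1}=(p^{n+1}-q^{-n-1})((p,q^{-1});(p,q^{-1}))_n}$ and decomposing $p^{n+1}-q^{-n-1}=p^k(p^{n+1-k}-q^{-n-1+k})+q^{-n-1+k}(p^k-q^{-k})$, and similarly for (\ref{2Qeq9}) using $p^{n+1}-q^{-n-1}=p^{n+1-k}(p^k-q^{-k})+p^k(p^{n+1-k}-q^{-n-1+k})-(p^n-q^{-n})(p^k-q^{-k})$ after clearing $[k]_{p,q}^Q$ factors.

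No genuine obstacle is expected: the whole proposition is essentially a corollary of Proposition~\ref{2PropJan2} under the reparametrization $q\mapsto q^{-1}$. The only place one must be careful is bookkeeping of signs and exponents in (\ref{2Qeq9}), where the difference $p^n-q^{-n}$ must not be confused with $p^n-q^n$; verifying that $(p^n-(q^{-1})^n)$ in the Jagannathan form indeed becomes $(p^n-q^{-n})$ in the Quesne form is the one-line check that makes the identity correct.
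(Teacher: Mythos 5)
Your proposal is correct and follows exactly the route the paper takes: the paper's proof consists of noting $\left[\begin{smallmatrix} n\\k\end{smallmatrix}\right]_{p,q}^{Q}=\left[\begin{smallmatrix} n\\k\end{smallmatrix}\right]_{p,q^{-1}}$ and then invoking the Jagannathan--Srinivasa identities under the substitution $q\mapsto q^{-1}$ (the paper cites Proposition~\ref{2PropJan1}, evidently a slip for Proposition~\ref{2PropJan2}). Your version merely spells out the exponent bookkeeping and adds an optional self-contained derivation, both of which are consistent with the paper's argument.
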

{\bf Proof:} This is direct using the Proposition~\ref{2PropJan1}  and
\begin{eqnarray}
 \left[\begin{array}{c} n \\ k \end{array}\right]_{p,q}^Q =
\left[\begin{array}{c} n \\ k \end{array}\right]_{p,q^{-1}}.
\end{eqnarray}\hfill$\Box$

\begin{proposition}\label{2PropoQes3}
If the quantities  $x$, $y$, $a$ and $b$ are    such that $xy= q^{-1}yx$, $ba=pab$,   $[i,\;j]=0$ for  $i\in\{a, b\}$ and $j\in\{x, y\}$,  and,  moreover,  $p$ and $q$ commute with  each element of the set  $\{a, b, x, y\}$,
then
\begin{eqnarray}
 (ax+by)^n= \sum_{k=0}^{n}\left[\begin{array}{c} n \\ k \end{array}\right]_{p,q}^Qa^{n-k}b^ky^kx^{n-k}.
\end{eqnarray}
\end{proposition}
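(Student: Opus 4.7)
The plan is to prove the identity by induction on $n$, mirroring the strategy used in Proposition~\ref{2PropJan3} for the Jagannathan--Srinivasa case but with the commutation factors modified to match the Quesne conventions $xy = q^{-1}yx$ and $ba = pab$. The base case $n = 1$ is immediate from the definition of the $(p,q)$-Quesne binomial coefficients, since $\left[\begin{array}{c} 1 \\ 0 \end{array}\right]_{p,q}^Q = \left[\begin{array}{c} 1 \\ 1 \end{array}\right]_{p,q}^Q = 1$, giving $(ax + by)^1 = ax + by$.

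For the inductive step, I would assume the formula holds at level $m$ and compute $(ax + by)^{m+1} = (ax + by)^m (ax + by)$ by distributing the right factor. First I would establish the two auxiliary relations that I need to shuttle letters across factors: from $ba = pab$ one obtains $b^k a = p^k a b^k$ by induction on $k$, and from $xy = q^{-1}yx$ one obtains $x^j y = q^{-j} y x^j$ by induction on $j$. Using these, together with the commutativity of $\{a,b\}$ with $\{x,y\}$, each term $a^{m-k}b^k y^k x^{m-k}(ax)$ simplifies to $p^k\, a^{m+1-k}b^k y^k x^{m+1-k}$, while each term $a^{m-k}b^k y^k x^{m-k}(by)$ simplifies to $q^{-(m-k)}\, a^{m-k}b^{k+1} y^{k+1} x^{m-k}$.

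After reindexing the second sum by $k \mapsto k-1$, the coefficient of $a^{m+1-k}b^k y^k x^{m+1-k}$ becomes
\begin{eqnarray*}
 p^k \left[\begin{array}{c} m \\ k \end{array}\right]_{p,q}^Q
 + q^{-m-1+k}\left[\begin{array}{c} m \\ k-1 \end{array}\right]_{p,q}^Q,
\end{eqnarray*}
which, by the Pascal-type recurrence~(\ref{2Qeq8}) established in Proposition~\ref{2PropoQes2}, equals exactly $\left[\begin{array}{c} m+1 \\ k \end{array}\right]_{p,q}^Q$. This closes the induction.

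The only delicate point, and the main bookkeeping obstacle, is keeping track of the signs of the exponents of $p$ and $q$ introduced by the two distinct commutation rules: since $q$ enters through the relation $xy = q^{-1}yx$ (rather than $xy = qyx$), the weight picked up when moving $x^{m-k}$ past a single $y$ is $q^{-(m-k)}$, which is precisely the sign needed for the Pascal recurrence of the Quesne binomials to close. Verifying that this matches the exponent $q^{-n-1+k}$ appearing in~(\ref{2Qeq8}) after the index shift $k \mapsto k-1$, with $n = m$, is the crucial consistency check that makes the whole argument work.
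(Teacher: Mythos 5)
Your proposal is correct and follows essentially the same route as the paper: induction on $n$, right-multiplication by $(ax+by)$, extraction of the weights $p^k$ and $q^{-(m-k)}$ from the commutation rules, reindexing, and closure via the Pascal-type recurrence (\ref{2Qeq8}). The exponent bookkeeping you flag as the delicate point ($q^{-m-1+k}$ after the shift $k\mapsto k-1$) matches the paper's computation exactly.
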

{\bf Proof:} By induction on $n$. Indeed, the result is true for $n=1$.
Suppose it  remains valid for $n\leq m$ and prove that this is also true for $n=m+1:$
\begin{eqnarray*}
 (ax+by)^{m+1}&=& (ax+by)^m(ax+by)\cr&=&
\sum_{k=0}^{m}\left[\begin{array}{c} m \\ k \end{array}\right]_{p,q}^Qa^{m-k}b^ky^kx^{m-k}(ax+by)
\cr&=& \sum_{k=0}^{m}\left[\begin{array}{c} m \\ k \end{array}\right]_{p,q}^Qp^ka^{m+1-k}b^ky^kx^{m+1-k}
\cr&&+ \sum_{k=0}^{m}\left[\begin{array}{c} m \\ k \end{array}\right]_{p,q}^Qq^{-m+k}a^{m-k}b^{k+1}y^{k+1}x^{m-k}
\cr&=& a^{m+1}x^{m+1}
+\sum_{k=1}^{m}\left[\begin{array}{c} m \\ k \end{array}\right]_{p,q}^Qp^ka^{m+1-k}b^ky^kx^{m+1-k}
\cr&&+\sum_{k=0}^{m-1}\left[\begin{array}{c} m \\ k \end{array}\right]_{p,q}^Qq^{-m+k}a^{m-k}b^{k+1}y^{k+1}x^{m-k}
+ b^{m+1}y^{m+1}
\cr&=& a^{m+1}x^{m+1}+
\sum_{k=1}^{m}\left(p^k\left[\begin{array}{c} m \\ k \end{array}\right]_{p,q}^Q\right.
\cr&&\left.+
q^{-m-1+k}\left[\begin{array}{c} m \\ k-1 \end{array}\right]_{p,q}^Q\right)
a^{m+1-k}b^ky^kx^{m+1-k} +b^{m+1}y^{m+1}
\cr&=& a^{m+1}x^{m+1}+
\sum_{k=1}^{m}\left[\begin{array}{c} m+1 \\ k \end{array}\right]_{p,q}^Qa^{m+1-k}b^ky^kx^{m+1-k}\cr&&+ b^{m+1}y^{m+1}
\cr&=&\sum_{k=0}^{m+1}\left[\begin{array}{c} m+1 \\ k \end{array}\right]_{p,q}^Qa^{m+1-k}b^ky^kx^{m+1-k},
\end{eqnarray*}
where the use of (\ref{2Qeq8}) has been made. Therefore the result is true for all $n\in\mathbb{N}.$\hfill$\Box$\\
The $(p,q)-$Quesne derivative is given by
\begin{eqnarray}
 \partial_{p,q}^Q = \partial_{p,q}\frac{p-q}{P-Q}\frac{PQ-1}{(q-p^{-1})Q}
=\frac{1}{(q-p^{-1})z}(P-Q^{-1}).
\end{eqnarray}
Therefore, for $f\in{\cal O}(\mathbb{D}_R)$
\begin{eqnarray}
\partial_{p,q}^Qf(z)= \frac{f(pz)-f(q^{-1}z)}{z(q-p^{-1})}
\end{eqnarray}
and the differential is given by
\begin{eqnarray}
 d_{p,q}^Qf(z)= (dz)\frac{f(pz)-f(q^{-1}z)}{z(q-p^{-1})}
\end{eqnarray}
leading to the Leibniz rule
\begin{eqnarray}
 d_{p,q}^Q(fg)(z)&=&(dz)\frac{f(pz)-f(q^{-1}z)}{z(q-p^{-1})} g(q^{-1}z) \\
&&+ (dz)f(pz)\frac{g(pz)-g(q^{-1}z)}{z(q-p^{-1})}\cr
&=& \{d_{p,q}^Qf(z)\}\cdot g(q^{-1}z) + f(pz)\cdot d_{p,q}^Qg(z)\nonumber
\end{eqnarray}
or, equivalently,
\begin{eqnarray}
 d_{p,q}^Q(fg)(z)&=&(dz)\frac{f(pz)-f(q^{-1}z)}{z(q-p^{-1})} g(pz) \\
&&+ (dz)f(q^{-1}z)\frac{g(pz)-g(q^{-1}z)}{z(q-p^{-1})}\cr
&=& \{d_{p,q}^Qf(z)\}\cdot g(pz) + f(q^{-1}z)\cdot d_{p,q}^Qg(z).\nonumber
\end{eqnarray}
The action of the  $(p,q)-$Quesne integration on $f\in{\cal O}(\mathbb{D}_R)$
is obtained from (\ref{2integra2}) as follows:
\begin{eqnarray}
 {\cal I}_{p,q}^Qf(z) &=& \frac{q-p^{-1}}{P-Q^{-1}}zf(z)
= (p^{-1}-q)\sum_{\nu=0}^\infty P^\nu Q^{\nu+1}zf(z)
\\&=& (p^{-1}-q)z\sum_{\nu=0}^\infty f(zp^\nu q^{\nu+1})p^\nu q^{\nu+1}.\nonumber
\end{eqnarray}

The same structure of Hopf algebra as for Jagannathan-Srinivasa is also found for the generalized  $q-$Quesne  deformation.

\subsubsection{$(p,q;\mu,\nu,h)-$deformation}
The deformed Hounkonnou-Ngompe \cite{Hounkonnou&Ngompe07a} algebra is obtained by taking $${\cal R}(x,y)= h(p,q)\frac{y^\nu}{x^\mu}\frac{xy-1}{(q-p^{-1})y},$$
such that $0< pq < 1$, $p^\mu<q^{\nu-1}$, $p>1$, and $h(p,q)$ is a well behaved real and non-negative
function of deformation parameters $p$ and $q$ such that  $h(p,q)\to 1$ as $(p,q)\to (1,1).$
Here the ${\cal R}(p,q)-$factors become $(p,q;\mu,\nu,h)$-factors, namely
\begin{eqnarray}
 [n]^{\mu,\nu}_{p,q,h}= h(p,q)\frac{q^{\nu n}}{p^{\mu n}}\frac{p^n-q^{-n}}{q-p^{-1}}.
\end{eqnarray}
\begin{proposition}\label{2PropoHouk1}
The $(p,q;\mu,\nu,h)-$factors verify the following properties, for $m,n\in\mathbb{N}$:
\begin{eqnarray}
\;[-m]^{\mu,\nu}_{p,q,h}= -\frac{q^{-2\nu m+m}}{p^{-2\mu m+m}}[m]^{\mu,\nu}_{p,q,h},\label{2Heq1}
\end{eqnarray}
\begin{eqnarray}
\;[n+m]^{\mu,\nu}_{p,q,h}&=& \frac{q^{\nu m-m}}{p^{\mu m}}[n]^{\mu,\nu}_{p,q,h}
+\frac{q^{\nu n}}{p^{\mu n-n}}[m]^{\mu,\nu}_{p,q,h}
\\&=&\frac{q^{\nu m}}{p^{\mu m- m}}[n]^{\mu,\nu}_{p,q,h}
+\frac{q^{\nu n- n}}{p^{\mu n}}[m]^{\mu,\nu}_{p,q,h}\;,\label{2Heq2}\nonumber
\end{eqnarray}
\begin{eqnarray}
\quad[n-m]^{\mu,\nu}_{p,q,h}&=&\frac{q^{-\nu m+ m}}{p^{-\mu m}}[n]^{\mu,\nu}_{p,q,h}
-\frac{q^{\nu(n-2m)+m}}{p^{\mu(n-2m)-n+m}}[m]^{\mu,\nu}_{p,q,h}
\\&=& \frac{q^{-\nu m}}{p^{-\mu m+m}}[n]^{\mu,\nu}_{p,q,h}
-\frac{q^{\nu(n-2m)-n+m}}{p^{\mu(n-2m)+m}}[m]^{\mu,\nu}_{p,q,h},\label{2Heq3}\nonumber
\end{eqnarray}
\begin{eqnarray}
\quad [n]^{\mu,\nu}_{p,q,h}&=& \frac{q-p^{-1}}{p-q^{-1}}\frac{q^{-\nu}}{p^{-\mu}}\frac{1}{h(p,q)}[2]^{\mu,\nu}_{p,q,h}
[n-1]^{\mu,\nu}_{p,q,h}-\frac{q^{2\nu-1}}{p^{2\nu-1}}[n-2]^{\mu,\nu}_{p,q,h}.\label{2Heq4}
\end{eqnarray}
\end{proposition}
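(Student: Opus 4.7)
The plan is to establish all four identities by direct algebraic manipulation of the closed-form expression
\[
[n]^{\mu,\nu}_{p,q,h}= h(p,q)\frac{q^{\nu n}}{p^{\mu n}}\frac{p^n-q^{-n}}{q-p^{-1}},
\]
treating the prefactor $h(p,q)(q-p^{-1})^{-1}$ as a common scalar and reducing each relation to an identity for the expressions $p^k-q^{-k}$. Because the $(p,q)$-Quesne numbers $[n]_{p,q}^Q=\frac{p^n-q^{-n}}{q-p^{-1}}$ already satisfy the analogues in Proposition~\ref{2PropoQes1}, the task essentially amounts to tracking how the monomial prefactors $q^{\nu n}/p^{\mu n}$ combine under addition and subtraction of indices.

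First I would handle (\ref{2Heq1}) by substituting $n\mapsto -m$ in the definition and using the elementary identity $p^{-m}-q^m=-q^{m}p^{-m}(p^m-q^{-m})$, so that
\[
[-m]^{\mu,\nu}_{p,q,h}= h(p,q)\frac{q^{-\nu m}}{p^{-\mu m}}\cdot\frac{-q^{m}p^{-m}(p^m-q^{-m})}{q-p^{-1}}
=-\frac{q^{-2\nu m+m}}{p^{-2\mu m+m}}[m]^{\mu,\nu}_{p,q,h},
\]
where the missing factor $q^{\nu m}/p^{\mu m}$ is absorbed by dividing and multiplying by it to reconstruct $[m]^{\mu,\nu}_{p,q,h}$. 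Next, for (\ref{2Heq2}) I would split $p^{n+m}-q^{-n-m}$ in the two symmetric ways
\[
p^{n+m}-q^{-n-m}=q^{-m}(p^n-q^{-n})+p^n(p^m-q^{-m})=p^m(p^n-q^{-n})+q^{-n}(p^m-q^{-m})
\]
and distribute the global prefactor $q^{\nu(n+m)}/p^{\mu(n+m)}$ over the two summands, regrouping each product into the form $(q^{\nu n}/p^{\mu n})(p^n-q^{-n})$ times a residual monomial; the residuals are exactly $q^{\nu m-m}/p^{\mu m}$ and $q^{\nu n}/p^{\mu n-n}$ (respectively the swapped pair), which yields both alternative forms.

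Identity (\ref{2Heq3}) is then obtained as a corollary: writing $[n-m]^{\mu,\nu}_{p,q,h}=[n+(-m)]^{\mu,\nu}_{p,q,h}$, I apply (\ref{2Heq2}) with $m$ replaced by $-m$ and then substitute (\ref{2Heq1}) to re-express $[-m]^{\mu,\nu}_{p,q,h}$ in terms of $[m]^{\mu,\nu}_{p,q,h}$, collecting the resulting $q$- and $p$-powers. Finally, for the three-term recursion (\ref{2Heq4}) I would verify the purely algebraic relation
\[
p^n-q^{-n}=(p+q^{-1})(p^{n-1}-q^{-(n-1)})-pq^{-1}(p^{n-2}-q^{-(n-2)})
\]
by direct expansion, then multiply both sides by $h(p,q)q^{\nu n}p^{-\mu n}/(q-p^{-1})$ and factor out $q^{\nu(n-1)}p^{-\mu(n-1)}$ from the first term on the right and $q^{\nu(n-2)}p^{-\mu(n-2)}$ from the second to recover $[n-1]^{\mu,\nu}_{p,q,h}$ and $[n-2]^{\mu,\nu}_{p,q,h}$; a short computation shows that the coefficient of $[n-1]^{\mu,\nu}_{p,q,h}$ equals $q^{\nu}(p+q^{-1})/p^{\mu}$, which coincides with $\tfrac{q-p^{-1}}{p-q^{-1}}\tfrac{q^{-\nu}}{p^{-\mu}}h(p,q)^{-1}[2]^{\mu,\nu}_{p,q,h}$ after using $p^2-q^{-2}=(p-q^{-1})(p+q^{-1})$.

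The main obstacle is purely bookkeeping: keeping the exponents of $p$ and $q$ straight through the multiplicative prefactors and ensuring that each step preserves the correct balance between the scalar $h(p,q)/(q-p^{-1})$ and the polynomial factors $p^k-q^{-k}$. No deep idea is required beyond the three underlying identities for $p^k-q^{-k}$; the passage from $(p,q)$-Quesne numbers to the $(p,q;\mu,\nu,h)$-numbers merely introduces the predictable monomial weight $h(p,q)q^{\nu n}/p^{\mu n}$, whose transformation under shifts $n\mapsto n\pm m$ must be tracked carefully. I also note in passing that the exponent $p^{2\nu-1}$ appearing in the last term of (\ref{2Heq4}) should almost certainly read $p^{2\mu-1}$, as the derivation shows.
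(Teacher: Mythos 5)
Your proof is correct and follows essentially the same route as the paper: the paper's own argument is precisely to invoke Proposition~\ref{2PropoQes1} for the $(p,q)$-Quesne factors together with the relation $[n]^{\mu,\nu}_{p,q,h}=h(p,q)\,q^{\nu n}p^{-\mu n}\,[n]^{Q}_{p,q}$, which is exactly your strategy of reducing everything to the identities for $p^{k}-q^{-k}$ and tracking the monomial prefactors. Your closing observation is also right: the exponent $p^{2\nu-1}$ in the last term of (\ref{2Heq4}) is a typo for $p^{2\mu-1}$, as the computation $-pq^{-1}\,q^{2\nu}p^{-2\mu}=-q^{2\nu-1}/p^{2\mu-1}$ shows.
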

{\bf Proof:} This is direct using the Proposition~\ref{2PropoQes1} and the fact that
\begin{eqnarray}\label{2Houkeq}
 [n]^{\mu,\nu}_{p,q,h}= h(p,q)\frac{q^{\nu n}}{p^{\mu n}}[n]^Q_{p,q,h}.
\end{eqnarray}\hfill$\Box$
\begin{proposition}\label{2PropoHouk2}
The $(p,q,\mu,\nu,h)-$ binomial coefficients
\begin{eqnarray}
 \left[\begin{array}{c} n \\ k \end{array}\right]_{p,q,h}^{\mu,\nu}:=
\frac{[n]_{p,q,h}^{\mu,\nu}!}{[k]_{p,q,h}^{\mu,\nu}![n-k]_{p,q,h}^{\mu,\nu}!}=
\frac{q^{\nu k(n-k)}}{p^{\mu k(n-k)}}\left[\begin{array}{c} n \\ k \end{array}\right]_{p,q}^Q,
\label{2Heq5}
\end{eqnarray}
where $0\leq k\leq n,\;\; n\in\mathbb{N},$ satisfy the following properties:
\begin{eqnarray}
\left[\begin{array}{c} n \\ k \end{array}\right]_{p,q,h}^{\mu,\nu}
&=& \left[\begin{array}{c} n \\ n-k \end{array}\right]_{p,q,h}^{\mu,\nu},\label{2Heq6}
\\\qquad \left[\begin{array}{c} n+1 \\ k \end{array}\right]_{p,q,h}^{\mu,\nu} &=&
\frac{q^{\nu k}}{p^{(\mu-1)k}}\left[\begin{array}{c} n \\ k \end{array}\right]_{p,q,h}^{\mu,\nu}
+\frac{q^{(\nu-1)(n+1-k)}}{p^{\mu(n+1-k)}}\left[\begin{array}{c} n \\ k-1 \end{array}\right]_{p,q,h}^{\mu,\nu},
\label{2Heq7}
\\ \left[\begin{array}{c} n+1 \\ k \end{array}\right]_{p,q,h}^{\mu,\nu} &=&
\frac{q^{\nu k}}{p^{(\mu-1)k}}\left[\begin{array}{c} n \\ k \end{array}\right]_{p,q,h}^{\mu,\nu} +
\frac{q^{\nu(n+1-k)}}{p^{(\mu-1)(n+1-k)}}\left[\begin{array}{c} n \\ k-1 \end{array}\right]_{p,q,h}^{\mu,\nu}
\\&&\qquad-(p^n-q^{-n})
\frac{q^{\nu n}}{p^{\mu n}}\left[\begin{array}{c} n-1 \\ k-1 \end{array}\right]_{p,q,h}^{\mu,\nu}.\label{2Heq8}\nonumber
\end{eqnarray}
\end{proposition}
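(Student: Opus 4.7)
The plan is to reduce everything to the analogous identities already established in Proposition~\ref{2PropoQes2} for the $(p,q)$-Quesne binomial coefficients, by exploiting the simple multiplicative relation (\ref{2Houkeq}) between the $(p,q;\mu,\nu,h)$-factors and the Quesne factors. First I would iterate (\ref{2Houkeq}) to obtain the closed form
\begin{eqnarray*}
[n]^{\mu,\nu}_{p,q,h}! \;=\; \prod_{j=1}^{n} h(p,q)\frac{q^{\nu j}}{p^{\mu j}}[j]^{Q}_{p,q} \;=\; h(p,q)^{n}\,\frac{q^{\nu n(n+1)/2}}{p^{\mu n(n+1)/2}}\,[n]^{Q}_{p,q}!.
\end{eqnarray*}

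Substituting this in the definition of $\left[\begin{array}{c}n\\k\end{array}\right]^{\mu,\nu}_{p,q,h}$, the $h(p,q)$-factors cancel (the exponents $n$ and $k+(n-k)$ coincide), and the identity $n(n+1)-k(k+1)-(n-k)(n-k+1)=2k(n-k)$ gives precisely the prefactor $q^{\nu k(n-k)}/p^{\mu k(n-k)}$, yielding (\ref{2Heq5}). The symmetry relation (\ref{2Heq6}) is then immediate, because the prefactor $q^{\nu k(n-k)}/p^{\mu k(n-k)}$ is invariant under $k\mapsto n-k$ and Proposition~\ref{2PropoQes2} already asserts that $\left[\begin{array}{c}n\\k\end{array}\right]^{Q}_{p,q}=\left[\begin{array}{c}n\\n-k\end{array}\right]^{Q}_{p,q}$.

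To establish the two recursions (\ref{2Heq7}) and (\ref{2Heq8}), I would rewrite each Quesne identity (\ref{2Qeq8}) and (\ref{2Qeq9}) using the inversion
\begin{eqnarray*}
\left[\begin{array}{c}m\\j\end{array}\right]^{Q}_{p,q} \;=\; \frac{p^{\mu j(m-j)}}{q^{\nu j(m-j)}}\left[\begin{array}{c}m\\j\end{array}\right]^{\mu,\nu}_{p,q,h}
\end{eqnarray*}
for the pairs $(m,j)\in\{(n+1,k),(n,k),(n,k-1),(n-1,k-1)\}$ appearing on both sides, then multiply through by the normalizing factor $q^{\nu k(n+1-k)}/p^{\mu k(n+1-k)}$. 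The computation reduces to tracking two exponent differences: for the first term the arithmetic $\mu k(n-k)-\mu k(n+1-k)=-\mu k$ combined with the original factor $p^{k}$ delivers $q^{\nu k}/p^{(\mu-1)k}$, and for the second term the identity $(k-1)(n-k+1)-k(n+1-k)=k-n-1$ combined with the factor $q^{-n-1+k}$ in (\ref{2Qeq8}) produces $q^{(\nu-1)(n+1-k)}/p^{\mu(n+1-k)}$, reproducing (\ref{2Heq7}); an entirely analogous but slightly longer bookkeeping, applied to (\ref{2Qeq9}), yields (\ref{2Heq8}).

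The principal obstacle is purely the exponent arithmetic: every term on every line carries $p$- and $q$-powers depending linearly or quadratically on $n$, $k$, $\mu$, $\nu$, so the temptation to drop a minus sign or misgroup a binomial difference like $(k-1)(n-k+1)-k(n+1-k)$ is real. I would therefore perform each exponent computation separately and symmetrically (in $\mu$ versus $\nu$, in $k$ versus $n-k$) before combining them, using the symmetry relation (\ref{2Heq6}) as a sanity check that both recursions respect $k\leftrightarrow n-k$ up to the prescribed shifts.
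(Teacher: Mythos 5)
Your proposal is correct and follows essentially the same route as the paper: the paper's proof simply invokes Proposition~\ref{2PropoQes2} together with the factorial relation $[n]_{p,q,h}^{\mu,\nu}!= h^n(p,q)\,q^{\nu n(n+1)/2}p^{-\mu n(n+1)/2}[n]_{p,q}^Q!$ obtained by iterating (\ref{2Houkeq}), exactly as you do. Your exponent bookkeeping (the identity $n(n+1)-k(k+1)-(n-k)(n-k+1)=2k(n-k)$ and the term-by-term rescaling of (\ref{2Qeq8})--(\ref{2Qeq9})) checks out and merely makes explicit what the paper leaves as "direct."
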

{\bf Proof:} This is direct using the Proposition~\ref{2PropoQes2} and the fact that
\begin{eqnarray}
 [n]_{p,q,h}^{\mu,\nu}!= h^n(p,q)\frac{q^{n(n+1)/2}}{p^{n(n+1)/2}}[n]_{p,q}^Q!,
\end{eqnarray}
where the use of Eq.(\ref{2Houkeq}) has been made.\hfill$\Box$
\begin{proposition}\label{2PropoHouk3}
If  the quantities $x$, $y$, $a$ and $b$ are   such that $\displaystyle xy= \frac{q^{\nu-1}}{p^{\mu}}yx$,
$\displaystyle ba=\frac{q^\nu}{p^{\mu-1}}ab$, $[i,\;j]=0$ for  $i\in\{a, b\}$ and $j\in\{x, y\}$,  and, moreover,  $p$ and $q$ commute  with each element of the set  $\{a, b, x, y\}$,  then
\begin{eqnarray}
 (ax+by)^n= \sum_{k=0}^{n}\left[\begin{array}{c} n \\ k \end{array}\right]_{p,q,h}^{\mu,\nu}a^{n-k}b^ky^kx^{n-k}.
\end{eqnarray}
\end{proposition}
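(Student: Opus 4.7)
The plan is to recognize this statement as a direct application of the general $\mathcal{R}(p,q)$-binomial theorem, namely equation (\ref{2genbin}) in the earlier Proposition. I would first identify the two auxiliary functions $\Psi_1,\Psi_2:\mathbb{C}\times\mathbb{C}\to\mathbb{C}$ dictated by the commutation relations of the hypothesis. Reading off $ba=\tfrac{q^{\nu}}{p^{\mu-1}}ab$ and $xy=\tfrac{q^{\nu-1}}{p^{\mu}}yx$ yields the natural candidates
\begin{eqnarray*}
\Psi_1(p,q)=\frac{q^{\nu}}{p^{\mu-1}},\qquad
\Psi_2(p,q)=\frac{q^{\nu-1}}{p^{\mu}},
\end{eqnarray*}
which are both positive in the parameter range $0<pq<1$, $p^\mu<q^{\nu-1}$, $p>1$ under which the $(p,q;\mu,\nu,h)$-algebra is defined.

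Next I would verify that these $\Psi_i$ implement the binomial recursion demanded by the general theorem. This is precisely content of Proposition~\ref{2PropoHouk2}, formula (\ref{2Heq7}), which reads
\begin{eqnarray*}
\left[\begin{array}{c} n+1 \\ k \end{array}\right]_{p,q,h}^{\mu,\nu}
=\frac{q^{\nu k}}{p^{(\mu-1)k}}\left[\begin{array}{c} n \\ k \end{array}\right]_{p,q,h}^{\mu,\nu}
+\frac{q^{(\nu-1)(n+1-k)}}{p^{\mu(n+1-k)}}\left[\begin{array}{c} n \\ k-1 \end{array}\right]_{p,q,h}^{\mu,\nu},
\end{eqnarray*}
i.e.\ exactly $\Psi_1^k\binom{n}{k}+\Psi_2^{n+1-k}\binom{n}{k-1}$ with the $\Psi_i$ above. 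Together with the commutations $[i,j]=0$ for $i\in\{a,b\}$, $j\in\{x,y\}$ assumed in the statement, all hypotheses of the generalized binomial proposition leading to (\ref{2genbin}) are in force.

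Applying that proposition therefore gives
\begin{eqnarray*}
(ax+by)^n=\sum_{k=0}^{n}\left[\begin{array}{c} n \\ k \end{array}\right]_{p,q,h}^{\mu,\nu}a^{n-k}b^{k}y^{k}x^{n-k},
\end{eqnarray*}
which is the required formula. Strictly speaking, no new induction is needed; however, for a self-contained presentation one could also redo the short induction on $n$ as in the proof of Proposition~\ref{2PropoQes3}, using (\ref{2Heq7}) in place of (\ref{2Qeq8}) to collapse the two sums arising from $(ax+by)^{m}(ax+by)$ into a single sum with coefficients $\left[\begin{smallmatrix} m+1 \\ k\end{smallmatrix}\right]_{p,q,h}^{\mu,\nu}$.

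No genuine obstacle is expected: the only step requiring care is the bookkeeping of the exponents $\nu k,(\mu-1)k,(\nu-1)(n+1-k),\mu(n+1-k)$ arising from iterated use of the two $\Psi_i$-commutations when one moves all $a,b$ factors past the $x,y$ factors in the expansion of $(ax+by)^{m+1}$. Matching these exponents with the two terms of the recursion (\ref{2Heq7}) is the crux; once verified, the induction closes automatically and the theorem follows.
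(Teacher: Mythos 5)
Your proof is correct. Where the paper establishes Proposition~\ref{2PropoHouk3} by re-running the induction on $n$ from scratch (expanding $(ax+by)^{m+1}=(ax+by)^m(ax+by)$, commuting $a$ and $b$ past the $x,y$ factors, reindexing, and collapsing the two sums via the recursion (\ref{2Heq7})), you instead observe that the statement is an instance of the general binomial proposition culminating in (\ref{2genbin}): reading off $\Psi_1(p,q)=q^{\nu}/p^{\mu-1}$ and $\Psi_2(p,q)=q^{\nu-1}/p^{\mu}$ from the hypotheses, noting that both are positive for positive $p,q$, and checking that (\ref{2Heq7}) is exactly the required recursion with weights $\Psi_1^{k}$ and $\Psi_2^{n+1-k}$. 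Since the $(p,q;\mu,\nu,h)$-binomial coefficients of (\ref{2Heq5}) are the ${\cal R}(p,q)$-binomial coefficients for the meromorphic function defining this deformation, all hypotheses of the general proposition are in force and the conclusion follows with no new computation. This is a legitimate and arguably better-organized route: the induction is performed once, in the general proposition, rather than repeated for each deformation as the paper does here and in Propositions~\ref{2PropJan3} and \ref{2PropoQes3}; the only price is the explicit identification of the abstract $\Psi_i$ with the concrete commutation factors, which you carry out. Your fallback of redoing the short induction with (\ref{2Heq7}) in place of (\ref{2Qeq8}) is precisely the paper's own proof, so nothing is missing either way.
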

{\bf Proof:} By induction on $n$. Indeed, the result is true for $n=1$.
Suppose it  remains valid for $n\leq m$ and prove that this is also true for $n=m+1:$
\begin{eqnarray*}
 (ax+by)^{m+1}&=& (ax+by)^m(ax+by)\cr&=&
\sum_{k=0}^{m}\left[\begin{array}{c} m \\ k \end{array}\right]_{p,q,h}^{\mu,\nu}a^{m-k}b^ky^kx^{m-k}(ax+by)
\cr&=& \sum_{k=0}^{m}\left[\begin{array}{c} m \\ k \end{array}\right]_{p,q,h}^{\mu,\nu}
\frac{q^{\nu k}}{q^{(\mu-1)k}}a^{m+1-k}b^ky^kx^{m+1-k}
\cr&&\quad+ \sum_{k=0}^{m}\left[\begin{array}{c} m \\ k \end{array}\right]_{p,q,h}^{\mu,\nu}
\frac{q^{(\nu-1)(m-k)}}{p^{\mu(m-k)}}a^{m-k}b^{k+1}y^{k+1}x^{m-k}
\cr&=& a^{m+1}x^{m+1}+
\sum_{k=1}^{m}\left[\begin{array}{c} m \\ k \end{array}\right]_{p,q,h}^{\mu,\nu}
\frac{q^{\nu k}}{q^{(\mu-1)k}}a^{m+1-k}b^ky^kx^{m+1-k}
\cr&&\quad+
\sum_{k=0}^{m-1}\left[\begin{array}{c} m \\ k \end{array}\right]_{p,q,h}^{\mu,\nu}
\frac{q^{(\nu-1)(m-k)}}{p^{\mu(m-k)}}a^{m-k}b^{k+1}y^{k+1}x^{m-k}
\cr&&\quad+ b^{m+1}y^{m+1}
\cr&=& a^{m+1}x^{m+1}+ b^{m+1}y^{m+1}+
\sum_{k=1}^{m}\left(\frac{q^{\nu k}}{q^{(\mu-1)k}}\left[\begin{array}{c} m \\ k \end{array}\right]_{p,q,h}^{\mu,\nu}\right.\cr&&\quad\left.+
\frac{q^{(\nu-1)(m+1-k)}}{p^{\mu(m+1-k)}}\left[\begin{array}{c} m \\ k-1 \end{array}\right]_{p,q,h}^{\mu,\nu}\right)
a^{m+1-k}b^ky^kx^{m+1-k}
\cr&=& a^{m+1}x^{m+1}+
\sum_{k=1}^{m}\left[\begin{array}{c} m+1 \\ k \end{array}\right]_{p,q,h}^{\mu,\nu}
a^{m+1-k}b^ky^kx^{m+1-k}
\cr&&\quad+ b^{m+1}y^{m+1}
\cr&=&\sum_{k=0}^{m+1}\left[\begin{array}{c} m+1 \\ k \end{array}\right]_{p,q,h}^{\mu,\nu}a^{m+1-k}b^ky^kx^{m+1-k},
\end{eqnarray*}
where the use of (\ref{2Heq7}) has been made. Therefore, the result is true for all $n\in\mathbb{N}.$\hfill$\Box$\\
The ${\cal R}(p,q)$-derivative is then reduced to the $(p,q;\mu,\nu,h)-$derivative, given by
\begin{eqnarray}
 \partial_{{\cal R}(p,q)} &=& \partial_{p,q}\frac{p-q}{P-Q}h(p,q)\frac{Q^\nu}{P^\mu} \frac{PQ-1}{(q-p^{-1})Q}\\
&=&\frac{h(p,q)}{(q-p^{-1})z}\frac{Q^\nu}{P^\mu}(P-Q^{-1})\equiv\partial_{p,q,h}^{\mu,\nu}.\nonumber
\end{eqnarray}
Therefore the $(p,q;\mu,\nu,h)-$derivative and the $(p,q;\mu,\nu,h)-$differential of $f\in{\cal O}(\mathbb{D}_R)$
 are given by
\begin{eqnarray}
\partial_{p,q,h}^{\mu,\nu}= h(p,q)\frac{f(zq^\nu/p^{\mu-1})-f(zq^{\nu-1}/p^\mu)}{z(q-p^{-1})}
\end{eqnarray}
and
\begin{eqnarray}
 d_{p,q,h}^{\mu,\nu}f(z)= (dz)h(p,q)\frac{f(zq^\nu/p^{\mu-1})-f(zq^{\nu-1}/p^\mu)}{z(q-p^{-1})}
\end{eqnarray}
respectively, with the Leibniz rule 
\begin{eqnarray}
\qquad\quad d_{p,q,h}^{\mu,\nu}(fg)(z)&=&(dz)h(p,q)\frac{f(zq^\nu/p^{\mu-1})-f(zq^{\nu-1}/p^\mu)}{z(q-p^{-1})} g(zq^{\nu-1}/p^\mu)
\cr &+& (dz)f(zq^\nu/p^{\mu-1})h(p,q)\frac{g(zq^\nu/p^{\mu-1})-g(zq^{\nu-1}/p^\mu)}{z(q-p^{-1})}\cr
&=& \{d_{p,q,h}^{\mu,\nu}f(z)\}\cdot g(zq^{\nu-1}/p^\mu) + f(zq^\nu/p^{\mu-1})\cdot d_{p,q,h}^{\mu,\nu}g(z)\nonumber
\end{eqnarray}
which is equivalent to
\begin{eqnarray}
\qquad\quad d_{p,q,h}^{\mu,\nu}(fg)(z)&=&(dz)h(p,q)\frac{f(zq^\nu/p^{\mu-1})-f(zq^{\nu-1}/p^\mu)}{z(q-p^{-1})} g(zq^\nu/p^{\mu-1})
\cr &+& (dz)f(zq^{\nu-1}/p^\mu)h(p,q)\frac{g(zq^\nu/p^{\mu-1})-g(zq^{\nu-1}/p^\mu)}{z(q-p^{-1})}\cr
&=& \{d_{p,q,h}^{\mu,\nu}f(z)\}\cdot g(zq^\nu/p^{\mu-1}) + f(zq^{\nu-1}/p^\mu)\cdot d_{p,q,h}^{\mu,\nu}g(z).\nonumber
\end{eqnarray}
 From (\ref{2integra2}) we obtain the action of the
$(p,q,\mu,\nu,h)$-integration  on $f\in{\cal O}(\mathbb{D}_R)$ as follows:
\begin{eqnarray}
\quad {\cal I}_{p,q,h}^{\mu,\nu}f(z) &=& \frac{q-p^{-1}}{h(p,q)}\frac{P^\mu/Q^{\nu}}{P-Q^{-1}}zf(z)=\frac{p^{-1}-q}{h(p,q)}\frac{P^\mu/Q^{\nu-1}}{1-PQ}zf(z)\\
&=& \frac{p^{-1}-q}{h(p,q)}\frac{P^\mu}{Q^{\nu-1}}\sum_{j=0}^\infty P^jQ^jzf(z)\cr
&=& \frac{p^{-1}-q}{h(p,q)}\sum_{j=0}^\infty P^{j+\mu}Q^{j+1-\nu} zf(z)\cr
&=& \frac{z(p^{-1}-q)}{h(p,q)}\frac{p^\mu}{q^{\nu-1}}\sum_{j=0}^\infty f(zp^{j+\mu}q^{j+1-\nu}).\nonumber
\end{eqnarray}

From the derivative Leibniz rule
\begin{eqnarray}
\partial^{\mu,\nu}_{p,q,h}(f(z)g(z))&=& \left(\partial^{\mu,\nu}_{p,q,h}f(z)\right)
 \frac{q^{\nu z\partial_z}}{p^{(\mu-1) z\partial_z}}g(z)\\&&+ \left(\frac{q^{(\nu-1) z\partial_z}}{p^{\mu z\partial_z}} f(z)\right) \partial^{\mu,\nu}_{p,q,h}(g(z)),\nonumber
\end{eqnarray}
one deduces $\Psi(x,y)= x^{-(\mu-1)}y^{\nu}$ and $\tilde\Psi(x,y)=y^{-\mu}x^{\nu-1}$.
Hence, (\ref{2CondHop1})-(\ref{2CondHop3}) yield
\begin{eqnarray}
&& \alpha= p^{-\alpha_1\tau(\mu-1)}q^{\alpha_2\tau\nu}, \;
\tilde\alpha= p^{-\tilde\alpha_1\tau\mu}q^{\tilde\alpha_2\tau(\nu-1)},\\
&&\beta= p^{-\beta_1\tau(\mu-1)}q^{\beta_2\tau\nu}, \;
\tilde\beta= p^{-\tilde\beta_1\tau\mu}q^{\tilde\beta_2\tau(\nu-1)},\\
&&(\alpha_1+\beta_1)\tau=0, \;\; (\alpha_2+\beta_2)\tau=0,\;\;
(\tilde\alpha_1+\tilde\beta_1)\tau=0, \;\; (\tilde\alpha_2+\tilde\beta_2)\tau=0.
\end{eqnarray}
Equations (\ref{2CondHop4})-(\ref{2CondHop7}) are of course  satisfied and (\ref{2CondHop12})-(\ref{2CondHop13}) give
\begin{eqnarray}
&& \gamma = p^{-[\tilde\beta_1\mu+\alpha_1(\mu-1)]}q^{\tilde\beta_2(\nu-1)+\alpha_2\nu}\;\;\mbox{and}\;\;
\gamma = p^{-[\tilde\alpha_1\mu+\beta_1(\mu-1)]}q^{\tilde\alpha_2(\nu-1)+\beta_2\nu}
\end{eqnarray}
From equations (\ref{2CondHop8})-(\ref{2CondHop11}) we deduce
\begin{eqnarray}
&&s_1=p^{\alpha_1(\mu-1)}q^{-\alpha_2\nu},\; s_1=p^{\beta_1(\mu-1)}q^{-\beta_2\nu}\\
&&\tilde\alpha_1=-\alpha_1\frac{\mu-1}{\mu},\; \tilde\alpha_2=-\alpha_2\frac{\nu}{\nu-1},\;
\tilde\beta_1=-\beta_1\frac{\mu-1}{\mu},\; \tilde\beta_2=-\beta_2\frac{\nu}{\nu-1}
\end{eqnarray}
so that
\begin{eqnarray}
\tilde \alpha= \alpha^{-1},\quad \tilde \beta= \beta^{-1},\quad \gamma=1.
\end{eqnarray}
Setting
\begin{eqnarray}
\kappa_\alpha= \frac{q^{\alpha_2\nu}}{p^{\alpha_1(\mu-1)}}\;\; \mbox{and}\;\;
\kappa_\beta= \frac{q^{\beta_2\nu}}{p^{\beta_1(\mu-1)}},\;\mbox{ with }\; \alpha_1,\alpha_2,\beta_1,\beta_1\in\mathbb{R},
\end{eqnarray}
we get
\begin{eqnarray}
&&\Psi(p^{\alpha_1N},q^{\alpha_2N})=\kappa_\alpha^N,\;
\tilde\Psi(p^{\tilde\alpha_1N},q^{\tilde\alpha_2N})=\kappa_\alpha^{-N},\\
&&\Psi(p^{\beta_1N},q^{\beta_2N})=\kappa_\beta^N,\;
\tilde\Psi(p^{\tilde\beta_1N},q^{\tilde\beta_2N})=\kappa_\beta^{-N}.
\end{eqnarray}
The remaining conditions are $(\alpha_1+\beta_1)\tau=0$ and $(\alpha_2+\beta_2)\tau=0$ which hold if $\tau=0$ or
$\beta_1=-\alpha_1$ and $\beta_2=-\alpha_2$.

Suppose  $\tau=0$. Then $\alpha=\tilde \alpha= \beta=\tilde\beta=1$.
Then, the coproduct, the counit, the antipode are defined as follows:
\begin{eqnarray}
&&\mathbf{\Delta}(A)= A\otimes\kappa_\alpha^N +\kappa_\alpha^{-N}\otimes A,\\
&&\mathbf{\Delta}(A^\dag)= A^\dagger\otimes\kappa_\beta^N+ \kappa_\beta^{-N}\otimes A^\dag,\\
&&\mathbf{\Delta}(N)= N\otimes \mathbf{1} + \mathbf{1}\otimes N, \\
&&\mathbf{\Delta(1)}=\mathbf{1}\otimes\mathbf{1},\\
&&\mathbf{\epsilon}(A)= 0,\quad
\mathbf{\epsilon}(A^\dag)= 0,\quad
\mathbf{\epsilon}(N)= 0,\quad
\mathbf{\epsilon(1)}= 1,\\
&&\mathbf{S}(A)=- \kappa_\alpha^{-1}A,\quad
\mathbf{S}(A^\dag)= -\kappa_\beta^{-1}A^\dag,\quad
\mathbf{S}(N)= -N,\quad
\mathbf{S(1)}= \mathbf{1}.
\end{eqnarray}

Suppose now $\tau\neq0$ that means $\beta_1=-\alpha_1$ and $\beta_2=-\alpha_2$. So,
\begin{eqnarray*}
\alpha=\kappa_\alpha^{\tau}, \beta=\kappa_\alpha^{-\tau}, s_1= \kappa_\alpha^{-1}, \tilde s_1=\kappa_\alpha.
\end{eqnarray*}
Thus, the coproduct, the counit, the antipode are defined as follows:
\begin{eqnarray}
&&\mathbf{\Delta}(A)= A\otimes\kappa_\alpha^{N+\tau} +\kappa_\alpha^{-N-\tau}\otimes A,\\
&&\mathbf{\Delta}(A^\dag)= A^\dagger\otimes\kappa_\alpha^{-N-\tau}+ \kappa_\alpha^{N+\tau}\otimes A^\dag,\\
&&\mathbf{\Delta}(N)= N\otimes \mathbf{1} + \mathbf{1}\otimes N +\tau\mathbf{1}\otimes\mathbf{1}, \\
&&\mathbf{\Delta(1)}=\mathbf{1}\otimes\mathbf{1},\\
&&\mathbf{\epsilon}(A)= 0,\quad
\mathbf{\epsilon}(A^\dag)= 0,\quad
\mathbf{\epsilon}(N)= -\tau,\quad
\mathbf{\epsilon(1)}= 1,\\
&&\mathbf{S}(A)=- \kappa_\alpha^{-1}A,\;\;
\mathbf{S}(A^\dag)= -\kappa_\alpha A^\dag,\;\;
\mathbf{S}(N)= -N-2\tau\mathbf{1},\;\;
\mathbf{S(1)}= \mathbf{1}.
\end{eqnarray}


\section{${\cal R}(p,q)$-deformed Rogers-Szeg\"o polynomials: associated quantum algebras, 
deformed Hermite polynomials and relevant properties}\label{chap5}
$\;$

This section addresses a new characterization of  ${\cal R}(p,q)-$deformed Rogers-Szeg\"o
polynomials by providing their three-term recursion relation and the associated quantum algebra built with  corresponding 
creation and annihilation operators. The whole construction is performed in  
 a unified way, generalizing all known relevant results which are straightforwardly derived as particular cases. Continuous 
${\cal R}(p,q)-$deformed  Hermite polynomials and their recursion
relation are also deduced. Novel relations are provided and discussed.




 The present
investigation aims at
giving a new realization of the previous
generalized deformed quantum algebras and an  explicit definition of
 the ${\cal R}(p,q)-$Rogers-Szeg\"o polynomials, together with
their three-term recursion relation and  the deformed difference
equation giving rise to the creation and annihilation operators.

\subsection{${\cal R}(p,q)$-Rogers-Szeg\"o polynomials and related quantum algebras} \label{Sect43}
This section aims at providing  realizations of $({\cal
R},p,q)-$deformed quantum algebras induced by $({\cal
R},p,q)-$Rogers-Szeg\"o polynomials. We first define the latter
and their three-term
recursion relation, and then following the procedure elaborated in
\cite{Galetti, Jagannathan&Sridhar10}, we prove that every sequence of these
polynomials forms a basis for the corresponding deformed quantum
algebra.

Indeed, Galetti in \cite{Galetti}, upon recalling the technique of
construction of raising and lowering operators which satisfy an
algebra akin to the usual harmonic oscillator algebra, by using
the three-term recursion relation and the differentiation
expression of Hermite polynomials, has shown that a similar
procedure can be carried out to construct a
$q$-deformed harmonic oscillator algebra, with the help of relations
controlling the   Rogers-Szeg\"o polynomials. Following this author,
Jagannathan and Sridhar in \cite{Jagannathan&Sridhar10} adapted the same approach
 to construct a Bargman-Fock
realization of the harmonic oscillator as well as realizations of
$q$- and  $(p,q)$-
deformed harmonic oscillators based on  Rogers-Szeg\"o polynomials.

As matter of clarity, this section is stratified as follows. We first develop  the synoptic schemes of
known different generalizations and then  display the formalism of
${\cal R}(p,q)$-Rogers-Szeg\"o polynomials.

\subsubsection{Hermite polynomials and harmonic oscillator approach}
The Hermite polynomials are defined as orthogonal polynomials  satisfying the
three-term recursion relation
\begin{eqnarray}\label{4Hermttr}
\mathbb{H}_{n+1}(z)=2z\mathbb{H}_n(z)-2n\mathbb{H}_{n-1}(z)
\end{eqnarray}
and the differentiation relation
\begin{eqnarray}\label{4Hermdiff}
\frac{d}{dz}\mathbb{H}_n(z)= 2n\mathbb{H}_{n-1}(z).
\end{eqnarray}
Inserting Eq. (\ref{4Hermdiff}) in Eq. (\ref{4Hermttr}), one gets
\begin{eqnarray}
\mathbb{H}_{n+1}(z)= \left(2z-\frac{d}{dz}\right)\mathbb{H}_n(z)
\end{eqnarray}
which includes the introduction of a raising operator (see
\cite{Galetti} and references therein), defined as
\begin{eqnarray}
\hat{a}_+= 2z-\frac{d}{dz}
\end{eqnarray}
such that the set of Hermite polynomials can be generated by the
application of this operator to the first polynomial
$\mathbb{H}_0(z)=1$, i.e.,
\begin{eqnarray}
\mathbb{H}_n(z)= \hat{a}_+^n \mathbb{H}_0(z).
\end{eqnarray}
From Eq.(\ref{4Hermdiff}), one defines the lowering operator $\hat{a}_-$ as
\begin{eqnarray}
\hat{a}_-\mathbb{H}_n(z)= \frac{1}{2}\frac{d}{dz}\mathbb{H}_n(z)=n\mathbb{H}_{n-1}(z).
\end{eqnarray}
Furthermore one constructs a number operator in the form
\begin{eqnarray}
\hat{n}= \hat{a}_+\hat{a}_-.
\end{eqnarray}
One can readily check that these operators satisfy the canonical commutation relations
\begin{eqnarray}\label{4HarmAlg}
[ \hat{a}_-,\;\hat{a}_+]=1,\qquad [ \hat{n},\;\hat{a}_-]=-\hat{a}_- ,\qquad [ \hat{n},\;\hat{a}_+]=\hat{a}_+,
\end{eqnarray}
although the operators $\hat{a}_-$  and $\hat{a}_+$ are not the
usual creation and annihilation  operators associated
with the quantum mechanics harmonic oscillator. Thus, we see that
one can obtain  raising,  lowering and  number operators from
the two basic relations satisfied by the Hermite polynomials,
i. e. the three-term recursion relation and the differentiation
relation, respectively, so that they satisfy the well known
commutation relations.

On the other hand, if one considers the usual Hilbert space
spanned by the vectors $|n\rangle$, generated from the vacuum
$|0\rangle$ by the raising operator $\hat{a}_+$, then together
with the lowering operator $\hat{a}_-$, the following relations
hold
\begin{eqnarray}
&&\hat{a}_-\hat{a}_+-\hat{a}_+\hat{a}_+= 1,\cr
&&\langle 0|0\rangle =1,\cr
&&|n\rangle = \hat{a}_+^n|0\rangle,\cr
&&\hat{a}_-|0\rangle =0.
\end{eqnarray}
In particular,  the next expressions, established using  the
previous equations, are in order:
\begin{eqnarray}
&&\hat{a}_+|n\rangle = |n+1\rangle,\cr
&&\hat{a}_-|n\rangle = |n-1\rangle,\cr
&&\langle m|n\rangle =n!\delta_{mn}.
\end{eqnarray}

Now, on the other hand,  examining  the procedure given in \cite{Jagannathan&Sridhar10}, the authors considered the  sequence of polynomials
\begin{eqnarray}
\psi_n(z)=\frac{1}{\sqrt{n!}}{\bf h}_n(z),
\end{eqnarray}
where
\begin{eqnarray}
{\bf h}_n(z)=(1+z)^n =\sum_{k=0}^{n}\left(\begin{array}{c}n\\k\end{array}\right) z^k,
\end{eqnarray}
obeying the  relations
\begin{eqnarray}
\frac{d}{dz}\psi_n(z)&=&\sqrt{n}\psi_{n-1}(z),\\
(1+z)\psi_n(z) &=&\sqrt{n+1} \psi_{n+1}(z),\label{4Hermttr1}\\
(1+z)\frac{d}{dz}\psi_n(z)&=&n\psi_{n}(z),\label{4Hermdiff1}\\
\frac{d}{dz}\left((1+z)\psi_n(z)\right)&=&(n+1)\psi_n(z).
\end{eqnarray}
Here equations (\ref{4Hermttr1}) and (\ref{4Hermdiff1}) are the recursion relation
and the differential equation for polynomials $\psi_n(z),$ respectively.
By analogy to the work done by Galleti, Jagannathan and Sridhar proposed the following relations:
\begin{eqnarray}
\hat{a}_+=(1+z),\qquad \hat{a}_-=\frac{d}{dz}, \qquad \hat{n}= (1+z)\frac{d}{dz},
\end{eqnarray}
for creation (or raising), annihilation (or lowering) and number
operators, respectively, and found that  the set
$\{\psi_n(z)\;|n=0, 1, 2, \cdots\}$ forms a basis for the
Bargman-Fock realization of the harmonic oscillator
(\ref{4HarmAlg}).

\subsubsection{Rogers-Szeg\"o polynomials and $q$-deformed harmonic oscillator}\label{4qHosc}
Here in analogous way as  Jagannathan and
Sridhar \cite{Jagannathan&Sridhar10}, we perform a  construction of the creation, annihilation and
number operators from the three-term recurrence relation and the
$q-$difference equation founding the Rogers-Szeg\"o
polynomials. This procedure  a little differs from
 that used by Galetti \cite{Galetti} to obtain raising, lowering and number operators.\\
The Rogers-Szeg\"o polynomials are defined as
\begin{eqnarray}
H_n(z;q)=\sum_{k=0}^{n}\left[\begin{array}{c}n\\k\end{array}\right]_q z^k,\quad n=0,1,2\cdots
\end{eqnarray}
and satisfy a three-term recursion relation
\begin{eqnarray}\label{4qRogersttr}
H_{n+1}(z;q)=(1+z)H_n(z;q)-z(1-q^n)H_{n-1}(z;q)
\end{eqnarray}
as well as the $q$-difference equation
\begin{eqnarray}
\partial_qH_n(z;q)= [n]_qH_{n-1}(z;q).
\end{eqnarray}
In the limit case $q\to 1$, the Rogers-Szeg\"o polynomial of degree $n$ ($n= 0, 1, 2, \cdots$) well converges to
\begin{eqnarray*}
{\bf h}_n(z) =\sum_{k=0}^{n}\left(\begin{array}{c}n\\k\end{array}\right) z^k
\end{eqnarray*}
as required.
Defining
\begin{eqnarray}\label{4qRogersnorm}
\psi_n(z;q)=\frac{1}{\sqrt{[n]!_q}}H_n(z)
=\frac{1}{\sqrt{[n]!_q}}\sum_{k=0}^{n}\left[\begin{array}{c}n\\k\end{array}\right]_q z^k,\quad n=0,1,2\cdots,
\end{eqnarray}
one can straightforwardly infer  that
\begin{eqnarray}\label{4qRogersdiff}
\partial_q\psi_n(z;q)= \sqrt{[n]_q}\psi_{n-1}(z;q)
\end{eqnarray}
with the property that for $n=0, 1, 2, \cdots$
\begin{eqnarray}
\partial_q^{n+1}\psi_n(z;q) = 0\quad\mbox{and}\quad \partial_q^{m}\psi_n(z;q) \neq 0 \quad\mbox{for any  }
m<n+1.
\end{eqnarray}
It follows from Eqs. (\ref{4qRogersttr}) and (\ref{4qRogersnorm})
that the polynomials $\{\psi_n(z;q)\;|\; n= 0, 1, 2, \cdots\}$
satisfy the following three-term recursion relation
\begin{eqnarray}\label{4qRogersttr1}
 \sqrt{[n+1]_q}\psi_{n+1}(z;q)=(1+z)\psi_n(z;q)-z(1-q) \sqrt{[n]_q}\psi_{n-1}(z;q)
\end{eqnarray}
and the  $q-$difference equation
\begin{eqnarray}\label{4qRogersttr2}
\left((1+z)-(1-q)z\;\partial_q\right)\psi_n(z;q)=  \sqrt{[n+1]_q}\psi_{n+1}(z;q)
\end{eqnarray}
obtained from Eq.(\ref{4qRogersdiff}).
Hence, it is natural to formally define the number operator $N$  as
\begin{eqnarray}\label{4qRogersnb}
N\psi_n(z;q)= n\psi_{n}(z;q)
\end{eqnarray}
determined for the creation and  annihilation operators expressed as
\begin{eqnarray}
A^\dag = 1+z-(1-q)z\;\partial_q \quad\mbox{and}\quad A = \partial_q
\end{eqnarray}
respectively.
Indeed, the proofs of the following  relations are immediate:
\begin{eqnarray}
N\psi_n(z;q)&=& n\psi_{n}(z;q),\\
A^\dag\psi_n(z;q)&=&  \sqrt{[n+1]_q}\psi_{n+1}(z;q),\\
A\psi_n(z;q)&=&  \sqrt{[n]_q}\psi_{n-1}(z;q),\\
A^\dag A\psi_n(z;q)&=&  [n]_q\psi_{n}(z;q) = [N]_q\psi_{n}(z;q),\\
AA^\dag\psi_n(z;q)&=&  [n+1]_q\psi_{n}(z;q) = [N + 1]_q\psi_{n}(z;q).
\end{eqnarray}
Therefore, one concludes that the set of polynomials $\{\psi_n(z;q)\;|n = 0, 1, 2,
\cdots \}$ provides a basis for a realization of the $q$-deformed
harmonic oscillator algebra given by
\begin{eqnarray}\label{4qHarm}
AA^\dag - qA^\dag A=1,\qquad [ N, \;A]=-A ,\qquad [N,\;A^\dag]=A^\dag
\end{eqnarray}

\subsubsection{${\cal R}(p,q)-$generalized Rogers-Szeg\"o polynomials and quantum algebras}
We can now supply the general procedure for constructing
the recursion relation for the ${\cal R}(p,q)-$Rogers-Szeg\"o
polynomials and the related ${\cal R}(p,q)$-difference
equation that allow to define the
creation, annihilation and number operators for a given $({\cal
R},p,q)-$deformed quantum algebra. This is summarized as follows.
\begin{theorem}\label{4theoBuk}
If $\phi_i(x,y)$ ($i=1,2,3$) are functions satisfying:
\begin{eqnarray}
&&\phi_i(p,q)\neq 0\qquad \mbox{for   } i=1, 2, 3,\label{4RpqRog0}\\
&&\phi_i(P,Q)z^k= \phi_i^k(p,q)z^k \quad \mbox{for }\; z\in\mathbb{C},\;k= 0, 1, 2,
\cdots\quad i=1, 2\label{4RpqRog1}
\end{eqnarray}
and if, moreover, the following relation between ${\cal R}(p,q)-$binomial coefficients holds
\begin{eqnarray}\label{4RpqRog2}
&& \left[\begin{array}{c} n+1 \\ k \end{array}\right]_{{\cal R}(p,q)} =
\phi_1^k(p,q)\left[\begin{array}{c} n \\ k \end{array}\right]_{{\cal R}(p,q)} +
\phi_2^{n+1-k}(p,q)\left[\begin{array}{c} n \\ k-1 \end{array}\right]_{{\cal R}(p,q)}
\cr&&\qquad\qquad\qquad\qquad\qquad\qquad\qquad
-\phi_3(p,q){\cal R}(p^n,q^n)\left[\begin{array}{c} n-1 \\ k-1 \end{array}\right]_{{\cal R}(p,q)}
\end{eqnarray}
for $1\leq k\leq n$, then the ${\cal R}(p,q)-$Rogers-Szeg\"o polynomials defined as
\begin{eqnarray}
H_n(z;{\cal R}(p,q))=\sum_{k=0}^{n}\left[\begin{array}{c}n\\k\end{array}\right]_{{\cal R}(p,q)} z^k,\quad n=0,1,2\cdots
\end{eqnarray}
satisfy the three-term recursion relation
\begin{eqnarray}\label{4RpqRogersttr}
H_{n+1}(z;{\cal R}(p,q))&=& H_n\left(\phi_1(p,q)z:{\cal R}(p,q)\right)
\cr&&\qquad+ z\phi_2^n(p,q)H_n\left(z\phi_2^{-1}(p,q);{\cal R}(p,q)\right)
\cr&&\qquad-z\phi_3(p,q){\cal R}(p^n,q^n)H_{n-1}\left(z;{\cal R}(p,q)\right)
\end{eqnarray}
and ${\cal R}(p,q)-$difference equation
\begin{eqnarray}\label{4RpqRogerdiff}
\partial_{{\cal R}(p,q)}H_n(z;{\cal R}(p,q))= {\cal R}(p^n,q^n)H_{n-1}(z;{\cal R}(p,q)).
\end{eqnarray}
\end{theorem}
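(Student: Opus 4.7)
\bigskip
\noindent\textbf{Proof proposal.} The plan is to treat the two claims separately, as each reduces to a short algebraic manipulation of the defining series
$$
H_n(z;{\cal R}(p,q)) \;=\; \sum_{k=0}^{n}\left[\begin{array}{c}n\\k\end{array}\right]_{{\cal R}(p,q)} z^k,
$$
making essential use of the respective hypotheses (\ref{4RpqRog0})--(\ref{4RpqRog2}).

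For the ${\cal R}(p,q)$-difference equation (\ref{4RpqRogerdiff}), I would apply $\partial_{{\cal R}(p,q)}$ term by term to the defining sum, using $\partial_{{\cal R}(p,q)} z^k = {\cal R}(p^k,q^k)\,z^{k-1}$ for $k\geq 1$. Re-indexing with $j=k-1$ reduces the problem to establishing the single identity
$$
\left[\begin{array}{c}n\\j+1\end{array}\right]_{{\cal R}(p,q)}{\cal R}(p^{j+1},q^{j+1})
\;=\;{\cal R}(p^n,q^n)\left[\begin{array}{c}n-1\\j\end{array}\right]_{{\cal R}(p,q)},
\qquad 0\leq j\leq n-1,
$$
which follows at once from the definition of the ${\cal R}(p,q)$-binomial coefficient by expanding the ${\cal R}(p,q)$-factorials and cancelling the common factor ${\cal R}!(p^j,q^j){\cal R}!(p^{n-j-1},q^{n-j-1})$. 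This step is essentially mechanical.

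For the three-term recursion (\ref{4RpqRogersttr}), the strategy is to insert the hypothesized three-term recurrence (\ref{4RpqRog2}) into the sum defining $H_{n+1}(z;{\cal R}(p,q))$. First I would isolate the two boundary contributions $k=0$ and $k=n+1$, which contribute $1$ and $z^{n+1}$ respectively, and then split the remaining sum $\sum_{k=1}^{n}$ into three pieces according to the three terms of (\ref{4RpqRog2}). The $\phi_1$-piece, $\sum_{k=1}^{n}\phi_1^k(p,q)\!\left[\begin{array}{c}n\\k\end{array}\right]_{{\cal R}(p,q)}\!z^k$, is recognised via (\ref{4RpqRog1}) as $H_n(\phi_1(p,q)z;{\cal R}(p,q))-1$. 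The $\phi_2$-piece is handled by the index shift $k\mapsto k+1$, factoring $z\phi_2^n(p,q)$ and again invoking (\ref{4RpqRog1}) to obtain $z\phi_2^n(p,q)H_n(z\phi_2^{-1}(p,q);{\cal R}(p,q))-z^{n+1}$. The $\phi_3$-piece, after the same index shift, reads $-z\phi_3(p,q){\cal R}(p^n,q^n)H_{n-1}(z;{\cal R}(p,q))$. Collecting everything, the extraneous $\pm 1$ and $\pm z^{n+1}$ cancel exactly against the boundary contributions, yielding (\ref{4RpqRogersttr}).

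The only delicate point is the \emph{bookkeeping}: one must verify that the boundary terms $k=0$ and $k=n+1$ cancel precisely with the $-1$ and $-z^{n+1}$ generated by rewriting the partial sums $\sum_{k=1}^{n}$ as full sums $\sum_{k=0}^{n}$ and $\sum_{k=0}^{n-1}$. Given the explicit form of $\phi_1,\phi_2,\phi_3$ and conditions (\ref{4RpqRog0})--(\ref{4RpqRog1}), this cancellation is routine but should be written out in detail to confirm that the hypothesis (\ref{4RpqRog2}) is consistent at the endpoints $k=0$ and $k=n+1$ (where $\left[\begin{array}{c}n\\-1\end{array}\right]_{{\cal R}(p,q)}$ and $\left[\begin{array}{c}n-1\\n\end{array}\right]_{{\cal R}(p,q)}$ are taken to be zero by convention). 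No deeper obstacle is anticipated.
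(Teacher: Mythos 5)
Your proposal is correct and follows essentially the same route as the paper: the paper likewise multiplies the binomial recursion (\ref{4RpqRog2}) by $z^k$, sums over $k=1,\dots,n$, and leaves the remaining re-indexing and boundary-term cancellation as "a short computation," while the difference equation is obtained exactly by the term-by-term differentiation and factorial cancellation you describe. Your version simply writes out the details the paper omits; the bookkeeping you flag (the $\pm 1$ and $\pm z^{n+1}$ cancellations against the $k=0$ and $k=n+1$ boundary terms) checks out.
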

{\bf Proof:} Multiplying the two sides of the relation (\ref{4RpqRog2}) by $z^k$ and adding for $k= 1$ to $n$ we get
\begin{eqnarray}
&&\sum_{k=1}^n\left[\begin{array}{c} n+1 \\ k \end{array}\right]_{{\cal R}(p,q)}z^k =
\sum_{k=1}^n\phi_1^k(p,q)\left[\begin{array}{c} n \\ k \end{array}\right]_{{\cal R}(p,q)}z^k
\cr&&\qquad\qquad\qquad\qquad\qquad +
\sum_{k=1}^n\phi_2^{n+1-k}(p,q)\left[\begin{array}{c} n \\ k-1 \end{array}\right]_{{\cal R}(p,q)}z^k
\cr&&\qquad\qquad\qquad\qquad\qquad
-\phi_3(p,q){\cal R}(p^n,q^n)\sum_{k=1}^n\left[\begin{array}{c} n-1 \\ k-1 \end{array}\right]_{{\cal R}(p,q)}z^k.
\end{eqnarray}
After a short computation and using the condition (\ref{4RpqRog2})
we get Eq.(\ref{4RpqRogersttr}). Then there immediately results the  proof of
Eq.(\ref{4RpqRogerdiff}).
\hfill $\square$\\
Setting
\begin{equation}
 \psi_{n}(z;{\cal R}(p,q))= \frac{1}{\sqrt{{\cal R}!(p^n,q^n)}}H_n(z;{\cal R}(p,q)),
\end{equation}
and using the equations (\ref{4RpqRogersttr}) and
(\ref{4RpqRogerdiff}) yield the three-term recursion relation
\begin{eqnarray}\label{4RpqRogersttr1}
&&\left(\phi_1(P,Q)+ z\phi_2^n(p,q)\phi_2^{-1}(P,Q)
-z\phi_3(p,q)\partial_{{\cal R}(p,q)}\right)\psi_n(z;{\cal R}(p,q))=\cr
&&\qquad\qquad\qquad\sqrt{{\cal R}(p^{n+1},q^{n+1})}\;\;\psi_{n+1}(z;{\cal R}(p,q))
\end{eqnarray}
and ${\cal R}(p,q)-$difference equation
\begin{eqnarray}\label{4RpqRogerdiff1}
\partial_{{\cal R}(p,q)}\psi_n(z;{\cal R}(p,q))= \sqrt{{\cal R}(p^n,q^n)}\;\;\psi_{n-1}(z;{\cal R}(p,q))
\end{eqnarray}
for the polynomials $\psi_n(z;{\cal R}(p,q))$ with the virtue that  for $n=0, 1, 2, \cdots$
\begin{eqnarray}
\partial_{{\cal R}(p,q)}^{n+1}\psi_n(z;{\cal R}(p,q)) = 0\;\mbox{and}\;\;
\partial_{{\cal R}(p,q)}^{m}\psi_n(z;{\cal R}(p,q)) \neq 0 \;\;\;\mbox{for }m<n+1.
\end{eqnarray}
Now, formally defining  the number operator $N$ as
\begin{eqnarray}
 N\psi_n(z;{\cal R}(p,q))= n\psi_n(z;{\cal R}(p,q)),
\end{eqnarray}
and the raising  and lowering  operators by
\begin{eqnarray}
&&A^\dag =\left(\phi_1(P,Q)+ z\phi_2^N(p,q)\phi_2^{-1}(P,Q)
-z\phi_3(p,q)\partial_{{\cal R}(p,q)}\right)
 \;\mbox{and}\;
 \cr &&A = \partial_{{\cal R}(p,q)},
\end{eqnarray}
respectively, the set of  polynomials $\{\psi_n(z;{\cal
R},p,q)\;|\;n= 0, 1, 2, \cdots \}$ provides a basis for a
realization of ${\cal R}(p,q)-$deformed quantum algebra ${\cal
A}_{{\cal R}(p,q)}$ satisfying the commutation relations
(\ref{2algN1}). Provided the above formulated theorem, we can now show how the  realizations in
terms of Rogers-Szeg\"o polynomials  can be derived for different known deformations simply by
determining the functions $\phi_i$ ($i=1, 2, 3$) that satisfy the
relations (\ref{4RpqRog0})-(\ref{4RpqRog2}).

\subsection{Continuous ${\cal R}(p,q)-$Hermite polynomials}\label{Sect44}
We exploit here the  peculiar relation established in the   theory of $q-$deformation between Rogers-Szeg\"o polynomials and Hermite
polynomials \cite{IsmailMoudard05,Jagannathan&Rao,Klimyk&Schmudgen,Koekoek}  and given by
\begin{eqnarray}
\mathbb{H}_n(\cos\theta;q)= e^{in\;\theta}H_n(e^{-2i\;\theta};q)=
\sum_{k=0}^n\left[\begin{array}{c}n\\k\end{array}\right]_q\;e^{i(n-2k)\theta}, n=0, 1, 2, \cdots,
\end{eqnarray}
where $\mathbb{H}_n$ and  $H_n$ stand for the Hermite and
Rogers-Szeg\"o polynomials, respectively. Is also of interest the
property that all the $q-$Hermite polynomials  can  be explicitly
recovered from the initial one $\mathbb{H}_{0}(\cos\theta;q)= 1$, using
the three-term recurrence relation
\begin{eqnarray}\label{4qHttr}
 \mathbb{H}_{n+1}(\cos\theta;q)= 2\cos\theta\mathbb{H}_{n}(\cos\theta;q)-(1-q^{n})\mathbb{H}_{n-1}(\cos\theta;q)
\end{eqnarray}
with $\mathbb{H}_{-1}(\cos\theta;q)=0$.

In the same way we define the ${\cal R}(p,q)$-Hermite polynomials through the ${\cal R}(p,q)$-Rogers-Szeg\"o
polynomials as
\begin{eqnarray}
\mathbb{H}_n(\cos\theta;{\cal R}(p,q))= e^{in\;\theta}H_n(e^{-2i\;\theta};{\cal R}(p,q)),\quad n=0, 1, 2, \cdots.
\end{eqnarray}
Then the next statement is true.
\begin{proposition}\label{4propBuk}
Under the hypotheses of the theorem \ref{4theoBuk}, the continuous ${\cal R}(p,q)$-Hermite polynomials satisfy
the following three-term recursion relation
\begin{eqnarray}
\mathbb{H}_{n+1}(\cos\theta;{\cal R}(p,q))&=&
e^{i\;\theta}\phi_1^{\frac{n}{2}}(p,q)\phi_1(P,Q)\mathbb{H}_{n}(\cos\theta;{\cal R}(p,q))
\cr&&\quad + e^{-i\;\theta}\phi_2^{\frac{n}{2}}(p,q)\phi_2^{-1}(P,Q) \mathbb{H}_{n}(\cos\theta;{\cal R}(p,q))
\cr&&\quad-\phi_3(p,q){\cal R}(p^n,q^n)\mathbb{H}_{n-1}(\cos\theta;{\cal R}(p,q)).\qquad
\end{eqnarray}
\end{proposition}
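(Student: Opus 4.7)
The plan is to specialize the three-term recursion relation for the ${\cal R}(p,q)$-Rogers-Szeg\"o polynomials, established in Theorem~\ref{4theoBuk}, to $z = e^{-2i\theta}$ and then convert the scaled polynomials $H_n(\phi_i^{\pm 1}(p,q)z;{\cal R}(p,q))$ into the operator form involving $\phi_i(P,Q)$ via the eigen-condition (\ref{4RpqRog1}). Multiplying the recursion (\ref{4RpqRogersttr}) throughout by $e^{i(n+1)\theta}$ after the substitution $z = e^{-2i\theta}$, the definition $\mathbb{H}_m(\cos\theta;{\cal R}(p,q)) := e^{im\theta}H_m(e^{-2i\theta};{\cal R}(p,q))$ immediately turns the left-hand side into $\mathbb{H}_{n+1}(\cos\theta;{\cal R}(p,q))$. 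Since $e^{i(n+1)\theta}\,e^{-2i\theta}=e^{i(n-1)\theta}$ is exactly the prefactor demanded by the definition of $\mathbb{H}_{n-1}$, the third term on the right collapses without further work to $-\phi_3(p,q){\cal R}(p^n,q^n)\mathbb{H}_{n-1}(\cos\theta;{\cal R}(p,q))$.

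The substantive step is to recast the first two terms in operator form. Expanding the scaled Rogers--Szeg\"o polynomials and regrouping the exponentials yields
\begin{eqnarray*}
e^{i(n+1)\theta}H_n(\phi_1(p,q)e^{-2i\theta};{\cal R}(p,q)) = e^{i\theta}\sum_{k=0}^{n}\left[\begin{array}{c} n \\ k \end{array}\right]_{{\cal R}(p,q)}\phi_1^{k}(p,q)\,e^{i(n-2k)\theta},
\end{eqnarray*}
and similarly
\begin{eqnarray*}
e^{i(n+1)\theta}e^{-2i\theta}\phi_2^n(p,q)H_n(\phi_2^{-1}(p,q)e^{-2i\theta};{\cal R}(p,q)) = e^{-i\theta}\sum_{k=0}^{n}\left[\begin{array}{c} n \\ k \end{array}\right]_{{\cal R}(p,q)}\phi_2^{n-k}(p,q)\,e^{i(n-2k)\theta}.
\end{eqnarray*}
Extending the eigen-action (\ref{4RpqRog1}) from the integer powers $z^k$ to the shifted powers $z^{k-n/2}=e^{i(n-2k)\theta}$ through the natural prescription $\phi_i(P,Q)z^{\alpha}=\phi_i^{\alpha}(p,q)z^{\alpha}$, a one-line check gives $\phi_1^{n/2}(p,q)\phi_1(P,Q)\,e^{i(n-2k)\theta}=\phi_1^{k}(p,q)\,e^{i(n-2k)\theta}$ and $\phi_2^{n/2}(p,q)\phi_2^{-1}(P,Q)\,e^{i(n-2k)\theta}=\phi_2^{n-k}(p,q)\,e^{i(n-2k)\theta}$. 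The two sums therefore equal $e^{i\theta}\phi_1^{n/2}(p,q)\phi_1(P,Q)\mathbb{H}_n(\cos\theta;{\cal R}(p,q))$ and $e^{-i\theta}\phi_2^{n/2}(p,q)\phi_2^{-1}(P,Q)\mathbb{H}_n(\cos\theta;{\cal R}(p,q))$, respectively.

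Reassembling the three reshaped pieces yields the claimed recursion. The only technical subtlety is the coherent meaning of the operators $\phi_i(P,Q)$ on the half-integer shifted monomials $z^{k-n/2}$ intrinsic to $\mathbb{H}_n$; we adopt the obvious extension of (\ref{4RpqRog1}) to $\alpha\in\frac{1}{2}\mathbb{Z}$, which is forced by the symbolic form of the eigen-condition and is the interpretation already implicit in the statement of the proposition. As an internal consistency check, setting $\phi_1=\phi_2=\phi_3=1$ and ${\cal R}(p^n,q^n)=1-q^n$ collapses the identity to $\mathbb{H}_{n+1}=2\cos\theta\,\mathbb{H}_n-(1-q^n)\mathbb{H}_{n-1}$, recovering the classical continuous $q$-Hermite recursion (\ref{4qHttr}).
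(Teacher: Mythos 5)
Your proposal is correct and follows essentially the same route as the paper: substitute $z=e^{-2i\theta}$ into the Rogers--Szeg\"o recursion (\ref{4RpqRogersttr}), multiply by $e^{i(n+1)\theta}$, and absorb the scalings $\phi_1(p,q)$ and $\phi_2^{-1}(p,q)$ into the operators $\phi_1(P,Q)$, $\phi_2^{-1}(P,Q)$ acting on the monomials $e^{i(n-2k)\theta}=z^{k-n/2}$, which produces the compensating factors $\phi_i^{n/2}(p,q)$. Your explicit treatment of the extension of $\phi_i(P,Q)$ to half-integer powers of $z$ is in fact slightly more careful than the paper's, which performs the same commutation implicitly.
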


{\bf Proof:} Multiplying the two sides of the three-term
recursion relation (\ref{4RpqRogersttr}) by $e^{i(n+1)\theta},$ we
obtain, for $z= e^{-2i\theta}$,
\begin{eqnarray*}
e^{i(n+1)\theta}H_{n+1}(e^{-2i\theta};{\cal R}(p,q))&=&
e^{i(n+1)\theta}H_n\left(\phi_1(p,q)e^{-2i\theta};{\cal R}(p,q)\right)
\cr&+& e^{i(n-1)\theta}\phi_2^n(p,q)H_n\left(\phi_2^{-1}(p,q)e^{-2i\theta};{\cal R}(p,q)\right)
\cr&-&e^{i(n-1)\theta}\phi_3(p,q){\cal R}(p^n,q^n)H_{n-1}\left(e^{-2i\theta};{\cal R}(p,q)\right)
\cr&=&e^{i\theta}e^{in\theta}\phi_1(P,Q)H_n\left(e^{-2i\theta};{\cal R}(p,q)\right)
\cr&+& e^{-i\theta}\phi_2^n(p,q)e^{in\theta} \phi_2^{-1}(P,Q)H_n\left(e^{-2i\theta};{\cal R}(p,q)\right)
\cr&-& \phi_3(p,q){\cal R}(p^n,q^n)e^{i(n-1)\theta}H_{n-1}\left(e^{-2i\theta};{\cal R}(p,q)\right).
\end{eqnarray*}
The required result follows  from the use of the equalities
\begin{eqnarray}
e^{in\theta}\phi_1(P,Q)H_n\left(e^{-2i\theta};{\cal R}(p,q)\right)=
\phi_1^{\frac{n}{2}}(p,q)\phi_1(P,Q)e^{in\theta}H_n\left(e^{-2i\theta};{\cal R}(p,q)\right), 
\end{eqnarray}
\begin{eqnarray}
e^{in\theta}\phi_2^{-1}(P,Q)H_n\left(e^{-2i\theta};{\cal R}(p,q)\right)=
\phi_2^{-\frac{n}{2}}(P,Q)\phi_2^{-1}(P,Q)e^{in\theta}
H_n\left(e^{-2i\theta};{\cal R}(p,q)\right)\cr
&&
\end{eqnarray}
with
\begin{eqnarray}
 \phi_j(P,Q)e^{-2ik\theta} = \phi_j^k(p,q)e^{-2ik\theta}, \; j=1, 2,\; k= 0, 1, 2, \cdots.
\end{eqnarray}
\hfill $\square$\\


\subsection{Relevant particular cases}\label{Sect45}
The following pertinent cases deserve to be raised, as their derivation from the previous general theory
appeals concrete expressions for the deformed function ${\cal R}(p,q).$

\subsubsection{${\cal R}(x,y) = \frac{x-y}{p-q}$}\label{4janasec}

In this case, the ${\cal R}(p,q)-$factors are simply given by
\begin{eqnarray*}
 [n]_{p,q}={\cal R}(p^n,q^n)=\frac{p^n-q^n}{p-q},\quad n= 0,\; 1,\; 2,\; \cdots
\end{eqnarray*}
 with the ${\cal R}(p,q)-$factorials defined by
\begin{eqnarray}
[n]!_{p,q}= \left\{\begin{array}{lr} 1 \quad \mbox{for   } \quad n=0 \quad \\
\displaystyle\prod_{k=1}^n\frac{p^k-q^k}{p-q}=\frac{((p,q);(p,q))_n}{(p-q)^n} \quad \mbox{for } \quad n\geq
1. \quad \end{array} \right.
\end{eqnarray}
They correspond to the Jagannathan-Srinivasa $(p,q)$-numbers  and $(p,q)$-factorials \cite{Jagannathan&Rao, Jagannathan&Sridhar10}.

There result the following  relevant properties.
\begin{proposition}\label{4PropJan1}
If $n$ and $m$ are nonnegative integers, then
\begin{eqnarray}
[n]_{p,q}&=& \sum_{k=0}^{n-1}p^{n-1-k}q^k,
\cr [n+m]_{p,q}&=& q^m[n]_{p,q}+p^n[m]_{p,q}= p^m[n]_{p,q}+q^n[m]_{p,q},
\cr [-m]_{p,q}&=& -q^{-m}p^{-m}[m]_{p,q},
\cr [n-m]_{p,q}&=& q^{-m}[n]_{p,q}-q^{-m}p^{n-m}[m]_{p,q}= p^{-m}[n]_{p,q}-q^{n-m}p^{-m}[m]_{p,q},\quad
\cr [n]_{p,q}&=& [2]_{p,q}[n-1]_{p,q}-pq[n-2]_{p,q}.
\end{eqnarray}
\end{proposition}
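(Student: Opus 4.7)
The plan is to prove each of the five identities by direct algebraic manipulation of the defining formula $[n]_{p,q} = (p^n - q^n)/(p-q)$, working through them in the order stated since each essentially builds on its predecessor. No deep machinery is required; the whole proposition is a warm-up collecting the arithmetic properties of the symbol that will be used elsewhere in the chapter.

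First I would establish the sum representation. Factoring the geometric progression gives $p^n - q^n = (p-q)\sum_{k=0}^{n-1} p^{n-1-k} q^k$ for $n \geq 1$, and dividing by $p-q$ yields identity~(i). Next, for the addition formula, I would write $p^{n+m} - q^{n+m}$ two ways by inserting a telescoping middle term: $p^{n+m} - q^{n+m} = p^n(p^m - q^m) + q^m(p^n - q^n)$, and symmetrically $p^{n+m} - q^{n+m} = p^m(p^n - q^n) + q^n(p^m - q^m)$. Dividing each by $p-q$ gives the two expressions for $[n+m]_{p,q}$.

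For the negative-index identity, I would simply compute $p^{-m} - q^{-m} = (q^m - p^m)/(p^m q^m) = -(p-q)[m]_{p,q}/(p^m q^m)$, and divide by $p-q$. The subtraction formula for $[n-m]_{p,q}$ will then follow by substituting $-m$ in place of $m$ in the two forms of the addition formula and applying the negative-index identity to rewrite $[-m]_{p,q}$; the two symmetric presentations arise from the two variants of the addition rule. Finally, for the three-term recursion, I would use $[2]_{p,q} = p+q$ and expand
\begin{eqnarray*}
(p+q)[n-1]_{p,q} - pq\,[n-2]_{p,q} = \frac{(p+q)(p^{n-1}-q^{n-1}) - pq(p^{n-2}-q^{n-2})}{p-q},
\end{eqnarray*}
where the numerator simplifies, after cancellation of the cross terms $qp^{n-1}$ and $pq^{n-1}$, to $p^n - q^n$.

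Honestly, I do not anticipate any substantive obstacle: the whole argument is bookkeeping around the factorization $p^n - q^n$, and the only place one must be mildly careful is tracking signs and exponents when combining the negative-index identity with the addition formula to get the subtraction formula in both symmetric forms. The identities of Proposition~\ref{4PropJan1} in fact recover the classical $q$-analogues (setting $p=1$ yields the Heine-number identities), so the computations can be cross-checked against known limits at each step.
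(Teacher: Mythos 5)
Your proof is correct. The paper itself states this proposition without proof (deferring to the cited Jagannathan--Srinivasa reference), but your method --- the geometric-sum factorization of $p^n-q^n$, the telescoping decomposition $p^{n+m}-q^{n+m}=q^m(p^n-q^n)+p^n(p^m-q^m)$, the direct computation for $[-m]_{p,q}$, and the combination of these for the subtraction and three-term recursion identities --- is exactly the argument the authors spell out for the analogous Quesne-deformation identities in Propositions~\ref{2PropoQes1} and~\ref{4PropoQes1}, so it matches the paper's approach in all essentials.
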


\begin{proposition}\label{4PropJan2}
The $(p,q)-$binomial coefficients
\begin{eqnarray}
 \left[\begin{array}{c} n \\ k \end{array}\right]_{p,q}\equiv\frac{[n]!_{p,q}}{[k]!_{p,q}[n-k]!_{p,q}}=
\frac{((p,q);(p,q))_n}{((p,q);(p,q))_k((p,q);(p,q))_{n-k}}, 
\end{eqnarray}
where $0\leq k\leq n,\;\; n\in\mathbb{N},$ and $((p,q);(p,q))_m = (p-q)(p^2-q^2)\cdots(p^m-q^m)$,
$m\in\mathbb{N},$ satisfy the following identities:
\begin{eqnarray}
 \left[\begin{array}{c} n \\ k \end{array}\right]_{p,q}&=& \left[\begin{array}{c} n \\ n-k \end{array}\right]_{p,q}=
p^{k(n-k)}\left[\begin{array}{c} n \\ k \end{array}\right]_{q/p}=
p^{k(n-k)}\left[\begin{array}{c} n \\ n-k \end{array}\right]_{q/p},\label{4Janeq1}
\\ \left[\begin{array}{c} n+1 \\ k \end{array}\right]_{p,q} &=& p^k\left[\begin{array}{c} n \\ k \end{array}\right]_{p,q}
+q^{n+1-k}\left[\begin{array}{c} n \\ k-1 \end{array}\right]_{p,q},\label{4Janeq2}
\\ \left[\begin{array}{c} n+1 \\ k \end{array}\right]_{p,q} &=&
p^{k}\left[\begin{array}{c} n \\ k \end{array}\right]_{p,q} +
p^{n+1-k}\left[\begin{array}{c} n \\ k-1 \end{array}\right]_{p,q} -(p^n-q^n)
\left[\begin{array}{c} n-1 \\ k-1 \end{array}\right]_{p,q}\quad\label{4Janeq3}
\end{eqnarray}
with
\begin{eqnarray*}
\left[\begin{array}{c} n \\ k \end{array}\right]_{q/p}= \frac{(q/p; q/p)_n}{(q/p; q/p)_k(q/p; q/p)_{n-k}},
\end{eqnarray*}
where $(q/p; q/p)_n = (1-q/p)(1-q^2/p^2)\cdots (1-q^n/p^n)$
and the $(p,q)-$shifted factorial
\begin{eqnarray}
((a,b);(p,q))_n &\equiv& (a-b)(ap-bq)\cdots(ap^{n-1}-bq^{n-1})\cr
 &=& \sum_{k=0}^{n}\left[\begin{array}{c} n \\ k \end{array}\right]_{p,q}(-1)^kp^{(n-k)(n-k-1)/2}
q^{k(k-1)/2}a^{n-k}b^k.
\end{eqnarray}
\end{proposition}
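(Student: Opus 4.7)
The plan is to handle the four assertions sequentially, moving from the purely combinatorial identities to the shifted-factorial expansion, since the latter uses the Pascal rule in its inductive step.

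First, I would establish the symmetries in \eqref{4Janeq1}. The equality $\left[\begin{array}{c} n \\ k \end{array}\right]_{p,q}=\left[\begin{array}{c} n \\ n-k \end{array}\right]_{p,q}$ is immediate from the definition by swapping the roles of $k$ and $n-k$ in the denominator. For the rescaling identity, I factor $p^{j}$ out of every factor $p^{j}-q^{j} = p^{j}\bigl(1-(q/p)^{j}\bigr)$ occurring in $((p,q);(p,q))_{m}$; this produces $\prod_{j=1}^{m} p^{j} = p^{m(m+1)/2}$ times $(q/p;q/p)_{m}$. Collecting the prefactors in the quotient gives $p^{n(n+1)/2 - k(k+1)/2 - (n-k)(n-k+1)/2} = p^{k(n-k)}$, which yields the third equality, and combining with the already proven symmetry gives the fourth.

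Second, for the Pascal-type recursion \eqref{4Janeq2}, I would use the algebraic identity
\begin{equation*}
p^{n+1}-q^{n+1} = p^{k}(p^{n+1-k}-q^{n+1-k}) + q^{n+1-k}(p^{k}-q^{k}),
\end{equation*}
equivalent to $[n+1]_{p,q}=p^{k}[n+1-k]_{p,q}+q^{n+1-k}[k]_{p,q}$. Multiplying both sides by $[n]!_{p,q}/([k]!_{p,q}[n+1-k]!_{p,q})$ and recognising the two resulting quotients as $[n,k]_{p,q}$ and $[n,k-1]_{p,q}$ (after absorbing $p^{k}$ and $q^{n+1-k}$ respectively) yields exactly \eqref{4Janeq2}. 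The third recursion \eqref{4Janeq3} then follows by subtracting \eqref{4Janeq2} from its candidate: the difference reduces to the elementary identity
\begin{equation*}
(p^{n+1-k}-q^{n+1-k})\left[\begin{array}{c} n \\ k-1 \end{array}\right]_{p,q} = (p^{n}-q^{n})\left[\begin{array}{c} n-1 \\ k-1 \end{array}\right]_{p,q},
\end{equation*}
which is a direct consequence of $[n]!_{p,q}=[n]_{p,q}[n-1]!_{p,q}$ and $[n-k+1]!_{p,q}=[n-k+1]_{p,q}[n-k]!_{p,q}$.

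Third, for the expansion of the $(p,q)$-shifted factorial, I would proceed by induction on $n$. The base case $n=0$ is trivial ($((a,b);(p,q))_{0}=1$). For the inductive step, write $((a,b);(p,q))_{n+1} = ((a,b);(p,q))_{n}\cdot (ap^{n}-bq^{n})$, apply the induction hypothesis, and split the resulting sum into two pieces along the factor $(ap^{n}-bq^{n})$. After reindexing the $-bq^{n}$ piece by $k\mapsto k-1$ and combining coefficients of a common monomial $a^{n+1-k}b^{k}$, the Pascal recursion \eqref{4Janeq2} (applied with the bookkeeping of the $p^{(n-k)(n-k-1)/2}$ and $q^{k(k-1)/2}$ powers) collapses the two contributions into the single $(p,q)$-binomial coefficient $\left[\begin{array}{c} n+1 \\ k\end{array}\right]_{p,q}$ with the correct sign and power prefactors.

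The main obstacle, in my view, is not any single step but the careful tracking of the exponents of $p$ and $q$ in the shifted-factorial induction: one must verify that $p^{(n-k)(n-k-1)/2}$ combines with $p^{n}$ and the Pascal weight $p^{k}$ precisely to produce $p^{(n+1-k)(n-k)/2}$, and similarly for the $q$-exponents. Once this arithmetic is checked, the rest of the argument is formal algebra from the definitions established earlier in the excerpt.
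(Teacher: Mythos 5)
Your proof is correct, and it is worth noting that the paper itself offers no proof of this proposition: in both places where it appears (Propositions 4.11 and 5.13 in the source, i.e.\ \ref{2PropJan2} and \ref{4PropJan2}) it is stated without argument, with the reader referred to Jagannathan--Srinivasa for the underlying $(p,q)$-calculus. So there is nothing to compare against except the correctness of your own steps, and these all check out. The exponent count in the rescaling identity is right: factoring $p^j$ out of each $p^j-q^j$ gives a prefactor $p^{n(n+1)/2-k(k+1)/2-(n-k)(n-k+1)/2}=p^{k(n-k)}$. The splitting $p^{n+1}-q^{n+1}=p^{k}(p^{n+1-k}-q^{n+1-k})+q^{n+1-k}(p^{k}-q^{k})$ does deliver \eqref{4Janeq2} after multiplication by the appropriate factorial quotient, and the reduction of \eqref{4Janeq3} to the identity $(p^{n+1-k}-q^{n+1-k})\left[\begin{array}{c} n \\ k-1\end{array}\right]_{p,q}=(p^{n}-q^{n})\left[\begin{array}{c} n-1 \\ k-1\end{array}\right]_{p,q}$ is valid, both sides being equal to $(p-q)\,[n]!_{p,q}\big/\bigl([k-1]!_{p,q}[n-k]!_{p,q}\bigr)$. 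Finally, the exponent bookkeeping you flag as the main obstacle in the shifted-factorial induction does work out: $(n-k)(n-k-1)/2+n=(n+1-k)(n-k)/2+k$ for the $p$-powers and $(k-1)(k-2)/2+n=k(k-1)/2+(n+1-k)$ for the $q$-powers, so the two reindexed pieces combine via \eqref{4Janeq2} into the coefficient of $a^{n+1-k}b^{k}$ with the stated weights; the boundary terms $k=0$ and $k=n+1$ are handled by the single surviving sum in each case. Your argument is complete and self-contained, which is more than the paper provides.
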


The algebra ${\cal A}_{p,q}$, generated by $\{1,\; A,\; A^\dag,\;N\}$, associated with
 $(p,q) -$ Janagathan - Srinivasa deformation, satisfies the following commutation relations \cite{Jagannathan&Rao, Jagannathan&Sridhar10}:
\begin{eqnarray}\label{4pqHosc}
 A\;A^\dag- pA^\dag A= q^N, \quad &&A\;A^\dag- qA^\dag A= p^N\cr
[N,\;A^\dag]= A^\dag,\quad\qquad\quad&& [N,\;A]= -A.\label{4JanSrialg}
\end{eqnarray}

The $(p,q)$-Rogers-Szeg\"o polynomials  studied in \cite{Jagannathan&Sridhar10}
appear as a particular case obtained by choosing
$\phi_1(x,y)=\phi_2(x,y)=\phi(x,y)= x$ and $\phi_3(x,y)= x-y$. Indeed,\\
$\phi(p,q)= p\neq 0$, $ \phi_3(p,q)=p-q\neq 0$, $\phi(P,Q)z^k=
\phi_1^k(p,q)z^k$ and Eq.(\ref{4Janeq3}) shows that
$$\left[\begin{array}{c} n+1 \\ k \end{array}\right]_{p,q} =
p^{k}\left[\begin{array}{c} n \\ k \end{array}\right]_{p,q} +
p^{n+1-k}\left[\begin{array}{c} n \\ k-1 \end{array}\right]_{p,q} -(p-q)[n]_{p,q}
\left[\begin{array}{c} n-1 \\ k-1 \end{array}\right]_{p,q}.$$
Hence, the hypotheses of the above theorem are satisfied and, therefore,
the $(p,q)$-Rogers-Szeg\"o polynomials
\begin{eqnarray}
 H_n(z;p,q)=\sum_{k=0}^{n}\left[\begin{array}{c}n\\k\end{array}\right]_{p,q}z^k\quad n= 0, 1, 2, \cdots
\end{eqnarray}
satisfy the three-term recursion relation
\begin{eqnarray}\label{4JannRogersttr}
H_{n+1}(z;p,q)&=& H_n(pz;p,q) + z p^nH_n(p^{-1}z;p,q)\cr&&-z(p^n-q^n)H_{n-1}(z;p,q)
\end{eqnarray}
and $(p,q)-$difference equation
\begin{eqnarray}\label{4JanRogerdiff}
\partial_{p,q}H_n(z;p,q)= [n]_{p,q}H_{n-1}(z;p,q).
\end{eqnarray}
Finally, the set of polynomials
\begin{eqnarray}
 \psi_n(z;p,q)= \frac{1}{\sqrt{[n]!_{p,q}}}H_n(z;p,q),\quad n=0, 1, 2, \cdots
\end{eqnarray}
forms a basis for a realization of the $(p,q)-$deformed harmonic
oscillator and quantum algebra ${\cal A}_{p,q}$ satisfying
the commutation relations (\ref{4JanSrialg}) with the number
operator $N$  defined  as
\begin{eqnarray}
 N\psi_n(z;p,q)= n\psi_n(z;p,q),
\end{eqnarray}
relating  the annihilation and  creation operators  given by
\begin{eqnarray}
 A= \partial_{p,q}\quad\mbox{and}\quad A^\dag= P+ zp^NP^{-1}-z(p-q)\partial_{p,q}
\end{eqnarray}
respectively. Naturally, setting $p=1$ one recovers the results of the
subsection~\ref{4qHosc}.

The continuous $(p,q)-$Hermite polynomials have been already suggested in
\cite{Jagannathan&Rao} without any further details. In the above achieved
generalization, these polynomials are given by
\begin{eqnarray}
 \mathbb{H}_n(\cos\theta;p,q)&=& e^{in\theta}H_n(e^{-2i\theta};p,q)\cr&=&
\sum_{k=0}^n\left[\begin{array}{c}n\\k\end{array}\right]_{p,q}\;e^{i(n-2k)\theta},\quad n=0, 1, 2, \cdots.
\end{eqnarray}
Since for the $(p,q)-$deformation  $\phi_1(x,y)=\phi_2(x,y)= x$ and
$\phi_3(x,y)=x-y$, from the Proposition \ref{4propBuk} we deduce
that the corresponding sequence of continuous $(p,q)-$polynomials
satisfies the three-term recursion relation
\begin{eqnarray}\label{4pqHttr}
 \mathbb{H}_{n+1}(\cos\theta;p,q)&=& p^{\frac{n}{2}}(e^{i\theta}P+e^{-i\theta}P^{-1})\mathbb{H}_n(\cos\theta;p,q)
\cr&&-(p^n-q^n)\mathbb{H}_{n-1}(\cos\theta;p,q),
\end{eqnarray}
with $Pe^{i\theta}=p^{-1/2}e^{i\theta}$. This relation turns to be the well-known
three-term recursion relation (\ref{4qHttr}) for continuous $q-$Hermite polynomials in the limit $p\to 1$.
As matter of illustration, let us  explicitly compute the first three  polynomials  using the relation (\ref{4pqHttr}), with
$\mathbb{H}_{-1}(\cos\theta;p,q)=0$ and $\mathbb{H}_0(\cos\theta;p,q)=1$:\
\begin{eqnarray*}
\mathbb{H}_1(\cos\theta;p,q)&=&p^0(e^{i\theta}P+e^{-i\theta}P^{-1})1-(p^0-q^0)0 = e^{i\theta}+e^{-i\theta}=2\cos\theta\cr
&=&\left[\begin{array}{c}1\\0\end{array}\right]_{p,q}e^{i\theta}+\left[\begin{array}{c}1\\1\end{array}\right]_{p,q}e^{-i\theta}.\cr
\mathbb{H}_2(\cos\theta;p,q)&=& p^{\frac{1}{2}}(e^{i\theta}P+e^{-i\theta}P^{-1})(e^{i\theta}+e^{-i\theta})
-(p-q)1  \cr&=& e^{2i\theta}+e^{-2i\theta}+p+q= 2\cos2\theta +p+q
 \cr&=&\left[\begin{array}{c}2\\0\end{array}\right]_{p,q}e^{2i\theta}+\left[\begin{array}{c}2\\1\end{array}\right]_{p,q}e^{0i\theta}
+\left[\begin{array}{c}2\\2\end{array}\right]_{p,q}e^{-i\theta}.
\end{eqnarray*}
\begin{eqnarray*}
&&\mathbb{H}_3(\cos\theta;p,q)= p(e^{i\theta}P+e^{-i\theta}P^{-1})(e^{2i\theta}+e^{-2i\theta}+p+q)
\cr&&\qquad\qquad\qquad\qquad\qquad\qquad\quad-(p^2-q^2)(e^{i\theta}+e^{-i\theta})
\cr&&\qquad\qquad= e^{3i\theta}+e^{-3i\theta}+(p^2+pq+q^2)(e^{i\theta}+e^{-i\theta})
\cr&&\qquad\qquad= 2\cos3\theta+2(p^2+pq+q^2)\cos\theta\cr
&&\qquad\qquad=\left[\begin{array}{c}3\\0\end{array}\right]_{p,q}e^{3i\theta}+\left[\begin{array}{c}3\\1\end{array}\right]_{p,q}e^{i\theta}
+\left[\begin{array}{c}3\\2\end{array}\right]_{p,q}e^{-i\theta}+\left[\begin{array}{c}3\\3\end{array}\right]_{p,q}e^{-3i\theta}.
\end{eqnarray*}

\subsubsection{${\cal R}(x,y)= \frac{1-xy}{(p^{-1}-q)x}$}

The ${\cal R}(p,q)$-factors and ${\cal R}(p,q)-$factorials are reduced to $(p^{-1},q)$-numbers and
$(p^{-1},q)$-factorials, namely,
\begin{eqnarray*}
 [n]_{p^{-1},q}=\frac{p^{-n}-q^n}{p^{-1}-q},
\end{eqnarray*}
and
\begin{eqnarray}
[n]!_{p^{-1},q}= \left\{\begin{array}{lr} 1 \quad \mbox{for   } \quad n=0 \quad \\
\frac{((p^{-1},q);(p^{-1},q))_n}{(p^{-1}-q)^n} \quad \mbox{for } \quad n\geq
1, \quad \end{array} \right.
\end{eqnarray}
respectively, which exactly reproduce the $(p,q)$-numbers and
$(p,q)-$factorials introduced by Chakrabarty and Jagannathan \cite{Chakrabarti&Jagan}.

The other properties can be recovered similarly to those of section \ref{4janasec}
replacing the parameter $p$ by $p^{-1}$.

The ${\cal R}(p,q)-$derivative is also reduced to $(p^{-1},q)-$derivative. Indeed,
\begin{eqnarray}
 \partial_{{\cal R}(p,q)} &=& \partial_{p,q}\frac{p-q}{P-Q}\frac{1-PQ}{(p^{-1}-q)P}
\cr&=&\frac{1}{(p^{-1}-q)z}(P^{-1}-Q)\equiv \partial_{p^{-1},q}
\end{eqnarray}
obtained by a simple replacement of the dilatation operator   $P$ by $P^{-1}$.

The algebra ${\cal A}_{p^{-1},q}$, generated by $\{1,\; A,\; A^\dag,\;N\}$, associated with
 $(p,q) -$ Chakrabarty and Jagannathan deformation satisfies the following commutation relations:
\begin{eqnarray}
 A\;A^\dag- p^{-1}A^\dag A= q^N, \quad &&A\;A^\dag- qA^\dag A= p^{-N}\cr
[N,\;A^\dag]= A^\dag\qquad\qquad\quad&& [N,\;A]= -A.\label{4ChakJagalg}
\end{eqnarray}

Hence, the
$(p^{-1},q)-$Rogers-Szeg\"o polynomials
\begin{eqnarray}
 H_n(z;p^{-1},q)=\sum_{k=0}^{n}\left[\begin{array}{c}n\\k\end{array}\right]_{p^{-1},q}z^k\quad n= 0, 1, 2, \cdots
\end{eqnarray}
obey the three-term recursion relation
\begin{eqnarray}\label{4ChakJagRogersttr}
H_{n+1}(z;p^{-1},q)&=& H_n(p^{-1}z;p^{-1},q) + z p^{-n}H_n(pz;p^{-1},q)
\cr&&\qquad-z(p^{-n}-q^n)H_{n-1}(z;p^{-1},q)
\end{eqnarray}
and $(p^{-1},q)-$difference equation
\begin{eqnarray}\label{4RChakJagRogerdiff}
\partial_{p^{-1},q}H_n(z;p,q)= [n]_{p^{-1},q}H_{n-1}(z;p,q).
\end{eqnarray}
Finally, the set of polynomials
\begin{eqnarray}
 \psi_n(z;p^{-1},q)= \frac{1}{\sqrt{[n]!_{p^{-1},q}}}H_n(z;p^{-1},q),\quad n=0, 1, 2, \cdots
\end{eqnarray}
forms a basis for a realization of the $(p^{-1},q)-$deformed harmonic oscillator and quantum algebra ${\cal A}_{p^{-1},q}$
 generating
the commutation relations (\ref{4ChakJagalg}) with the number
operator $N$ formally defined  as
\begin{eqnarray}
 N\psi_n(z;p^{-1},q)= n\psi_n(z;p^{-1},q),
\end{eqnarray}
and the annihilation and  creation operators  given by
\begin{eqnarray}
 A= \partial_{p^{-1},q}\quad\mbox{and}\quad A^\dag= P^{-1}+ zp^{-N}P-z(p^{-1}-q)\partial_{p^{-1},q},
\end{eqnarray}
respectively. Naturally, setting $p=1$ permits to recover the results of the subsection~\ref{4qHosc}.

\subsubsection{${\cal R}(x,y)= \frac{xy-1}{(q-p^{-1})y}$}

In this case, the ${\cal R}(p,q)-$factors and ${\cal R}(p,q)-$factorials are reduced to 
\begin{eqnarray*}
 [n]_{p,q}^Q=\frac{p^n-q^{-n}}{q-p^{-1}},
\end{eqnarray*}
and
\begin{eqnarray}
[n]!_{p,q}^Q= \left\{\begin{array}{lr} 1 \quad \mbox{for   } \quad n=0 \quad \\
\frac{((p,q^{-1});(p,q^{-1}))_n}{(q-p^{-1})^n} \quad \mbox{for } \quad n\geq
1, \quad \end{array} \right.
\end{eqnarray}
introduced in our previous work \cite{Hounkonnou&Ngompe07b},  generalizing the $q-$Quesne algebra \cite{Quesne&al02}.

Then follow some remarkable properties:
\begin{proposition}\label{4PropoQes1}
 If $n$ and $m$ are nonnegative integers, then
\begin{eqnarray}
\;[-m]_{p,q}^Q&=& -p^{-m}q^m[m]_{p,q}^Q,\label{4Qeq1}\\
\;[n+m]_{p,q}^Q&=& q^{-m}[n]_{p,q}^Q+p^n[m]_{p,q}^Q= p^m[n]_{p,q}^Q+q^{-n}[m]_{p,q}^Q,\label{4Qeq2}\\
\;[n-m]_{p,q}^Q&=& q^{m}[n]_{p,q}^Q-p^{n-m}q^m[m]_{p,q}^Q= p^{-m}[n]_{p,q}^Q+p^{-m}q^{m-n}[m]_{p,q}^Q,\label{4Qeq3}\\
\;[n]_{p,q}^Q &=& \frac{q-p^{-1}}{p-q^{-1}}[2]_{p,q}^Q[n-1]_{p,q}^Q-pq^{-1}[n-2]_{p,q}^Q.\label{4Qeq4}
\end{eqnarray}
\end{proposition}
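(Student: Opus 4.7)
The strategy is to derive all four identities by elementary algebra on the defining expression
\begin{eqnarray*}
[n]_{p,q}^Q \;=\; \frac{p^n - q^{-n}}{q - p^{-1}}.
\end{eqnarray*}
A useful shortcut to have in mind at the outset is that, since $q - p^{-1} = (pq-1)/p$ and $p - q^{-1} = (pq-1)/q$, one has $[n]_{p,q}^Q = (p/q)\,[n]_{p,q^{-1}}$. Because each of the four identities is homogeneous and linear in the $[n]^Q$'s, this means the statements transfer verbatim from the Jagannathan--Srinivasa Proposition~\ref{4PropJan1} under the formal substitution $q \mapsto q^{-1}$; this accounts for the parallel structure of the two propositions and can be used as an independent check.

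For (\ref{4Qeq1}) I would simply observe the factorization $p^{-m} - q^m = -p^{-m}q^{m}(p^m - q^{-m})$ and divide through by $q - p^{-1}$. For the two forms of (\ref{4Qeq2}) the idea is the usual splitting trick: write
\begin{eqnarray*}
p^{n+m} - q^{-(n+m)} \;=\; p^n(p^m - q^{-m}) + q^{-m}(p^n - q^{-n})
\end{eqnarray*}
for the first decomposition, and symmetrically $p^{n+m} - q^{-(n+m)} = p^m(p^n - q^{-n}) + q^{-n}(p^m - q^{-m})$ for the second; dividing by $q - p^{-1}$ then gives the two expressions stated. For (\ref{4Qeq3}) the cleanest route is to apply (\ref{4Qeq2}) with $m$ replaced by $-m$ in each of its two forms and then substitute (\ref{4Qeq1}) for $[-m]_{p,q}^Q$.

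Identity (\ref{4Qeq4}) requires a brief preparation: using the ratio computed above,
\begin{eqnarray*}
\frac{q - p^{-1}}{p - q^{-1}}\,[2]_{p,q}^Q \;=\; \frac{p}{q}\cdot\frac{p^2 - q^{-2}}{q - p^{-1}}\;=\; p + q^{-1},
\end{eqnarray*}
so the recurrence to be verified reduces to the purely polynomial identity
\begin{eqnarray*}
p^n - q^{-n} \;=\; (p + q^{-1})\bigl(p^{n-1} - q^{-(n-1)}\bigr) \;-\; pq^{-1}\bigl(p^{n-2} - q^{-(n-2)}\bigr),
\end{eqnarray*}
which expands directly and cancels. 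Dividing through by $q - p^{-1}$ yields the claimed three-term recurrence.

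The only mild obstacle is bookkeeping: keeping track of the signs (especially in (\ref{4Qeq3}), where the stated second form and my computation give opposite signs on the last term, suggesting a typographical slip to be flagged) and correctly simplifying the coefficient $(q-p^{-1})/(p-q^{-1})$ in (\ref{4Qeq4}). No nontrivial identity beyond the rational-function factorization of $p^n - q^{-n}$ is needed.
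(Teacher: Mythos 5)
Your proof is correct and follows essentially the same route as the paper's: (\ref{4Qeq1}) and (\ref{4Qeq2}) from the factorization $p^{-m}-q^{m}=-p^{-m}q^{m}(p^m-q^{-m})$ and the two splittings of $p^{n+m}-q^{-n-m}$, (\ref{4Qeq3}) by combining the first two, and (\ref{4Qeq4}) via the proportionality $[n]_{p,q^{-1}}=\frac{q-p^{-1}}{p-q^{-1}}[n]_{p,q}^{Q}$ (the paper invokes the known recurrence for $[n]_{p,q^{-1}}$ where you expand the polynomial identity directly, and your intermediate factor should read $q/p$ rather than $p/q$, though your final value $p+q^{-1}$ is right). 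Your flag on the last term of the second form of (\ref{4Qeq3}) is also justified: combining (\ref{4Qeq1}) with the second splitting in (\ref{4Qeq2}) yields $-p^{-m}q^{m-n}[m]_{p,q}^{Q}$, so the $+$ sign in the stated identity is a typographical slip in the paper.
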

{\bf Proof:}  Eqs.(\ref{4Qeq1}) and (\ref{4Qeq2}) are immediate by the application of the relations
$ p^{-m}-q^{m} = -p^{-m}q^{m}(p^m-q^{-m})$ and
$p^{n+m}-q^{-n-m}= q^{-m}(p^{n}-q^{-n})+p^n(p^{m}-q^{-m})=p^m(p^{n}-q^{-n})+q^{-n}(p^{m}-q^{-m}),$
respectively, while Eq.(\ref{4Qeq3}) results from the combination of Eqs.(\ref{4Qeq1}) and (\ref{4Qeq2}). Finally, the relation
\begin{eqnarray}\label{4Qeq5}
[n]_{p,q^{-1}}&=& \frac{p^n-q^{-n}}{p-q^{-1}}= \frac{q-p^{-1}}{p-q^{-1}}\frac{p^n-q^{-n}}{q-p^{-1}}
= \frac{q-p^{-1}}{p-q^{-1}}[n]_{p,q}^{Q},\; n=1, 2, \cdots
\end{eqnarray}
cumulatively taken with the  identity
\begin{eqnarray*}
 [n]_{p,q{-1}}&=& [2]_{p,q{-1}}[n-1]_{p,q{-1}}-pq^{-1}[n-2]_{p,q^{-1}}
\end{eqnarray*}
gives Eq.(\ref{4Qeq4}).\hfill$\Box$

\begin{proposition}\label{4PropoQes2}
The $(p,q)-$Quesne binomial coefficients
\begin{eqnarray}
 \left[\begin{array}{c} n \\ k \end{array}\right]_{p,q}^Q=
\frac{((p,q^{-1});(p,q^{-1}))_n}{((p,q^{-1});(p,q^{-1}))_k((p,q^{-1});(p,q^{-1}))_{n-k}}, \label{4Qeq6}
\end{eqnarray}
where $\quad 0\leq k\leq n;\;\; n\in\mathbb{N},$ satisfy the following properties
\begin{eqnarray}
\left[\begin{array}{c} n \\ k \end{array}\right]_{p,q}^Q
= \left[\begin{array}{c} n \\ n-k \end{array}\right]_{p,q}^Q=
p^{k(n-k)}\left[\begin{array}{c} n \\ k \end{array}\right]_{1/qp}=
p^{k(n-k)}\left[\begin{array}{c} n \\ n-k \end{array}\right]_{1/qp},\label{4Qeq7}
\end{eqnarray}
\begin{eqnarray}
\left[\begin{array}{c} n+1 \\ k \end{array}\right]_{p,q}^Q =
p^k\left[\begin{array}{c} n \\ k \end{array}\right]_{p,q}^Q
+q^{-n-1+k}\left[\begin{array}{c} n \\ k-1 \end{array}\right]_{p,q}^Q,\label{4Qeq8}
\end{eqnarray}
\begin{eqnarray}
 \left[\begin{array}{c} n+1 \\ k \end{array}\right]_{p,q}^Q &=&
p^{k}\left[\begin{array}{c} n \\ k \end{array}\right]_{p,q}^Q +
p^{n+1-k}\left[\begin{array}{c} n \\ k-1 \end{array}\right]_{p,q}^Q 
\cr&&\qquad\quad-(p^n-q^{-n})
\left[\begin{array}{c} n-1 \\ k-1 \end{array}\right]_{p,q}^Q.\quad\label{4Qeq9}
\end{eqnarray}
\end{proposition}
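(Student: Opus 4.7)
The plan is to observe that the $(p,q)$-Quesne binomial coefficients are nothing but the Jagannathan–Srinivasa binomial coefficients of Proposition~\ref{4PropJan2} evaluated at $(p,q^{-1})$ in place of $(p,q)$, and then to read off the three stated identities by making the substitution $q\mapsto q^{-1}$ in (\ref{4Janeq1}), (\ref{4Janeq2}) and (\ref{4Janeq3}) respectively. First I would record the key identification: comparing the definition
\begin{eqnarray*}
 \left[\begin{array}{c} n \\ k \end{array}\right]_{p,q}^Q
 = \frac{((p,q^{-1});(p,q^{-1}))_n}{((p,q^{-1});(p,q^{-1}))_k\,((p,q^{-1});(p,q^{-1}))_{n-k}}
\end{eqnarray*}
in (\ref{4Qeq6}) with the analogous Jagannathan–Srinivasa expression in Proposition~\ref{4PropJan2}, one has the tautological equality
\begin{eqnarray*}
 \left[\begin{array}{c} n \\ k \end{array}\right]_{p,q}^Q
 = \left[\begin{array}{c} n \\ k \end{array}\right]_{p,q^{-1}}.
\end{eqnarray*}
Equivalently, this follows from (\ref{4Qeq5}) in Proposition~\ref{4PropoQes1}: the Quesne numbers are proportional to the Jagannathan–Srinivasa numbers $[n]_{p,q^{-1}}$ through a multiplicative constant depending only on $p,q$, so this constant cancels in every ratio defining a binomial coefficient.

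Once this identification is in hand, each of (\ref{4Qeq7})–(\ref{4Qeq9}) is obtained mechanically. For (\ref{4Qeq7}), apply (\ref{4Janeq1}) with $q$ replaced by $q^{-1}$ and use that $q^{-1}/p = 1/(qp)$ to get
\begin{eqnarray*}
 \left[\begin{array}{c} n \\ k \end{array}\right]_{p,q^{-1}}
 = \left[\begin{array}{c} n \\ n-k \end{array}\right]_{p,q^{-1}}
 = p^{k(n-k)}\left[\begin{array}{c} n \\ k \end{array}\right]_{1/(qp)}
 = p^{k(n-k)}\left[\begin{array}{c} n \\ n-k \end{array}\right]_{1/(qp)},
\end{eqnarray*}
which, translated back via the identification, is precisely (\ref{4Qeq7}). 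For the two Pascal-type relations, substituting $q\mapsto q^{-1}$ in (\ref{4Janeq2}) turns the factor $q^{n+1-k}$ into $q^{-(n+1-k)}$, yielding (\ref{4Qeq8}) verbatim; and substituting $q\mapsto q^{-1}$ in (\ref{4Janeq3}) turns $(p^n-q^n)$ into $(p^n-q^{-n})$, producing (\ref{4Qeq9}) as stated.

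The only genuine care needed is in checking the substitution on $(p,q)$-shifted factorials: since $((p,q);(p,q))_n$ is built from the symbols $ap^k-bq^k$, sending $q$ to $q^{-1}$ inside the defining product of $((p,q^{-1});(p,q^{-1}))_n$ is well-defined and preserves the recursive structure used in the proof of Proposition~\ref{4PropJan2}. There is no serious obstacle in the argument; the statement is essentially a corollary of Proposition~\ref{4PropJan2} under the transformation $q\mapsto q^{-1}$. An alternative, fully self-contained route would be to mimic the inductive derivations of (\ref{4Janeq1})–(\ref{4Janeq3}) using the $(p,q^{-1})$-shifted factorial recursion $((p,q^{-1});(p,q^{-1}))_{n+1} = (p^{n+1}-q^{-n-1})((p,q^{-1});(p,q^{-1}))_n$ directly, but the reduction above is cleaner and avoids repetition.
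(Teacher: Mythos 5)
Your proposal is correct and follows exactly the paper's own route: the paper likewise reduces the claim to the identification $\left[\begin{array}{c} n \\ k \end{array}\right]_{p,q}^{Q}=\left[\begin{array}{c} n \\ k \end{array}\right]_{p,q^{-1}}$ and then invokes the Jagannathan--Srinivasa identities of Proposition~\ref{4PropJan2} under $q\mapsto q^{-1}$. Your version is in fact slightly more explicit than the paper's one-line proof, since you verify the effect of the substitution on each of (\ref{4Janeq1})--(\ref{4Janeq3}) and on the shifted factorials.
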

{\bf Proof:} It is straightforward, using the Proposition~\ref{4PropJan1}  and
\begin{eqnarray}
 \left[\begin{array}{c} n \\ k \end{array}\right]_{p,q}^Q =
\left[\begin{array}{c} n \\ k \end{array}\right]_{p,q^{-1}}.
\end{eqnarray}
\hfill$\Box$\\
Finally, the algebra ${\cal A}_{p,q}^Q$, generated by $\{1,\; A,\; A^\dag,\;N\}$, associated with
 $(p,q) -$ Quesne deformation satisfies the following commutation relations:
\begin{eqnarray}
 p^{-1}A\;A^\dag- A^\dag A= q^{-N-1}, \quad&& qA\;A^\dag- A^\dag A= p^{N+1}\cr
[N,\;A^\dag]= A^\dag,\qquad\qquad\qquad&& [N,\;A]= -A.\label{4Qalg}
\end{eqnarray}

The ($p,q)-$Rogers-Szeg\"o polynomials corresponding to the Quesne
deformation \cite{Hounkonnou&Ngompe07b} are deduced from our generalization by choosing
$\phi_1(x,y)=\phi_2(x,y)=\phi(x,y)= x$ and $\phi_3(x,y)= y-x^{-1}$. Indeed, it is worthy of attention that we get in this case
$\phi(p,q)= p\neq 0$, $ \phi_3(p,q)=q-p^{-1}\neq 0$, $\phi(P,Q)z^k= \phi_1^k(p,q)z^k$ and from Eq.(\ref{4Qeq9})
$$\left[\begin{array}{c} n+1 \\ k \end{array}\right]_{p,q}^Q =
p^{k}\left[\begin{array}{c} n \\ k \end{array}\right]_{p,q}^Q +
p^{n+1-k}\left[\begin{array}{c} n \\ k-1 \end{array}\right]_{p,q}^Q -(q-p^{-1})[n]_{p,q}
\left[\begin{array}{c} n-1 \\ k-1 \end{array}\right]_{p,q}^Q.$$
Hence, the hypotheses of the theorem are satisfied and, therefore,
the $(p,q) -$ Rogers - Szeg\"o polynomials
\begin{eqnarray}
 H_n^Q(z;p,q)=\sum_{k=0}^{n}\left[\begin{array}{c}n\\k\end{array}\right]_{p,q}^Qz^k,\quad n= 0, 1, 2, \cdots
\end{eqnarray}
satisfy the three-term recursion relation
\begin{eqnarray}\label{4QRogersttr}
H_{n+1}^Q(z;p,q)= H_n^Q(pz;p,q) &+& z p^nH_n^Q(p^{-1}z;p,q)
\cr&-&z(p^n-q^{-n})H_{n-1}^Q(z;p,q)
\end{eqnarray}
and the $(p,q)-$difference equation
\begin{eqnarray}\label{4QRogerdiff}
\partial_{p,q}^QH_n^Q(z;p,q)= [n]_{p,q}^QH_{n-1}^Q(z;p,q).
\end{eqnarray}
Thus, the set of polynomials
\begin{eqnarray}
 \psi_n^Q(z;p,q)= \frac{1}{\sqrt{[n]!_{p,q}^Q}}H_n^Q(z;p,q),\quad n=0, 1, 2, \cdots
\end{eqnarray}
forms a basis for a realization of the $(p,q)-$ Quesne deformed
harmonic oscillator and quantum algebra ${\cal A}_{p,q}^Q$
engendering the commutation relations (\ref{4Qalg}) with the number
operator $N$  formally defined  as
\begin{eqnarray}
 N\psi_n^Q(z;p,q)= n\psi_n^Q(z;p,q),
\end{eqnarray}
and the annihilation and  creation operators  given by
\begin{eqnarray}
 A= \partial_{p,q}^Q\quad\mbox{and}\quad A^\dag= P+ zp^NP^{-1}-z(q-p^{-1})\partial_{p,q},
\end{eqnarray}
respectively. Naturally, setting $p=1$ gives the Rogers-Szeg\"o
polynomials associated with the $q-$Quesne deformation \cite{Quesne&al02}.

The continuous $(p,q)-$Hermite polynomials corresponding to the $(p, q)-$ generalization of  Quesne deformation \cite{Hounkonnou&Ngompe07b} can be defined
as follows:
\begin{eqnarray}
 \mathbb{H}_n^Q(\cos\theta;p,q)&=& e^{in\theta}H_n^Q(e^{-2i\theta};p,q)
\cr&=&
\sum_{k=0}^n\left[\begin{array}{c}n\\k\end{array}\right]_{p,q}^Q\;e^{i(n-2k)\theta},\quad n=0, 1, 2, \cdots.
\end{eqnarray}
Since for the $(p, q)$-generalization of  Quesne deformation \cite{Hounkonnou&Ngompe07b} $\phi_1(x,y)=\phi_2(x,y)= x$ and
$\phi_3(x,y)=y-x^{-1}$, from the Proposition \ref{4propBuk} we
deduce that the corresponding sequence of continuous
$(p,q)-$Hermite polynomials satisfies the three-term recurrence
relation
\begin{eqnarray}\label{4QHttr}
 \mathbb{H}_{n+1}^Q(\cos\theta;p,q)&=&p^{\frac{n}{2}}(e^{i\theta}P+e^{-i\theta}P^{-1})\mathbb{H}_n^Q(\cos\theta;p,q)
\cr&&\qquad-
(p^n-q^{-n})\mathbb{H}_{n-1}^Q(\cos\theta;p,q).
\end{eqnarray}


\subsubsection{$\displaystyle{\cal R}(x,y)= h(p,q)y^\nu/x^\mu\left[\frac{xy-1}{(q-p^{-1})y}\right]$}

Here
$0< pq < 1$ , $p^\mu<q^{\nu-1}$, $p>1$ ,  $h$ is a well behaved
real and non-negative function of deformation parameters $p$ and
$q$ such that  $h(p,q)\to 1$ as $(p,q)\to (1,1).$

The ${\cal R}(p,q)-$factors become $(p,q;\mu,\nu,h)$-numbers introduced in our previous work \cite{Hounkonnou&Ngompe07a} and defined by
\begin{eqnarray}
 [n]^{\mu,\nu}_{p,q,h}= h(p,q)\frac{q^{\nu n}}{p^{\mu n}}\frac{p^n-q^{-n}}{q-p^{-1}}.
\end{eqnarray}
\begin{proposition}\label{4PropoHouk1}
The $(p,q;\mu,\nu,h)-$numbers verify the following properties, for $m,n\in\mathbb{N}$:
\begin{eqnarray}
\;[-m]^{\mu,\nu}_{p,q,h}= -\frac{q^{-2\nu m+m}}{p^{-2\mu m+m}}[m]^{\mu,\nu}_{p,q,h},\label{4Heq1}
\end{eqnarray}
\begin{eqnarray}
\;[n+m]^{\mu,\nu}_{p,q,h}&=& \frac{q^{\nu m-m}}{p^{\mu m}}[n]^{\mu,\nu}_{p,q,h}
+\frac{q^{\nu n}}{p^{\mu n-n}}[m]^{\mu,\nu}_{p,q,h}
\cr&=&\frac{q^{\nu m}}{p^{\mu m- m}}[n]^{\mu,\nu}_{p,q,h}
+\frac{q^{\nu n- n}}{p^{\mu n}}[m]^{\mu,\nu}_{p,q,h}\;,\label{4Heq2}
\end{eqnarray}
\begin{eqnarray}
\;[n-m]^{\mu,\nu}_{p,q,h}&=&\frac{q^{-\nu m+ m}}{p^{-\mu m}}[n]^{\mu,\nu}_{p,q,h}
-\frac{q^{\nu(n-2m)+m}}{p^{\mu(n-2m)-n+m}}[m]^{\mu,\nu}_{p,q,h}
\cr&=& \frac{q^{-\nu m}}{p^{-\mu m+m}}[n]^{\mu,\nu}_{p,q,h}
-\frac{q^{\nu(n-2m)-n+m}}{p^{\mu(n-2m)+m}}[m]^{\mu,\nu}_{p,q,h},\label{4Heq3}
\end{eqnarray}
\begin{eqnarray}
\; [n]^{\mu,\nu}_{p,q,h}= \frac{q-p^{-1}}{p-q^{-1}}\frac{q^{-\nu}}{p^{-\mu}}\frac{1}{h(p,q)}[2]^{\mu,\nu}_{p,q,h}
[n-1]^{\mu,\nu}_{p,q,h}-\frac{q^{2\nu-1}}{p^{2\nu-1}}[n-2]^{\mu,\nu}_{p,q,h}.\label{4Heq4}
\end{eqnarray}
\end{proposition}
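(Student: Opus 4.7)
The plan rests on the simple observation that the $(p,q;\mu,\nu,h)$-numbers are obtained from the Quesne numbers by a multiplicative renormalization:
\begin{equation*}
[n]^{\mu,\nu}_{p,q,h} \;=\; h(p,q)\,\frac{q^{\nu n}}{p^{\mu n}}\,[n]^Q_{p,q},
\end{equation*}
which is just the identity already recorded in Chapter~\ref{chap4} as Eq.~(\ref{2Houkeq}). Equivalently $[n]^Q_{p,q} = p^{\mu n}q^{-\nu n}h(p,q)^{-1}[n]^{\mu,\nu}_{p,q,h}$, so every identity for $[n]^{\mu,\nu}_{p,q,h}$ can be translated into one for $[n]^Q_{p,q}$ and vice versa. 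The proof strategy is therefore to pull each of the four identities (\ref{4Heq1})--(\ref{4Heq4}) back to the corresponding identity in Proposition~\ref{4PropoQes1}, which has already been established, and then track the prefactors $h(p,q)\,q^{\nu k}/p^{\mu k}$.

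Concretely, for (\ref{4Heq1}) I would write $[-m]^{\mu,\nu}_{p,q,h} = h(p,q)\,q^{-\nu m}p^{\mu m}[-m]^Q_{p,q}$ and substitute $[-m]^Q_{p,q}= -p^{-m}q^{m}[m]^Q_{p,q}$ from (\ref{4Qeq1}); after replacing $[m]^Q_{p,q}$ by its expression in terms of $[m]^{\mu,\nu}_{p,q,h}$ the prefactors collapse to $-q^{-2\nu m+m}/p^{-2\mu m+m}$, as required. Identities (\ref{4Heq2}) and (\ref{4Heq3}) are handled the same way: start from the two splittings of $[n+m]^Q_{p,q}$ in (\ref{4Qeq2}) or $[n-m]^Q_{p,q}$ in (\ref{4Qeq3}), multiply through by $h(p,q)\,q^{\nu(n\pm m)}/p^{\mu(n\pm m)}$, and insert the renormalization on each term of the right-hand side; the two symmetric versions of each identity arise automatically from the two symmetric versions at the Quesne level.

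For the three-term recursion (\ref{4Heq4}), I would begin from (\ref{4Qeq4}),
\begin{equation*}
[n]^Q_{p,q} \;=\; \frac{q-p^{-1}}{p-q^{-1}}\,[2]^Q_{p,q}\,[n-1]^Q_{p,q} \;-\; p q^{-1}\,[n-2]^Q_{p,q},
\end{equation*}
and multiply by $h(p,q)\,q^{\nu n}/p^{\mu n}$. Substituting $[k]^Q_{p,q} = p^{\mu k}q^{-\nu k}h(p,q)^{-1}[k]^{\mu,\nu}_{p,q,h}$ for $k=2,\,n-1,\,n-2$, the exponents of $p$ and $q$ in the first term combine as $\nu n-2\nu-\nu(n-1)=-\nu$ and $\mu n-2\mu-\mu(n-1)=-\mu$, yielding the coefficient $\frac{q-p^{-1}}{p-q^{-1}}\,\frac{q^{-\nu}}{p^{-\mu}}\,\frac{1}{h(p,q)}$, while the second term produces $q^{2\nu-1}/p^{2\mu-1}$. (The exponent $p^{2\nu-1}$ in the stated form of (\ref{4Heq4}) should read $p^{2\mu-1}$; I would note this as a typographical correction.) The only ``obstacle'' is bookkeeping of the exponents, and because the map from Quesne to $(p,q;\mu,\nu,h)$ is a pure monomial rescaling this bookkeeping is completely mechanical, with no genuine analytical difficulty.
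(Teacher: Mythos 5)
Your proposal is correct and follows essentially the paper's own proof, which likewise reduces each identity to Proposition~\ref{4PropoQes1} via the monomial rescaling $[n]^{\mu,\nu}_{p,q,h}=h(p,q)\,q^{\nu n}p^{-\mu n}[n]^Q_{p,q}$ of Eq.~(\ref{4Houkeq}); the paper merely states this reduction without carrying out the exponent bookkeeping that you spell out. Your typographical correction to (\ref{4Heq4}) is also right: the coefficient of $[n-2]^{\mu,\nu}_{p,q,h}$ works out to $-q^{2\nu-1}p^{1-2\mu}$, so the denominator should read $p^{2\mu-1}$ rather than $p^{2\nu-1}$.
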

{\bf Proof:} It is direct using the Proposition~\ref{4PropoQes1} and the fact that
\begin{eqnarray}\label{4Houkeq}
 [n]^{\mu,\nu}_{p,q,h}= h(p,q)\frac{q^{\nu n}}{p^{\mu n}}[n]^Q_{p,q}.
\end{eqnarray}
\hfill$\Box$
\begin{proposition}\label{4PropoHouk2}
The $(p,q,\mu,\nu,h)-$ binomial coefficients
\begin{eqnarray}
 \left[\begin{array}{c} n \\ k \end{array}\right]_{p,q,h}^{\mu,\nu}:=
\frac{[n]!_{p,q,h}^{\mu,\nu}}{[k]!_{p,q,h}^{\mu,\nu}[n-k]!_{p,q,h}^{\mu,\nu}}=
\frac{q^{\nu k(n-k)}}{p^{\mu k(n-k)}}\left[\begin{array}{c} n \\ k \end{array}\right]_{p,q}^Q,
\label{4Heq5}
\end{eqnarray}
where $\;0\leq k\leq n;\;\; n\in\mathbb{N},$ satisfy the following properties
\begin{eqnarray}
\left[\begin{array}{c} n \\ k \end{array}\right]_{p,q,h}^{\mu,\nu}
= \left[\begin{array}{c} n \\ n-k \end{array}\right]_{p,q,h}^{\mu,\nu},\label{4Heq6}
\end{eqnarray}
\begin{eqnarray}
 \left[\begin{array}{c} n+1 \\ k \end{array}\right]_{p,q,h}^{\mu,\nu} =
\frac{q^{\nu k}}{p^{(\mu-1)k}}\left[\begin{array}{c} n \\ k \end{array}\right]_{p,q,h}^{\mu,\nu}
+\frac{q^{(\nu-1)(n+1-k)}}{p^{\mu(n+1-k)}}\left[\begin{array}{c} n \\ k-1 \end{array}\right]_{p,q,h}^{\mu,\nu},
\label{4Heq7}
\end{eqnarray}
\begin{eqnarray}
 \left[\begin{array}{c} n+1 \\ k \end{array}\right]_{p,q,h}^{\mu,\nu} &=&
\frac{q^{\nu k}}{p^{(\mu-1)k}}\left[\begin{array}{c} n \\ k \end{array}\right]_{p,q,h}^{\mu,\nu} +
\frac{q^{\nu(n+1-k)}}{p^{(\mu-1)(n+1-k)}}\left[\begin{array}{c} n \\ k-1 \end{array}\right]_{p,q,h}^{\mu,\nu}
\cr&&\qquad\qquad\quad-(p^n-q^{-n})
\frac{q^{\nu n}}{p^{\mu n}}\left[\begin{array}{c} n-1 \\ k-1 \end{array}\right]_{p,q,h}^{\mu,\nu}.\quad\label{4Heq8}
\end{eqnarray}
\end{proposition}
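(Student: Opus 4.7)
The plan is to reduce every claim to the corresponding statement for the Quesne $(p,q)$-binomial coefficients established in Proposition~\ref{4PropoQes2}, by first upgrading the scaling identity (\ref{4Houkeq}) to the level of factorials and binomial coefficients. Starting from $[k]^{\mu,\nu}_{p,q,h}= h(p,q)(q^{\nu}/p^{\mu})^{k}[k]^{Q}_{p,q}$, I would first observe that telescoping over $k=1,\dots,n$ gives
\[
[n]^{\mu,\nu}_{p,q,h}! \;=\; h^{n}(p,q)\,\frac{q^{\nu n(n+1)/2}}{p^{\mu n(n+1)/2}}\,[n]^{Q}_{p,q}!.
\]
Substituting this into the definition of the $(p,q;\mu,\nu,h)$-binomial coefficient, the $h^{n}(p,q)$ prefactor cancels against $h^{k}(p,q)\,h^{n-k}(p,q)$, and the elementary identity
\[
\tfrac{n(n+1)}{2}-\tfrac{k(k+1)}{2}-\tfrac{(n-k)(n-k+1)}{2}\;=\;k(n-k)
\]
(applied to both the $q$- and the $p$-exponents) yields the transfer formula (\ref{4Heq5}).

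With (\ref{4Heq5}) in hand, the symmetry property (\ref{4Heq6}) is immediate: the prefactor $q^{\nu k(n-k)}/p^{\mu k(n-k)}$ is manifestly invariant under $k\leftrightarrow n-k$, and the symmetry of $\left[\begin{array}{c}n\\k\end{array}\right]^{Q}_{p,q}$ is recorded in (\ref{4Qeq7}).

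For the recursion (\ref{4Heq7}), I would substitute (\ref{4Heq5}) into the left-hand side, apply the Quesne recursion (\ref{4Qeq8}) to split $\left[\begin{array}{c}n+1\\k\end{array}\right]^{Q}_{p,q}$ into two pieces, and rewrite each piece via (\ref{4Heq5}) in order to match it against the corresponding term on the right-hand side of (\ref{4Heq7}). The required exponent identities
\[
\mu k(n+1-k)-k=(\mu-1)k+\mu k(n-k), \qquad \nu k(n+1-k)-(n+1-k)=(n+1-k)(\nu k-1),
\]
and their analogues with $\mu,\nu$ interchanged, reconcile all coefficients. Recursion (\ref{4Heq8}) is handled identically, this time starting from the three-term Quesne identity (\ref{4Qeq9}) and checking that the additional factor $q^{\nu n}/p^{\mu n}$ on the correction term is reproduced exactly by the exponent arithmetic.

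The whole argument is formal and modular, relying solely on Proposition~\ref{4PropoQes2} and the scaling relation (\ref{4Houkeq}). The main obstacle is not conceptual but computational: it lies in the careful bookkeeping of the $p$- and $q$-exponents in the two recursion relations, where a single sign or off-by-one error propagates through all subsequent steps. Structuring the exponent calculations as the two displayed identities above (and their $p$-counterparts) keeps the verification mechanical.
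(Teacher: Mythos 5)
Your proposal is correct and follows essentially the same route as the paper: the paper's proof is a one-line reduction to Proposition~\ref{4PropoQes2} via the factorial scaling $[n]!_{p,q,h}^{\mu,\nu}=h^n(p,q)\,q^{\nu n(n+1)/2}p^{-\mu n(n+1)/2}[n]!_{p,q}^Q$ and the transfer formula (\ref{4Heq5}), which is exactly your argument with the details written out. (Your version of the factorial relation, with the exponents $\nu n(n+1)/2$ and $\mu n(n+1)/2$, is the correct one; the paper omits the factors $\nu$ and $\mu$ there, evidently a typo.)
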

{\bf Proof:} There follow from the Proposition~\ref{4PropoQes2} and the fact that
\begin{eqnarray}
 [n]!_{p,q,h}^{\mu,\nu}= h^n(p,q)\frac{q^{n(n+1)/2}}{p^{n(n+1)/2}}[n]!_{p,q}^Q,
\end{eqnarray}
where  use of Eq.(\ref{4Houkeq}) has been made.\hfill$\Box$\\
The algebra ${\cal A}_{p,q,h}^{\mu,\nu}$, generated by
$\{1,\; A,\; A^\dag,\;N\}$, associated with
 $(p,q,\mu,\nu,h)$-deformation, satisfies the following commutation relations:
\begin{eqnarray}
&& p^{-1}A\;A^\dag- \frac{q^\nu}{p^\mu} A^\dag A= h(p,q)\left(\frac{q^{\nu-1}}{p^\mu}\right)^{N+1},
\cr&&
qA\;A^\dag- \frac{q^\nu}{p^\mu}A^\dag A= h(p,q)\left(\frac{q^\nu}{p^{\mu-1}}\right)^{N+1}\qquad\cr
&&[N,\;A^\dag]= A^\dag,\qquad\qquad [N,\;A]= -A.\label{4Hkalg}
\end{eqnarray}

The $(p,q,\mu,\nu,h)$-Rogers-Szeg\"o \cite{Hounkonnou&Ngompe07a} polynomials are deduced from
the above general construction  by setting $\phi_1(x,y)= x^{1-\mu}y^\nu$,
$\phi_2(x,y) = x^{-\mu}y^{\nu-1}$ and $\phi_3(x,y)=
\frac{y-x^{-1}}{h(p,q)}$. Indeed, $\phi_i(p,q)\neq 0$ for $i=1, 2,
3$; $\phi_i(P,Q)z^k = \phi_i(p,q)^kz^k$ for $i= 1, 2$ and  the
property (\ref{4Heq8}) furnishes
\begin{eqnarray*}
\left[\begin{array}{c} n+1 \\ k \end{array}\right]_{p,q,h}^{\mu,\nu} &=&
\frac{q^{\nu k}}{p^{(\mu-1)k}}\left[\begin{array}{c} n \\ k \end{array}\right]_{p,q,h}^{\mu,\nu} +
\frac{q^{\nu(n+1-k)}}{p^{(\mu-1)(n+1-k)}}\left[\begin{array}{c} n \\ k-1 \end{array}\right]_{p,q,h}^{\mu,\nu}
\cr&&\qquad-
\frac{q-p^{-1}}{h(p,q)}[n]^{\mu,\nu}_{p,q,h}\left[\begin{array}{c} n-1 \\ k-1 \end{array}\right]_{p,q,h}^{\mu,\nu}.
\end{eqnarray*}
Therefore, the $(p,q,\mu,\nu,h)$-Rogers-Szeg\"o polynomials are defined as follows:
\begin{eqnarray}
H_n(z;p,q,\mu,\nu,h)=\sum_{k=0}^{n}\left[\begin{array}{c}n\\k\end{array}\right]_{p,q,h}^{\mu,\nu} z^k,\quad n=0,1,2\cdots
\end{eqnarray}
with the three-term recursion relation
\begin{eqnarray}\label{4HkRogersttr}
H_{n+1}(z;p,q,\mu,\nu,h)&=& H_n\left(\frac{q^{\nu}}{p^{\mu-1}}z:p,q,\mu,\nu,h\right)
\cr&&\qquad+ z\frac{q^{(\nu-1)n}}{p^{\mu n}}H_n\left(\frac{p^{\nu}}{q^{\nu-1}}z;p,q,\mu,\nu,h\right)
\cr&&\qquad-z\frac{q^{\nu n}}{p^{\mu n}}(p^n-q^{-n})H_{n-1}(z;p,q,\mu,\nu,h)
\end{eqnarray}
and $(p,q,\mu,\nu,h)-$difference equation
\begin{eqnarray}\label{4HkRogerdiff}
\partial_{p,q,h}^{\mu,\nu}H_n(z;p,q,\mu,\nu,h)= [n]_{p,q,h}^{\mu,\nu}H_{n-1}(z;p,q,\mu,\nu,h).
\end{eqnarray}
Hence, the set of polynomials
\begin{eqnarray}
 \psi_n(z;p,q,\mu,\nu,h)= \frac{1}{\sqrt{[n]!_{p,q,h}^{\mu,\nu}}}H_n(z;p,q,\mu,\nu,h),\quad n=0, 1, 2, \cdots
\end{eqnarray}
forms a basis for a realization of the $(p,q,\mu,\nu,h)-$deformed algebra ${\cal A}_{p,q,\mu,\nu,h}$ satisfying the commutation relations (\ref{4Hkalg}) with the number
operator $N$ formally defined  as
\begin{eqnarray}
 N\psi_n^Q(z;p,q,\mu,\nu,h)= n\psi_n(z;p,q,\mu,\nu,h),
\end{eqnarray}
together with the annihilation and the creation operators  given by
\begin{eqnarray}
 A= \partial_{p,q,h}^{\mu,\nu}\;\mbox{ and }\;A^\dag= \frac{Q^{\nu}}{P^{\mu-1}}
+ z\left(\frac{q^{\nu-1}}{p^{\mu}}\right)^N\frac{P^{\mu}}{Q^{\nu-1}}-z\frac{(q-p^{-1})}{h(p,q)}\partial_{p,q,h}^{\mu,\nu},
\end{eqnarray}
respectively.

The continuous $(p,q,\mu,\nu,h)-$Hermite polynomials \cite{Hounkonnou&Ngompe07a} can be now deduced as:
\begin{eqnarray}
 \mathbb{H}_n(\cos\theta;p,q,\mu,\nu,h)&=& e^{in\theta}H_n(e^{-2i\theta};p,q,\mu,\nu,h)\cr&=&
\sum_{k=0}^n\left[\begin{array}{c}n\\k\end{array}\right]_{p,q,h}^{\mu,\nu}\;e^{i(n-2k)\theta},\quad n=0, 1, 2, \cdots.
\end{eqnarray}
Since for the $(p,q,\mu,\nu,h)-$deformation
$\phi_1(x,y)= x^{1-\mu}y^\nu$, $\phi_2(x,y) = x^{-\mu}y^{\nu-1}$ and $\phi_3(x,y)= \frac{y-x^{-1}}{h(p,q)}$,
from the Proposition \ref{4propBuk}  the corresponding sequence of continuous
$(p,q,\mu,\nu,h)-$Hermite polynomials satisfies the three-term recursion relation
\begin{eqnarray}\label{4HkHttr}
 \mathbb{H}_{n+1}(\cos\theta;p,q,\mu,\nu,h)&=&
\frac{q^{\nu\frac{n}{2}}}{p^{(\mu-1)\frac{n}{2}}}
\frac{Q^{\nu}}{P^{\mu-1}}\mathbb{H}_n(\cos\theta;p,q,\mu,\nu,h)\cr&&
+\frac{q^{(\nu-1)\frac{n}{2}}}{p^{\mu\frac{n}{2}}}
\frac{Q^{-(\nu-1)}}{P^{-\mu}}
\mathbb{H}_n(\cos\theta;p,q,\mu,\nu,h)\cr&&
-(p^n-q^{-n})\frac{q^{\nu n}}{p^{\mu n}}\;\mathbb{H}_{n-1}(\cos\theta;p,q,\mu,\nu,h).
\end{eqnarray}

\section{Concluding remark}
We have first deformed the Heisenberg algebra   with the set of parameters
 $\{q, l,\lambda\}$  to generate  a new family of generalized 
coherent states respecting  the Klauder criteria. In this framework, 
 the matrix elements of  relevant operators have been  exactly computed and investigated from functional analysis point of view.
 Then, relevant statistical properties have been examined. Besides,
 a proof on the sub-Poissonian character of the statistics of  the main deformed states 
has been provided. This property   has been finally used to   determine   the induced generalized metric, characterizing 
the geometry of the considered system.


Next, a unified method of defining structure functions  from commutation relations of deformed single-mode oscillator algebras 
has been  presented. 
A natural approach to building coherent states associated to  deformed algebras has been then deduced.
Known deformed algebras have been  given as illustration and such mathematical
properties  as continuity in the label, normalizability and resolution of the identity of their corresponding coherent states have been  discussed.


Besides, we have generalized a class of  two - parameter
deformed Heisenberg algebras related to meromorphic functions. 
There have been probed  relevant
families of coherent states maps and their corresponding hypergeometric
series.
The latter constitutes a generalization of known hypergemotric
series.  Moreover, a  ${\cal R}(p, q)$-binomial theorem, generalizing the
$(p,q)$-binomial theorem given in ~\cite{Jagannathan&Rao} has been deduced.
We have also defined the ${\cal R}(p, q)$-trigonometric, hyperbolic and $(p,q)$-Bessel functions, including
their main relevant properties.


Then, we have provided a new noncommutative algebra related to the ${\cal R}(p,q)$-deformation and
 shown that the notions of differentiation  and integration can be extended to it, thus  generalizing
 well known $q$ or/and $(p,q)$-differential and integration calculi \cite{Burban&Klimyk, Dobrogo06, Koekoek}.
Besides, we have performed a general procedure of constructing the Hopf algebra structure compatible with the
${\cal R}(p,q)$-algebra. As illustration, relevant examples have been given.


Finally, we have defined and discussed  a general formalism for constructing ${\cal R}(p,q)-$ deformed Rogers-Szeg\"o
polynomials.   The displayed approach not only provides novel relations, but also generalizes  well known standard and deformed
 Rogers-Szeg\"o polynomials. A full characterization of the latter, including the data on the 
three-term recursion relations and difference equation, has been provided.
 We have succeeded in elaborating  a new realization of  ${\cal R}(p,q)-$deformed quantum algebra generalizing the construction of 
$q-$deformed harmonic oscillator creation and annihilation
operators  performed in  \cite{Galetti,Jagannathan&Sridhar10}. 
The  continuous ${\cal R}(p,q)-$Hermite polynomials have been also investigated in detail and relevant particular cases and examples have been exhibited.

\section*{Acknowledgements}
This work is partially supported by the Abdus Salam International
Centre for Theoretical Physics (ICTP, Trieste, Italy) through the
Office of External Activities (OEA) - \mbox{Prj-15}. The ICMPA
is in partnership with
the Daniel Iagolnitzer Foundation (DIF), France.

\end{document}